\tikzstyle{arc}=[->, shorten <=3pt, shorten >=3pt, >=stealth, line width=1.25pt]
\tikzstyle{edge}=[shorten <=2pt, shorten >=2pt, >=stealth, line width=1.5pt]
\tikzstyle{vertex}=[circle, fill=white, draw, minimum size=6pt, inner sep=0pt,
\tikzstyle{blockQ} = [rectangle, rounded corners, minimum width=3cm, text width=5.5cm, minimum height=1cm, text centered, draw=black]
\tikzstyle{block} = [rectangle, minimum width=3cm, text width=5.5cm, minimum height=1cm, text centered, draw=black]
\newtheorem{theorem}{Theorem}
\newtheorem{conjecture}{Conjecture}
\newtheorem{lemma}[theorem]{Lemma}
\newtheorem{corollary}[theorem]{Corollary}
\newtheorem{proposition}[theorem]{Proposition}
\newtheorem{question}{Question}
\newtheorem{observation}[theorem]{Observation}
\newtheorem{problem}[question]{Problem}
\theoremstyle{definition}
\newtheorem{example}[theorem]{Example}
\theoremstyle{remark}
\newtheorem{remark}[theorem]{Remark}
\newcommand{\blue}[1]{\textcolor{black}{#1}}
\DeclareMathOperator{\NP}{NP}
\DeclareMathOperator{\cP}{P}
\DeclareMathOperator{\maj}{maj}
\DeclareMathOperator{\Forb}{Forb}
\DeclareMathOperator{\CSP}{CSP}
\DeclareMathOperator{\PCSP}{PCSP}
\DeclareMathOperator{\RCSP}{RCSP}
\DeclareMathOperator{\GMSNP}{GMSNP}
\newcommand{\bA}{{\mathbb A}}
\newcommand{\bB}{{\mathbb B}}
\newcommand{\bC}{{\mathbb C}}
\newcommand{\bD}{{\mathbb D}}
\newcommand{\bF}{{\mathbb F}}
\newcommand{\bG}{{\mathbb G}}
\newcommand{\bH}{{\mathbb H}}
\newcommand{\bI}{{\mathbb I}}
\newcommand{\bK}{{\mathbb K}}
\newcommand{\bL}{{\mathbb L}}
\newcommand{\bP}{{\mathbb P}}
\newcommand{\bQ}{{\mathbb Q}}
\newcommand{\bR}{{\mathbb R}}
\newcommand{\bT}{{\mathbb T}}
\newcommand{\calC}{{\mathcal C}}
\newcommand{\calD}{{\mathcal D}}
\newcommand{\calF}{{\mathcal F}}
\newcommand{\calT}{{\mathcal T}}
\newcommand{\calS}{{\mathcal S}}
\newcommand{\calP}{{\mathcal P}}
\newcommand{\COL}[1]{{\sc ${#1}$-Colouring}}
\title{Restricted CSPs and F-free Digraph Algorithmics\thanks{Barnaby Martin has been supported by EPSRC grant EP/X03190X/1. Santiago Guzm\'an-Pro has been funded by the European Research Council (Project POCOCOP, ERC Synergy Grant 101071674). Views and opinions expressed are however those of the authors only and do not necessarily reflect those of the European Union or the European Research Council Executive Agency. Neither the European Union nor the granting authority can be held responsible for them.}}
\author[1]{Santiago Guzm\'an-Pro\thanks{santiago.guzman\_pro@tu-dresden.de}}
\author[2]{Barnaby Martin\thanks{barnabymartin@gmail.com}}
\affil[1]{Institut f\"ur Algebra, TU Dresden}
\affil[2]{Durham University, UK}
\begin{document}
\date{}

\maketitle

\begin{abstract}
In recent years, much attention has been placed on the complexity of graph homomorphism problems when the input is restricted to 
$\bP_k$-free and $\bP_k$-subgraph-free graphs. We consider the directed version of this research line, by addressing the questions
\emph{is it true that digraph homomorphism problems $\CSP(\bH)$ have a $\cP$
versus $\NP$-complete dichotomy when the input is restricted to
$\vec{\bP}_k$-free (resp.\ $\vec{\bP}_k$-subgraph-free) digraphs?}
Our main contribution in this direction shows that if $\CSP(\bH)$ is $\NP$-complete, then there is
a positive integer $N$ such that $\CSP(\bH)$ remains $\NP$-hard even for $\vec{\bP}_N$-subgraph-free
digraphs. Moreover, it remains $\NP$-hard for \emph{acyclic} $\vec{\bP}_N$-subgraph-free
digraphs, and becomes polynomial-time solvable for $\vec{\bP}_{N-1}$-subgraph-free
\emph{acyclic} digraphs.
We then verify the  questions above for digraphs on three
vertices and a family of smooth tournaments.
We prove these results by establishing a connection between $\bF$-(subgraph)-free algorithmics
and constraint satisfaction theory.
On the way, we introduce \emph{restricted CSPs}, i.e., problems of the form $\CSP(\bH)$
restricted to yes-instances of $\CSP(\bH')$ --- these were called restricted homomorphism problems
by Hell and Ne\v{s}et\v{r}il.  Another main result of this paper presents a P versus NP-complete
dichotomy for these problems. Moreover, this complexity  dichotomy is accompanied by an algebraic
dichotomy in the spirit of the finite domain CSP dichotomy.

\end{abstract}



\begin{flushright}
\emph{
As little as a few years ago, most graph theorists, while passively aware of a few classical results on graph homomorphisms, would not include homomorphisms among the topics of central interest in graph theory. We believe that this perception is changing, principally because of the usefulness of the homomorphism perspective...
At the same time, the homomorphism framework strengthens the link between graph theory and other parts of mathematics, making graph theory more attractive, and understandable, to other mathematicians.
}
Pavol Hell and Jaroslav Nesetril, \emph{Graphs and Homomorphisms}, 2004 \cite{HellNesetril}.
\end{flushright}


\section{Introduction}

\subsection*{Story of the main questions}

The Hell-Ne\v{s}et\v{r}il theorem asserts that if $\bH$ is a finite undirected graph,
then $\CSP(\bH)$ is polynomial-time solvable whenever $\bH$ is either a bipartite graph
or contains a loop, and otherwise $\CSP(\bH)$  is $\NP$-complete.
In recent years, much attention has been placed on the complexity of graph homomorphism
problems when the input is restricted to $\bF$-free and $\bF$-subgraph-free graphs, i.e., to graphs
avoiding $\bF$ as an induced subgraph, and as a subgraph, respectively.
In~\cite{goedgebeurDagstuhl2024}, the authors show that $\CSP(\bC_5)$ is polynomial-time solvable when
the input is restricted to $\bP_8$-free graphs. Moreover, they show that there are finitely many 
$\bP_8$-free obstructions to $\CSP(\bC_5)$. Also, in~\cite{bonomoC38} the authors prove that
$\CSP(\bK_3)$ becomes tractable if the input is restricted to $\bP_7$-free graphs, while Huang~\cite{huangEJC51}
proved that $\CSP(\bK_4)$ remains $\NP$-hard even for $\bP_7$-free graphs. In general, several complexity
classifications are known for \COL{\bK_n} with input restriction to $\bP_k$-free graphs
(see, e.g.,~\cite[Theorem 7]{GJPS17}), however, a complete complexity classification
of graph colouring problems with input restriction fo $\bF$-(subgraph)-free graphs remains
wide open.

The finite domain CSP dichotomy~\cite{Zhuk20} (announced independently by
Bulatov~\cite{BulatovFVConjecture} and Zhuk~\cite{ZhukFVConjecture}) asserts
that digraph colouring problems also exhibit a P versus NP-complete dichotomy.
In this paper we consider the directed version of the research line
described above, where we consider the following to be the long-term question of this variant:
Is there a $\cP$ versus $\NP$-complete dichotomy of $\CSP(\bH)$ where the input
is restricted (1) to $\bF$-free digraphs? and (2) to $\bF$-subgraph-free digraphs?%
\footnote{Note that a negative answer to any of these questions implies that, if
$\cP\neq \NP$, then there are finite digraphs $\bF$ and $\bH$ such that $\CSP(\bH)$
restricted to $\bF$-(subgraph)-free digraphs is $\NP$-intermediate --- which we believe to
be a more natural problem than the current $\NP$-intermediate problems
constructed in the literature. At this point we conjecture neither a positive nor a negative answer to the previous questions.}

The Sparse Incomparability Lemma asserts that if $\CSP(\bH)$ is $\NP$-hard,
then it remains $\NP$-hard even for high-girth digraphs. Hence,
both questions above have a positive answer whenever $\bF$ is not
an oriented forest. Motivated by the literature on graph colouring problems
restricted to $\bP_k$-free and $\bP_k$-subgraph-free digraphs, we
consider the restriction of these questions to the case when $F$ is a
directed path $\vec{\bP}_k$. 

\begin{question}\label{qst:Pk}
    Is there a $\cP$ versus $\NP$-complete dichotomy of $\CSP(\bH)$ where the input
    is restricted 
    \begin{enumerate}
        \item to $\vec{\bP}_k$-free digraphs?
        \item to $\vec{\bP}_k$-subgraph-free digraphs?
    \end{enumerate}
\end{question}

In our effort to settle Question~\ref{qst:Pk} we stumble into three more
questions which we also address in this paper. Allow us to elaborate.
Clearly, if $\CSP(\bH)$ if $\NP$-hard even for $\vec{\bP}_k$-subgraph-free digraphs, 
then $\CSP(\bH)$ is $\NP$-hard for $\vec{\bP}_k$-free digraphs. In turn, 
if $\CSP(\bH)$ restricted to $\vec{\bP}_k$-homomorphism-free digraphs is $\NP$-hard, 
then $\CSP(\bH)$ restricted to $\vec{\bP}_k$-subgraph-free digraphs.  It is well-known
that a digraph $\bD$ is $\vec{\bP}_k$-homomorphism-free if and only if $\bD$ homomorphically
maps to the transitive tournament in $k-1$ vertices $\bT\bT_{k-1}$ (see, e.g,
Observation~\ref{obs:P-TT}).
Hence, a simple way to find complexity upperbounds to the problems in 
Question~\ref{qst:Pk} is to consider the complexity of $\CSP(\bH)$ restricted to
$\CSP(\bT\bT_k)$.

The last problems have the following natural generalization:
decide $\CSP(\bH)$ restricted to input digraphs $\bD$ in $\CSP(\bH')$.
We denote this problem by $\RCSP(\bH,\bH')$. These problems have been
called restricted homomorphism problems~\cite{BHM97,brewsterDAM156}
and  it was conjectured by Hell and Ne\v{s}et\v{r}il that when $\bH$ and $\bH'$
are undirected graphs, then  $\bH'\to \bH$, or $\bH$ is bipartite, and in 
these cases $\CSP(\bH)$ restricted to $\CSP(\bH')$ is in $\cP$;
and otherwise it is $\NP$-hard. This was later confirmed by 
Brewster and Graves~\cite{brewsterDAM156}, where they actually propose a
hardness condition for a broader family of digraphs $\bH'$ (see
Theorem~\ref{thm:brewster} below). It is then natural to ask our second main question.

\begin{question}\label{qst:RCSP}
    Is there a $\cP$ versus $\NP$-hard dichotomy of problems $\RCSP(\bH,\bH')$
    parametrized by finite digraphs (structures) $\bH$ and $\bH'$?
\end{question}

Clearly, every digraph $\bD$ that admits a homomorphism to some transitive tournament must be
an acyclic digraph. A natural question of a digraph $\bH$, for which $\CSP(\bH)$ is NP-complete,
is to ask if this problem remains NP-complete on acyclic instances. Notice that
this is the same problem as $\RCSP(\bH,(\bQ,<))$. If $\bH$ is an undirected graph,
then of course this will be true (choose any total order of the vertex set and orient the edges 
according to this ordering). However, when $\bH$ is not undirected, the situation is not a priori clear.
Indeed, it is addressed for some small digraphs by Hell and Mishra in~\cite{HM14}. They prove,
for example, that $\CSP({\vec \bC_3^+})$ -- where ${\vec \bC_3^+}$ is drawn in the forthcoming
Figure~\ref{fig:three-vertices} -- does indeed remain NP-complete on acyclic inputs. The general
question is not posed in \cite{HM14}, but it is perfectly natural, and we address it here.

\begin{question}\label{qst:acyclic}
    If $\CSP(\bH)$ is $\NP$-hard for a finite graph $\bH$, does 
    $\CSP(\bH)$ remain $\NP$-hard for acyclic instances? Equivalently, is
    $\RCSP(\bH,(\bQ,<))$ $\NP$-hard whenever $\CSP(\bH)$ is $\NP$-hard?
\end{question}

Some readers might have already noticed that there is a third natural question
motivated by the previous paragraph: is there a $\cP$ versus $\NP$-complete
dichotomy of $\CSP(\bH)$ restricted to $\vec{\bP}_k$-homomorphism-free digraphs?
Or more generally: is there a $\cP$ versus $\NP$-hard dichotomy of $\CSP(\bH)$
where the input is restricted to $\calF$-homomorphism-free digraphs?
(where $\calF$ is a fixed finite set of digraphs). This question can already
be settled from results in the literature: every such problem 
can be coded into \emph{monotone monadic strict NP} (MMSNP), and Feder and 
Vardi~\cite{FederVardi} proved that if finite domain CSPs have a P versus NP-complete
dichotomy, then MMSNP exhibits the same dichotomy. However, we provide an
alternative prove of this fact in Section~\ref{sect:FO-restrictions}.

\subsection*{Story of the paper}

Our paper splits into two parts. The first part focuses on relational structures,
of which digraphs are somewhat canonical examples, while the second part focuses
on digraphs specifically. In Figure~\ref{fig:flow-of-ideas} we depict the 
flow of ideas and results of this paper.

Within the first part (Sections~\ref{sect:RCSPs}--~\ref{sect:FO-restrictions}),
we begin by introducing \emph{restricted} CSPs (RCSPs)
which have been studied as restricted homomorphism problems in \cite{BHM97,brewsterDAM156}.
In particular, we elaborate a connection with promise CSPs (PCSPs) that enables
one to view RCSPs as PCSPs. 

Our first main result shows that, for every pair of finite digraphs (structures)
$\bA$ and $\bB$, the problem $\CSP(\bB)$ with input restricted to $\CSP(\bA)$ is either
in P or NP-hard (settling Question~\ref{qst:RCSP}). Moreover, this complexity
classification is accompanied with an algebraic dichotomy
(Theorem~\ref{thm:finite-RCSP-dichotomy}) which stems from the finite domain
CSP dichotomy result. We then push the previous complexity dichotomy to finite domain CSPs
with restrictions in GMSNP (Theorem~\ref{lem:GMSNP-restrictions}). Another contribution of
this work builds again on the finite domain CSP dichotomy to present a new proof of the
complexity dichotomy for  finite domain CSP restricted to $\calF$-homomorphism-free digraphs
(structures): we avoid going via MMSNP to the infinite domain CSP setting, and stay in the 
finite domain world via Lemma~\ref{lem:xDT-T-homfree} (Section~\ref{sect:FO-restrictions}).

In the second part of the paper (Sections~\ref{sect:acyclic}--\ref{sect:single-digraph}),
we leverage results from the first part, in order to study digraph CSPs, where the ultimate
focus will be on $\bF$-free and $\bF$-subgraph-free algorithmics.

 Our second main result shows that, if $\bH$ is a digraph and $\CSP(\bH)$ is $\NP$-complete, then
 there is a positive integer $N$ such that $\CSP(\bH)$ remains $\NP$-complete
even for $\vec{\bP}_N$-subgraph-free acyclic inputs (settling Question~\ref{qst:acyclic}).
Moreover, $N$ can be chosen so that $\CSP(\bH)$ is polynomial-time solvable for
$\vec{\bP}_{N-1}$-subgraph-free acyclic inputs.
This also yields a partial answer to Question~\ref{qst:Pk}: for every digraph $\bH$
there is a positive integer $N\le 4^{|H|}$ such that Question~\ref{qst:Pk} has a positive
answer restricted to $k\ge N$. We complement this general partial answer by settling Question~\ref{qst:Pk}
for digraphs $\bH$ on three vertices (Theorems~\ref{thm:3-vertices-Pk-subgraph-free}
and~\ref{thm:Pk-3vertex-classification}), and for a family of smooth tournaments
$\bT\bC_n$ (Theorems~\ref{thm:TCn-Pk-subgraph-classification} and
Theorem~\ref{thm:TC-Pk-free-classification}). We note that eventual hardness on $\vec{\bP}_N$-subgraph-free
instances does not hold in general for $\CSP(\bH)$, if $\bH$ is an infinite digraph.
We provide a counterexample (Example~\ref{ex:infinite-acyclic}) which is otherwise well-behaved
(for example, by being $\omega$-categorical). As byproducts of our work we see that
there are finitely many minimal $\vec{\bP}_3$-obstructions to  $\CSP(\bT\bC_n)$ for each positive
integer $n$ (Theorem~\ref{thm:TCn-min-obs}), and if $\bF$ is not an oriented path, then
$\CSP(\bT\bC_n)$ (and $\CSP(\vec{\bC}_3^+)$) is $\NP$-hard even for $\bF$-subgraph-free instances
(Theorem~\ref{thm:non-paths}).

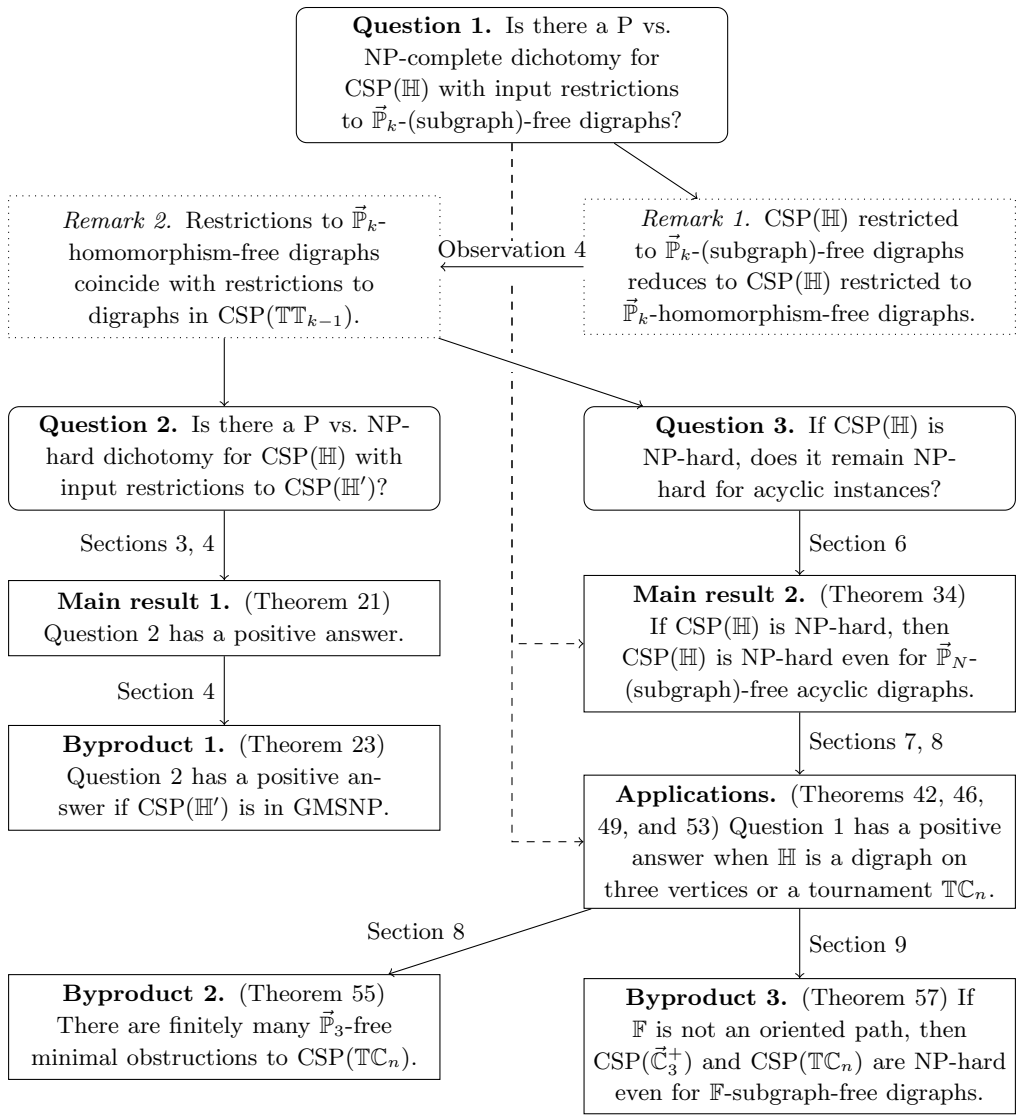
\begin{figure}[ht!]
\centering
\begin{tikzpicture}[scale = 0.85]
    
    \node[blockQ] (start) at (0,0) {\small \textbf{Question~\ref{qst:Pk}.} Is there a P vs.\ NP-complete dichotomy for $\CSP(\bH)$ with input restrictions to $\vec{\bP}_k$-(subgraph)-free digraphs?};

    \node[block, dotted] (obs1) at (4.5,-3) {\small \textit{Remark 1.} $\CSP(\bH)$ restricted to $\vec{\bP}_k$-(subgraph)-free digraphs reduces to $\CSP(\bH)$ restricted to $\vec{\bP}_k$-homomorphism-free digraphs.};

    \node[block, dotted] (obs2) at (-4.5,-3) {\small \textit{Remark 2.} Restrictions to $\vec{\bP}_k$-homomorphism-free digraphs coincide
    with restrictions to digraphs in $\CSP(\bT\bT_{k-1})$.};

    \node[blockQ] (qst1) at (4.5,-6) {\small \textbf{Question~\ref{qst:acyclic}.} If $\CSP(\bH)$ is $\NP$-hard, does
    it remain $\NP$-hard for acyclic instances?};

    \node[blockQ] (qst2) at (-4.5,-6) {\small \textbf{Question~\ref{qst:RCSP}.} Is there a P vs.\ NP-hard dichotomy for $\CSP(\bH)$ with input restrictions
    to $\CSP(\bH')$?};


    \node[block] (answ2) at (-4.5,-8.5) {\small \textbf{Main result 1.} (Theorem~\ref{thm:finite-RCSP-dichotomy})
    Question 2 has a positive answer.};
    \node[block] (bypGMSNP) at (-4.5,-11) {\small \textbf{Byproduct 1.} (Theorem~\ref{lem:GMSNP-restrictions})
    Question 2 has a positive answer if $\CSP(\bH')$ is in $\GMSNP$.};

    \node[block] (app1) at (4.5,-8.9) {\small \textbf{Main result 2.} (Theorem~\ref{thm:acyclic+bounded-paths})
    If $\CSP(\bH)$ is $\NP$-hard, then $\CSP(\bH)$ is $\NP$-hard even for 
    $\vec{\bP}_N$-(subgraph)-free acyclic digraphs.};

    \node[block] (app2) at (4.5,-12) {\small \textbf{Applications.} (Theorems~\ref{thm:3-vertices-Pk-subgraph-free},
    \ref{thm:Pk-3vertex-classification}, \ref{thm:TCn-Pk-subgraph-classification}, and~\ref{thm:TC-Pk-free-classification})
    Question 1 has a positive answer when $\bH$ is a digraph on three vertices or a tournament $\bT\bC_n$.};

     \node[block] (byp1) at (-4.5,-14.9) {\small \textbf{Byproduct 2.} (Theorem~\ref{thm:TCn-min-obs}) There
     are finitely many $\vec{\bP}_3$-free minimal obstructions to $\CSP(\bT\bC_n)$.};

     \node[block] (byp2) at (4.5,-15.2) {\small \textbf{Byproduct 3.} (Theorem~\ref{thm:non-paths})
     If $\bF$ is not an oriented path, then $\CSP(\vec{\bC}_3^+)$ and $\CSP(\bT\bC_n)$ are $\NP$-hard
     even for $\bF$-subgraph-free digraphs.};

    \draw [->, dashed] (start)  |-  (app1);
    \draw [->, dashed] (start) |- (app2);
    \draw [->] (answ2) -- node[anchor=east] {\small Section~\ref{sect:finite-domain-restrictions}} (bypGMSNP);
    \node[rectangle, fill = white, minimum height = 0.5cm, minimum width =2cm] at (0,-2.9){}; 
    \draw [->] (qst1) -- node[anchor=west] {\small Section~\ref{sect:acyclic}} (app1);
    \draw [->] (qst2) -- node[anchor=east] {\small Sections~\ref{sect:RCSPs},~\ref{sect:finite-domain-restrictions}} (answ2);
    \draw [->] (obs1) -- node[anchor=south] {\small Observation~\ref{obs:P-TT}} (obs2);
    \draw [->] (app1) -- node[anchor=west] {\small Sections~\ref{sect:3-vertices},~\ref{sect:tournaments}} (app2);
    \node[rectangle, fill = white, minimum height = 0.25cm, minimum width =2cm] at (0,-4.5){}; 
    \foreach \from/\to in {obs2/qst1, obs2/qst2, start/obs1}     
    \draw [->] (\from) to (\to);
    \draw [->] (app2) --  (byp1);
    \node[rectangle, minimum height = 0.5cm, minimum width =2cm] at (-1.5,-13.4){\small Section~\ref{sect:tournaments}};
    \draw [->] (app2) -- node[anchor = west] {\small Section~\ref{sect:single-digraph}}  (byp2);
  
\end{tikzpicture}
\caption{{\small A depiction of the flow between main results and main questions addressed in this paper.
Rectangles with rounded corners mark the main questions considered here.
Dotted squares indicate the (simple) remarks connecting the $\calF$-subgraph-free CSPs
to the theory of digraph homomorphisms. Solid rectangles with straight corners indicate the
main results of the present paper. A solid (resp.\ dashed) edge between a question and a result
indicates that the result provides an answer (resp.\ a partial answer) to the corresponding question.
Finally, edges between results represent that the result at the head is proved using tools introduced
while proving the result at the tail of the corresponding edge.}
}
\label{fig:flow-of-ideas}
\end{figure}

\section{Preliminaries}
\label{sec:preliminaries}

\subsection{Relational structures and digraphs}

For a finite relational signature $\tau=(R_1,\ldots,R_k)$, a \emph{relational ($\tau$-)structure}
$\bA$ on domain $A$ consists of $k$ relations $R_1 \subseteq A^{a_1}$, \ldots, $R_k \subseteq A^{a_k}$,
where $a_i$ is the arity of $R_i$. We denote the cardinality of $A$ as $|A|$. We tend to conflate the
relation symbol and actual relation since this will not introduce confusion. 

For some signature $\tau$, the \emph{loop} $\bL$ is the structure on one element $a$ all of whose
relations are maximally full, that is, contains one tuple $(a, \ldots, a)$ ---  the structure $\bL$
clearly depends on the signature $\tau$, but in this work $\tau$ will always be clear from context.

\emph{Directed graphs} (digraphs) $\bD$ are relational structures on the signature
$\{E\}$ where $E$ is a binary relation. A digraph is a \emph{graph} if $E$ is
symmetric, \mbox{i.e.} $(x,y) \in E$ iff $(y,x) \in E$. We will use the same blackboard 
font notation for digraphs as we do for relational structures.

Given a positive integer $k$ we denote by $\vec{\bC}_k$ the directed cycle on $k$ vertices, 
by $\vec{\bP}_k$ the directed path on $k$ vertices, by $\bT_k$ the transitive tournament
on $k$ vertices. Similarly we denote by $K_k$ the complete 
graph on $k$ vertices, and we think of it as a digraph with edges $(i,j)$ for
every $i\neq j$.

A \emph{(directed) walk} on a (directed) graph  $\bD$ is a sequence of vertices
$x_1,\dots, x_k$ such that for every $i\in[k-1]$ there is a (directed) edge
$(x_i,x_{i+1})\in E$. An \emph{oriented graph} is a digraph with no pair of symmetric edges, i.e,.
a loopless $\bK_2$-free digraph.
A \emph{tree} is an oriented digraph whose underlying graph has no cycle (equivalently, is an orientation of a traditional undirected tree).
It follows that trees are $\bK_2$-free. A \emph{forest} is a disjoint union of trees. 

\subsection{Constraint satisfaction problems}

Given a pair of digraphs (structures) $\bD$ and $\bH$ a \emph{homomorphism} $f\colon \bD\to \bH$
is a vertex mapping such that, for every $(u,v)$ that is an edge of $\bD$, the image $(f(u),f(v))$ is an edge
of $\bH$. If such a homomorphism exists, we write $\bD\to \bH$, and otherwise $\bD\not\to \bH$.
We follow notation of constraint satisfaction theory, and denote by $\CSP(\bH)$ the class
of finite digraphs such that $\bD \to \bH$. We also denote by $\CSP(\bH)$ the computational problem
of deciding if an input digraph $\bD$ belongs to $\CSP(\bH)$. 

An \emph{endomorphism} is a homomorphism $f\colon \bH\to \bH$. A digraph (structure) $\bH$ is a \emph{core}
if all of its endomorphisms are self-embeddings. If $\bH$ is finite then this is equivalent to all endomorphisms being automorphisms.

\subsection{Smooth digraphs}

A digraph $\bD$ is \emph{smooth} if it has no sources nor sinks. 
The following statement was conjectured in~\cite{bangjensenDAM26}
and proved in~\cite{BartoKozikNiven}.

\begin{theorem}[Conjecture 6.1 in~\cite{bangjensenDAM26} proved in~\cite{BartoKozikNiven}]
\label{thm:smooth}
    For every smooth digraph $\bH$ one of the following holds.
    \begin{itemize}
        \item The core of $\bH$ is a disjoint union of cycles, and in this case $\CSP(\bH)$ is
        polynomial-time solvable.
        \item Otherwise, $\CSP(\bH)$ is $\NP$-complete.
    \end{itemize}
\end{theorem}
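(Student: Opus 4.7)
The plan is to pass to the core of $\bH$ and treat the two cases separately. A retract of a smooth digraph remains smooth, since an omitted vertex cannot be a source or sink of the image, so we may assume $\bH$ is itself a smooth core.

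For the tractable direction, suppose $\bH = \vec{\bC}_{n_1} \sqcup \cdots \sqcup \vec{\bC}_{n_t}$. Each connected component of an input $\bD$ must map entirely into a single cycle, and for connected $\bD$ we have $\bD \to \vec{\bC}_n$ if and only if the signed length (forward edges minus backward edges) of every closed walk of $\bD$ is divisible by $n$. This is verifiable in linear time by fixing a root vertex, propagating heights in $\mathbb{Z}/n\mathbb{Z}$ via breadth-first search, and checking consistency at each edge.

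For the hard direction, invoke the finite-domain CSP dichotomy of Bulatov and Zhuk: $\CSP(\bH)$ is $\NP$-complete whenever $\bH$ does not admit a Taylor polymorphism. It therefore suffices to show that every smooth core $\bH$ which is not a disjoint union of directed cycles lacks such a polymorphism. The key ingredient is the Barto--Kozik \emph{loop lemma}: any smooth digraph of algebraic length one (the greatest common divisor of signed lengths of its closed walks) that admits a Taylor polymorphism contains a loop. Granting this, one argues by induction on $|H|$: if $\bH$ has algebraic length one the loop lemma yields a loop, forcing the core $\bH$ to be a single looped vertex, which contradicts the hypothesis; if the algebraic length is $d > 1$, then $\bH$ maps homomorphically to $\vec{\bC}_d$, and the fibres of this map equipped with the $d$-th iterate of the edge relation decompose into smaller smooth digraphs of strictly smaller algebraic length, on which the inductive hypothesis applies to yield a disjoint union of cycles.

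The main obstacle is undoubtedly the loop lemma itself: its proof requires the absorption machinery developed by Barto and Kozik, in which one analyses a hypothetical minimal counterexample through its absorbing subuniverses obtained via binary absorption and the central / pp-essential results, and rules out each possible configuration using the smoothness hypothesis to conclude that a loop must arise. Once the loop lemma is granted, both the inductive reduction and the breadth-first-search algorithm for the cycle-union case are comparatively routine, which makes the loop lemma the sole deep ingredient of the dichotomy.
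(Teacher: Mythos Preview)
The paper does not prove this theorem at all: it is stated in the preliminaries as a known result, with the proof attributed to Barto, Kozik, and Niven. There is therefore no ``paper's own proof'' to compare against.

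That said, your sketch is broadly faithful to the Barto--Kozik--Niven argument, whose heart is indeed the loop lemma for smooth digraphs of algebraic length one admitting a Taylor (equivalently, WNU) polymorphism. Two remarks are worth making. First, invoking the full Bulatov--Zhuk dichotomy is anachronistic and unnecessary: the implication ``no Taylor polymorphism $\Rightarrow$ $\NP$-complete'' was already known from the algebraic approach (Bulatov--Jeavons--Krokhin) well before the tractability side was settled, and this is all the hard direction needs. Second, your inductive reduction for algebraic length $d>1$ is imprecise: the fibres under the $d$-th edge power do not have ``strictly smaller algebraic length'' but rather algebraic length one, and the loop lemma then yields a directed $d$-cycle in $\bH$; one still needs to argue that this cycle is a retract and, since $\bH$ is a core, that each component of $\bH$ equals such a cycle. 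This last step is not difficult but is not quite the induction you describe.
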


A digraph $\bH$ is \emph{hereditarily hard} if $\CSP(\bH')$ is $\NP$-complete for every loopless digraph $\bH'$
such that $\bH\to \bH'$. Bang-Jensen, Hell, and Niven conjectured that a smooth digraph
$\bH$ is hereditarily hard whenever $\bH$ does not homomorphically map to a disjoint union of directed cycles.
Moreover, they showed that this conjecture is implied by the statement in Theorem~\ref{thm:smooth} (which was
only a conjecture at that time). 

\begin{theorem}
[Conjecture 2.5 in~\cite{bangjensenDM138} proved in~\cite{BartoKozikNiven}]
\label{thm:smooth-hereditay}
    A digraph $\bH$ is hereditarily hard whenever the digraph $R(\bH)$ obtained from $\bH$ by iteratively
    removing sources and sinks does not admit a homomorphism to a disjoint unions of directed
    cycles.
\end{theorem}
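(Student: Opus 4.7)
The strategy is to reduce the statement to Theorem~\ref{thm:smooth} by observing that the digraph $R(\bH)$ is smooth by construction, and then transferring hardness from $R(\bH')$ to $\bH'$ via a padding argument. So assume $R(\bH)\not\to \bigsqcup_{i} \vec{\bC}_{n_i}$ for any finite collection of cycles, let $\bH'$ be loopless with $\bH\to\bH'$, and let us show $\CSP(\bH')$ is $\NP$-complete.

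The key lemma I would prove first is: if $f\colon\bH\to\bH'$ is any homomorphism, then $f(R(\bH))\subseteq R(\bH')$, so that the restriction yields a homomorphism $R(\bH)\to R(\bH')$. For a finite digraph, a vertex $v$ belongs to $R(\bH)$ iff it lies on a bi-infinite directed walk in $\bH$, equivalently, iff $v$ is reachable from a directed cycle and can reach a directed cycle. Since $f$ maps cycles to closed walks (hence to vertices lying on cycles) and preserves directed walks, the image of any such $v$ inherits the same property in $\bH'$. Composing $R(\bH)\to R(\bH')$ with a hypothetical $R(\bH')\to \bigsqcup_i \vec{\bC}_{n_i}$ would yield $R(\bH)\to \bigsqcup_i \vec{\bC}_{n_i}$, contradicting the hypothesis. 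Hence $R(\bH')$ is smooth and does not homomorphically map to a disjoint union of directed cycles, so its core is not a disjoint union of cycles; thus Theorem~\ref{thm:smooth} gives $\CSP(R(\bH'))$ is $\NP$-complete (one also needs that $R(\bH')$ is loopless, which follows immediately since $\bH'$ is loopless).

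It remains to reduce $\CSP(R(\bH'))$ to $\CSP(\bH')$. Given an instance $\bG$, set $N:=|H'|$ and build $\bG^{+}$ by attaching at each vertex $v\in V(\bG)$ a fresh directed in-path of length $N$ ending at $v$ and a fresh directed out-path of length $N$ starting at $v$. In any homomorphism $\bG^{+}\to \bH'$, the image of $v$ has a directed walk of length $N$ entering it and a directed walk of length $N$ leaving it; by pigeonhole on $N=|H'|$ these walks each contain a directed cycle of $\bH'$, so the image of $v$ is reachable from a cycle and reaches a cycle, that is, the image of $v$ lies in $R(\bH')$. Conversely, any homomorphism $\bG\to R(\bH')$ extends to $\bG^{+}\to \bH'$ by using that $R(\bH')$ is smooth to continue the attached paths inside $R(\bH')\subseteq \bH'$. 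This establishes $\CSP(R(\bH'))\le_P \CSP(\bH')$, completing the proof.

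The main obstacle is the key lemma $R(\bH)\to R(\bH')$. The subtlety is that a vertex which lies on a cycle of $\bH$ need not be mapped to a vertex on a cycle of $\bH'$ by a single iteration of the homomorphism; rather one needs the structural characterization of $R(\cdot)$ in terms of bi-infinite walks (or equivalently, walks between two cycles), which is preserved by any homomorphism. Once this characterization is in place, the remaining ingredients (smoothness of $R(\bH')$, invocation of Theorem~\ref{thm:smooth}, and the padding reduction) are routine.
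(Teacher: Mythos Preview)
The paper does not give its own proof of this theorem: it is stated as a result from the literature, with the remark that Bang-Jensen, Hell, and Niven showed the implication from Theorem~\ref{thm:smooth} (then a conjecture), which was subsequently established in~\cite{BartoKozikNiven}. Your proposal carries out precisely this implication, and the argument is correct. The characterisation of $R(\bH)$ as the set of vertices lying on a bi-infinite directed walk (equivalently, reachable from a cycle and reaching a cycle) is standard and is preserved by homomorphisms, giving $R(\bH)\to R(\bH')$; the invocation of Theorem~\ref{thm:smooth} and the padding reduction from $\CSP(R(\bH'))$ to $\CSP(\bH')$ are both sound. So your approach is essentially the same as the one the paper attributes to the original sources.
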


As far as we are aware, the most general result regarding hardness of restricted CSP problems
is Theorem 3 in~\cite{brewsterDAM156}.

\begin{theorem}
[Theorem 3 in \cite{brewsterDAM156}]
\label{thm:brewster}
    If $\bH$ is a hereditarily hard digraph and $\bH'$ is a finite digraph such that $\bH'\not\to \bH$
    then $\RCSP(\bH,\bH')$ is $\NP$-hard.
\end{theorem}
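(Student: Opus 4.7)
The plan is to establish $\NP$-hardness of $\RCSP(\bH,\bH')$ by a polynomial-time reduction from a cleverly chosen $\NP$-hard $\CSP$, exploiting the hereditary hardness of $\bH$.

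First I would dispose of the loop cases. The hypothesis $\bH'\not\to\bH$ forces $\bH$ to be loopless, since otherwise the constant map into a loop would witness $\bH'\to\bH$; then applying the definition of hereditary hardness with $\bH'':=\bH$ yields that $\CSP(\bH)$ is $\NP$-complete. If $\bH'$ carries a loop, every finite digraph is a yes-instance of $\CSP(\bH')$, so $\RCSP(\bH,\bH')$ coincides with $\CSP(\bH)$ and is $\NP$-hard, and we are done. Henceforth I assume both $\bH$ and $\bH'$ are loopless.

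The main step is to reduce $\CSP(\bH^+)$ to $\RCSP(\bH,\bH')$, where $\bH^+:=\bH+\bH'$. The digraph $\bH^+$ is loopless and admits the inclusion homomorphism from $\bH$, so hereditary hardness yields $\NP$-completeness of $\CSP(\bH^+)$. Given a (without loss of generality connected) input $\bD$ of $\CSP(\bH^+)$ --- which is a yes-instance iff $\bD\to\bH$ or $\bD\to\bH'$ --- the reduction should produce in polynomial time a digraph $\bD^{*}$ satisfying (i) $\bD^{*}\to\bH'$, so that $\bD^{*}$ is a valid input for $\RCSP(\bH,\bH')$, and (ii) $\bD^{*}\to\bH$ if and only if $\bD\to\bH^+$. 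The core idea is to amalgamate $\bD$ with pointed copies of $\bH'$ identified at suitably chosen vertices, so that there is a canonical homomorphism $\bD^{*}\to\bH'$ obtained from the $\bH'$-gadgets, while the obstruction $\bH'\not\to\bH$ forces any homomorphism $\bD^{*}\to\bH$ to descend to a homomorphism $\bD\to\bH^+$.

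The main obstacle is making (i) and (ii) hold simultaneously. The naive disjoint union $\bD+\bH'$ violates (i) whenever $\bD\not\to\bH'$, while the categorical product $\bD\times\bH'$ satisfies (i) via projection but fails the nontrivial half of (ii) in view of Hedetniemi-style phenomena. The correct construction must therefore exploit a specific obstruction witness in $\bH'$ preventing it from mapping to $\bH$, and is an analogue of the classical gadget showing that $3$-colouring remains $\NP$-hard when restricted to $4$-colourable graphs.
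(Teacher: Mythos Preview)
The paper does not prove this theorem; it is quoted verbatim as Theorem~3 of \cite{brewsterDAM156}. That said, the machinery the paper develops in Sections~\ref{sect:RCSPs}--\ref{sect:finite-domain-restrictions} yields a very short proof, and comparing it to your outline is instructive.

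Your plan has a genuine gap at the crucial step. You never produce the digraph $\bD^{*}$, and the requirement you impose in (ii) is awkward: because you reduce from $\CSP(\bH+\bH')$, you must arrange that $\bD^{*}\to\bH$ whenever $\bD\to\bH'$, even though $\bH'\not\to\bH$. Neither of the constructions you discuss achieves this, and ``amalgamating $\bD$ with pointed copies of $\bH'$'' does not obviously land in $\CSP(\bH')$ when $\bD\not\to\bH'$. The disjunction in (ii) is precisely what makes the source problem $\CSP(\bH+\bH')$ an inconvenient choice.

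The route implicit in the paper's framework avoids this obstacle by changing the source. By the product--exponential adjunction (Lemma~\ref{lem:properties-exponential}), the map $\bC\mapsto\bC\times\bH'$ is a reduction from $\CSP(\bH^{\bH'})$ to $\RCSP(\bH,\bH')$: the projection gives $\bC\times\bH'\to\bH'$, and $\bC\times\bH'\to\bH$ iff $\bC\to\bH^{\bH'}$. It therefore suffices to show $\CSP(\bH^{\bH'})$ is $\NP$-hard. Now $\bH^{\bH'}$ has a loop at $f$ exactly when $f\colon\bH'\to\bH$ is a homomorphism, so $\bH'\not\to\bH$ means $\bH^{\bH'}$ is loopless; and since $\bH'$ has at least one edge (else $\bH'\to\bH$ trivially), the constant functions $h\mapsto f_h$ give a homomorphism $\bH\to\bH^{\bH'}$. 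Hereditary hardness of $\bH$ then finishes the argument. This is a two-line proof once the exponential is in hand, and it replaces your missing gadget with the single, uniform construction $\bC\mapsto\bC\times\bH'$.
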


\subsection{Duality pairs}

A pair of digraphs (relational structure) $(\bT,\bD_\bT)$ are called
a \emph{duality pair} if for every digraph it is the case that
$\bD\to \bD_\bT$ if and only if $\bT \not\to \bD$. It was proved in~\cite{NesetrilTardif}
that for every tree $\bT$ there is a digraph $\bD_\bT$ such that $(\bT,\bD_\bT)$
is a duality pair. Moreover, they also proved that if $(\bT,\bD_\bT)$ is 
a duality pair, then $\bT$ is homomorphically equivalent to a tree.
A well-known example of a family of duality pairs is the following one.

\begin{observation}\label{obs:P-TT}
    For every positive integer $k$ the directed path $\vec{\bP}_{k+1}$
    together with the transitive tournament $\bT_k$ are duality pair.
\end{observation}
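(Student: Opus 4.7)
The plan is to prove both directions of the biconditional ``$\bD \to \bT_k$ if and only if $\vec{\bP}_{k+1} \not\to \bD$'' by a standard level-function argument, with the mild care that homomorphic images of directed paths are walks rather than paths.

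For the forward direction, I would argue by contradiction. Suppose $f\colon \bD \to \bT_k$ is a homomorphism and $g\colon \vec{\bP}_{k+1}\to \bD$ is a homomorphism. Then $f\circ g\colon \vec{\bP}_{k+1}\to \bT_k$ is a homomorphism. Since $\bT_k$ is loopless and its edges are precisely the pairs $(i,j)$ with $i<j$ in a linear order on $k$ elements, consecutive vertices along $\vec{\bP}_{k+1}$ must be sent to strictly increasing elements. This forces $k+1$ strictly increasing elements inside a $k$-element chain, which is impossible.

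For the backward direction, assume $\vec{\bP}_{k+1}\not\to \bD$. First I would observe that $\bD$ must be acyclic: a directed cycle in $\bD$ can be traversed arbitrarily many times, yielding a directed walk of any length, and since $\vec{\bP}_{k+1}$ maps homomorphically into any digraph containing a directed walk with $k$ edges, this contradicts the assumption. Having established acyclicity, for each $v\in V(\bD)$ let $\ell(v)$ be the number of edges of the longest directed walk ending at $v$ (well-defined and finite by acyclicity). By hypothesis $\ell(v)\le k-1$, so $\ell$ takes values in $\{0,1,\dots,k-1\}$, which I identify with $V(\bT_k)$. For any edge $(u,v)\in E(\bD)$, any walk ending at $u$ extends by $v$ to a walk ending at $v$, hence $\ell(v)\ge \ell(u)+1>\ell(u)$, so $(\ell(u),\ell(v))$ is an edge of $\bT_k$. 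Thus $\ell$ is the desired homomorphism.

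The proof is essentially routine; the only subtlety worth flagging is the distinction between directed walks and directed paths when interpreting $\vec{\bP}_{k+1}\to \bD$, and in particular the role of acyclicity in making the level function $\ell$ well defined. Once this is noted, both directions follow from a single pigeonhole/longest-walk observation, so there is no genuine obstacle.
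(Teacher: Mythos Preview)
Your proof is correct and is the standard level-function argument for this folklore fact. The paper itself does not supply a proof of Observation~\ref{obs:P-TT}; it is stated as a well-known example of a duality pair, so there is nothing to compare against beyond noting that your argument is the expected one.
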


Given a set of digraphs $\calF$, we denote by $\Forb(\calF)$ the 
set of digraphs $\bD$ such that $\bF\not\to \bD$ for every $\bF\in \calF$. 
It is straightforward to observe that for any such set $\calF$, 
there is a (possibly infinite) digraph $\bD$ such that $\Forb(\calF)
= \CSP(\bD)$. 
A \emph{generalized duality} is a pair $(\calF, \calD)$ of finite
sets of digraphs such that 
\[
\Forb(\calF) = \bigcup_{\bD\in\calD} \CSP(\bD).
\]
In particular, when $\calD = \{\bD\}$ we simply write $(\calF, \bD)$.
Generalized dualities have a similar characterization to duality pairs.

\begin{theorem}[Theorems 2 and 11 in~\cite{FoniokNesetril}]\label{thm:gen-dualities}
For every finite set of forests $\calF$ there is a finite set of
digraphs $\calD$ such that $(\calD, \calF)$ is a generalized duality pair. 
Moreover, if $\calF$ is a finite set of trees, then there is a digraph
$\bD$ such that $(\calF, \bD)$ is a generalized duality.
\end{theorem}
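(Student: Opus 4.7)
The plan is to bootstrap both statements from the single-tree duality theorem of Nešetřil and Tardif already cited immediately above: for every tree $\bT$ there is a digraph $\bD_\bT$ with $\bT \not\to \bD$ iff $\bD \to \bD_\bT$. The two auxiliary ingredients I would use are (i) the categorical product of digraphs, which satisfies $\bD \to \bA_1 \times \cdots \times \bA_n$ iff $\bD \to \bA_j$ for every $j$, and (ii) the fact that a homomorphism from a forest to a digraph is nothing but an independent choice of homomorphism from each component, so $\bF = \bT_1 \sqcup \cdots \sqcup \bT_k$ maps to $\bD$ iff every $\bT_i$ does.

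For the second, easier statement, suppose $\calF = \{\bT_1,\dots,\bT_m\}$ consists of trees. Set $\bD := \bD_{\bT_1} \times \cdots \times \bD_{\bT_m}$ and observe
\[
\bD' \in \Forb(\calF) \iff \forall i\colon \bT_i \not\to \bD' \iff \forall i\colon \bD' \to \bD_{\bT_i} \iff \bD' \to \bD,
\]
which gives the required single-digraph dual.

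For the first statement, write each forest $\bF_j \in \calF$ as $\bF_j = \bT_{j,1} \sqcup \cdots \sqcup \bT_{j,k_j}$, so that $\bF_j \not\to \bD'$ iff there is some $i_j \in [k_j]$ with $\bT_{j,i_j} \not\to \bD'$, equivalently $\bD' \to \bD_{\bT_{j,i_j}}$. The condition $\bD' \in \Forb(\calF)$ therefore reads $\bigwedge_j \bigvee_{i_j} \bigl(\bD' \to \bD_{\bT_{j,i_j}}\bigr)$. I would then distribute the conjunction over the disjunctions: for each choice function $\sigma \in \prod_j [k_j]$ define
\[
\bD_\sigma := \prod_{j=1}^m \bD_{\bT_{j,\sigma(j)}}, \qquad \calD := \bigl\{\bD_\sigma : \sigma \in \textstyle\prod_j [k_j]\bigr\}.
\]
Using (i) to collapse each conjunction into a single CSP, one gets $\bD' \in \Forb(\calF)$ iff $\bD' \to \bD_\sigma$ for some $\sigma$, which is exactly $\Forb(\calF) = \bigcup_{\bD \in \calD} \CSP(\bD)$.

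The main obstacle in this strategy is already hidden inside the single-tree Nešetřil–Tardif theorem, which I am treating as a black box; once that is available, the generalization is a purely formal distributive-law argument (conjunctions of disjunctions become disjunctions of conjunctions, categorical products absorb the conjunctions, and the explicit finite union absorbs the disjunctions), and the remaining work is bookkeeping. In particular $|\calD| \le \prod_{\bF \in \calF} c(\bF)$, where $c(\bF)$ is the number of components of $\bF$, and each $\bD_\sigma$ may be exponentially large in $|\calF|$, but both are finite, which is all that the statement requires.
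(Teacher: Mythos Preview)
The paper does not give its own proof of this statement: it is quoted from Foniok and Ne\v{s}et\v{r}il and used as a black box in the preliminaries, so there is nothing to compare against on the paper's side.

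Your argument is correct. Both parts are exactly the standard derivation from the single-tree Ne\v{s}et\v{r}il--Tardif duality: the tree case is handled by a categorical product of the individual duals, and the forest case by observing that $\bF_j \not\to \bD'$ is a disjunction over components, then distributing $\bigwedge_j \bigvee_{i}$ into $\bigvee_\sigma \bigwedge_j$ and absorbing each inner conjunction into a product. The distributive step is valid because all index sets are finite, and your bookkeeping on the size of $\calD$ is accurate. This is essentially how one would reconstruct the Foniok--Ne\v{s}et\v{r}il result from the duality-pair theorem already stated in the paper, so your proposal is a legitimate self-contained proof even though the paper itself merely cites the result.
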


\subsection{Large girth}

A well-known result from Erd\H{o}s~\cite{Erd:Gtp} about $k$-colourabilty states that for
every pair of positive integers $l,k$ there is a graph $G$ with girth strictly larger
than $l$ and such that $G$ does not admit a proper $k$-colouring. This result generalizes
to arbitrary relational structures, and it is known as the Sparse Incomprability Lemma~\cite{Kun}
--- in order to stay within the scope of this paper, we state it for digraphs.

Given a digraph (structure) $\bD$, the \emph{incidence graph} of $\bD$ is the undirected
bipartite graph $\bI(\bD)$ with vertex set $V\cup E$, for $v\in D$ a vertex of $D$ and $e =(x,y)\in E$ an edge of $D$. There is an (undirected) edge $(v,e)$ in $\bI(\bD)$ if and only if $v \in\{x,y\}$. The \emph{girth}
of $\bD$ is half the length of the shortest cycle in $\bI(\bD)$. Notice that if $\bD$ has a pair of symmetric
arcs, its girth is $2$, and otherwise, it is the graph theoretic girth of the underlying graph of $\bD$.

\begin{theorem}[Sparse Incomparability Lemma~\cite{Kun}]\label{thm:sparse-incomparability}
For every digraph $\bD$ and every pair of positive integers $k$ and $\ell$ there is a 
digraph $\bD'$ with the following properties:
\begin{itemize}[itemsep = 1.2pt]
    \item $\bD'\to \bD$,
    \item the girth of $\bD'$ is larger than $\ell$,  
    \item $\bD\to \bH$ if and only if $\bD'\to \bH$ for every digraph $\bH$ on
    at most $k$ vertices,
    \item $\bD'$ can be constructed in polynomial time (from $\bD$). 
\end{itemize}
\end{theorem}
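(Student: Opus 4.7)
The plan is to prove the lemma via the probabilistic method, as in Kun's original argument. I would construct $\bD'$ as a random ``cover'' of $\bD$: on vertex set $V(\bD)\times[N]$ for a parameter $N$ to be fixed, and for each edge $(u,v)\in E(\bD)$, place an independently chosen uniformly random perfect matching between the blocks $\{u\}\times[N]$ and $\{v\}\times[N]$, oriented from the $u$-block to the $v$-block. The coordinate projection $\pi\colon(u,i)\mapsto u$ is automatically a homomorphism $\bD'\to\bD$, which yields the first bullet and the ``only if'' direction of the third (via composition with any $\bD\to\bH$).

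For the girth requirement, I would bound the expected number of cycles of length at most $\ell$ in the incidence graph $\bI(\bD')$. Each such cycle projects to a closed walk in $\bI(\bD)$ of the same length, and for each fixed projected walk the probability of it lifting to a cycle is $O(N^{-1})$ per matching choice, so a standard counting gives an expected number of short cycles of order $o(N)$. By Markov's inequality a typical $\bD'$ has at most $N/2$ short cycles, and deleting one vertex per short cycle destroys them all without changing any other relevant property (after rescaling $N$). For the crucial third bullet, the nontrivial direction is that $\bD'\to\bH$ forces $\bD\to\bH$. Given any homomorphism $f\colon\bD'\to\bH$, pick for each $u\in V(\bD)$ a colour $c(u)\in V(\bH)$ appearing on a set $A_u\subseteq\{u\}\times[N]$ of size at least $N/k$ (pigeonhole). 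If $c$ were not a homomorphism $\bD\to\bH$, there would be an edge $(u,v)\in E(\bD)$ with $(c(u),c(v))\notin E(\bH)$, and therefore the random matching $M_{uv}$ would avoid every pair in $A_u\times A_v$; a direct calculation gives probability at most $(1-1/k)^{N/k}$ for any fixed $A_u,A_v$ of that size.

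The main obstacle is the union bound. One must enumerate over all candidate ``majority patterns'' $(c,(A_u)_{u\in V(\bD)})$, of which there are at most $k^{|D|}2^{N|D|}$, and over all target digraphs $\bH$ on at most $k$ vertices, of which there are at most $2^{k^2}$. Choosing $N=\Theta(k|D|\log k)$ makes the super-exponentially small failure probability beat this enumeration, so with positive probability no bad $f$ exists for any $\bH$, giving the desired $\bD'$. Finally, for polynomial-time constructibility one cannot use raw randomness; here I would invoke Kun's explicit construction, which replaces the random matchings by bipartite high-girth expanders built from suitable Cayley graphs, producing in polynomial time a $\bD'$ of polynomial size with exactly the same three structural properties.
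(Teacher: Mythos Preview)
The paper does not prove this theorem; it is quoted from Kun as a background tool, so there is no in-paper argument to compare against. Your outline follows the standard random-lift strategy and is mostly sound, but the union-bound step contains a genuine quantitative gap.

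Your estimate that a single random perfect matching misses a fixed product $A_u\times A_v$ with $|A_u|,|A_v|\ge N/k$ with probability at most $(1-1/k)^{N/k}\approx e^{-N/k^{2}}$ is correct. However, you then union-bound over at least $2^{N|D|}$ choices of the families $(A_u)_u$. The product
\[
2^{N|D|}\cdot e^{-N/k^{2}}\;=\;\exp\!\bigl(N(|D|\ln 2 - k^{-2})\bigr)
\]
diverges with $N$ whenever $|D|\ge 1$ and $k\ge 2$: the miss probability is only \emph{exponentially} small in $N$, not ``super-exponentially small'' as you write, and no choice of $N$ (in particular not $N=\Theta(k|D|\log k)$) makes the bound go through. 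A single perfect matching between adjacent fibres is simply too sparse to guarantee that every pair of $(N/k)$-large subsets is joined by an edge.

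The standard repair is to place $d=\Theta(k^{2})$ independent random perfect matchings between each pair of adjacent fibres (equivalently, a random $d$-regular bipartite graph). The miss probability then drops to roughly $e^{-dN/k^{2}}=e^{-\Theta(N)}$ with a rate that beats the $2^{2N}$ enumeration of subset pairs within a single fibre-pair, after which a further union bound over the $|E(\bD)|$ edges yields the required expansion. The girth and vertex-deletion arguments survive the extra density with only polynomial-in-$d$ losses, and Kun's explicit derandomisation likewise builds constant-degree bipartite expanders (with degree depending on $k$) between fibres rather than single matchings.
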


\begin{corollary}\label{cor:large-girth}
    For ever finite digraph $\bH$ and every positive integer $\ell$, 
    $\CSP(\bH)$ is polynomial-time equivalent to $\CSP(\bH)$ restricted to input digraphs of girth strictly larger
    that $\ell$.
\end{corollary}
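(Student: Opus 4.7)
The plan is to show the two directions of the polynomial-time equivalence. One direction is immediate: the restricted problem (on inputs of girth strictly larger than $\ell$) is a special case of $\CSP(\bH)$, so it trivially reduces to $\CSP(\bH)$ in polynomial time.

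For the converse direction, I would reduce an arbitrary instance of $\CSP(\bH)$ to a high-girth instance via a direct application of Theorem~\ref{thm:sparse-incomparability}. Given an input digraph $\bD$ to $\CSP(\bH)$, set $k := |H|$ and invoke the Sparse Incomparability Lemma with parameters $k$ and $\ell$ to obtain in polynomial time a digraph $\bD'$ with girth strictly larger than $\ell$ such that $\bD \to \bH$ if and only if $\bD' \to \bH$ (this biconditional is precisely the third bullet of Theorem~\ref{thm:sparse-incomparability}, applicable because $|H| \le k$). Thus the map $\bD \mapsto \bD'$ is a polynomial-time many-one reduction from $\CSP(\bH)$ to $\CSP(\bH)$ restricted to inputs of girth strictly larger than $\ell$.

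There is no real obstacle here beyond correctly quoting the Sparse Incomparability Lemma with the right choice of parameter, namely $k = |H|$, which ensures that the equivalence $\bD \to \bH \Leftrightarrow \bD' \to \bH$ holds for the fixed target $\bH$. Combining the two directions yields the claimed polynomial-time equivalence.
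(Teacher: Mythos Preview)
Your proposal is correct and is exactly the intended argument: the paper states this as an immediate corollary of the Sparse Incomparability Lemma without giving a separate proof, and your application of Theorem~\ref{thm:sparse-incomparability} with $k = |H|$ is precisely what makes it one.
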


\section{Restricted constraint satisfaction problems}
\label{sect:RCSPs}

Promise problems (not to be confused with Promise CSPs) can be thought as decision problems with input restrictions. Formally~\cite{selmanIC78},
a \emph{promise problem} is a pair $(\calP,\calC)$ of decidable sets. A solution to $(\calP,\calC)$ is a decidable
set $\calS$ such that $\calS\cap \calP = \calC\cap \calP$. We say that the promise problem $(\calP,\calC)$
is polynomial-time solvable if it has a solution in P, and if every solution is NP-hard, we say
that $(\calP,\calC)$ is $\NP$-hard.

Given a pair of (possibly infinite) structures $\bA$ and $\bB$ (with the same finite signature) whose
CSPs are decidable, the \emph{restricted CSP} $\RCSP(\bA,\bB)$ is the promise problem $(\CSP(\bB),\CSP(\bA))$.
In this case, we call $(\bA,\bB)$ the \emph{template} of the restricted CSP, $\bA$ is called the \emph{domain}
and $\bB$ the \emph{restriction}. In particular, if $\bA$ is finite we say that $\RCSP(\bA,\bB)$ is
a finite domain RCSP, and if $\bB$ is finite, we say that $\RCSP(\bA,\bB)$ is an RCSP with finite restriction.

Informally, the promise problem $\RCSP(\bA,\bB)$ is $\CSP(\bA)$ where the input is promised to belong to
$\CSP(\bB)$.  For instance, $\RCSP(\bK_3,\bK_4)$ is essentially the problem of deciding whether an input
$4$-colourable graph is $3$-colourable.

Notice that for any digraph (structure) $\bA$ the problems $\CSP(\bA)$ and $\RCSP(\bA,\bL)$ are the same
problems where $\bL$ is the loop. So every decidable CSP is captured by an RCSP with finite restriction. 
One of the main results of this work shows that every RCSP with finite restriction is log-space equivalent
to a CSP. It follows from the proof that every finite domain RCSP with finite restriction is log-space equivalent
to a finite domain CSP, and thus, finite domain restricted CSPs with finite restrictions have a P versus NP-hard
dichotomy. Moreover, the reduction mentioned in this paragraph are obtained by \emph{restricted primitive positive
construction (rpp-constructions)}, which are the natural cousins of pp-constructions for CSPs~\cite{wonderland} and
for PCSPs~\cite{BartoBKO21}.

\subsection{Restricted primitive positive constructions}

Several reductions in graph algorithmics arise from gadget reductions. For instance, 
a standard way of proving that $\CSP(\bC_5)$ is $\NP$-complete can be done with the
following gadget reduction from $\CSP(\bK_5)$. On input $\bG$ to $\CSP(\bK_5)$,
consider the graph obtained from $\bG$ by replacing every edge $e:=xy$ by a path
$x,u_e,v_e,y$ (see, e.g.,~\cite{HellNesetril} where this is called an indicator construction). So in this case, the path on four vertices
is the gadget associated to this reduction). 
The algebraic approach to CSPs proposes a general framework encompassing gadget
reductions between constraint satisfaction problems. In this section we introduce
primitive positive constructions, and \emph{restricted} primitive positive constructions. 

\subsubsection*{Primitive positive constructions}

Given a pair of finite relational signature $\tau$ and $\sigma$, a \emph{primitive
positive definition} (of $\tau$ in $\sigma$)\footnote{When the signatures are clear from
context we will simply say talk about a primitive positive definition.} of dimension
$d\in \mathbb Z^+$ is a finite set $\Delta$ of primitive positive formulas $\delta_R(\bar{x})$
indexed by $\sigma$, and for each $R\in \sigma$ of arity $r$ the formula $\delta_R(\bar{x})$
has $r\cdot d$ free variables.

For every primitive positive definition of $\sigma$ in $\tau$ we associate
a mapping $\Pi_\Delta$ from $\tau$-structures to $\sigma$-structures as follows.
Given a $\tau$-structure $\bA$ the \emph{pp power} $\Pi_\Delta(\bA)$ of $\bA$
is the structure
\begin{itemize}
    \item  with domain of $\Pi_{\Delta}(\bA)$ is $A^d$, and
    \item for each $R\in \sigma$ of arity $r$ the interpretation of $R$ in $\Pi_\Delta(\bA)$
    consists of the tuples $(\bar{a}_1,\dots, \bar{a}_r)\in (A^d)^r$ such that
    $\bA\models \delta_R(\bar{a}_1,\dots, \bar{a}_r)$.
\end{itemize}
We say that a structure $\bA$ \emph{pp-constructs} a structure $\bB$ is there is a
primitive positive definition $\Delta$ such that $\Pi_\Delta(\bA) \to \bB \to  \Pi_\Delta(\bA)$, 
i.e., $\Pi_\Delta(\bA)$ is homomorphically equivalent to $\bB$. For instance,
if $\Delta$ is the $1$-dimensional primitive positive definition consisting of 
\[
\delta_E(x,y):=\exists z_1,z_2.\; E(x,z_1)\land E(z_1,z_2)\land E(z_2,y),
\]
then the pp power $\Pi_\Delta(\bC_5)$ of the $5$-cycle is the complete graph
$\bK_5$ (see Figure~\ref{fig:C5-K5}).

\begin{remark}\label{rmk:pp-power-monotone}
    It is well-known that for every fixed primitive positive definition $\Delta$, 
    the pp-power $\Pi_\Delta$ is a monotone construction with respect to the homomorphism order, 
    i.e.,  if $\bA\to \bB$, then $\Pi_\Delta(\bA)\to \Pi_\Delta(\bB)$ (this follows
    from the fact that existential positive formulas are preserved by homomorphism~\cite[Theorem 2.5.2]{Book}).
\end{remark}

\begin{figure}[ht!]
\centering
\begin{tikzpicture}

  \begin{scope}[xshift = -6cm, scale = 0.6]
    \node [vertex] (1) at (90:2) {};
    \node [vertex] (2) at (18:2) {};
    \node [vertex] (3) at (306:2) {};
    \node [vertex] (4) at (234:2) {};
    \node [vertex] (5) at (162:2) {};
    \node (L1) at (0,-2.75) {$\bC_5$};
      
    \foreach \from/\to in {1/2, 2/3, 3/4, 4/5, 5/1}     
    \draw [edge] (\from) to  (\to);
  \end{scope}

  \begin{scope}[xshift = -2.5cm, scale = 0.6]
    \node [vertex] (1) at (90:2) {};
    \node [vertex] (2) at (18:2) {};
    \node [vertex] (3) at (306:2) {};
    \node [vertex] (4) at (234:2) {};
    \node [vertex] (5) at (162:2) {};
    \node (L1) at (0,-2.75) {$\Pi_\Delta(\bC_5)\cong \bK_5$};
      
    \foreach \from/\to in {1/2, 2/3, 3/4, 4/5, 5/1, 1/3, 3/5, 5/2, 2/4, 4/1}     
    \draw [edge] (\from) to  (\to);
  \end{scope}

     \begin{scope}[xshift = 6cm, scale = 0.6]
    \node [vertex, label = above:{$\scriptstyle a_1$}] (1) at (90:1.6) {};
    \node [vertex] (11) at (145:1.3) {};
    \node [vertex] (12) at (215:1.3) {};
    \node [vertex, label = below:{$\scriptstyle b_1$}] (2) at (-90:1.6) {};
    \node [vertex] (22) at (35:1.3) {};
    \node [vertex] (21) at (-35:1.3) {};
    \node (L1) at (0,-2.75) {$\Gamma_\Delta(\bK_2)\cong \vec{\bC}_6$};
      
    \foreach \from/\to in {1/11, 11/12, 12/2}     
    \draw [arc, dashed] (\from) to  (\to);
    \foreach \from/\to in {2/21, 21/22, 22/1}     
    \draw [arc] (\from) to  (\to);
  \end{scope}

  \begin{scope}[xshift = 2.5cm, scale = 0.6]
    \node [vertex, label = above:{$\scriptstyle a$}] (1) at (90:1.6) {};
    \node [vertex, label = below:{$\scriptstyle b$}] (2) at (-90:1.6) {};
    \node (L1) at (0,-2.75) {$\bK_2$};
      
    \foreach \from/\to in {1/2, 2/1}     
    \draw [arc, dashed] (1) to [bend right] (2);
    \draw [arc] (2) to [bend right] (1);
  \end{scope}

\end{tikzpicture}
\caption{Let $\Delta$ be the primitive positive definition (of $\{E\}$ in $\{E\}$) 
where $\delta_E(x,y):=\exists z_1,z_2.\; E(x,z_1)\land E(z_1,z_2)\land E(z_2,y)$. 
On the left, we depict $\bC_5$ and its pp power $\Pi_\Delta(\bC_5)\cong \bK_5$ (an undirected
edge $xy$ represents $(x,y)$ and $(y,x)$), and
on the right, we depict $\bK_2$ and its gadget replacement $\Gamma_\Delta(\bK_2)\cong \vec{\bC}_6$ 
(dashed edges and solid edges indicate the respective edge replacements).
}
\label{fig:C5-K5}
\end{figure}
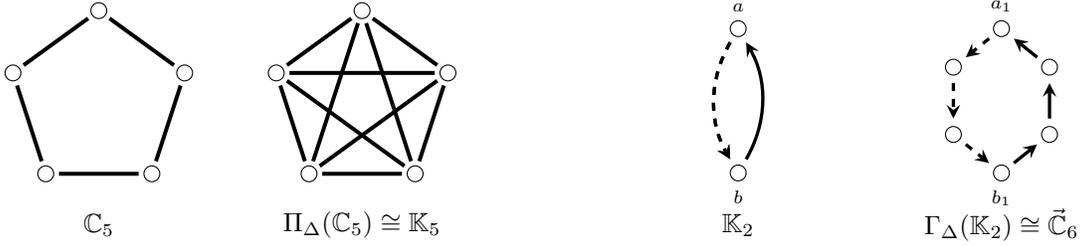

\begin{lemma}\label{lem:pp-def-compose}
    For every primitive positive definitions $\Delta_1$ of $\sigma$ in $\tau$, and
    $\Delta_3$ of $\tau$ in $\rho$, there is a primitive positive definition
    of $\Delta_3$ of $\sigma$ in $\rho$ such that for every $\rho$-structure $\bA$
    \[
    \Pi_{\Delta_3}(\bA) = \Pi_{\Delta_1}(\Pi_{\Delta_2}(\bA)).
    \]
\end{lemma}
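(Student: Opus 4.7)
The plan is to build $\Delta_3$ by syntactic substitution. Let $\Delta_1$ have dimension $d_1$, so for each $R\in\sigma$ of arity $r$ we have a $\tau$-pp-formula $\delta_R^1(\bar x)$ in $r\cdot d_1$ free variables, of the form $\exists \bar z\,\bigwedge_i S_i(\bar y_i)$ with $S_i\in\tau$. Let $\Delta_2$ have dimension $d_2$, so for each $S\in\tau$ of arity $s$ we have a $\rho$-pp-formula $\delta_S^2(\bar u)$ in $s\cdot d_2$ free variables. I will define $\Delta_3$ with dimension $d_3:=d_1 d_2$, and the formula $\delta_R^3$ for each $R\in\sigma$ of arity $r$ will be obtained by (i) replacing each variable $x$ appearing in $\delta_R^1$ (whether free or bound) by a block $\bar x=(x^1,\ldots,x^{d_2})$ of $d_2$ fresh variables, and (ii) replacing each atomic subformula $S_i(\bar y_i)$, where $\bar y_i=(y_{i,1},\ldots,y_{i,s_i})$, by $\delta_{S_i}^2(\bar y_{i,1},\ldots,\bar y_{i,s_i})$, introducing the additional existentially quantified variables from $\delta_{S_i}^2$ as fresh bound variables. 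The resulting $\delta_R^3$ is a $\rho$-pp-formula in $r\cdot d_1\cdot d_2$ free variables, as required.

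The verification is then bookkeeping: use the canonical identification $(A^{d_2})^{d_1}\cong A^{d_1 d_2}$ to compare domains. For any tuple $(\bar a_1,\ldots,\bar a_r)$, unfolded along this bijection, we chase the equivalences
\begin{align*}
(\bar a_1,\ldots,\bar a_r)\in R^{\Pi_{\Delta_1}(\Pi_{\Delta_2}(\bA))}
&\iff \Pi_{\Delta_2}(\bA)\models \delta_R^1(\bar a_1,\ldots,\bar a_r)\\
&\iff \exists \bar b\in (A^{d_2})^{|\bar z|}\ \forall i:\ \Pi_{\Delta_2}(\bA)\models S_i(\bar y_i)\\
&\iff \exists \bar b\in (A^{d_2})^{|\bar z|}\ \forall i:\ \bA\models \delta_{S_i}^2(\bar y_i)\\
&\iff \bA\models \delta_R^3(\bar a_1,\ldots,\bar a_r)\\
&\iff (\bar a_1,\ldots,\bar a_r)\in R^{\Pi_{\Delta_3}(\bA)},
\end{align*}
where the witnesses $\bar b$ for the middle existential quantifier (ranging over $A^{d_2}$) correspond, after unfolding, to the witnesses of the outer existential quantifiers in $\delta_R^3$ that come from step (i), while the witnesses for the inner existentials of $\delta_R^3$ come from those inside each $\delta_{S_i}^2$. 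This gives the desired equality $\Pi_{\Delta_3}(\bA)=\Pi_{\Delta_1}(\Pi_{\Delta_2}(\bA))$.

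The main potential pitfall is merely notational: one must carefully manage three levels of variable blocks (the free variables of $\delta_R^1$, the $d_2$-blocks introduced by substitution, and the existentially quantified auxiliaries coming from both $\delta_R^1$ and each copy of $\delta_{S_i}^2$), and ensure that bound variables are renamed so that the substitution does not cause unintended capture. Once the bookkeeping is fixed, the argument is a direct unwinding of the definition of $\Pi_\Delta$ together with the semantics of pp-formulas.
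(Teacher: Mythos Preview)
Your proposal is correct and follows essentially the same approach as the paper: the paper also defines $\Delta_3$ by substituting each atomic $\tau$-formula occurring in $\delta^1_R$ by the corresponding $\rho$-pp-formula $\delta^2_S$, and then asserts (deferring details to a reference) that the resulting $\Delta_3$ works. You have simply supplied the bookkeeping the paper omits --- the dimension $d_1 d_2$, the identification $(A^{d_2})^{d_1}\cong A^{d_1 d_2}$, the chain of equivalences, and the caveat about renaming bound variables to avoid capture.
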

\begin{proof}
    By substituting each relation symbol $R\in \tau$ in a formula $\delta^1_S\in\Delta_1$
    by the existential positive formula $\delta^2_R \in \Delta_2$, we obtain a set
    $\Delta_3$ of existential positive formulas $\delta^3_S$ indexed by $S\in \sigma$. It is
    not hard to observe that this primitive positive definition $\Delta_3$ satisfies
    the claim of this lemma. For further details we refer the reader to~\cite[Section 4.1.2]{Book}.
\end{proof}

Primitive positive constructions yield a canonical class of log-space reductions between
constraint satisfaction problems. 

\begin{lemma}[Corollary 3.5 in \cite{wonderland}]
\label{lem:pp-construction-reduction}
    Let $\bA$ and $\bB$ be (possibly infinite) structures with finite relational
    signature. If $\bA$ pp-constructs $\bB$, then $\CSP(\bB)$ reduces in logarithmic space
    to $\CSP(\bA)$. 
\end{lemma}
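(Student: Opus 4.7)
The plan is to combine two reductions. Since $\bA$ pp-constructs $\bB$, by definition there is a primitive positive definition $\Delta$ (say of dimension $d$) such that $\Pi_\Delta(\bA)$ and $\bB$ are homomorphically equivalent. Because homomorphisms compose, any input $\sigma$-structure $\bX$ satisfies $\bX \to \bB$ if and only if $\bX \to \Pi_\Delta(\bA)$, so as decision problems $\CSP(\bB)$ and $\CSP(\Pi_\Delta(\bA))$ coincide. It therefore suffices to exhibit a log-space reduction from $\CSP(\Pi_\Delta(\bA))$ to $\CSP(\bA)$.

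For this I would define a gadget-replacement operator $\Gamma_\Delta$ on input structures that mirrors the pp-power construction. Given a $\sigma$-structure $\bX$, the domain of $\Gamma_\Delta(\bX)$ consists of $X \times \{1,\dots,d\}$ together with, for each constraint $R(x_1,\dots,x_r)$ of $\bX$, a set of fresh elements---one per existentially quantified variable appearing in the pp-formula $\delta_R(\bar y_1,\dots,\bar y_r) = \exists \bar z\, \bigwedge_i S_i(\bar w_i)$. Each $d$-tuple $\bar y_j$ is identified with $((x_j,1),\dots,(x_j,d))$, the $\bar z$-variables are replaced by the fresh elements assigned to this particular constraint, and each resulting atom $S_i(\bar w_i)$ is added as a constraint of $\Gamma_\Delta(\bX)$. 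The illustration $\Gamma_\Delta(\bK_2) \cong \vec{\bC}_6$ in Figure~\ref{fig:C5-K5} is already a concrete instance of this recipe for the path pp-definition used there.

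The central step is to verify the correctness equivalence $\bX \to \Pi_\Delta(\bA) \iff \Gamma_\Delta(\bX) \to \bA$. The forward direction takes a homomorphism $h\colon \bX \to \Pi_\Delta(\bA)$, reads each image $h(x) \in A^d$ as the assignment of the copies $(x,1),\dots,(x,d)$, and for every constraint $R(\bar x)$ uses the fact that $\Pi_\Delta(\bA)\models R(h(\bar x))$ to pick witnesses for the $\bar z$-variables so that every atom $S_i$ of $\delta_R$ is satisfied in $\bA$; this extends $h$ to all of $\Gamma_\Delta(\bX)$. The backward direction inverts this: given $g\colon \Gamma_\Delta(\bX)\to \bA$, set $h(x) := (g(x,1),\dots,g(x,d))$; the fresh witnesses supplied by $g$ certify that $\bA \models \delta_R(h(x_1),\dots,h(x_r))$ for each constraint of $\bX$, so $h$ is a homomorphism into $\Pi_\Delta(\bA)$.

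Finally I would verify that $\Gamma_\Delta(\bX)$ is log-space computable from $\bX$. The point is that $\Delta$ (its signature, dimension $d$, and each formula $\delta_R$) is a fixed constant of the reduction, so each constraint of $\bX$ contributes only a constant number of new elements and atoms to $\Gamma_\Delta(\bX)$, yielding output of size linear in the input. Two log-space counters---one indexing constraints of $\bX$, one indexing atoms inside the current $\delta_R$---suffice to drive the output, and fresh witnesses are canonically labeled by pairs (constraint-index, local-variable-position). The main technical subtlety, though routine, is ensuring that existential witnesses remain fresh across distinct constraints, which is enforced by this indexing scheme; this is the only real bookkeeping obstacle in the argument.
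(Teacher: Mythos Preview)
Your proposal is correct and follows essentially the same approach the paper indicates: the paper does not give a formal proof of this lemma (it cites \cite{wonderland}), but immediately afterward it introduces the gadget replacement $\Gamma_\Delta$, remarks that $\Gamma_\Delta(\bX)$ is log-space computable, and states the adjunction $\bX \to \Pi_\Delta(\bA) \iff \Gamma_\Delta(\bX) \to \bA$ (Observation~\ref{obs:adjoint}) as the ingredients from which Lemma~\ref{lem:pp-construction-reduction} follows. Your write-up simply spells out these ingredients and their verification in detail, which is exactly what the paper leaves implicit.
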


\subsubsection*{Gadget replacements}

For every primitive positive formula $\delta(\bar x)$ without equalities\footnote{If a 
primitive positive formula contains a conjunct $x = y$, we obtain an equivalent formula
$\delta'$ by deleting the conjunct $x = y$ and replacing each occurrence of the variable
$y$ by the variable $x$.} we construct a structure $\bD_\delta(\bar d)$
with a distinguished tuples of vertices $d$ as follows.
\begin{itemize}
    \item The domain $D$ of $\bD$ consists of a vertex $v_y$ for every (free or bounded) variable $y$ of $\delta$.
    \item The distinguished vertices $d_1,\dots, d_m$ are the vertices $v_{x_1},\dots, v_{x_m}$.
    \item For every relation symbol $R$ there is a tuple $\bar{v} \in R^{\bD}$ if and only if $\delta$
    contains the conjunct $R(\bar{y})$ where $\bar v$ is the tuple of vertices corresponding to the
    tuple of variables $\bar y$.
\end{itemize}
The structure $\bD(\bar d)$ is sometimes called the \emph{canonical database} of $\delta$. 
For instance, if $\delta_E$ is the formula considered above, the the canonical database of $\delta_E$
is the directed with vertices $1,2,3,4$, and the distinguished vertices are $1$ and $4$.

The canonical database $\bD_\delta(\bar{d})$ and the primitive positive formula $\delta$
are closely related: for every structure $\bA$ and a tuple $\bar a$ of elements of $A$
\[
    \bA\models \delta(\bar a) \text{ if and only if there is a homomorphism }
    f\colon \bD\to \bA \text{ such that } f(\bar d) = f(\bar a)
\]
(see, e.g.,~\cite[Proposition 1.2.5]{Book}).

Building on canonical databases, for every $d$-dimensional
primitive positive definition $\Delta$ of $\sigma$ in $\tau$
we associate a mapping $\Gamma_\Delta$ from $\sigma$-structures to $\tau$-structures as follows.
Given a $\sigma$-structure $\bB$ the \emph{gadget replacement}
$\Gamma_\phi(\bB)$ is the $\tau$-structure obtained from $\bB$ where
\begin{itemize}
    \item for each vertex $b\in B$ we introduce $d$ vertices $b_1,\dots, b_d$, and
    \item for each $R\in \sigma$ of arity $r$, and every tuple $(b^1,\dots, b^r)\in R^\bB$
    we introduce a fresh copy of $\bD_{\delta_R}(\bar{d}^1, \dots, \bar{d}^r)$ 
    and we identify each $d$-tuple $\bar{d}^i$ with $(b_1^i,\dots, b_d^i)$.
\end{itemize}
Going back to our on going example $\Delta := \{\delta_E\}$ and taking $\bB : = \bK_2$, 
the gadget replacement $\Gamma_\Delta(\bB)$ is isomorphic to the directed
$6$-cycle $\vec{\bC}_6$ (see Figure~\ref{fig:C5-K5} for a depiction).

It is straightforward to observe that, in general, for every primitive positive
definition $\Delta$, the gadget replacement $\Gamma_\Delta(\bB)$
can be constructed in logarithmic space from $\bB$. This fact and the following
observation can be used to prove Lemma~\ref{lem:pp-construction-reduction}.

\begin{observation}
[Observation 4.4 in~\cite{krokhinSIAM23}]
\label{obs:adjoint}
The following statement holds for every primitive positive formula $\phi$,
and every pair of digraphs (structures) $\bA$ and $\bA'$
\[
    \bA' \to \Pi_\Delta(\bA) \text{ if and only if } \Gamma_\Delta(\bA')\to \bA.
\]
\end{observation}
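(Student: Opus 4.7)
The plan is to unpack both sides of the stated equivalence and exhibit a natural bijection between the witnessing homomorphisms. Fix a $d$-dimensional primitive positive definition $\Delta$ of $\sigma$ in $\tau$, and recall that the domain of $\Pi_\Delta(\bA)$ is $A^d$. Thus any candidate homomorphism $g\colon \bA' \to \Pi_\Delta(\bA)$ is determined by $d$ functions $g_1,\dots,g_d\colon A'\to A$ via $g(a') = (g_1(a'),\dots,g_d(a'))$. By definition of the pp power, $g$ is a homomorphism precisely when, for every $R\in\sigma$ of arity $r$ and every tuple $(a'^1,\dots,a'^r)\in R^{\bA'}$, we have
\[
\bA \models \delta_R\bigl(g_1(a'^1),\dots,g_d(a'^1),\,\dots,\,g_1(a'^r),\dots,g_d(a'^r)\bigr).
\]

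On the other side, a candidate homomorphism $h\colon \Gamma_\Delta(\bA')\to \bA$ has as its domain the set consisting of the $d$ copies $a'_1,\dots,a'_d$ of each vertex $a'\in A'$, together with the fresh ``internal'' vertices arising from each copy of the canonical database $\bD_{\delta_R}$ glued into $\Gamma_\Delta(\bA')$ for each tuple in $R^{\bA'}$. Restricting $h$ to the vertices of the form $a'_i$ yields $d$ functions $h_1,\dots,h_d\colon A'\to A$. The restriction of $h$ to any single gadget copy $\bD_{\delta_R}$ inserted for a tuple $(a'^1,\dots,a'^r)\in R^{\bA'}$ is precisely a homomorphism from $\bD_{\delta_R}$ into $\bA$ whose action on the distinguished tuple equals $(h_1(a'^1),\dots,h_d(a'^1),\dots,h_1(a'^r),\dots,h_d(a'^r))$; and conversely any choice of such gadget homomorphisms, one per instance, extends the $h_i$ to a homomorphism from $\Gamma_\Delta(\bA')$ into $\bA$, because the fresh internal vertices of distinct gadget copies are disjoint and bear no relational constraints outside their own copy.

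The two descriptions are now linked by the canonical-database property cited in the excerpt: for every primitive positive formula $\delta_R(\bar x)$ and every tuple $\bar a$ of elements of $A$,
\[
\bA\models \delta_R(\bar a)\ \Longleftrightarrow\ \text{there is a homomorphism }\bD_{\delta_R}\to \bA\text{ sending }\bar d \text{ to } \bar a.
\]
Applying this to $\bar a = (h_1(a'^1),\dots,h_d(a'^r))$ shows that the existence of a homomorphism $h\colon\Gamma_\Delta(\bA')\to \bA$ extending the $h_i$ is equivalent to $\bA\models \delta_R(\bar h(a'^1),\dots,\bar h(a'^r))$ for every $R$-tuple in $\bA'$, which is exactly the condition characterizing when the $h_i$ assemble into a homomorphism $\bA'\to \Pi_\Delta(\bA)$. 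Setting $g_i := h_i$ (in one direction) and choosing, for each instance, any gadget homomorphism witnessing the pp formula (in the other) gives the desired equivalence.

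The only subtle point is bookkeeping around equalities and variable identification in the canonical database construction: one must ensure that internal vertices of distinct gadget copies are truly fresh (so that choices in one instance do not constrain another) and that the distinguished vertices of $\bD_{\delta_R}$ are correctly identified with the tuples $(b_1^i,\dots,b_d^i)$ in $\Gamma_\Delta(\bA')$. Once this identification is handled as in the construction of $\Gamma_\Delta$, the equivalence follows directly from the canonical database correspondence; this is the step that requires the most care but is essentially notational.
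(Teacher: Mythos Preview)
Your argument is correct and is the standard proof of this adjunction: decompose a map into $\Pi_\Delta(\bA)$ into its $d$ coordinate functions, decompose a map out of $\Gamma_\Delta(\bA')$ into its restriction to the base vertices plus its restrictions to the individual gadget copies, and use the canonical-database correspondence to match the two. The bookkeeping you flag (freshness of internal gadget vertices, correct identification of the distinguished tuples) is exactly what needs to be checked and you handle it appropriately.

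Note, however, that the paper does not give its own proof of this statement: it is recorded as an observation with a citation to~\cite{krokhinSIAM23} and used as a black box. So there is no in-paper proof to compare against; your write-up simply supplies the (standard) argument that the paper omits.
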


\subsubsection*{Log-space reductions and rpp-constructions}
\label{subsec:rpp}

We say that a (not necessarily finite) restricted CSP template
$(\bA,\bB)$ \emph{rpp-constructs} a restricted CSP template $(\bA',\bB')$ if
there is a primitive positive definition $\Delta$  such that
\[
(\Pi_\Delta(\bA) \times \bB') \leftrightarrow  (\bA' \times \bB') \text{ and } \bB'\to \Pi_\Delta(\bB).
\]

\begin{remark}\label{rmk:pp-and-rpp}
    Every primitive positive formula is satisfiable in some finite structure, e.g., in the loop $\bL$. 
    This implies that for every primitive positive definition of $\sigma$ in $\tau$ the
    pp-power $\Pi_\Delta(\bL_\tau)$ of the ($\tau$-)loop is homomorphically equivalent to the
    ($\sigma$-)loop $\bL_\sigma$.  In particular, this implies that the following statement are
    equivalent for every $\tau$-structure $\bA$ and every $\sigma$-structure $\bA'$.
    \begin{itemize}
        \item $\bA$ pp-constructs $\bA'$.
        \item $(\bA,\bL_\tau)$ rpp-constructs $(\bA',\bL_\sigma)$.
        \item $(\bA,\bL_\tau)$ rpp-constructs $(\bA',\bB')$ for ever structure $\bB'$.
    \end{itemize}
\end{remark}

It is well-known that pp-constructions compose, and building on this fact, it is
straightforward to observe that rpp-constructions compose. 

\begin{lemma}\label{lem:rpp-compose}
    Consider three restricted CSP templates $(\bA_1,\bB_1)$, $(\bA_2,\bB_2)$, 
    and $(\bA_3,\bB_3)$. If $(\bA_1,\bB_1)$ rpp-constructs $(\bA_2,\bA_2)$, 
    and $(\bA_2,\bB_2)$ rpp-constructs $(\bA_3,\bB_3)$, then $(\bA_1,\bB_1)$ 
    rpp-constructs $(\bA_3,\bB_3)$. 
\end{lemma}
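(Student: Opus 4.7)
The plan is to take $\Delta_i$ as the primitive positive definition witnessing that $(\bA_i,\bB_i)$ rpp-constructs $(\bA_{i+1},\bB_{i+1})$ for $i \in \{1,2\}$, invoke Lemma~\ref{lem:pp-def-compose} to form a composed definition $\Delta_3$ satisfying $\Pi_{\Delta_3}(\bA) = \Pi_{\Delta_2}(\Pi_{\Delta_1}(\bA))$ for every structure $\bA$, and then show that $\Delta_3$ witnesses $(\bA_1,\bB_1)$ rpp-constructing $(\bA_3,\bB_3)$. There are two conditions to verify: the restriction condition $\bB_3 \to \Pi_{\Delta_3}(\bB_1)$, and the domain condition $\Pi_{\Delta_3}(\bA_1) \times \bB_3 \leftrightarrow \bA_3 \times \bB_3$.

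The restriction condition is immediate. From $\bB_2 \to \Pi_{\Delta_1}(\bB_1)$, the monotonicity of pp-powers (Remark~\ref{rmk:pp-power-monotone}) yields $\Pi_{\Delta_2}(\bB_2) \to \Pi_{\Delta_2}(\Pi_{\Delta_1}(\bB_1)) = \Pi_{\Delta_3}(\bB_1)$, and composing with $\bB_3 \to \Pi_{\Delta_2}(\bB_2)$ closes the chain.

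For the domain condition, the main ingredient I will use is that pp-powers distribute over direct products, i.e., $\Pi_{\Delta}(\bC \times \bD) \cong \Pi_\Delta(\bC) \times \Pi_\Delta(\bD)$, a standard consequence of the fact that primitive positive formulas are both preserved and reflected by direct products. Applying $\Pi_{\Delta_2}$ to the hypothesis $\Pi_{\Delta_1}(\bA_1) \times \bB_2 \leftrightarrow \bA_2 \times \bB_2$, together with monotonicity and distributivity, gives $\Pi_{\Delta_3}(\bA_1) \times \Pi_{\Delta_2}(\bB_2) \leftrightarrow \Pi_{\Delta_2}(\bA_2) \times \Pi_{\Delta_2}(\bB_2)$. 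Taking the product of both sides with $\bB_3$ and using $\bB_3 \to \Pi_{\Delta_2}(\bB_2)$ one can absorb the $\Pi_{\Delta_2}(\bB_2)$ factor (since $\bC \times \bD \leftrightarrow \bD$ whenever $\bD \to \bC$), reducing the statement to $\Pi_{\Delta_2}(\bA_2) \times \bB_3 \leftrightarrow \bA_3 \times \bB_3$, which is exactly the second rpp-construction hypothesis.

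The only subtlety lies in the distributivity of pp-powers over direct products, which is not isolated as a lemma in the excerpt but follows by direct syntactic inspection: a pp-formula $\delta(\bar x)$ holds in $\bC \times \bD$ at $((c_1,d_1),\dots,(c_k,d_k))$ if and only if it holds in $\bC$ at $(c_1,\dots,c_k)$ and in $\bD$ at $(d_1,\dots,d_k)$, because witnesses for the existential quantifiers may be chosen coordinatewise. With this in hand the composition is routine, and no real difficulty remains --- rpp-constructions compose for the same reason pp-constructions do, with only the additional bookkeeping required to carry the restriction structure along.
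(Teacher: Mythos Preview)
Your proof is correct and follows the same overall strategy as the paper: compose the two pp-definitions via Lemma~\ref{lem:pp-def-compose} and verify the two rpp-conditions for the composite. In fact your verification of the domain condition is more careful than the paper's --- the paper asserts the equality $\Pi_{\Delta_2}(\Pi_{\Delta_1}(\bA_1)) = \Pi_{\Delta_2}(\bA_2)$, which does not follow from the hypotheses (only $\Pi_{\Delta_1}(\bA_1)\times\bB_2 \leftrightarrow \bA_2\times\bB_2$ is given); your use of distributivity of pp-powers over direct products together with the absorption $\Pi_{\Delta_2}(\bB_2)\times\bB_3 \leftrightarrow \bB_3$ is the honest way to close that gap.
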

\begin{proof}
    For $i\in\{1,2\}$, let $\Delta_i$ be a primitive positive definition witnessing that
    $(\bA_i,\bB_i)$ rpp-constructs $(\bA_{i+1},\bB_{i+1})$. Let $\Delta_3$ by a primitive
    positive definition such that $\Pi_{\Delta_3}=\Pi_{\Delta_2}\circ \Pi_{\Delta_1}$
    (Lemma~\ref{lem:pp-def-compose}). It is straightforward to observe that
    $\bB_3\to \Pi_{\Delta_3}(\bB_1)$:
    since  $\bB_3\to \Pi_{\Delta_2}(\bB_2)$, and $\bB_2\to \Pi_1(\bB_1)$, it follows (via
    Remark~\ref{rmk:pp-power-monotone}) that
    \[
        \bB_3\to \Pi_{\Delta_2}(\bB_2) \to \Pi_{\Delta_2}(\Pi_{\Delta_1}(\bB_1)) = \Pi_{\Delta_3}(\bB_1).
    \]
    It is also not hard to notice that $\Pi_{\Delta_3}(\bA_1)\times \bB_3$ is homomorphically equivalent to
    $\bA_3\times \bB_3$: by the choice of $\Delta_3$ we know that
    $\Pi_{\Delta_3}(\bA_1) = \Pi_{\Delta_2}(\Pi_{\Delta_1}(\bA_1)) = \Pi_{\Delta_2}(\bA_2)$, and since
    $\Pi_{\Delta_2}(\bA_2)\times \bB_3$ is homomorphically equivalent to $\bA_3\times \bB_3$, we
     conclude that
     \[
        (\Pi_{\Delta_3}(\bA_1) \times \bB_3) \leftrightarrow  (\bA_3 \times \bB_3) \text{ and } \bB_3\to \Pi_{\Delta_3}(\bB_1).
    \]
    Therefore, $\Delta_3$ is a primitive positive definition witnessing that
    $(\bA_1,\bB_1)$ rpp-constructs $(\bA_3,\bB_3)$.
\end{proof}

\begin{lemma}\label{lem:rpp-constructions}
    Consider two (not necessarily finite) restricted CSP templates $(\bA,\bB)$ and
    $(\bA',\bB')$. If $(\bA,\bB)$ rpp-constructs $(\bA',\bB')$, then there is a log-space
    reduction from $\RCSP(\bA',\bB')$ to $\RCSP(\bA,\bB)$.
\end{lemma}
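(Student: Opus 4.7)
The plan is to show that the gadget replacement $\Gamma_\Delta$ associated to an rpp-witness $\Delta$ is the reduction we need. Concretely, I would fix a primitive positive definition $\Delta$ witnessing that $(\bA,\bB)$ rpp-constructs $(\bA',\bB')$, and define the reduction mapping an instance $\bI$ of $\RCSP(\bA',\bB')$ to the $\tau$-structure $\Gamma_\Delta(\bI)$. As noted in the paragraph following the definition of $\Gamma_\Delta$, gadget replacement is computable in logarithmic space, so this map is a legitimate log-space computable function.

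Two conditions need to be verified in order for this to constitute a valid reduction of promise problems: (i) the promise is preserved, i.e.\ $\Gamma_\Delta(\bI) \in \CSP(\bB)$ whenever $\bI \in \CSP(\bB')$; and (ii) the answer is preserved, i.e.\ $\Gamma_\Delta(\bI) \in \CSP(\bA)$ if and only if $\bI \in \CSP(\bA')$, under the assumption $\bI \in \CSP(\bB')$. Both will be obtained by repeatedly applying the adjointness Observation~\ref{obs:adjoint}, together with the two defining properties of rpp-construction.

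For (i), I start from $\bI\to \bB'$ and compose with the rpp-condition $\bB'\to \Pi_\Delta(\bB)$ to conclude $\bI \to \Pi_\Delta(\bB)$; Observation~\ref{obs:adjoint} then yields $\Gamma_\Delta(\bI)\to \bB$. For (ii), the nontrivial ingredient is the homomorphic equivalence $\Pi_\Delta(\bA)\times \bB' \leftrightarrow \bA'\times \bB'$. In the forward direction, if $\Gamma_\Delta(\bI)\to \bA$, Observation~\ref{obs:adjoint} gives $\bI\to \Pi_\Delta(\bA)$, and combined with the promise $\bI\to\bB'$ one obtains $\bI\to \Pi_\Delta(\bA)\times \bB'\to \bA'\times \bB'\to \bA'$. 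In the backward direction, if $\bI\to \bA'$, then together with $\bI\to\bB'$ one gets $\bI\to \bA'\times \bB'\to \Pi_\Delta(\bA)\times \bB'\to \Pi_\Delta(\bA)$, and Observation~\ref{obs:adjoint} again yields $\Gamma_\Delta(\bI)\to \bA$.

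I do not expect a serious obstacle here; the argument is essentially a routine unfolding of the definitions. The only mildly delicate point is that the rpp-definition only guarantees homomorphic equivalence \emph{after} taking a product with $\bB'$, not between $\Pi_\Delta(\bA)$ and $\bA'$ directly. The trick is that the promise $\bI\to\bB'$ is precisely what allows us to factor the homomorphism through this product, so the $\bB'$-twist in the definition of rpp-construction is exactly tailored to capture the additional information carried by the promise.
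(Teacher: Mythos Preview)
Your proposal is correct and follows essentially the same approach as the paper: both use the gadget replacement $\Gamma_\Delta$ as the log-space reduction, verify the promise via $\bB'\to\Pi_\Delta(\bB)$ and Observation~\ref{obs:adjoint}, and establish correctness by passing through the product with $\bB'$ to exploit the homomorphic equivalence $\Pi_\Delta(\bA)\times\bB'\leftrightarrow\bA'\times\bB'$. The only cosmetic difference is that the paper phrases the correctness step as a chain of if-and-only-if statements, whereas you spell out the two directions separately.
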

\begin{proof}
    Let $\Delta$ be a primitive positive definition witnessing that
    $(\bA,\bB)$ rpp-constructs $(\bA',\bB')$, and let $\bC$ be an instance to $\RCSP(\bA',\bB')$.
    Since $\bC\to  \bB'$ and $\bB'\to \Pi_\Delta(\bB)$, we know that $\bC\to \Pi_\Delta(\bB)$.
    Hence, by Observation~\ref{obs:adjoint}, it follows that $\Gamma_\Delta(\bC)\to \bB$, 
    so $\Gamma_\Delta(\bC)$ is a valid instance to $\RCSP(\bA,\bB)$. Now we show that $\bC$
    is a yes-instance to $(\bA',\bB')$ if and only if $\Gamma_\Delta(\bC)$ is a yes-instance
    to $\RCSP(\bA,\bB)$. Since $\bC\to \bB$, we know that $\bC\to \bA'$ if and only if
    $\bC\to \bA'\times \bB'$, which in turn is the case if and only if 
    $\bC\to  \Pi_\Delta(\bA)\times \bB'$. Again, using our assumption that $\bC\to  \bB'$
    we see that $\bC\to  \Pi_\Delta(\bA)\times \bB'$ if and only if $\bC\to \Pi_\Delta(\bA)$. 
    Now, by Observation~\ref{obs:adjoint}, $\bC \to \Pi_\Delta(\bA)$ if and only if
    $\Gamma_\Delta(\bC)\to \bA$, so putting together all these equivalences we conclude that
    $\bC\to \bA'$ if and only if $\Gamma_\Delta(\bC)\to \bA$. Therefore, $\RCSP(\bA',\bB')$
    reduces in logarithmic space to $\RCSP(\bA,\bB)$. 
\end{proof}

\begin{example}
    Observe that the restricted $\CSP$ template $(\bC_5,\bK_3)$ rpp-constructs the template
    $(\bK_5,\bL)$ via the formula
    \[
    \delta_E(x,y):=\exists z_1,z_2.\; E(x,z_1)\land E(z_1,z_2)\land E(z_2,y).
    \]
    Indeed, we already noticed that the pp power $\Pi_\Delta(\bC_5)$ is isomorphic to $\bK_5$. 
    It is also not hard to observe that the pp power $\Pi_\Delta(\bK_5)$ is homomorphically equivalent
    to the loop $\bL$. Hence, by Lemma~\ref{lem:rpp-constructions} we conclude that $\RCSP(\bC_5,\bK_3)$
    is $\NP$-hard, i.e., deciding if a $3$-colourable graph $\bG$ admits a homomorphism to $\bC_5$
    is $\NP$-hard.
\end{example}

\section{Finite domain restrictions}
\label{sect:finite-domain-restrictions}

In this section we show that for every restricted CSP template $(\bA,\bB)$ with finite restriction,
there is a structure $\bC$ such that $\RCSP(\bA,\bB)$ and $\CSP(\bC)$ are log-space equivalent. 
Moreover, if $\bA$ is also a finite structure, then $\bC$ is a finite structure. It thus follows
from the finite domain dichotomy~\cite{BulatovFVConjecture,ZhukFVConjecture} that restricted CSPs
with finite domain and finite restriction, have a P versus NP-complete dichotomy. 

\subsection{Exponential structures}

Given a pair of $\tau$-structure $\bA$ and $\bB$ the \emph{exponential} $\bA^\bB$ is the $\tau$-structure 
\begin{itemize}
    \item with vertex set all functions $f\colon B\to A$, and
    \item for each $R\in\tau$ of arity $r$ there is an $r$-tuple $(f_1,\dots, f_r)$ belongs
    to the interpretation of $R$ in $\bA^\bB$
    if and only if $(f_1(b_1),\dots, f_r(b_r))\in R^\bA$ whenever $(b_1,\dots, b_r) \in R^\bB$.
\end{itemize}
In Figure~\ref{fig:exponential} we depict an exponential digraph construction. This image is also
found in~\cite{HellNesetril}, were the reader can also find further discussion and properties
of exponential digraphs. For this paper we state the following properties of general exponential
structures.

\begin{lemma}\label{lem:properties-exponential}
    The following statements hold for all $\tau$-structures $\bA,\bB$ and $\bC$.
    \begin{itemize}
        \item $\bA^\bL$ is isomorphic to $\bA$. 
        \item $\bC\to \bA^\bB$ if and only if $\bC\times \bB\to \bA$.
    \end{itemize}
\end{lemma}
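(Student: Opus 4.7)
The plan is to treat both claims as routine facts about the standard product--exponential adjunction in the category of $\tau$-structures, with essentially no combinatorial content beyond unwinding definitions.

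For the first bullet, note that $L=\{a\}$ is a singleton, so the set of functions $f\colon L\to A$ is in natural bijection with $A$ via the evaluation map $\Phi\colon f\mapsto f(a)$. I would then check that $\Phi$ is an isomorphism of $\tau$-structures: by definition of the exponential, a tuple $(f_1,\dots,f_r)$ lies in $R^{\bA^\bL}$ iff $(f_1(b_1),\dots,f_r(b_r))\in R^\bA$ for every $(b_1,\dots,b_r)\in R^\bL$; but since $\bL$ is the maximally full loop, the only such tuple is $(a,\dots,a)$, so this condition reduces to $(\Phi(f_1),\dots,\Phi(f_r))\in R^\bA$. The bijection $\Phi$ therefore preserves and reflects each relation.

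For the second bullet, I would use the curry/uncurry correspondence. Given a homomorphism $g\colon \bC\to \bA^\bB$, define $h\colon \bC\times \bB\to \bA$ by $h(c,b):=g(c)(b)$. To verify that $h$ is a homomorphism, pick a tuple $\bigl((c_1,b_1),\dots,(c_r,b_r)\bigr)\in R^{\bC\times \bB}$; by definition of the direct product this unfolds to $(c_1,\dots,c_r)\in R^\bC$ and $(b_1,\dots,b_r)\in R^\bB$, and then the assumption that $g$ is a homomorphism together with the defining property of $R^{\bA^\bB}$ yields $(h(c_1,b_1),\dots,h(c_r,b_r))=(g(c_1)(b_1),\dots,g(c_r)(b_r))\in R^\bA$. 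The converse direction is entirely symmetric: given $h\colon \bC\times \bB\to \bA$, define $g(c)(b):=h(c,b)$, and the same chain of equivalences (now read in the other order) shows that $g$ is a homomorphism $\bC\to \bA^\bB$.

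The main obstacle, if any, is purely bookkeeping: one has to be careful that the quantifier ``for every $(b_1,\dots,b_r)\in R^\bB$'' appearing in the definition of $R^{\bA^\bB}$ matches up precisely with the conjunction characterising tuples of $R^{\bC\times\bB}$. Once this is unpacked, both statements follow immediately, and no deeper properties of relational structures are needed; this explains why the claim can be stated as a bare lemma without further comment in the text.
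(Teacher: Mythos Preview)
Your proof is correct and follows essentially the same approach as the paper: for the first item the paper uses the inverse bijection $a\mapsto f_a$ without spelling out the relation-preservation check, and for the second item the paper simply cites \cite[Corollary 1.5.12]{Foniok-Thesis} rather than writing out the curry/uncurry argument you give. Your version is more self-contained but not conceptually different.
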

\begin{proof}
    The first statement follows by noticing that the mapping $a\mapsto f_a$ where $f_a\colon L\to A$
    is the function mapping the unique element $l\in L$ to $a$, defines an isomorphism
    from $\bA$ to $\bA^\bL$.
    For the second statement we refer the reader to~\cite[Corollary 1.5.12]{Foniok-Thesis}.
\end{proof}

\begin{corollary}\label{cor:exponential-solution}
    For every pair of (possibly infinite) $\tau$-structures $\bA$ and $\bB$,
    the restricted CSP template $(\bA^\bB,\bL)$ rpp-constructs the template $\RCSP(\bA,\bB)$.
\end{corollary}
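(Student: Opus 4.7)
The plan is to exhibit a primitive positive definition $\Delta$ witnessing that $(\bA^\bB,\bL)$ rpp-constructs $(\bA,\bB)$, and to argue that the \emph{trivial} choice works. Concretely, take $\Delta$ to be the $1$-dimensional pp-definition of $\tau$ in $\tau$ given by $\delta_R(\bar x) := R(\bar x)$ for every $R\in\tau$, so that $\Pi_\Delta$ is naturally isomorphic to the identity on $\tau$-structures. With this choice, I need to verify the two defining conditions of an rpp-construction: (i) $\bB \to \Pi_\Delta(\bL)$, and (ii) $\Pi_\Delta(\bA^\bB)\times \bB \leftrightarrow \bA\times\bB$.

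Condition (i) is immediate: by Remark~\ref{rmk:pp-and-rpp}, $\Pi_\Delta(\bL)$ is homomorphically equivalent to the loop $\bL$, and every $\tau$-structure maps to $\bL$. The substantive content lies in condition (ii), which after unfolding $\Pi_\Delta$ reads $\bA^\bB\times\bB \leftrightarrow \bA\times\bB$. Here the key tool is the exponential-product adjunction of Lemma~\ref{lem:properties-exponential}: $\bC\to\bA^\bB$ if and only if $\bC\times\bB\to\bA$. Applying this with $\bC:=\bA^\bB$ (and the identity on $\bA^\bB$) yields the ``evaluation'' homomorphism $\bA^\bB\times\bB\to\bA$, which pairs with the projection $\bA^\bB\times\bB\to\bB$ to give $\bA^\bB\times\bB\to\bA\times\bB$. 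For the reverse, the projection $\bA\times\bB\times\bB\to\bA$ is a homomorphism, so by the adjunction $\bA\times\bB\to\bA^\bB$, and pairing with the projection $\bA\times\bB\to\bB$ provides $\bA\times\bB\to\bA^\bB\times\bB$.

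The main obstacle is conceptual rather than technical: one must recognise that \emph{all} of the nontrivial combinatorial content of the reduction is already packaged into the exponential structure $\bA^\bB$, so the ambient pp-definition can be trivial. After that observation, the proof is a two-line application of Lemma~\ref{lem:properties-exponential}. As a sanity check, combining the resulting rpp-construction with Lemma~\ref{lem:rpp-constructions} yields precisely the classical reduction: an instance $\bC$ with $\bC\to\bB$ is a yes-instance of $\RCSP(\bA,\bB)$ iff $\bC\to\bA^\bB$, since under the promise $\bC\to\bB$ one has the chain $\bC\to\bC\times\bB\to\bA$ from $\bC\to\bA^\bB$, and conversely $\bC\to\bA$ forces $\bC\times\bB\to\bA$ via projection.
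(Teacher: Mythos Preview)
Your proof is correct and essentially identical to the paper's: both take the trivial $1$-dimensional pp-definition $\delta_R(\bar x):=R(\bar x)$, observe that $\bB\to\Pi_\Delta(\bL)=\bL$, and establish $\bA^\bB\times\bB\leftrightarrow\bA\times\bB$ via the exponential adjunction of Lemma~\ref{lem:properties-exponential} in exactly the same two-step argument. Your additional sanity-check paragraph is sound but not needed for the statement itself.
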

\begin{proof}
    Let $\Delta$ be the trivial primitive positive definition of $\tau$ in $\tau$, 
    i.e., for each $R\in\tau$ or arity $r$ let $\delta_R(x_1,\dots, x_r):=R(x_1,\dots, x_r)$.
    So for every structure $\bA$ the equality $\Pi_\Delta(\bA) = \bA$ holds.
    Clearly, $\bB\to \bL = \Pi_\Delta(\bL)$, and notice that $\bA^\bB\times \bB$ is homomorphically
    equivalent to $\bA\times \bB$: since $\bA^\bB\to \bA^\bB$, it follows from the second
    item in Lemma~\ref{lem:properties-exponential} that $\bA^\bB\times \bB \to \bA$,
    and clearly $\bA^\bB\times \bB \to \bB$, so $\bA^\bB\times \bB \to \bA\times \bB$; 
    conversely, $(\bA\times \bB) \times \bB \to \bA$, so by Lemma~\ref{lem:properties-exponential},
    $\bA\times \bB\to \bA^\bB$, and since $\bA\times \bB\to \bB$, it follows that $\bA\times \bB\to 
    \bA^\bB\times \bB$. Therefore, $\Delta$ is a primitive positive definition witnessing
    that the claim of the lemma holds.
\end{proof}

\begin{figure}[ht!]
\centering
\begin{tikzpicture}[scale = 0.7]

  \begin{scope}[xshift = -8cm, yshift = -0.35cm, scale = 0.7]
    \node [vertex, label = above:{$11$}] (1) at (90:2) {};
    \node [vertex, label = 330:{$22$}] (2) at (330:2) {};
    \node [vertex, label = 210:{$33$}] (3) at (210:2) {};
      
    \foreach \from/\to in {1/2, 2/3, 3/1}     
    \draw [arc] (\from) to  (\to);
  \end{scope}

  \begin{scope}[scale = 0.7]
    \node [vertex, label = above:{$12$}] (1) at (90:2) {};
    \node [vertex, label = 30:{$32$}] (2) at (30:2) {};
    \node [vertex, label = 330:{$31$}] (3) at (330:2) {};
    \node [vertex, label = 270:{$21$}] (4) at (270:2) {};
    \node [vertex, label = 210:{$23$}] (5) at (210:2) {};
    \node [vertex, label = 150:{$13$}] (6) at (150:2) {};
      
    \foreach \from/\to in {1/2, 2/3, 3/4, 4/5, 5/6, 6/1}     
    \draw [arc] (\from) to  (\to);
  \end{scope}

\end{tikzpicture}
\caption{The exponential $\vec{\bC}_3^{\bK_2}$ where a label $ij$ represents the
function $f\colon \{1,2\}\to\{1,2,3\}$ defined by $1\mapsto i$ and $2\mapsto j$.
}
\label{fig:exponential}
\end{figure}
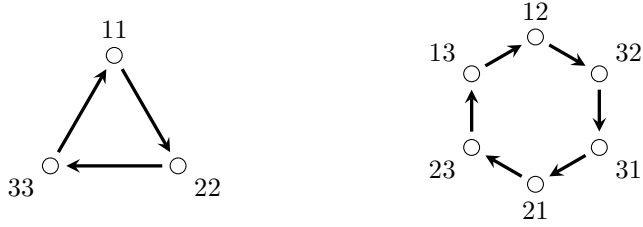

\begin{lemma}\label{lem:rpp-construction-exponential}
    Let $\tau$ be a finite relational signature, and $\bA$ be a (possible infinite) $\tau$-structure.
    If $\bB$ is a finite $\tau$-structure, then the restricted $\RCSP$ template $(\bA, \bB)$ 
    rpp-constructs the $\RCSP$ template $(\bA^\bB, \bL)$. 
\end{lemma}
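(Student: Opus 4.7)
The plan is to exhibit an explicit primitive positive definition $\Delta$ and verify the two clauses of the rpp-construction by hand. Enumerate the elements of the finite structure as $B=\{b_1,\ldots,b_n\}$ with $n=|B|$, and take $\Delta$ to be the $n$-dimensional primitive positive definition of $\tau$ in $\tau$ given, for every $R\in\tau$ of arity $r$, by
\begin{equation*}
\delta_R\bigl(x_1^1,\ldots,x_1^n,\;\ldots,\;x_r^1,\ldots,x_r^n\bigr)\;:=\;\bigwedge_{(b_{j_1},\ldots,b_{j_r})\in R^{\bB}} R\bigl(x_1^{j_1},\ldots,x_r^{j_r}\bigr).
\end{equation*}
Note that this is a finite conjunction of atomic formulas (no existential quantifiers, no equalities are needed): the finiteness of $\bB$ and of $\tau$ is crucially what allows the sum over $R^{\bB}$ to be packaged into a single primitive positive formula.

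The first step is to check that, under the natural identification of $S^n$ with the set of functions $B\to S$ sending $(s_1,\ldots,s_n)\mapsto (b_j\mapsto s_j)$, the pp-power $\Pi_\Delta(\bS)$ is literally equal to the exponential $\bS^\bB$, for \emph{every} $\tau$-structure $\bS$. This is a matter of unwinding definitions: an $r$-tuple $(f_1,\ldots,f_r)$ of elements of $S^n$ satisfies $\bS\models \delta_R(f_1,\ldots,f_r)$ iff $\bigl(f_1(c_1),\ldots,f_r(c_r)\bigr)\in R^{\bS}$ for every $(c_1,\ldots,c_r)\in R^{\bB}$, which is exactly the definition of $R^{\bS^{\bB}}$. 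Applying this for $\bS=\bA$ and $\bS=\bB$ yields $\Pi_\Delta(\bA)=\bA^{\bB}$ and $\Pi_\Delta(\bB)=\bB^{\bB}$.

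With these identifications in hand, the two rpp-conditions fall out immediately. The condition $\Pi_\Delta(\bA)\times\bL \leftrightarrow \bA^{\bB}\times\bL$ is trivial, since the two products are equal. For the condition $\bL\to\Pi_\Delta(\bB)=\bB^{\bB}$, the natural candidate is to send the unique element of $\bL$ to the identity function $\mathrm{id}_B\in B^{\bB}$; one only needs $(\mathrm{id}_B,\ldots,\mathrm{id}_B)\in R^{\bB^{\bB}}$ for every $R\in\tau$, which by the definition of the exponential amounts to $\mathrm{id}_B$ being an endomorphism of $\bB$, i.e.\ the tautology $R^{\bB}\subseteq R^{\bB}$.

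I do not anticipate a real obstacle: the content of the lemma is the observation that exponentiation by a \emph{finite} structure is itself a pp-power construction (one conjunct per tuple of $R^{\bB}$), and that $\bB^{\bB}$ always carries a distinguished loop given by $\mathrm{id}_B$. The only point that needs some care is making sure the bookkeeping between the dimension $n$, the enumeration of $B$, and the identification $S^n\leftrightarrow S^{B}$ is consistent across the definition of $\delta_R$ and the definition of $R^{\bS^{\bB}}$; once that identification is fixed, the two verifications are purely definitional.
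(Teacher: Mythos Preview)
Your proof is correct and follows essentially the same approach as the paper's: define the $|B|$-dimensional pp definition whose conjuncts are indexed by the tuples of $R^{\bB}$, observe that $\Pi_\Delta(\bS)\cong\bS^{\bB}$ for every $\bS$, and use the identity map on $B$ as the loop in $\bB^{\bB}$. The paper only writes out the binary (digraph) case and remarks that it generalises, whereas you have spelled out the general-arity formula explicitly; otherwise the arguments are the same.
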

\begin{proof}
    We first show that $(\bA,\bB)$ rpp-constructs $(\bA^\bB,\bL)$, and we consider the case
    of digraphs but it naturally generalizes to $\tau$-structures for finite $\tau$. Let
    $B = \{b_1,\dots, b_m\}$ and consider the following $m$-dimensional  primitive positive definition
    $\Delta := \{\delta_E\}$, where
    \[
        \delta_E(x_1,\dots, x_m, y_1,\dots, y_m):= \bigwedge_{(b_i,b_j)\in E^\bB}E(x_i, y_j).
    \]
    Notice that for any structure $\bA'$ the pp-power $\Pi_\Delta$ is isomorphic to $\bA'^\bB$. 
    Indeed, identify a tuple $(a_1,\dots, a_m)$ of $A$ with the function defined by $b_i\mapsto a_i$,
    it now follows from the definition of the exponential structure and of $\delta_E$ that this
    mapping is an isomorphism. In particular, notice that $\Pi_\Delta(\bB) = \bB^\bB$ contains the 
    loop $\bL$ (the identity function $I\colon B\to B$ is a loop on $\bB^\bB$). Hence,
    \[
        \Pi_\Delta(\bA) \cong \bA^\bB \text{ and } \bL\to \Pi_\Delta(\bB),
    \]
    i.e., $\Delta$ is a witness to the fact that $(\bA,\bB)$ rpp-constructs $(\bA^\bB,\bL)$.
\end{proof}

We discuss two applications of this lemma. The first one being that rpp-constructions
between RCSP templates with finite restrictions are captured by (standard) pp-constructions.

\begin{theorem}\label{thm:rpp-pp-constructions-finite-restriction}
    The following statements are equivalent for every pair of restricted
    $\CSP$ templates $(\bA_1,\bB_1)$ and $(\bA_2,\bB_2)$ with finite restrictions.
    \begin{itemize}
        \item  $(\bA_1,\bB_1)$ rpp-constructs $(\bA_2,\bB_2)$ and, 
        \item $\bA_1^{\bB_1}$ pp-constructs $\bA_2^{\bB_2}$.
    \end{itemize}
    In particular, $(\bA_1,\bB_1)$ and $(\bA_1^{\bB_1},\bL)$ are mutually rpp-constructible
    (and so $\RCSP(\bA_1,\bB_1)$ and $\CSP(\bA_1^{\bB_2})$ are log-space Turing-equivalent).
\end{theorem}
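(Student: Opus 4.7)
The plan is to derive the theorem from the two building blocks already established: Corollary~\ref{cor:exponential-solution} says that $(\bA^{\bB},\bL)$ rpp-constructs $(\bA,\bB)$, while Lemma~\ref{lem:rpp-construction-exponential} says that $(\bA,\bB)$ rpp-constructs $(\bA^{\bB},\bL)$. Putting these together with the composition lemma (Lemma~\ref{lem:rpp-compose}), the ``in particular'' clause is immediate: $(\bA_1,\bB_1)$ and $(\bA_1^{\bB_1},\bL)$ are mutually rpp-constructible, and two applications of Lemma~\ref{lem:rpp-constructions} yield log-space reductions in both directions, i.e., $\RCSP(\bA_1,\bB_1)$ and $\CSP(\bA_1^{\bB_1})$ are log-space Turing-equivalent (noting that $\RCSP(\bA_1^{\bB_1},\bL)$ and $\CSP(\bA_1^{\bB_1})$ denote the same problem).

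For the equivalence between the two bullets, the key observation is Remark~\ref{rmk:pp-and-rpp}: a structure $\bC$ pp-constructs a structure $\bC'$ if and only if $(\bC,\bL)$ rpp-constructs $(\bC',\bL)$. This lets me switch freely between pp- and rpp-language whenever both restrictions are the loop. First I would prove the forward implication: assume $(\bA_1,\bB_1)$ rpp-constructs $(\bA_2,\bB_2)$, and chain the three rpp-constructions
\[
(\bA_1^{\bB_1},\bL) \;\rightsquigarrow\; (\bA_1,\bB_1) \;\rightsquigarrow\; (\bA_2,\bB_2) \;\rightsquigarrow\; (\bA_2^{\bB_2},\bL),
\]
where the outer two steps come from Corollary~\ref{cor:exponential-solution} and Lemma~\ref{lem:rpp-construction-exponential}, and the middle one is the hypothesis. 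Applying Lemma~\ref{lem:rpp-compose} twice and then Remark~\ref{rmk:pp-and-rpp} yields the desired pp-construction $\bA_1^{\bB_1} \rightsquigarrow \bA_2^{\bB_2}$.

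The converse is entirely symmetric: starting from a pp-construction $\bA_1^{\bB_1} \rightsquigarrow \bA_2^{\bB_2}$, Remark~\ref{rmk:pp-and-rpp} upgrades it to an rpp-construction $(\bA_1^{\bB_1},\bL) \rightsquigarrow (\bA_2^{\bB_2},\bL)$, which I then sandwich between $(\bA_1,\bB_1)\rightsquigarrow(\bA_1^{\bB_1},\bL)$ (Lemma~\ref{lem:rpp-construction-exponential}) and $(\bA_2^{\bB_2},\bL)\rightsquigarrow(\bA_2,\bB_2)$ (Corollary~\ref{cor:exponential-solution}), composing again via Lemma~\ref{lem:rpp-compose}.

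Honestly, once the two directions of ``swap $(\bA,\bB)$ with $(\bA^{\bB},\bL)$'' are in hand, the whole argument is a diagram chase, so I do not anticipate a genuine obstacle. The only point requiring mild care is making sure the witnessing primitive positive definitions can indeed be composed as in Lemma~\ref{lem:pp-def-compose}, and that the homomorphic equivalences $\Pi_\Delta(\bA)\times \bB'\leftrightarrow \bA'\times\bB'$ survive composition---but this has been verified in Lemma~\ref{lem:rpp-compose}, so no new work is required here.
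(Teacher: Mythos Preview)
Your proposal is correct and follows essentially the same approach as the paper: both directions are obtained by sandwiching the hypothesis between Corollary~\ref{cor:exponential-solution} and Lemma~\ref{lem:rpp-construction-exponential}, composing via Lemma~\ref{lem:rpp-compose}, and translating between pp- and rpp-constructions through Remark~\ref{rmk:pp-and-rpp}. The only cosmetic difference is that the paper treats the implication from pp-construction to rpp-construction first, whereas you begin with the other direction.
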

\begin{proof}
    Suppose that $\bA_1^{\bB_1}$ pp-constructs $\bA_2^{\bB_2}$, so by Remark~\ref{rmk:pp-and-rpp}
    the template $(\bA_1^{\bB_1},\bL)$ rpp-constructs $(\bA_2^{\bB_2},\bL)$. By
    Lemma~\ref{lem:rpp-construction-exponential}, $(\bA_1,\bB_1)$ rpp-constructs $(\bA_1^{\bB_1},\bL)$,
    and by Corollary~\ref{cor:exponential-solution} $(\bA_2^{\bB_2},\bL)$ rpp-constructs $(\bA_2,\bB_2)$.
    Hence, by composing rpp-constructions (Lemma~\ref{lem:rpp-compose}), we conclude that
    $(\bA_1,\bB_1)$ rpp-constructs $(\bA_2,\bB_2)$. The converse implication follows with similar arguments:
    $(\bA_2,\bB_2)$ rpp-constructs $(\bA_2^{\bB_2},\bL)$ (Lemma~\ref{lem:rpp-construction-exponential}), 
    and $(\bA_1^{\bB_1},\bL)$ rpp-constructs $(\bA_1,\bB_1)$ (Corollary~\ref{cor:exponential-solution});
    by composing rpp-constructions we see that $(\bA_1^{\bB_1},\bL)$ rpp-constructs $(\bA_2^{\bB_2},\bL)$,
    and it thus follows that $\bA_1^{\bB_1}$ pp-constructs $\bA_2^{\bB_2}$ (Remark~\ref{rmk:pp-and-rpp}).
\end{proof}

The second application of Lemma~\ref{lem:rpp-construction-exponential} is the following
statement which is analogous to the case of CSPs and pp-constructions (see, e.g.,~\cite[Theorem 3.2.2]{Book}).

\begin{theorem}\label{thm:rpp-construction-K3}
    The following statements are equivalent for every (possibly infinite) structure $\bA$
    and a finite structure $\bB$ with a finite signature. 
    \begin{itemize}
        \item The restricted $\CSP$ template $(\bA,\bB)$ rpp-constructs $(\bK_3,\bL)$.
        \item The restricted $\CSP$ template $(\bA,\bB)$ rpp-constructs every restricted
        $\CSP$ template $(\bA',\bB')$ where $\bA'$ is a finite structure (and $\bB'$ a possibly infinite structure).
        \item The structure $\bA^\bB$ pp-constructs $\bK_3$. 
    \end{itemize}
    If any of these equivalent statement hold, then $\RCSP(\bA,\bB)$ is $\NP$-hard.
\end{theorem}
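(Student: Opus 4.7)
The plan is to establish the cycle of implications $(2) \Rightarrow (1) \Leftrightarrow (3)$ and $(1) \Rightarrow (2)$, and then derive NP-hardness from (1). The equivalence $(1) \Leftrightarrow (3)$ is a direct invocation of Theorem~\ref{thm:rpp-pp-constructions-finite-restriction}: the templates $(\bA,\bB)$ and $(\bK_3,\bL)$ both have finite restriction, so the theorem applies, and by Lemma~\ref{lem:properties-exponential} we have $\bK_3^\bL \cong \bK_3$; hence $(\bA,\bB)$ rpp-constructs $(\bK_3,\bL)$ if and only if $\bA^\bB$ pp-constructs $\bK_3$. The implication $(2) \Rightarrow (1)$ is immediate on taking $\bA' := \bK_3$ and $\bB' := \bL$.

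The main substance is $(1) \Rightarrow (2)$. The key ingredient is the classical algebraic fact that \emph{$\bK_3$ pp-constructs every finite relational structure}: the polymorphism clone of $\bK_3$ is, up to automorphisms, the clone of projections, which sits at the top of the pp-construction preorder on finite structures. Granting this, fix a finite $\tau$-structure $\bA'$ and a (possibly infinite) $\tau$-structure $\bB'$. Choose a primitive positive definition $\Delta$ of $\tau$ in $\{E\}$ such that $\Pi_\Delta(\bK_3) \leftrightarrow \bA'$; then $\Pi_\Delta(\bK_3) \times \bB' \leftrightarrow \bA' \times \bB'$. By Remark~\ref{rmk:pp-and-rpp}, $\Pi_\Delta(\bL)$ is homomorphically equivalent to the $\tau$-loop, so $\bB' \to \Pi_\Delta(\bL)$. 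Therefore $\Delta$ witnesses that $(\bK_3,\bL)$ rpp-constructs $(\bA',\bB')$. Composing with the rpp-construction from $(\bA,\bB)$ to $(\bK_3,\bL)$ furnished by (1), via Lemma~\ref{lem:rpp-compose}, yields (2).

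Finally, for NP-hardness: assuming (1), Lemma~\ref{lem:rpp-constructions} yields a log-space reduction from $\RCSP(\bK_3,\bL)$ to $\RCSP(\bA,\bB)$. Since every structure admits a homomorphism to the loop, $\RCSP(\bK_3,\bL)$ coincides with $\CSP(\bK_3)$, which is the NP-complete $3$-Colouring problem. The main (arguably only) nontrivial step is the invocation of the fact that $\bK_3$ pp-constructs every finite relational structure; once granted, the rest is routine bookkeeping using rpp-construction composition (Lemma~\ref{lem:rpp-compose}) and the log-space reduction from Lemma~\ref{lem:rpp-constructions} established in Section~\ref{sect:RCSPs}.
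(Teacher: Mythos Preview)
Your proof is correct and follows essentially the same route as the paper: the equivalence $(1)\Leftrightarrow(3)$ via Theorem~\ref{thm:rpp-pp-constructions-finite-restriction} and $\bK_3^{\bL}\cong\bK_3$, the trivial implication $(2)\Rightarrow(1)$, and the implication $(1)\Rightarrow(2)$ via the fact that $\bK_3$ pp-constructs every finite structure together with composition of rpp-constructions. Your citation of Lemma~\ref{lem:rpp-constructions} for the NP-hardness step is in fact more precise than the paper's own reference.
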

\begin{proof}
    By the first item of Lemma~\ref{lem:properties-exponential} we know that
    $\bK_3^\bL$ is isomorphic to $\bK_3$, and so $\bA^\bB$ pp-constructs $\bK_3$
    if and only if $\bA^\bB$ pp-constructs $\bK_3^\bL$. Hence, the equivalence between
    the first and third itemized statement follows from Theorem~\ref{thm:rpp-pp-constructions-finite-restriction}.
    Also, second statement clearly implies the third one. 
    Finally, we show that the first item implies the second one. It is well-known that 
    that $\bK_3$ pp-constructs every finite structure $\bA'$ (see, e.g.,~\cite[Corollary 3.2.1]{Book}).
    So, by Remark~\ref{rmk:pp-and-rpp}
    $(\bK_3,\bL)$ rpp-constructs every finite domain $\RCSP$ template $(\bA',\bB')$
    (where $\bB'$ is a possibly infinite structure). Again, by composing rpp-constructions,
    we conclude that $(\bA,\bB)$ rpp-constructs $(\bA',\bB')$. The equivalence between the three
    itemized statement is now settled. The final statement holds via Lemma~\ref{lem:pp-construction-reduction}
    because $\RCSP(\bK_3,\bL) = \CSP(\bK_3)$ is $\NP$-hard. 
\end{proof}

\subsection{A dichotomy for finite domain RCSPs with finite restrictions}

The dichotomy for finite domain CSPs asserts that if $\bA$ is a finite structure, then
either $\bA$ pp-constructs $\bK_3$, and in this case $\CSP(\bA)$ is $\NP$-complete; or otherwise,
$\CSP(\bA)$ is polynomial-time solvable. We apply the results of this section to obtain an
analogous (actually, equivalent) statement for finite domain CSPs with finite restrictions. 

\begin{theorem}\label{thm:finite-RCSP-dichotomy}
    For every pair of finite structures $\bA,\bB$ one of the following statement hold.
    \begin{itemize}
        \item Either $(\bA,\bB)$ rpp-constructs $(\bK_3,\bL)$ (and consequently, 
        $\RCSP(\bA,\bB)$ is $\NP$-hard), or
        \item $\RCSP(\bA,\bB)$ is polynomial-time solvable. 
    \end{itemize}
\end{theorem}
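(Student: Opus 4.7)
The plan is to reduce this dichotomy to the finite domain CSP dichotomy via the exponential construction $\bA^\bB$. The crucial observation is that, since both $\bA$ and $\bB$ are finite, the exponential structure $\bA^\bB$ is itself a finite relational structure (its domain has size $|A|^{|B|}$, and the signature is the same finite signature $\tau$). Hence, the celebrated Bulatov--Zhuk dichotomy for finite domain CSPs applies to $\bA^\bB$: either $\bA^\bB$ pp-constructs $\bK_3$ and $\CSP(\bA^\bB)$ is $\NP$-complete, or $\CSP(\bA^\bB)$ is polynomial-time solvable.

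The proof then proceeds by a direct case split on this dichotomy. In the first case, if $\bA^\bB$ pp-constructs $\bK_3$, then by Theorem~\ref{thm:rpp-construction-K3} (using the equivalence between its first and third itemized statements, with $\bB$ finite as required) the template $(\bA,\bB)$ rpp-constructs $(\bK_3,\bL)$, and the final assertion of that same theorem yields $\NP$-hardness of $\RCSP(\bA,\bB)$. In the second case, $\CSP(\bA^\bB)$ lies in $\cP$; by Theorem~\ref{thm:rpp-pp-constructions-finite-restriction}, $\RCSP(\bA,\bB)$ is log-space Turing-equivalent to $\CSP(\bA^\bB)$, so $\RCSP(\bA,\bB)$ is also polynomial-time solvable. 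This completes the dichotomy.

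There is no real obstacle here: all the heavy lifting has been done by the earlier machinery. The content of the theorem is that the finite domain CSP dichotomy transfers to finite-domain RCSPs with finite restrictions, and the transfer is entirely mediated by the exponential construction of Lemma~\ref{lem:rpp-construction-exponential} and Corollary~\ref{cor:exponential-solution}. The only thing one must verify carefully is that $\bA^\bB$ has a finite signature (inherited from $\tau$) and a finite domain (of size $|A|^{|B|}$), so that the Bulatov--Zhuk theorem is indeed applicable; both are immediate from the definition of the exponential structure.
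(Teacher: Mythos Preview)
Your proof is correct and follows essentially the same route as the paper: both reduce to the finite domain CSP dichotomy applied to the finite structure $\bA^\bB$, invoking Theorem~\ref{thm:rpp-construction-K3} for the hard case and Theorem~\ref{thm:rpp-pp-constructions-finite-restriction} for the tractable case. The only cosmetic difference is that the paper phrases it as a contrapositive (assuming the first item fails and handling $\cP=\NP$ separately), whereas you do a direct case split on the Bulatov--Zhuk dichotomy for $\bA^\bB$; the underlying argument is identical.
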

\begin{proof}
   Suppose that the first itemized statement does not hold, and assume $\cP\neq\NP$
   (otherwise, $\CSP(\bA)$ is in $\cP$, and thus $\RCSP(\bA,\bB)$ is polynomial-time solvable).
   Since $(\bA,\bB)$ does not rpp-construct $(\bK_3,\bL)$, it follows from
   Theorem~\ref{thm:rpp-construction-K3} that $\bA^\bB$ does not pp-construct $\bK_3$.
   Hence, the finite domain dichotomy implies that $\CSP(\bA^\bB)$ is in $\cP$.
   By the ``in particular'' statement of Theorem~\ref{thm:rpp-pp-constructions-finite-restriction}
   we know that $\CSP(\bA^\bB)$ and $\RCSP(\bA,\bB)$ are polynomial-time equivalent,
   and we thus conclude that  $\RCSP(\bA,\bB)$ is polynomial-time solvable. 
\end{proof}

\subsection{Finite-domain restrictions up to high girth}
\label{subsect:finite-domain-up-to-high-girth}

Given a structure $\bB$ and a positive integer $\ell$ we denote by $\CSP_{>\ell}(\bB)$
the subclass of $\CSP(\bB)$ consisting of structure $\bA\in \CSP(\bB)$ with girth
strictly larger than $\ell$. The CSP of a structure $\bB$ is \emph{finite-domain up to hight girth}
if there is a positive integer $\ell$ and a finite structure $\bB'$ such that $\CSP_{>\ell}(\bB) = \CSP_{>\ell}(\bB')$.
It was proved in~\cite{guzmanGMSNP} that for every such structure $\bB$, there is a finite
structure $\bB_S$ (unique up to homomorphic equivalence) such that for every finite structure
$\bA$ there is a homomorphism $\bB\to \bA$ if and only if $\bB_S\to \bA$. We call $\bB_S$
the \emph{smallest finite factor} of $\bB$ --- notice that in particular, $\bB\to \bB_S$.
Moreover, if $\bB$ is finite domain up to high girth  and $\bB_S$ is its smallest finite factor,
then there is a positive integer $\ell$ such that
$\CSP_{>\ell}(\bB) =\CSP_{>\ell}(\bB_S)$\cite[Corollary 10]{guzmanGMSNP}.

\begin{lemma}\label{lem:up-to-high-girth-restrictions}
    Let $\bB$ be a structure such that $\CSP(\bB)$ is finite-domain up to high girth.
    If $\bB_S$ is the smallest finite factor of $\bB$, then $\RCSP(\bA,\bB)$ and
    $\RCSP(\bA,\bB_S)$ are polynomial-time equivalent.
\end{lemma}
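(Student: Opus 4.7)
The plan is to establish polynomial-time reductions in both directions. The forward direction $\RCSP(\bA,\bB)\to\RCSP(\bA,\bB_S)$ is immediate: since $\bB\to\bB_S$ by the definition of the smallest finite factor, we have $\CSP(\bB)\subseteq\CSP(\bB_S)$, so every valid input to $\RCSP(\bA,\bB)$ is already a valid input to $\RCSP(\bA,\bB_S)$ asking literally the same question. The identity map is the reduction.

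For the non-trivial direction $\RCSP(\bA,\bB_S)\to\RCSP(\bA,\bB)$, the idea is to \emph{sparsify} inputs so that the high-girth equality $\CSP_{>\ell}(\bB)=\CSP_{>\ell}(\bB_S)$ can be invoked. Using the fact recalled in Section~\ref{subsect:finite-domain-up-to-high-girth} (from~\cite{guzmanGMSNP}), fix $\ell$ witnessing $\CSP_{>\ell}(\bB)=\CSP_{>\ell}(\bB_S)$. Given an input $\bD\to\bB_S$ to $\RCSP(\bA,\bB_S)$, apply the Sparse Incomparability Lemma (Theorem~\ref{thm:sparse-incomparability}) with parameters $\ell$ and $k\ge|A|$ to obtain, in polynomial time, a structure $\bD'$ with girth $>\ell$ such that $\bD'\to\bD$ and such that $\bD'\to\bH$ iff $\bD\to\bH$ for every structure $\bH$ on at most $k$ vertices. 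Then from $\bD'\to\bD\to\bB_S$ one gets $\bD'\to\bB_S$, and since $\bD'$ has girth $>\ell$, $\bD'\in\CSP_{>\ell}(\bB_S)=\CSP_{>\ell}(\bB)$, so $\bD'\to\bB$. Thus $\bD'$ is a legitimate instance of $\RCSP(\bA,\bB)$. Finally, because $|A|\le k$, the biconditional from the Sparse Incomparability Lemma gives $\bD\to\bA$ iff $\bD'\to\bA$, so the reduction preserves the answer.

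The key point to verify (and the main obstacle) is that a \emph{single} application of Sparse Incomparability must simultaneously (i) make $\bD'$ sparse enough to trigger the high-girth transfer into $\CSP(\bB)$ and (ii) preserve, in both directions, mappability into $\bA$; this is exactly what the combined girth and size parameters in Theorem~\ref{thm:sparse-incomparability} provide once $\ell$ is supplied by the high-girth characterization of the smallest finite factor and $k\ge|A|$ is chosen to cover $\bA$. This plan treats $\bA$ as finite, which is the setting of interest for the intended applications (e.g.\ to Theorem~\ref{lem:GMSNP-restrictions}); if $\bA$ is infinite then an additional hypothesis ensuring that homomorphism to $\bA$ is controlled by finite substructures would be required to run the same argument.
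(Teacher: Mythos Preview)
Your proof is correct and follows essentially the same approach as the paper: the identity reduction in one direction (via $\bB\to\bB_S$), and the Sparse Incomparability Lemma in the other direction to push an instance of $\RCSP(\bA,\bB_S)$ into high girth so that the equality $\CSP_{>\ell}(\bB)=\CSP_{>\ell}(\bB_S)$ can be applied. The only cosmetic difference is that the paper takes $k:=\max\{|A|,|B_S|\}$, whereas you take $k\ge|A|$ and obtain $\bD'\to\bB_S$ directly from $\bD'\to\bD\to\bB_S$ (the first bullet of Theorem~\ref{thm:sparse-incomparability}) rather than via the size-$k$ biconditional; your route is marginally cleaner, and your remark that $\bA$ must be finite for the argument to go through is accurate and implicit in the paper's proof as well.
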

\begin{proof}
    Since $\bB\to \bB_S$, it follows that $\RCSP(\bA,\bB_S)$ is at least as hard
    as $\RCSP(\bA,\bB)$. Fo the converse reduction, let $k := \max\{|A|,|B_S|\}$,
    and $\ell$ a positive integer such that $\CSP_{>\ell}(\bB) = \CSP_{>\ell}(\bB_S)$.
    let $\bC$ be an input to $\RCSP(\bA,\bB_S)$.
    By the Sparse Incomparability Lemma (applied to $\ell$ and $k$), there is a
    structure $\bC'$ of girth larger than $\ell$, and such that $\bC'\to \bB_S$, 
    and $\bC\to \bA$ if and only if $\bC'\to \bA$. Since $\bC'$ has girth larger
    than $\ell$ and $\CSP_{>\ell}(\bB) = \CSP_{>\ell}(\bB_S)$, it follows that
    $\bC'\to \bB$. Hence $\bC'$ is a valid input to $\RCSP(\bA,\bB)$, and since
    $\bC'$ is constructible in polynomial-time from $\bC$
    (Theorem~\ref{thm:sparse-incomparability}) and $\bC'\to \bA$ if and only if $\bC\to \bA$,
    we conclude that $\RCSP(\bA,\bB_S)$ reduces in polynomial-time to $\RCSP(\bA,\bB)$.
\end{proof}

\subsection{A  dichotomy for finite domain RCSPs with GMSNP restrictions}
\label{subsect:GMSNP}

 \emph{Forbidden pattern problems} (for graphs) are parametrized by a finite
set $\calF$ of vertex and edge coloured connected graphs, and the task is to
decide is an input graph $\bG$ admits a colouring $\bG'$ (with the same colours
used in $\calF$) such that $\bG'\in \Forb(\calF)$, i.e., there is no homomorphism
$\bF\to \bG'$ for any $\bF\in \calF$. A standard example of the problem of deciding
if an input graph $\bG$ admits a $2$-edge-colouring with no monochromatic triangles.

There is a natural fragment of existential second order logic called
\emph{guarded monotone strict NP} (GMSNP) such that CSPs expressible in
GMSNP capture forbidden pattern problems. We refer the reader
to~\cite{BarsukovThesis} for further background on GMSNP. 

Lemma 12 in~\cite{guzmanGMSNP} shows that every CSP expressible in GMSNP is
finite-domain up to high girth. The following statement is an immediate consequence
of this fact, and of Lemma~\ref{lem:up-to-high-girth-restrictions}.

\begin{theorem}\label{lem:GMSNP-restrictions}
    For every finite structure $\bA$, and every structure $\bB$ such that $\CSP(\bB)$ 
    is expressible in $\GMSNP$ one of the following statement holds.
    \begin{itemize}
        \item Either $(\bA,\bB_S)$ rpp-constructs $(\bK_3,\bL)$, where $\bB_S$ is the smallest
        finite factor of $\bB$, (and consequently, $\RCSP(\bA,\bB)$ is $\NP$-hard), or
        \item $\RCSP(\bA,\bB)$ is polynomial-time solvable. 
    \end{itemize}
\end{theorem}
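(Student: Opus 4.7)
The plan is to compose three ingredients that have all been set up in the preceding sections, and obtain the statement as essentially a corollary. First, I would invoke the result recalled just before the theorem (Lemma 12 of~\cite{guzmanGMSNP}) to conclude that, since $\CSP(\bB)$ is expressible in $\GMSNP$, it is finite-domain up to high girth; in particular the smallest finite factor $\bB_S$ exists and is a well-defined finite structure. This is the step that sets the stage and guarantees that the object $\bB_S$ appearing in the statement is actually defined.

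Second, with $\bB_S$ in hand, I would apply Lemma~\ref{lem:up-to-high-girth-restrictions} to obtain that $\RCSP(\bA,\bB)$ and $\RCSP(\bA,\bB_S)$ are polynomial-time equivalent. This is the key transfer step: it reduces an a priori infinite-domain restricted CSP to a finite-domain one, at the cost of applying the Sparse Incomparability Lemma in one direction of the equivalence (so that instances of high girth promised to lie in $\CSP(\bB_S)$ can be treated as lying in $\CSP(\bB)$).

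Third, because $\bA$ is assumed finite and $\bB_S$ is finite by construction, the finite-domain dichotomy for restricted CSPs, namely Theorem~\ref{thm:finite-RCSP-dichotomy}, applies directly to the pair $(\bA,\bB_S)$: either $(\bA,\bB_S)$ rpp-constructs $(\bK_3,\bL)$ (whence $\RCSP(\bA,\bB_S)$ is $\NP$-hard by Lemma~\ref{lem:rpp-constructions}), or $\RCSP(\bA,\bB_S)$ is polynomial-time solvable. Propagating this dichotomy back across the polynomial-time equivalence obtained in the previous step yields precisely the two alternatives in the statement.

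There is no serious obstacle: the theorem is a clean composition of Lemma 12 of~\cite{guzmanGMSNP}, Lemma~\ref{lem:up-to-high-girth-restrictions}, and Theorem~\ref{thm:finite-RCSP-dichotomy}. The only point that deserves to be made explicit in the write-up is that the $\NP$-hardness clause in the first alternative is phrased in terms of an rpp-construction from $(\bA,\bB_S)$ rather than from $(\bA,\bB)$ itself, which is legitimate precisely because $\RCSP(\bA,\bB_S)$ is at least as hard as $\RCSP(\bA,\bB)$ under the equivalence of Lemma~\ref{lem:up-to-high-girth-restrictions}.
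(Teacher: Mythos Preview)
Your proposal is correct and follows exactly the approach the paper intends: the paper presents the theorem as an immediate consequence of Lemma~12 of~\cite{guzmanGMSNP} and Lemma~\ref{lem:up-to-high-girth-restrictions}, with Theorem~\ref{thm:finite-RCSP-dichotomy} implicitly supplying the dichotomy for the resulting finite pair $(\bA,\bB_S)$. Your write-up is in fact slightly more explicit than the paper's, which omits a formal proof altogether.
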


\subsection{The tractability conjecture and RCSPs}

The tractability conjecture is a generalization of the Feder-Vardi conjecture to a broad class
of  ``well-behaved'' infinite structures, namely, to \emph{reducts} of
\emph{finitely bounded homogeneous} structure. A structure $\bB$ is homogeneous if for
every isomorphism $f\colon \bA \to\bA'$ between finite substructures of $\bB$, there is
an automorphism  $f'\colon \bB\to \bB$ that extends $f$,  i.e., $f'(a) = f(a)$ for every
$a\in A$. A structure $\bB$ is called finitely bounded  if there exists a finite set
${\mathcal F}$ of finite structures such that a finite structure $\bA$ embeds into $\bB$
if and only if no structure from ${\mathcal F}$ embeds into $\bA$. A reduct
of a structure $\bB$ is a structure $\bA$ obtained from $\bB$ by forgetting some relations.

\begin{conjecture}
[Conjecture 3.7.1 in~\cite{Book}]
\label{conj:tractability}
Let $\bA$ be a reduct of a finitely bounded homogeneous structure. If $\bA$ does not pp-construct
$\bK_3$, then $\CSP(\bA)$ is polynomial-time solvable. 
\end{conjecture}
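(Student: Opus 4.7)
The final statement is the Bodirsky--Pinsker tractability conjecture, which is a well-known open problem and (to my knowledge) has not been settled in full generality. Hence rather than sketch a ``proof'' I will sketch the program that the community has been pursuing, and where I would enter it.

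The plan is to extend the algebraic approach underlying the Bulatov--Zhuk finite-domain dichotomy to the infinite-domain setting of reducts of finitely bounded homogeneous structures. First I would reduce to the \emph{model-complete core}: by a theorem of Bodirsky, every $\omega$-categorical structure $\bA$ is homomorphically equivalent to a unique (up to isomorphism) $\omega$-categorical model-complete core $\bA'$, and $\CSP(\bA)=\CSP(\bA')$. In the model-complete core one can pp-define every orbit of tuples as a singleton, which recovers a usable notion of ``constants'' and makes the polymorphism clone amenable to universal-algebraic analysis. Second, I would invoke the topological clone formalism: for $\omega$-categorical $\bA$, whether $\bA$ pp-constructs $\bK_3$ is equivalent to the existence of a uniformly continuous clone homomorphism from $\cPl(\bA)$ to the clone of projections (Bodirsky--Pinsker). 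This gives a clean algebraic dividing line on the hard side.

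Next, for the tractability side, the key step would be to prove an infinite-domain analogue of a \emph{loop lemma} (or weak near-unanimity identity) for polymorphism clones that do \emph{not} admit such a continuous projection homomorphism, and then to leverage it to design a polynomial-time algorithm. One would combine the local consistency methods of Barto--Kozik with the canonization technique of Bodirsky--Pinsker--Mottet, which enables one to replace an arbitrary polymorphism by a \emph{canonical} one that behaves as a polymorphism of a finite quotient structure; this effectively reduces tractability questions to a finite-domain algebraic problem over a structure derived from $\bA$ and its automorphism group. Finally, one would have to combine the resulting algorithm with the classical ``sampling'' or ``small-model'' property coming from finite boundedness so that the algorithm actually runs on arbitrary finite inputs.

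The hard part, and indeed the reason the conjecture remains open, is the second step. Even when the polymorphism clone has many operations, the absence of compactness in infinite domains means that satisfying local identities at every finite arity does \emph{not} automatically yield a single continuous operation realizing those identities globally; uniform continuity has to be engineered by hand. The canonization framework currently works cleanly only under additional Ramsey-theoretic assumptions on the underlying homogeneous structure, and while many large fragments have been settled (first-order reducts of $(\bQ,<)$, of the random graph, of homogeneous equivalence relations, of unary structures, and others), extending canonization beyond the Ramsey regime is where I would expect to spend the bulk of the effort. Concretely, the obstacle I would aim at first is to establish a general ``pseudo-loop'' lemma for continuous clone homomorphisms into projections on arbitrary finitely bounded homogeneous templates, without assuming that the template's age has the Ramsey property.
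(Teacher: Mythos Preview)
You are correct that this is an open conjecture, and the paper treats it as such: it is stated as a \emph{Conjecture}, not a theorem, and the paper does not attempt to prove it. There is therefore no ``paper's own proof'' to compare your proposal against. The paper's only contribution concerning this conjecture is to show (in the theorem immediately following it) that it is equivalent to a restricted-CSP version, Conjecture~\ref{conj:red-FB-homogeneous}, via the exponential construction $\bA^\bB$ and the observation that reducts of finitely bounded homogeneous structures are closed under pp-powers.

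Your sketch of the broader research program (model-complete cores, topological clones, canonization, Ramsey methods) is accurate background, but it is orthogonal to what the paper does with this conjecture. If the task was to supply a proof, there is none to supply; if it was to discuss the statement, you should note that the paper merely reformulates it in the RCSP framework rather than attacking it directly.
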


This is a wide-open conjecture yielding a very active research line (e.g.,~\cite{wonderland,BKOPP,BartoPinskerDichotomy,Book,BodirskyBodorUIP,BodDalJournal,Pinsker22}).
Here we show that this conjecture is equivalent to the following conjecture for restricted CSPs
templates with finite restriction and whose domain is a reduct of a finitely bounded homogeneous structure.

\begin{conjecture}\label{conj:red-FB-homogeneous}
    Let $\bA$ be a reduct of a finitely bounded homogeneous structure, and $\bB$ be a finite
    structure.  If the restricted $\CSP$-template does not rpp-construct $(\bK_3,\bL)$, then
    $\RCSP(\bA,\bB)$ is polynomial-time solvable.
\end{conjecture}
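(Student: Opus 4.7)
My plan is to derive the polynomial-time algorithm for $\RCSP(\bA,\bB)$ from Conjecture~\ref{conj:tractability} by passing through the exponential structure $\bA^\bB$. Assuming $(\bA,\bB)$ does not rpp-construct $(\bK_3,\bL)$, I would proceed in three steps: \emph{(i)} verify that $\bA^\bB$ is itself a reduct of a finitely bounded homogeneous structure; \emph{(ii)} invoke Theorem~\ref{thm:rpp-construction-K3} to conclude that $\bA^\bB$ does not pp-construct $\bK_3$, so Conjecture~\ref{conj:tractability} supplies a polynomial-time algorithm for $\CSP(\bA^\bB)$; \emph{(iii)} transport this algorithm back to $\RCSP(\bA,\bB)$ via the log-space Turing equivalence of Theorem~\ref{thm:rpp-pp-constructions-finite-restriction}. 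The concrete algorithm for $\RCSP(\bA,\bB)$ on input $\bC$ would then be: construct the gadget replacement $\Gamma_\Delta(\bC)$ realizing the log-space reduction to $\CSP(\bA^\bB)$, run the polynomial-time algorithm for $\CSP(\bA^\bB)$ granted by Conjecture~\ref{conj:tractability}, and return its answer.

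The hard part is step \emph{(i)}. Fix a finitely bounded homogeneous structure $\bA^*$ of which $\bA$ is a reduct. The natural candidate is $(\bA^*)^\bB$, possibly further expanded by first-order definable relations in order to enforce homogeneity. Homogeneity should follow from the diagonal action of $\Aut(\bA^*)$ on functions $B\to A^*$: a partial isomorphism between finite substructures of $(\bA^*)^\bB$ induces a partial isomorphism between finite tuples of $\bA^*$ of length at most $|B|$ times the size of the substructure, and by homogeneity of $\bA^*$ this extends to an automorphism $\alpha$ of $\bA^*$ whose diagonal lift $f\mapsto \alpha\circ f$ is an automorphism of $(\bA^*)^\bB$. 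Finite boundedness should be obtained by ``pulling back'' the finite bound of $\bA^*$: any forbidden substructure of $(\bA^*)^\bB$ must witness, on some coordinate projection $f\mapsto f(b)$, a forbidden substructure of $\bA^*$, and since $|B|$ is finite this yields a finite list of forbidden substructures for $(\bA^*)^\bB$. Finally, $\bA^\bB$ is a reduct of the resulting expansion of $(\bA^*)^\bB$ because its relations are quantifier-free definable, coordinate-wise, from those of $\bA^*$.

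The principal risk of the plan lies in step \emph{(i)}: while homogeneity of $(\bA^*)^\bB$ via the diagonal action of $\Aut(\bA^*)$ is rather clean, finite boundedness of exponentials is not routinely studied in the literature, and enforcing homogeneity by a first-order expansion must be done without violating finite boundedness. Should a fully general argument prove elusive, a reasonable fallback is to first verify step \emph{(i)} under additional hypotheses on $\bA^*$ (for instance when $\bA^*$ has a finite relational signature whose age is axiomatized by coordinate-decomposable forbidden patterns), thereby confirming Conjecture~\ref{conj:red-FB-homogeneous} for a broad subclass of $\bA$'s while leaving the general case open.
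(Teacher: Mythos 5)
Your steps \emph{(ii)} and \emph{(iii)} coincide with the paper's: use Theorem~\ref{thm:rpp-construction-K3} to pass from the hypothesis that $(\bA,\bB)$ does not rpp-construct $(\bK_3,\bL)$ to the conclusion that $\bA^\bB$ does not pp-construct $\bK_3$, invoke Conjecture~\ref{conj:tractability} on $\bA^\bB$, and transport the polynomial-time algorithm for $\CSP(\bA^\bB)$ back to $\RCSP(\bA,\bB)$ via Theorem~\ref{thm:rpp-pp-constructions-finite-restriction} (concretely, via Corollary~\ref{cor:exponential-solution} the reduction is trivial --- on a promised instance $\bC\to\bB$, run the $\CSP(\bA^\bB)$ algorithm on $\bC$ itself, since $\bC\to\bA^\bB$ iff $\bC\times\bB\to\bA$ iff $\bC\to\bA$, using $\bC\leftrightarrow\bC\times\bB$). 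So the skeleton is right.

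The gap is in your step \emph{(i)}, and your own worry there is well-placed. The homogeneity argument via the diagonal action of $\Aut(\bA^*)$ does not go through: relations of $(\bA^*)^\bB$ are imposed only along tuples of $E^\bB$ (the definition is a universally quantified implication whose hypothesis is ``$(b_1,\dots,b_r)\in R^\bB$''), so a partial isomorphism between finite substructures of $(\bA^*)^\bB$ does \emph{not} induce a partial isomorphism between the associated tuples of $\bA^*$ --- the coordinates not hit by tuples of $R^\bB$ are entirely unconstrained. Diagonal lifts $f\mapsto\alpha\circ f$ are a tiny subgroup of $\Aut((\bA^*)^\bB)$, and there are partial isomorphisms of $(\bA^*)^\bB$ that cannot be extended by any such lift. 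The finite-boundedness pullback suffers from the dual problem: embeddability into $(\bA^*)^\bB$ is not governed coordinatewise. Salvaging this with an FO expansion to force homogeneity while retaining finite boundedness would need a separate argument you have not supplied. The paper avoids all of this with a clean observation: the proof of Lemma~\ref{lem:rpp-construction-exponential} exhibits a primitive positive definition $\Delta$ with $\Pi_\Delta(\bA)\cong\bA^\bB$, so $\bA^\bB$ is a pp-power of $\bA$, and the class of reducts of finitely bounded homogeneous structures is closed under pp-powers (being closed under finite products and pp-definitions). That known closure property delivers step \emph{(i)} directly, without ever needing to analyse $(\bA^*)^\bB$ as a concrete structure. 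I would recommend replacing your hand-built argument for step \emph{(i)} with this one; as it stands, the proposal does not establish the claim.
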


\begin{theorem}
    Conjecture~\ref{conj:tractability} and Conjecture~\ref{conj:red-FB-homogeneous} are equivalent.
\end{theorem}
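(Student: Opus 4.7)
The plan is to verify the two implications of the equivalence separately, both leveraging the machinery of this section.

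For the direction Conjecture~\ref{conj:red-FB-homogeneous} $\Rightarrow$ Conjecture~\ref{conj:tractability} I would specialize to the trivial restriction $\bB=\bL$. Since $\RCSP(\bA,\bL)$ is literally $\CSP(\bA)$ and Remark~\ref{rmk:pp-and-rpp} gives that $(\bA,\bL)$ rpp-constructs $(\bK_3,\bL)$ exactly when $\bA$ pp-constructs $\bK_3$, the hypothesis of Conjecture~\ref{conj:tractability} on $\bA$ transfers verbatim to the hypothesis of Conjecture~\ref{conj:red-FB-homogeneous} on $(\bA,\bL)$, and the conclusions likewise coincide.

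For the harder direction Conjecture~\ref{conj:tractability} $\Rightarrow$ Conjecture~\ref{conj:red-FB-homogeneous}, the strategy is to pass from the restricted template $(\bA,\bB)$ to the exponential structure $\bA^\bB$ and apply Conjecture~\ref{conj:tractability} there. Theorem~\ref{thm:rpp-construction-K3} reformulates the hypothesis ``$(\bA,\bB)$ does not rpp-construct $(\bK_3,\bL)$'' as ``$\bA^\bB$ does not pp-construct $\bK_3$'', and the ``in particular'' clause of Theorem~\ref{thm:rpp-pp-constructions-finite-restriction} shows that $\RCSP(\bA,\bB)$ is log-space equivalent to $\CSP(\bA^\bB)$. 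Provided Conjecture~\ref{conj:tractability} is applicable to $\bA^\bB$, polynomial-time solvability of $\CSP(\bA^\bB)$ immediately yields the same for $\RCSP(\bA,\bB)$.

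The main obstacle will be verifying precisely this applicability: namely, that $\bA^\bB$ is itself a reduct of a finitely bounded homogeneous structure whenever $\bA$ is. Since $\bB$ is finite, the proof of Lemma~\ref{lem:rpp-construction-exponential} realizes $\bA^\bB$ as the pp-power $\Pi_\Delta(\bA)$ of dimension $|B|$, so the task reduces to showing that the class of reducts of finitely bounded homogeneous structures is closed under (finite-dimensional) pp-powers. I expect this to follow by a standard coordinatization argument: a finitely bounded homogeneous expansion $\bA^*$ of $\bA$ can be lifted to such an expansion of the $|B|$-fold direct product (for instance by naming coordinates via unary predicates and taking a copy of the $\bA^*$-structure on each sort), inside which the pp-power relations of $\Pi_\Delta(\bA)$ remain pp-definable and hence definable without parameters. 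Once this closure property is in place, Conjecture~\ref{conj:tractability} applied to $\bA^\bB$ completes the argument.
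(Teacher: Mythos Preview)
Your proposal is correct and follows essentially the same route as the paper: specialize to $\bB=\bL$ for one direction, and for the other pass to $\bA^{\bB}$ via Theorem~\ref{thm:rpp-construction-K3} and Theorem~\ref{thm:rpp-pp-constructions-finite-restriction}, then apply Conjecture~\ref{conj:tractability} to $\bA^{\bB}$. The only difference is that where you sketch a coordinatization argument for the closure of reducts of finitely bounded homogeneous structures under pp-powers, the paper simply invokes this closure as a known fact, justified by the observation that such structures are closed under (finite) products and under pp-definitions, and a pp-power factors as a direct power followed by pp-definable relations.
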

\begin{proof}
    The fact that Conjecture~\ref{conj:red-FB-homogeneous} implies Conjecture~\ref{conj:tractability}
    follows from Remark~\ref{rmk:pp-and-rpp}: if $\bA$ does not pp-construct $\bK_3$, then $(\bA,\bL)$
    does not rpp-constructs $(\bK_3,\bL)$; so, Conjecture~\ref{conj:red-FB-homogeneous} implies that
    $\RCSP(\bA,\bL)$ is polynomial-time solvable, and thus, $\CSP(\bA) = \RCSP(\bA,\bL)$ is in $\cP$. 
    For the converse implication, suppose that $(\bA,\bB)$ does not rpp-construct $(\bK_3,\bL)$. 
    By the equivalence between the second and third statement of Theorem~\ref{thm:rpp-construction-K3}, 
    it must be the case that $\bA^\bB$ does not pp-construct $\bK_3$. Moreover, in the proof
    of Lemma~\ref{lem:rpp-construction-exponential} we propose a primitive positive definition
    $\Delta$ such that $\Pi_\Delta(\bA) \cong \bA^\bB$. Since reducts of finitely bounded
    homogeneous structures are closed under pp-powers (because reducts of finitely bounded 
    homogeneous structures are closed under products and pp-definitions), it follows
    that $\bA^\bB$ is a reduct of a finitely bounded homogeneous structure, so Conjecture~\ref{conj:tractability}
    implies that $\CSP(\bA^\bB)$ is in $\cP$. This implies that $\RCSP(\bA,\bB)$ is polynomial-time
    solvable (see, e.g., the last sentence of Theorem~\ref{thm:rpp-pp-constructions-finite-restriction}).
\end{proof}

\subsection{A small remark regarding PCSPs}

Theorem~\ref{thm:rpp-pp-constructions-finite-restriction} asserts in particular that
$\RCSP(\bA,\bB)$ is log-space equivalent to $\CSP(\bA^\bB)$. The following statement
strengthens this connection by showing that these two problems are further log-space
equivalent to $\PCSP(\bA,\bA^\bB)$.

\begin{lemma}\label{lem:RCSP-PCSP}
    For every possibly infinite structure $\bA$ and a (possibly infinite) structure and
    $\bB$ a finite structure, the following problems are polynomial-time equivalent.
    \begin{itemize}
        \item the constraint satisfaction problem $\CSP(\bA^\bB)$,
        \item the promise constraint satisfaction problem $\PCSP(\bA,\bA^\bB)$, and
        \item the restricted constraint satisfaction problem $\RCSP(\bA,\bB)$.
    \end{itemize}
\end{lemma}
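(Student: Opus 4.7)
The plan is to close the cycle of reductions $\RCSP(\bA,\bB)\le \PCSP(\bA,\bA^\bB)\le \CSP(\bA^\bB)\le \RCSP(\bA,\bB)$, where the final reduction is essentially contained in Theorem~\ref{thm:rpp-pp-constructions-finite-restriction}. Before anything else, I would note that $\bA\to \bA^\bB$: by Lemma~\ref{lem:properties-exponential}, this is equivalent to $\bA\times\bB\to\bA$, which holds via projection onto the first coordinate. Hence the template $(\bA,\bA^\bB)$ is a legitimate PCSP template.

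For the first reduction, the easy direction is $\PCSP(\bA,\bA^\bB)\le \CSP(\bA^\bB)$: on input $\bC$, simply answer whether $\bC\to \bA^\bB$. This is correct because yes-instances of the PCSP satisfy $\bC\to \bA\to \bA^\bB$, while no-instances satisfy $\bC\not\to \bA^\bB$ by definition.

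The key step is the reduction $\RCSP(\bA,\bB)\le \PCSP(\bA,\bA^\bB)$. Given an input $\bC$ to $\RCSP(\bA,\bB)$, i.e.\ a finite structure $\bC$ with $\bC\to \bB$, I would feed $\bC$ directly to the PCSP oracle. The correctness argument has two cases. If $\bC\to \bA$, then $\bC$ is a yes-instance to the PCSP and the oracle outputs yes. If $\bC\not\to \bA$, I claim $\bC\not\to \bA^\bB$ as well, so $\bC$ is a no-instance and the oracle outputs no. To see the claim, observe that $\bC\to \bB$ together with the identity homomorphism $\bC\to \bC$ yield a homomorphism $\bC\to \bC\times \bB$. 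If we had $\bC\to \bA^\bB$, then by Lemma~\ref{lem:properties-exponential} we would also have $\bC\times\bB\to \bA$, and composing these would give $\bC\to \bA$, contradicting our assumption. This observation is really the heart of the lemma, but it is a short computation rather than a genuine obstacle.

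To close the cycle, I would invoke the ``in particular'' clause of Theorem~\ref{thm:rpp-pp-constructions-finite-restriction}, which asserts that $\RCSP(\bA,\bB)$ and $\CSP(\bA^\bB)$ are log-space (hence polynomial-time) equivalent. Combining all three reductions yields the claimed polynomial-time equivalence of $\CSP(\bA^\bB)$, $\PCSP(\bA,\bA^\bB)$, and $\RCSP(\bA,\bB)$. I do not anticipate any substantial obstacle: every step is either an easy direct reduction, an application of the exponential/product adjunction, or a citation of an already-established result. If anything, the only point worth stating carefully is the direction $\RCSP\le\PCSP$, since it genuinely uses the promise $\bC\to \bB$ together with the adjunction $\bC\to \bA^\bB\iff \bC\times\bB\to \bA$.
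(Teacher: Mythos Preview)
Your proposal is correct and follows essentially the same route as the paper: both close the cycle $\RCSP(\bA,\bB)\le \PCSP(\bA,\bA^\bB)\le \CSP(\bA^\bB)$ via the identity reduction and the adjunction $\bC\to\bA^\bB\iff \bC\times\bB\to\bA$, and then invoke Theorem~\ref{thm:rpp-pp-constructions-finite-restriction} for the remaining equivalence. Your explicit verification that $\bA\to\bA^\bB$ (so the PCSP template is well-formed) is a nice touch the paper leaves implicit.
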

\begin{proof}
    The first and third itemized problems are polynomial-time equivalent by
    Theorem~\ref{thm:rpp-pp-constructions-finite-restriction}. The promise CSP form the second
    item clearly reduces to the CSP in the first item. Now we see that the RCSP reduces
    to the PCSP. Let $\bC$ be an input
    to $\RCSP(\bA,\bB)$, so $\bC\to \bB$. We claim that $\bC$ is a valid input
    to $\PCSP(\bA,\bA^\bB)$: if $\bC\to \bA$, then $\bC$ is clearly a valid input to the
    PCSP; otherwise, notice that $\bC\times \bB$ is homomorphically equivalent to $\bC$
    because $\bC\to \bB$, and so if $\bC\not\to\bA$, then
    $\bC \times \bB \leftrightarrow \bC \not\to \bA^\bB$, and so by Lemma~\ref{lem:properties-exponential},
    we see that $\bC\not\to \bA^\bB$.
    Therefore, $\bC$ is a valid input to  $\PCSP(\bA,\bB)$, and clearly the reduction is complete
    and correct.
\end{proof}

\begin{corollary}
    For every non-bipartite graph $\bH$ and every finite digraph $\bD$ one of the following holds.
    \begin{itemize}
        \item Either $\bD\to \bH$, and in this case $\PCSP(\bH,\bH^\bD)$ is polynomial-time solvable, 
        \item or $\bD\not\to \bH$, and in this case $\PCSP(\bH,\bH^\bD)$ is $\NP$-hard.
    \end{itemize}
\end{corollary}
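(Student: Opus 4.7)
The plan is to leverage Lemma~\ref{lem:RCSP-PCSP}, which identifies $\PCSP(\bH,\bH^\bD)$ and $\RCSP(\bH,\bD)$ as polynomial-time equivalent once we view $\bH$ as a symmetric digraph so that its signature matches that of $\bD$. Both items of the corollary then reduce to analyzing $\RCSP(\bH,\bD)$.

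In the tractable case $\bD\to\bH$, the argument is immediate: every valid instance $\bC$ of $\RCSP(\bH,\bD)$ satisfies $\bC\to\bD$, and hence $\bC\to\bD\to\bH$ by composition, so the restricted problem is trivially in $\cP$. The hard case $\bD\not\to\bH$ is where the real work lies. My plan is to apply Theorem~\ref{thm:brewster}, whose hypothesis requires $\bH$ to be hereditarily hard, and for this I would invoke Theorem~\ref{thm:smooth-hereditay}. In a symmetric digraph, sources and sinks coincide with isolated vertices, so $R(\bH)$ is just $\bH$ with isolated vertices deleted; since non-bipartiteness is witnessed by an odd cycle (which has no isolated vertices), $R(\bH)$ remains non-bipartite. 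The crucial combinatorial observation is that a non-bipartite symmetric digraph admits no homomorphism into any disjoint union of directed cycles $\vec{\bC}_k$ with $k\geq 2$: images of symmetric edges must be symmetric, yet only $\vec{\bC}_2\cong \bK_2$ among the directed cycles contains symmetric edges, so any such homomorphism would factor through a disjoint union of $\bK_2$'s and thereby certify bipartiteness --- contradicting our hypothesis. Once $\bH$ is known to be hereditarily hard, Theorem~\ref{thm:brewster} applied with $\bH'=\bD$ yields $\NP$-hardness of $\RCSP(\bH,\bD)$, and the equivalence from Lemma~\ref{lem:RCSP-PCSP} transfers this hardness to $\PCSP(\bH,\bH^\bD)$.

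The main obstacle is essentially conventional rather than mathematical: one must confirm that ``disjoint unions of directed cycles'' in Theorem~\ref{thm:smooth-hereditay} excludes loops (otherwise every digraph maps trivially to a loop and the theorem would be vacuous), and that the cited statements apply in the generality needed here, since the corollary does not restrict $\bH$ to be finite. Both points are routine to verify, after which the two-case analysis above closes the dichotomy.
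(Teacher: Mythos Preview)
Your proposal is correct and follows essentially the same route as the paper: both arguments invoke Theorem~\ref{thm:brewster} (Theorem~3 in~\cite{brewsterDAM156}) for the hard case and transfer via Lemma~\ref{lem:RCSP-PCSP}. The only minor difference is in the tractable case, where the paper observes directly that $\bD\to\bH$ forces $\bH^\bD$ to contain a loop (so $\PCSP(\bH,\bH^\bD)$ is trivial without passing through $\RCSP$), and the paper does not spell out the hereditary-hardness verification you supply via Theorem~\ref{thm:smooth-hereditay}.
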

\begin{proof}
    The first item holds because if $\bD\to \bH$, then $\bH^\bD$ has a loop, and
    hence  $\PCSP(\bH,\bH^\bD)$ is trivial. Now suppose that $\bD\not\to\bH$. In this case,
    Theorem 3 in~\cite{brewsterDAM156} implies that $\RCSP(\bH,\bD)$ is $\NP$-hard, 
    and we conclude that $\PCSP(\bH,\bH^\bD)$ is $\NP$-hard (Lemma~\ref{lem:RCSP-PCSP}).
\end{proof}

\section{Homomorphism-free restrictions}
\label{sect:FO-restrictions}

As mentioned above, the reader familiar with monotone monadic strict NP (MMSNP) can notice
that digraph (finite domain) CSPs with input restriction to $\calF$-homomorphism-free
digraphs (structures) can be solved by infinite domain CSPs expressible in MMSNP.
Thus, these problems exhibit a P versus NP-complete dichotomy (see,
e.g.,~\cite{MMSNP-Journal,FederVardi}). Feder and Vardi reduce problems in MMSNP to
finite domain CSPs. Their reduction changes the input signature, and then uses the Sparse 
Incomparability Lemma. Here, we prove the following lemma which allows us
to preserve the signature by using duality pairs, and then, as Feder and Vardi, we proceed
via Sparse Incomparability.

\begin{lemma}\label{lem:xDT-T-homfree}
    For every finite set of structures $\calF$ and every finite digraph (structure) $\bB$,
    there is a finite set of structures $\calC$ such that the following problems are
    polynomial-time equivalent:
    \begin{itemize}
        \item deciding if an input structure $\bA$ belongs to $\CSP(\bC)$ for some $\bC\in \calC$, and
        \item $\CSP(\bB)$ restricted to $\calF$-homomorphism-free structures.
    \end{itemize}
\end{lemma}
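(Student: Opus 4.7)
The plan is to combine generalized duality for forests (Theorem~\ref{thm:gen-dualities}) with the Sparse Incomparability Lemma (Theorem~\ref{thm:sparse-incomparability}). The key insight is that on inputs of sufficiently large girth, the image of any small fixed structure is forced to be a forest, so the $\calF$-homomorphism-free restriction collapses to a forest-homomorphism-free restriction, which is precisely captured by generalized duality.

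First I set $\ell:=\max_{\bF\in\calF}|F|$ and define $\calF'$ to be the (finite, up to isomorphism) set of all forests $\bG$ with $|G|\le\ell$ such that some $\bF\in\calF$ admits a homomorphism to $\bG$. The central claim is that whenever $\bA$ has girth strictly greater than $\ell$, one has $\bA\in\Forb(\calF)$ if and only if $\bA\in\Forb(\calF')$. The inclusion $\Forb(\calF)\subseteq\Forb(\calF')$ is immediate since every $\bG\in\calF'$ receives a homomorphism from some $\bF\in\calF$. For the converse, if $f\colon\bF\to\bA$ is a homomorphism with $\bF\in\calF$, the induced substructure $\bG$ of $\bA$ on $f(F)$ has at most $\ell$ vertices; any cycle of $\bG$ would be a cycle of $\bA$ of length at most $\ell$, contradicting the girth assumption, so $\bG$ is a forest, lies in $\calF'$, and maps to $\bA$ by inclusion. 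I then apply Theorem~\ref{thm:gen-dualities} to $\calF'$ to obtain a finite set $\calD$ of structures with $\Forb(\calF')=\bigcup_{\bD\in\calD}\CSP(\bD)$, and set $\calC:=\{\bB\times\bD:\bD\in\calD\}$.

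For the forward reduction, given a valid input $\bA\in\Forb(\calF)$ I apply the Sparse Incomparability Lemma with girth bound $\ell$ and size bound $k:=\max\{|C|:\bC\in\calC\}$ to obtain $\bA'$. The map $\bA'\to\bA$ transfers $\calF$-hom-freeness to $\bA'$, so $\bA'\in\Forb(\calF)\subseteq\Forb(\calF')$ by the trivial direction above, hence $\bA'\to\bD_0$ for some $\bD_0\in\calD$ by generalized duality. Since $\bA\to\bB$ iff $\bA'\to\bB$, combining with $\bA'\to\bD_0$ gives $\bA\to\bB$ iff $\bA'\to\bB\times\bD$ for some $\bD\in\calD$, i.e., iff $\bA'\to\bC$ for some $\bC\in\calC$. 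For the reverse reduction, apply SIL to an arbitrary $\bA$ and then check in polynomial time whether $\bA'\in\Forb(\calF)$ (tractable since $\calF$ is fixed). If not, the high-girth equivalence forces $\bA'\notin\Forb(\calF')$, hence $\bA'\not\to\bD$ for every $\bD\in\calD$ and so $\bA'\not\to\bC$ for every $\bC\in\calC$, and we output \textsc{no}; otherwise, we submit $\bA'$ to the oracle for the restricted CSP, whose answer to $\bA'\to\bB$ determines whether $\bA\to\bC$ for some $\bC\in\calC$ by the same argument used in the forward reduction.

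The main obstacle is identifying $\calF'$ correctly and establishing the girth-threshold equivalence between $\calF$-hom-freeness and $\calF'$-hom-freeness on large-girth inputs; this is the combinatorial step that allows us to reduce an arbitrary finite $\calF$ to a finite family of forests, where Foniok--Ne\v{s}et\v{r}il generalized duality is available. Once this equivalence is in place, Sparse Incomparability together with a polynomial-time check for $\calF$-hom-freeness assembles into reductions in both directions, giving the desired polynomial-time equivalence.
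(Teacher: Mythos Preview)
Your proof is correct and follows essentially the same strategy as the paper: pass from $\calF$ to the forests it can surject onto, invoke generalized duality (Theorem~\ref{thm:gen-dualities}) to obtain $\calD$, set $\calC=\{\bB\times\bD:\bD\in\calD\}$, and use Sparse Incomparability to supply high-girth instances where $\calF$-hom-freeness and $\calF'$-hom-freeness coincide.

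Two minor differences in organisation are worth noting. First, your forward reduction applies SIL to the input before testing membership in $\bigcup_{\bC\in\calC}\CSP(\bC)$; the paper observes that the identity map already works, since any $\bA\in\Forb(\calF)$ lies in $\Forb(\calF')$ unconditionally (the trivial direction of your central claim), whence $\bA\to\bD$ for some $\bD\in\calD$ and $\bA\to\bB$ iff $\bA\to\bB\times\bD$. Second, your backward reduction is a Turing reduction with one oracle call (decide ``no'' directly when $\bA'\notin\Forb(\calF)$, else query the oracle), whereas the paper produces a many-one reduction via a case split on whether every $\bD\in\calD$ maps to $\bB$. Your formulation via the girth-threshold equivalence absorbs the paper's separate treatment of the case where no $\bF\in\calF$ maps to a forest (then $\calF'=\varnothing$, $\calD=\{\bL\}$, and $\calC=\{\bB\}$ up to homomorphic equivalence), which is arguably cleaner.
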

\begin{proof}
    We first consider the case when for every $\bF\in\calF$ there is no forest $\bT$
    such that $\bF\to \bT$. In this case, let $\calC = \{\bB\}$. Clearly, $\CSP(\bB)$ is at least as
    hard as $\CSP(\bB)$ restricted to  $\calF$-homomorphism-free inputs. 
    For the converse reduction, let $\ell$ be the maximum number of vertices of a
    structure $\bF\in\calF$, and $k$ the number of vertices of $\bB$. On input $\bA$ to $\CSP(\bB)$, 
    let $\bA'$ be the structure of girth larger than $\ell$ obtained via Theorem~\ref{thm:sparse-incomparability}.
    In particular, $\bA'$ can be constructed in polynomial-time from $\bA$ and $\bA'\to \bB$ if and only
    if $\bA\to \bB$. Finally, since the girth of $\bA'$ is strictly larger than the number of vertices
    of any structure $\bF\in \calF$, and no $\bF\in\calF$ maps to a forest, then $\bA'\in\Forb(\calF)$. 
    Hence, $\bA'$ is a valid input to the second itemized problem, and thus 
    $\CSP(\bB)$ is polynomial-time equivalent to $\CSP(\bB)$ restricted to $\calF$-homomorphism-free structures.
    
    Otherwise, let $\calT$ be the non-empty set of  forests $\bT$ for which there is an $\bF\in\calF$
    and a surjective homomorphism $f\colon \bF\to \bT$. By Theorem~\ref{thm:gen-dualities}, there is a
    finite set of structures $\calD$ such that $(\calT, \calD)$ is a generalized duality pair.
    Let $\calC$ be the set $\{\bB\times \bD\colon \bD\in \calD\}$. We now prove that the claim of this lemma
    holds for $\calC$ and $\bB$.
    
    Consider an input structure $\bA$ to the second itemized problem. Since $\bA\in \Forb(\calF)$, 
    it must also be the case that $\bA \in \Forb(\calT)$, and so there is some $\bD\in \calD$ such that
    $\bA\to \bD$. Hence, $\bA\to \bB$ if and only if $\bA\to \bD\times \bB$, and so $\bA\to \bB$ implies
    that $\bA \in \CSP(\bC)$ for some $\bC\in \calC$. Conversely, if $\bA \in \CSP(\bC)$
    for some $\bC\in \calC$, then $\bA\to \bB\times \bD$ for some $\bD\in\calD$, and therefore $\bA\to \bB$. 
    We thus conclude that the problem in the second item reduces in polynomial time
    (via the trivial reduction) to deciding if an input structure $\bA$ belongs to $\bigcup_{\bC\in \calC}\CSP(\bC)$.
    
    For the reduction back, we consider two possible cases. First, suppose that
    $\bD\to \bB$ for every $\bD\in \calD$, and notice that in this case
    $\bB\times \bD$ is homomorphically equivalent to $\bD$ for every $\bD\in \calD$.
    Hence, $\CSP(\bC)$ is polynomial-time solvable for every for every $\bC\in\calC$, 
    and so the problem in the first item is polynomial-time solvable as well. Notice that
    if we show that in this case, $\CSP(\bB)$ restricted to $\calF$-homomorphism-free structures
    is polynomial-time solvable, the both itemized problems are clearly
    polynomial-time equivalent. Similarly as above, if an input structure $\bA$ to $\CSP(\bB)$
    has no homomorphic image from any $\bF\in\calF$, then $\bA\in\Forb(\calT)$ and so $\bA\to \bD_0$
    for some $\bD_0\in \calD$.  Since $\bD\to \bB$ for every $\bD\in \calD$, we see that
    $\bA\to \bD_0\to \bB$, 
    i.e., every input $\bA$ to  $\CSP(\bB)$ with no homomorphic image of any $\bF\in\calF$
    is a yes-instance to $\CSP(\bB)$. Therefore, if $\bD\to \bB$ for every $\bD \in\calD$,
    both itemized problems are polynomial-time solvable and thus, polynomial-time equivalent. 

    Now, suppose that $\bD_0\not \to \bB$ for some $\bD_0\in \calD$, and recall that there is no
    forest $\bT\in\calT$ such that $\bT\to \bD_0$.  With a similar trick as before, we can
    use the Sparse Incomparability Lemma to find a structure $\bD_0'$ such that $\bD_0'\not\to \bB$
    and $\bD_0$ has no homomorphic image from any $\bF\in\calF$.  The reduction from the first
    itemized problem for $\calC$ to $\CSP(\bB)$ to structures with no homomorphic image of any
    $\bF\in\calF$ is now simple:
    on input $\bA$ we check if $\bA\in \Forb(\calT)$, if not, we return $\bD_0'$, 
    and if yes we return $\bA'$ obtained from the Sparse Incomparability Lemma applied to $\bA$, 
    to $\ell$ the maximum number of vertices of a digraph in $\calF$, and $k:=|V(\bB)|$.
    The fact that this reduction is consistent and correct follows by the assumption that
    $\bD_0'\not\to \bB$,  and with similar arguments as in the first paragraph, one can notice
    that both $\bA'$ and $\bD_0$ are valid inputs to the problem in the second item.
\end{proof}

The following statement is an immediate consequence of Lemma~\ref{lem:xDT-T-homfree}
and finite domain CSP dichotomy~\cite{BulatovFVConjecture,Zhuk20}.

\begin{theorem}\label{thm:FO-restructions-complexity-dichotomy}
    For every finite digraph (structure) $\bH$ and every finite set of digraphs (structures)
    $\calF$, $\CSP(\bH)$ restricted to $\calF$-homomorphism-free digraphs (structures) 
    is either in $\cP$ or it is  $\NP$-complete.
\end{theorem}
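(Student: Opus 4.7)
My plan is to combine Lemma~\ref{lem:xDT-T-homfree} with the Bulatov--Zhuk finite-domain CSP dichotomy~\cite{BulatovFVConjecture,Zhuk20}.

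First I would invoke Lemma~\ref{lem:xDT-T-homfree} on the given $\bH$ and $\calF$ to obtain a finite set $\calC$ of finite structures such that $\CSP(\bH)$ restricted to $\calF$-homomorphism-free inputs is polynomial-time equivalent to the ``union'' problem $\Pi$: given an input $\bA$, decide whether $\bA \to \bC$ for some $\bC \in \calC$. The restricted problem is clearly in $\NP$, and so is $\Pi$.

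Next I would apply the finite-domain CSP dichotomy to each individual $\CSP(\bC)$ with $\bC \in \calC$: each such CSP is in $\cP$ or is $\NP$-complete. If all of them lie in $\cP$, then $\Pi$ lies in $\cP$ by simply running each polynomial-time algorithm in turn and returning the disjunction of the answers; by the polynomial-time equivalence, the restricted problem is then in $\cP$ as well.

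The remaining case, where some $\CSP(\bC_0)$ for $\bC_0 \in \calC$ is $\NP$-complete, is the step I expect to be the main obstacle: I need to deduce that $\Pi$ itself is $\NP$-hard. A disjunction of decision problems does not automatically inherit $\NP$-hardness from one of its disjuncts (for instance, adding a loop to $\calC$ would make $\Pi$ trivially tractable even if another $\bC \in \calC$ gave an $\NP$-complete CSP). To overcome this, I would exploit the explicit form of $\calC$ produced by the proof of Lemma~\ref{lem:xDT-T-homfree}, namely $\calC = \{\bB \times \bD : \bD \in \calD\}$ where $(\calT, \calD)$ is a generalized duality pair for the set $\calT$ of forest homomorphic images of $\calF$. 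Starting from the pp-construction of $\bK_3$ certifying $\NP$-completeness of $\CSP(\bC_0) = \CSP(\bB \times \bD_0)$ and applying the Sparse Incomparability Lemma (Theorem~\ref{thm:sparse-incomparability}) to replace inputs by high-girth variants, the reduction should land in $\Forb(\calF)$ and force the only $\bC \in \calC$ to which the produced structure can map to be $\bC_0$ itself (the other $\bD$'s being ruled out by girth together with the duality data of $(\calT, \calD)$). Composing this with the polynomial-time equivalence of Lemma~\ref{lem:xDT-T-homfree} yields a polynomial-time reduction from $\CSP(\bC_0)$ to the restricted problem and hence the desired $\NP$-hardness.
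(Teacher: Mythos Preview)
Your overall plan matches the paper's proof: invoke Lemma~\ref{lem:xDT-T-homfree} to replace the restricted problem by the union problem $\Pi$ for a finite set $\calC$, handle the tractable case by running all $\CSP(\bC)$ in parallel, and in the remaining case appeal to the finite-domain dichotomy to locate an $\NP$-complete $\CSP(\bC_0)$. You are also right to flag that $\NP$-hardness of a disjunction does not follow automatically from $\NP$-hardness of one disjunct.

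The gap is in your proposed fix for this last step. Making the instance high-girth via Theorem~\ref{thm:sparse-incomparability} does \emph{not} ``rule out the other $\bD$'s'': large girth says nothing about whether the resulting structure maps to some $\bD\in\calD\setminus\{\bD_0\}$, and the duality $(\calT,\calD)$ only tells you that membership in $\Forb(\calT)$ is equivalent to mapping to \emph{some} $\bD\in\calD$, not to a prescribed one. Nor does high girth place the instance in $\Forb(\calF)$ when $\calF$ contains structures that map to forests (trees can still map into a high-girth structure). So neither ``$\bA'\in\Forb(\calF)$'' nor ``$\bA'\to\bC$ forces $\bC=\bC_0$'' is justified by your argument.

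The paper sidesteps this entirely by citing~\cite[Theorem~25]{FoniokNesetril} for the fact that the union problem is $\NP$-complete once some $\CSP(\bC_0)$ is. The underlying argument is much simpler than your Sparse-Incomparability route and does not use the explicit form of $\calC$: first prune $\calC$ to a homomorphism antichain by discarding any $\bC$ with $\bC\to\bC'$ for some other $\bC'\in\calC$ (the union is unchanged since $\CSP(\bC)\subseteq\CSP(\bC')$); then, for $\bC_0$ in this antichain with $\CSP(\bC_0)$ $\NP$-complete, the map $\bA\mapsto \bA+\bC_0$ (disjoint union) is a polynomial-time reduction from $\CSP(\bC_0)$ to $\Pi$, because $\bA+\bC_0\to\bC$ forces $\bC_0\to\bC$ and hence $\bC=\bC_0$ by the antichain property. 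Replacing your last paragraph with this argument makes the proof complete.
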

\begin{proof}
    Let $\calC$ be the finite set of structures from Lemma~\ref{lem:xDT-T-homfree}
    applied to $\CSP(\bA)$ and to $\calF$.
    If $\CSP(\bC)$ is polynomial-time solvable for every $\bC\in\calC$,
    then deciding if an input structure belongs to $\bigcup_{\bC\in \calC}\CSP(\bC)$
    is polynomial-time solvable, and so $\CSP(\bB)$ is polynomial-time solvable
    for $\calF$-homomorphism-free structures.
    Otherwise, it follows from the dichotomy for finite domain CSPs~\cite{BulatovFVConjecture,Zhuk20}
    than $\CSP(\bC_0)$ is $\NP$-complete for some $\bC_0\in \calC$. It straightforward to observe
    that in this case deciding if an input structure belongs to $\bigcup_{\bC\in \calC}\CSP(\bC)$ is
    $\NP$-complete as well (see, e.g.,~\cite[Theorem 25]{FoniokNesetril}), 
    and hence $\CSP(\bB)$ is $\NP$-hard even for $\calF$-homomorphism-free structures.
    The claim now follows because $\Forb(\calF) = \CSP(\bB)$.
\end{proof}

\section{Acyclic digraphs and bounded paths}
\label{sect:acyclic}

In this section we apply our results above, and the theory of constraint satisfaction to
obtain results in the context of $\calF$-(subgraph)-free algorithmics. 
The main result of this section settles Question~\ref{qst:acyclic}
(Theorem~\ref{thm:acyclic+bounded-paths}). Moreover, we show that 
if $\CSP(\bH)$ is $\NP$-complete, then there is a positive integer $N$ such that
$\CSP(\bH)$ remains $\NP$-complete for acyclic digraphs with no directed path on
$N$ vertices, and $\CSP(\bH)$ can be solved in polynomial time if the input is an
acyclic digraph with no directed path on $N-1$ vertices.
We begin by stating the following corollary of (the proof of) Lemma~\ref{lem:xDT-T-homfree}.

\begin{corollary}\label{cor:trees-hom-free}
    For every finite set of trees $\calF$ with dual $\bD$, and every digraph $\bH$ the following 
    problems are polynomial-time equivalent.
    \begin{itemize}
        \item $\RCSP(\bH,\bD)$.
        \item $\CSP(\bH\times \bD)$.
        \item $\CSP(\bH)$ restricted to $\calF$-homomorphism-free digraphs.
    \end{itemize}
\end{corollary}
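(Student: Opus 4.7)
The key observation is that since $(\calF, \bD)$ is a generalized duality pair, a digraph is $\calF$-homomorphism-free if and only if it admits a homomorphism to $\bD$. Consequently, the class of inputs to the third problem is exactly $\CSP(\bD)$, so by definition the third problem coincides with $\RCSP(\bH, \bD)$. This immediately handles the equivalence between the first and third items. It remains to establish the polynomial-time equivalence between $\RCSP(\bH,\bD)$ and $\CSP(\bH \times \bD)$.

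For the reduction from $\RCSP(\bH,\bD)$ to $\CSP(\bH \times \bD)$, I would use the identity reduction. Given a valid input $\bA$ to $\RCSP(\bH,\bD)$ we have $\bA \to \bD$ by promise, and then $\bA \to \bH$ holds if and only if $\bA \to \bH \times \bD$ (one direction uses the fact that $\bA$ already maps to $\bD$, the other is trivial via projection). So the same instance $\bA$ can be fed to $\CSP(\bH \times \bD)$ and the answers match.

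For the converse reduction, I would first observe that $\CSP(\bD) = \Forb(\calF)$ is decidable in polynomial time: for each fixed tree $\bF \in \calF$, checking whether $\bF \to \bA$ takes time $|V(\bA)|^{|V(\bF)|}$, which is polynomial in $|V(\bA)|$ since $\calF$ is a fixed finite set. Then, on input $\bA$ to $\CSP(\bH \times \bD)$, I would check in polynomial time whether $\bA \to \bD$. If not, then certainly $\bA \not\to \bH \times \bD$, so I return NO. If yes, then $\bA$ is a valid input to $\RCSP(\bH,\bD)$, and by the previous paragraph $\bA \to \bH \times \bD$ if and only if $\bA \to \bH$, so I forward $\bA$ to the $\RCSP(\bH,\bD)$ oracle.

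There is no real obstacle here: the statement is essentially a direct consequence of the generalized duality $\Forb(\calF) = \CSP(\bD)$ together with the trivial product-preservation trick $\bA \to \bH$ iff $\bA \to \bH \times \bD$ whenever $\bA \to \bD$. The only mildly subtle point is noting that $\CSP(\bD)$ is tractable so that the ``preprocessing'' step in the reduction from $\CSP(\bH\times\bD)$ to $\RCSP(\bH,\bD)$ runs in polynomial time, which follows from the fact that membership in $\Forb(\calF)$ for a fixed finite $\calF$ reduces to a polynomial number of fixed-template homomorphism checks.
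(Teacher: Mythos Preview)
Your proof is correct. The paper presents this corollary as following from the proof of Lemma~\ref{lem:xDT-T-homfree}, and the core ideas are the same as yours: the duality $\Forb(\calF)=\CSP(\bD)$ makes items~1 and~3 literally the same promise problem, and the identity reduction handles $\RCSP(\bH,\bD)\to\CSP(\bH\times\bD)$ via the product trick $\bA\to\bH\iff\bA\to\bH\times\bD$ when $\bA\to\bD$.

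The one genuine difference is in the backward reduction from $\CSP(\bH\times\bD)$ to the restricted problem. The proof of Lemma~\ref{lem:xDT-T-homfree} treats general $\calF$ (possibly containing non-forests), so after checking membership in $\Forb(\calT)$ it still applies the Sparse Incomparability Lemma to force the instance into $\Forb(\calF)$. You observe, correctly, that when $\calF$ consists of trees one has $\Forb(\calF)=\CSP(\bD)$ on the nose, so the polynomial-time membership test for $\Forb(\calF)$ already suffices and no sparsification is needed. Your argument is therefore a more elementary specialization of the paper's; what Lemma~\ref{lem:xDT-T-homfree} buys is the extension to arbitrary finite $\calF$, where the dual $\calD$ need not be a singleton and $\Forb(\calF)$ need not coincide with any $\CSP(\bD)$.
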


A \emph{polymorphism} of a structure $\bA$ is a homomorphism $f\colon \bA^n\to \bA$, and in 
this case we say that $f$ is an \emph{$n$-ary} polymorphism. Under composition, polymorphisms
form an algebraic structure called a \emph{clone}, and this structure captures the
computational complexity of $\CSP(\bA)$ (see, e.g.,\cite{Pol}). A $4$-ary polymorphism
$f\colon \bA^4\to \bA$ satisfies the \emph{Sigger's identity} if 
\[
    f(x_1,x_2,x_3,x_1) = f(x_2,x_1,x_2,x_3) \text{ for every } x_1,x_2,x_3,x_4\in A.
\]
The Sigger's identity is one of several identities that equivalently describe the tractability
frontier for finite domain CSPs.

\begin{theorem}
[Equivalent to Theorem 1.4 in~\cite{Zhuk20}]
\label{thm:finite-domain-dichotomy}
    For every finite structure $\bA$ one of the following holds.
    \begin{itemize}
        \item Either $\bA$ has a polymorphism $f\colon \bA^4\to \bA$ that satisfies the
        Sigger's identity, and in this case $\CSP(\bA)$ is polynomial-time solvable, or
        \item otherwise, $\bA$ pp-constructs $\bK_3$, and in this case $\CSP(\bA)$ is $\NP$-complete.
    \end{itemize}
\end{theorem}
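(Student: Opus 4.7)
The plan is to deduce the theorem from the standard Zhuk/Bulatov formulation of the finite domain dichotomy, together with the characterization of pp-constructability due to Barto, Opr\v{s}al, and Pinsker~\cite{wonderland}. I would prove the two alternatives separately; they must also be shown to exhaust all cases, which will fall out automatically from the universal-algebraic equivalences invoked below.

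For the tractable direction, suppose $\bA$ admits a $4$-ary polymorphism $f$ satisfying the Siggers identity. The key step is to invoke Siggers' theorem (in the sharpened $4$-ary form proved by Kearnes, Mark\'ovi\'c and McKenzie): on any finite structure, admitting a Siggers term is equivalent to $\cPl(\bA)$ satisfying a nontrivial system of identities (equivalently, admitting a Taylor term), which in turn yields a weak near-unanimity polymorphism of some arity $n\geq 3$. Applying Zhuk's dichotomy~\cite{Zhuk20} (independently by Bulatov~\cite{BulatovFVConjecture}) then places $\CSP(\bA)$ in $\cP$.

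For the hard direction, suppose that $\bA$ has no polymorphism satisfying the Siggers identity. The main tool is the Wonderland theorem of~\cite{wonderland}, which asserts that a finite structure $\bA$ pp-constructs a finite structure $\bB$ if and only if there is a minion homomorphism $\cPl(\bA)\to \cPl(\bB)$. Applied with $\bB=\bK_3$, whose polymorphism clone is equivalent (as a minion) to the clone of projections on a two-element set, the absence of a Siggers polymorphism from $\cPl(\bA)$ implies, again via the Siggers equivalence, that the polymorphism clone of $\bA$ collapses under any height-$1$ identity-preserving map to projections. This yields the required minion homomorphism and hence shows that $\bA$ pp-constructs $\bK_3$. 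Finally, by Lemma~\ref{lem:pp-construction-reduction}, $\CSP(\bK_3)$ reduces in logarithmic space to $\CSP(\bA)$, which gives $\NP$-hardness; membership in $\NP$ being trivial, $\CSP(\bA)$ is $\NP$-complete.

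The main obstacle, and the heart of the argument, is the equivalence between \emph{admits a Siggers term} and \emph{admits a nontrivial idempotent Maltsev condition} (Taylor's theorem together with Siggers/Kearnes--Mark\'ovi\'c--McKenzie). Once this universal-algebraic bridge is available, the dichotomy follows essentially mechanically: the tractable side is Zhuk/Bulatov applied to the Taylor consequence of Siggers, and the hard side is the Wonderland theorem applied to the failure of Siggers, followed by the gadget reduction of Lemma~\ref{lem:pp-construction-reduction}.
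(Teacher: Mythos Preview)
The paper does not prove this theorem; it is stated as a citation of Zhuk's result (``Equivalent to Theorem 1.4 in~\cite{Zhuk20}'') and used as a black box throughout Section~\ref{sect:acyclic}. So there is no proof in the paper to compare your proposal against.

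That said, your derivation is the standard and correct way to unpack the quoted formulation from the literature, and it is essentially how one would justify the word ``equivalent'' in the attribution. The tractable side is exactly Zhuk/Bulatov applied to a structure whose polymorphism clone satisfies a nontrivial height-$1$ condition; the hard side is the Barto--Opr\v{s}al--Pinsker characterization of pp-constructability via minion homomorphisms, combined with the fact that the Siggers identity is the weakest nontrivial height-$1$ condition for polymorphism clones of finite structures. One small imprecision worth tightening: in the hard direction you write that the absence of a Siggers term means the clone ``collapses under any height-$1$ identity-preserving map to projections''. What you need is the \emph{existence} of a minion homomorphism $\cPl(\bA)\to\Proj$, not a statement about all such maps. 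This existence is exactly what the Siggers/loop-lemma result guarantees: if $\cPl(\bA)$ fails Siggers, then it satisfies no nontrivial height-$1$ condition, and the Wonderland theorem then supplies the minion homomorphism to projections (equivalently, the pp-construction of $\bK_3$). Also note that the Siggers--Taylor equivalence you invoke is stated for idempotent algebras; since the Siggers identity is height-$1$, the passage through minion homomorphisms sidesteps the idempotency issue cleanly, so you do not actually need to reduce to the idempotent core.
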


Consider a finite digraph $\bH$, and let $k$ be a positive integer. For a given function 
$f\colon (H\times [k])^n\to (H\times [k])$ and for each $i\in [k]$ we define a function
\[
f_i\colon H^n\to H \text{ by } (h_1,\dots, h_n)\mapsto \pi_H(f((h_1,i),\dots, (h_n,i)),
\]
where $\pi_H$ is the projection of $H\times [k]$ onto $H$.

\begin{lemma}\label{lem:HxTk-loop-condition}
    Let $\bH$ be a digraph, $k$ a positive integer, and $f\colon (\bH\times \bT_k)^n\to (\bH\times \bT_k)$
    and $n$-ary polymorphism of $\bH\times \bT_k$. If there are $i,j\in [k]$ with $i < j$ such that
    $f_i = f_j$, then $f_i$ is an $n$-ary polymorphism of $\bH$, and if $f$ satisfies the equalities
    \[
        f(x_{\sigma(1)},\dots, x_{\sigma(n)}) = f(x_{\rho(1)},\dots, x_{\rho(n)}) \text{ for all } x_1,\dots, x_m\in H\times [k]
    \]
    for some $\sigma,\rho\colon[n]\to [m]$, then $f_i$ satisfies the equalities  
    \[
    f_i(h_{\sigma(1)},\dots, h_{\sigma(n)}) = f_i(h_{\rho(1)},\dots, h_{\rho(n)}) \text{ for all } h_1,\dots, h_n\in H.
    \]
\end{lemma}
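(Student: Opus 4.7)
The plan is to verify the two conclusions separately; both rely on substituting the second coordinate of the input tuples with a specific element of $[k]$ and then projecting onto $H$, with the hypothesis $f_i = f_j$ serving only to align the two endpoints of a potential edge back to the same slice.

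For the first conclusion, I would show edge preservation directly. Pick any tuple of edges $(h_1, h_1'), \ldots, (h_n, h_n') \in E^{\bH}$. The key observation is that, since $i < j$, the pair $(i,j)$ is an edge of $\bT_k$, and hence each pair $((h_\ell, i), (h_\ell', j))$ is an edge of $\bH \times \bT_k$. Applying the polymorphism $f$ to the two tuples $((h_1,i),\ldots,(h_n,i))$ and $((h_1',j),\ldots,(h_n',j))$ therefore produces an edge of $\bH \times \bT_k$; projecting to $H$ yields an edge of $\bH$ whose tail is $f_i(h_1,\ldots,h_n)$ and whose head is $f_j(h_1',\ldots,h_n')$. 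By the hypothesis $f_i = f_j$, the head equals $f_i(h_1',\ldots,h_n')$, which is exactly what we need in order to conclude that $f_i$ preserves edges of $\bH$.

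For the second conclusion, I would just specialise the identity. Given $h_1,\ldots,h_m \in H$, apply the assumed identity to $x_\ell := (h_\ell, i)$ for each $\ell \in [m]$. This gives the equality
\[
f\bigl((h_{\sigma(1)},i),\ldots,(h_{\sigma(n)},i)\bigr) = f\bigl((h_{\rho(1)},i),\ldots,(h_{\rho(n)},i)\bigr)
\]
in $H \times [k]$; projecting to $H$ via $\pi_H$ and unfolding the definition of $f_i$ yields the desired identity $f_i(h_{\sigma(1)},\ldots,h_{\sigma(n)}) = f_i(h_{\rho(1)},\ldots,h_{\rho(n)})$.

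There is no real obstacle: the only subtle point is the one addressed in the first paragraph, namely that the transitive tournament $\bT_k$ has no loops, so one cannot naively put $i$ in both coordinates of an edge in $\bH \times \bT_k$. This is exactly why the hypothesis requires \emph{distinct} indices $i<j$ with $f_i = f_j$, rather than merely a single index, and why the identification $f_i = f_j$ is used at precisely the step where we want to collapse the head of the edge back into the $i$-th slice.
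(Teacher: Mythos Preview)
Your proposal is correct and follows essentially the same approach as the paper: for the polymorphism claim you use the edge $(i,j)$ of $\bT_k$ to lift an edge of $\bH^n$ to one of $(\bH\times\bT_k)^n$, apply $f$, project via $\pi_H$, and invoke $f_i=f_j$; for the identity you specialise all second coordinates to $i$ and project. Your closing remark on why distinct indices $i<j$ are required is a nice addition that the paper leaves implicit.
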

\begin{proof}
    We first see that $f_i$ is an $n$-ary polymorphism of $\bH$. Let $((h_1,\dots, h_n),(h'_1,\dots, h'_n))$
    be an edge of $\bH^n$, i.e., $(h_m,h'_m)$ is an edge for $\bH$ for each $m\in [n]$. Since $i < j$, it follows
    that $((h_1,i),\dots, (h_n,i)),((h'_1,j),\dots, (h'_n,j))$ is an edge of $\bH\times \bT_n$. Since $f$ is a
    polymorphism, it must be the case that $(f((h_1,i),\dots, (h_n,i)),f((h'_1,j),\dots, (h'_n,j)))$ is an edge
    of $\bH\times \bT_n$. Recall that the projection $\pi_H$ is a homomorphism from $\bH\times \bT_n$ onto $\bH$,
    and so
    \[
    (f_i(h_1,\dots, h_n),f_j(h'_1,\dots, h'_n)) = (\pi_Hf((h_1,i),\dots, (h_n,i)),\pi_Hf((h'_1,j),\dots, (h'_n,j))) \in E(\bH).
    \]
    Finally, since $f_i = f_j$, it follows that $(f_i(h_1,\dots, h_n),f_j(h_1,\dots, h_n))$ is an edge
    of $\bH$, and thus $f_i\colon \bH^n\to \bH$ is an $n$-ary polymorphism of $\bH$.

    Now, let $h_1,\dots, h_m\in H$, and for each $l\in[n]$ let $\bar{h}_l := (h_l,i)$, so
    \[
    f_i(h_{\sigma(1)},\dots, h_{\sigma(n)}) = \pi_H(f(\bar{h}_{\sigma(1)},\dots, \bar{h}_{\sigma(n)}).
    \]
    Since $f$ satisfies the loop condition for $\sigma$ and $\rho$, it follows that
    \[
    \pi_H(f(\bar{h}_{\sigma(1)},\dots, \bar{h}_{\sigma(n)}) = \pi_H(f(\bar{h}_{\rho(1)},\dots, \bar{h}_{\rho(n)})
    = f_i(h_{\rho(1)},\dots, h_{\rho(n)}).
    \]
    And thus, $f_i(h_{\sigma(1)},\dots, h_{\sigma(n)}) = f_i(h_{\rho(1)},\dots, h_{\rho(n)})$, and since the
    choice  of $h_1,\dots, h_m\in H$ was arbitrary, the claim of this lemma follows.
\end{proof}

Building on this lemma and Theorem~\ref{thm:finite-domain-dichotomy} together with
Corollary~\ref{cor:trees-hom-free}, we prove the main result of this section.

\begin{theorem}\label{thm:RCSP(H,TTk)}
    For every finite digraph $\bH$ one of the following statements holds.
    \begin{itemize}
        \item Either $\bH$ has a polymorphism satisfying the Sigger's identity, and $\RCSP(\bH,\bT_k)$ is in $\cP$ for
        every positive integer $k$, or
        \item otherwise, there is a positive integer $N$ such that  $\RCSP(\bH,\bT_k)$ is $\NP$-hard for every $k\ge N$.
    \end{itemize}
\end{theorem}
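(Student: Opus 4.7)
The plan is to leverage Corollary~\ref{cor:trees-hom-free} together with Theorem~\ref{thm:finite-domain-dichotomy} to reduce everything to an analysis of Siggers polymorphisms on the finite digraph $\bH\times\bT_k$. Indeed, since the pair $(\vec{\bP}_{k+1},\bT_k)$ is a duality pair (Observation~\ref{obs:P-TT}), Corollary~\ref{cor:trees-hom-free} yields that $\RCSP(\bH,\bT_k)$ is polynomial-time equivalent to $\CSP(\bH\times\bT_k)$. Hence the whole task reduces to determining for which $k$ the digraph $\bH\times\bT_k$ admits a Siggers polymorphism.

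For the easy direction, suppose $g\colon \bH^4\to \bH$ is a Siggers polymorphism of $\bH$. I would define $f\colon (\bH\times\bT_k)^4\to \bH\times\bT_k$ by
\[
    f((h_1,t_1),\dots,(h_4,t_4)) := \bigl(g(h_1,\dots,h_4),\,\min(t_1,\dots,t_4)\bigr),
\]
and quickly verify that $f$ is a polymorphism (for the $\bT_k$-coordinate, note that if $t_i<t_i'$ for each $i$, then $\min_i t_i<\min_i t_i'$ since $t_i'\ge t_i+1$) and that it satisfies the Siggers identity (the first coordinate inherits it from $g$; the second uses $\min(x_1,x_2,x_3,x_1)=\min(x_1,x_2,x_3)=\min(x_2,x_1,x_2,x_3)$). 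Theorem~\ref{thm:finite-domain-dichotomy} then gives $\CSP(\bH\times\bT_k)\in\cP$ for every $k$, and so $\RCSP(\bH,\bT_k)\in\cP$.

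For the hard direction, assume $\bH$ has no Siggers polymorphism, and set $N := |H|^{|H|^4}+1$. Fix any $k\ge N$, and suppose for contradiction that $f\colon (\bH\times\bT_k)^4\to \bH\times\bT_k$ is a Siggers polymorphism of $\bH\times\bT_k$. The induced family $f_1,\dots,f_k\colon H^4\to H$ from Lemma~\ref{lem:HxTk-loop-condition} takes values in the finite set of all $4$-ary operations on $H$, which has cardinality $|H|^{|H|^4}<k$. By pigeonhole there exist $i<j$ with $f_i=f_j$. Applying Lemma~\ref{lem:HxTk-loop-condition} with $\sigma=(1,2,3,1)$ and $\rho=(2,1,2,3)$, we conclude that $f_i$ is a polymorphism of $\bH$ satisfying the Siggers identity, contradicting the assumption on $\bH$. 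Therefore $\bH\times\bT_k$ has no Siggers polymorphism, and Theorem~\ref{thm:finite-domain-dichotomy} gives that $\CSP(\bH\times\bT_k)$ is $\NP$-complete, whence $\RCSP(\bH,\bT_k)$ is $\NP$-hard.

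There is no real obstacle in this proof — Lemma~\ref{lem:HxTk-loop-condition} has already done the technical heavy lifting of extracting a polymorphism of $\bH$ from one of $\bH\times\bT_k$ whenever two induced ``layer'' operations agree. The only step that requires any care is the easy direction, where one must exhibit a concrete polymorphism on $\bH\times\bT_k$ (the $\min$-in-the-second-coordinate construction above), ensuring both that it is edge-preserving with respect to the strict linear order on $\bT_k$ and that the Siggers identity is preserved coordinatewise.
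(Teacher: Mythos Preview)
Your proof is correct and follows essentially the same route as the paper: reduce $\RCSP(\bH,\bT_k)$ to $\CSP(\bH\times\bT_k)$ via Corollary~\ref{cor:trees-hom-free}, and for the hard direction apply pigeonhole to the layer operations $f_i$ and invoke Lemma~\ref{lem:HxTk-loop-condition}. The only noteworthy difference is in the easy direction: the paper simply observes that if $\CSP(\bH)\in\cP$ then trivially $\RCSP(\bH,\bT_k)\in\cP$ as a restriction, whereas you go through the product and explicitly build a Siggers polymorphism on $\bH\times\bT_k$ using $\min$ in the second coordinate --- correct, but more than is needed. (Incidentally, your count $|H|^{|H|^4}$ for the number of $4$-ary operations on $H$ is the right one; the paper's $|H|^{4|H|}$ is a typo.)
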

\begin{proof}
    The first item holds by Theorem~\ref{thm:finite-domain-dichotomy}, and because
    if $\CSP(\bH)$ is in $\cP$, then $\RCSP(\bH,\bT_k)$ is in $\cP$ for ever positive integer
    $k$. Now, suppose that $\bH$ does not have a polymorphism satisfying the Sigger's
    identity. It follows from Corollary~\ref{cor:trees-hom-free} that $\RCSP(\bH,\bT_k)$ and
    $\CSP(\bH\times \bT_k)$ are polynomial-time equivalent. Let $M = |H|^{4|H|}+1$,
    and consider a polymorphism $f\colon (\bH\times \bT_N)^\to \bH$. Clearly, there
    are $M-1$ functions from $H^4$ to $H$, and since for each $i\in[M]$ every $f_i$
    is a  $4$-ary function of $H$, it follows from the choice of $M$ that there are
    $i < j \le M$  such that $f_i = f_j$. So, by Lemma~\ref{lem:HxTk-loop-condition}
    if $f$ satisfies the Sigger's identity, then $f_i$ satisfies the Sigger's identity.
    Since $\bH$ does not have such a polymorphism, then $\bH\times \bT_M$ does not have
    such a polymorphism, and hence $\CSP(\bH\times \bT_k)$ is $\NP$-hard (see, e.g., 
    Theorem~\ref{thm:finite-domain-dichotomy}). Moreover, since $\CSP(\bH\times \bT_M)$
    is polynomial-time equivalent to $\RCSP(\bH,\bT_M)$, it follows that 
    $\RCSP(\bH,\bT_M)$. Finally, it is straightforward to observe that
    if $\RCSP(\bH, \bT_n)$ is at least as hard as $\RCSP(\bH,\bT_{n+1})$,
    and so, if $N$ is the smallest integer such that $\bH\times \bT_N$ does
    not have a Sigger's polymorphism, then $N$ witnesses that the second
    itemized statement holds.
\end{proof}

The following is an immediate application of this theorem, of its proof, and of
Corollary~\ref{cor:trees-hom-free}.

\begin{corollary}\label{cor:CSPH-forbidden-walks}
    For every finite digraph $\bH$, there is a positive integer $N\le |H|^{4|H|}+1$ such that
    $\CSP(\bH)$ is polynomial-time equivalent to $\CSP(\bH)$ restricted to acyclic digraphs with no
    directed walk on $N$ vertices. 
\end{corollary}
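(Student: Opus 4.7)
The plan is to recognize the restricted input class as the class of homomorphic preimages of a transitive tournament, and then invoke Theorem~\ref{thm:RCSP(H,TTk)} directly. First I would use Observation~\ref{obs:P-TT}: the pair $(\vec{\bP}_N, \bT_{N-1})$ is a duality pair, so a digraph has no homomorphism from $\vec{\bP}_N$ (equivalently, no directed walk on $N$ vertices, since such a walk is exactly a homomorphic image of $\vec{\bP}_N$) if and only if it admits a homomorphism to $\bT_{N-1}$. Because any digraph mapping to a transitive tournament is automatically acyclic, the class of acyclic digraphs with no directed walk on $N$ vertices coincides exactly with $\CSP(\bT_{N-1})$. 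Consequently, $\CSP(\bH)$ restricted to this class is identical, as a decision problem, to $\RCSP(\bH, \bT_{N-1})$.

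With this identification, the polynomial-time equivalence falls out of Theorem~\ref{thm:RCSP(H,TTk)}. The reduction from the restricted problem to $\CSP(\bH)$ is trivial (the identity on instances). For the reverse direction, I would split into the two cases provided by that theorem. If $\bH$ admits a $4$-ary polymorphism satisfying the Sigger identity, then $\CSP(\bH)$ is already in $\cP$ by Theorem~\ref{thm:finite-domain-dichotomy}, and $\RCSP(\bH, \bT_{N-1})$ is a fortiori in $\cP$; both sides lie in $\cP$ and the equivalence is automatic for any $N$ meeting the bound. Otherwise, the proof of Theorem~\ref{thm:RCSP(H,TTk)} furnishes an integer $N$ within the bound $|H|^{4|H|}+1$ such that $\RCSP(\bH, \bT_{N-1})$ is $\NP$-hard, by the pigeonhole argument on the coordinate restrictions $f_i$ of a putative polymorphism of $\bH \times \bT_{N-1}$; since $\CSP(\bH)$ lies in $\NP$, it reduces to this $\NP$-hard problem, completing the equivalence.

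The corollary is essentially Theorem~\ref{thm:RCSP(H,TTk)} viewed through the lens of the path-to-tournament duality, so no substantive obstacle arises. The only step requiring any care is the indexing — matching the forbidden walk on $N$ vertices with $\vec{\bP}_N$ and with $\bT_{N-1}$ — which is a one-line consequence of Observation~\ref{obs:P-TT}, together with the observation that directed walks on $N$ vertices in an acyclic digraph are in fact directed paths, so the acyclicity assumption is automatic once the walk condition is imposed.
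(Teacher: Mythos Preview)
Your proposal is correct and follows essentially the same route as the paper, which states only that the corollary is an immediate consequence of Theorem~\ref{thm:RCSP(H,TTk)}, its proof, and Corollary~\ref{cor:trees-hom-free}. You invoke Observation~\ref{obs:P-TT} directly to identify the restricted class with $\CSP(\bT_{N-1})$, whereas the paper points to Corollary~\ref{cor:trees-hom-free} (the general tree-duality equivalence) for the same identification; this is a cosmetic difference, and your case split on the presence of a Siggers polymorphism together with the pigeonhole bound from the proof of Theorem~\ref{thm:RCSP(H,TTk)} is exactly what the paper has in mind.
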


As promised, we apply the framework of RCSPs to the context of $\calF$-(subgraph)-free
algorithmics.

\begin{theorem}\label{thm:acyclic+bounded-paths}
 For every digraph $\bH$ such that $\CSP(\bH)$ is $\NP$-hard, there is a positive integer
    $N$ such that $\CSP(\bH)$ remains $\NP$-complete even for $\vec{\bP}_N$-subgraph-free
    acyclic digraphs. Moreover, there is such an $N$ such that $\CSP(\bH)$ is polynomial-time
    solvable when the input is a $\vec{\bP}_{N-1}$-subgraph-free digraph.
\end{theorem}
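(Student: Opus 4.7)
The plan is to reformulate $\CSP(\bH)$ on $\vec{\bP}_m$-subgraph-free acyclic inputs as the restricted problem $\RCSP(\bH, \bT_{m-1})$, and then read off both the hardness threshold and the tractability below it from Theorem~\ref{thm:RCSP(H,TTk)} together with the finite-domain CSP dichotomy.

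First I would establish the bridging observation: for an acyclic digraph $\bD$, the three properties
\[
\text{$\bD$ is $\vec{\bP}_m$-subgraph-free,} \quad \text{$\bD$ is $\vec{\bP}_m$-homomorphism-free,} \quad \bD \to \bT_{m-1}
\]
are equivalent. The equivalence of the last two is Observation~\ref{obs:P-TT}. For the first equivalence, acyclicity forces any directed walk in $\bD$ to visit pairwise distinct vertices (a repetition would give a directed cycle), so any homomorphism $\vec{\bP}_m \to \bD$ is an embedding; the converse is trivial. Consequently, $\CSP(\bH)$ restricted to $\vec{\bP}_m$-subgraph-free acyclic digraphs coincides, as a decision problem, with $\RCSP(\bH, \bT_{m-1})$, which by Corollary~\ref{cor:trees-hom-free} is polynomial-time equivalent to the finite-domain $\CSP(\bH \times \bT_{m-1})$.

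Since $\CSP(\bH)$ is $\NP$-hard, $\bH$ admits no Sigger's polymorphism by Theorem~\ref{thm:finite-domain-dichotomy}. Applying Theorem~\ref{thm:RCSP(H,TTk)}, the set of positive integers $k$ for which $\RCSP(\bH, \bT_k)$ is $\NP$-hard is nonempty; let $k_0$ be its minimum, and set $N := k_0 + 1$. By construction $\RCSP(\bH, \bT_{k_0})$ is $\NP$-hard, so $\CSP(\bH)$ remains $\NP$-complete on $\vec{\bP}_N$-subgraph-free acyclic inputs. For the tractability half, minimality of $k_0$ gives that $\RCSP(\bH, \bT_{k_0-1})$ is not $\NP$-hard; but this problem is polynomial-time equivalent to the finite-domain $\CSP(\bH \times \bT_{k_0-1})$, and the P versus $\NP$-complete dichotomy (Theorem~\ref{thm:finite-domain-dichotomy}) then forces it to lie in $\cP$. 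Hence $\CSP(\bH)$ on $\vec{\bP}_{N-1}$-subgraph-free acyclic inputs is polynomial-time solvable.

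The conceptually decisive step is the first one: recognizing that for acyclic digraphs, subgraph-freeness and homomorphism-freeness with respect to directed paths coincide. This is what licenses transfering the entire restricted-CSP toolkit into the $\bF$-subgraph-free world. Once that identification is in place, the rest is a pure bookkeeping argument: pick the smallest $k_0$ delivered by Theorem~\ref{thm:RCSP(H,TTk)}, and invoke the finite-domain dichotomy below it. The main thing to be careful about is that Theorem~\ref{thm:RCSP(H,TTk)} only asserts eventual hardness, while we need both a sharp threshold and tractability strictly below it; it is precisely the finite-domain dichotomy applied to $\CSP(\bH \times \bT_{k_0-1})$ that closes this gap.
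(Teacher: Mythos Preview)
Your proof is correct and follows essentially the same approach as the paper's. Both hinge on the observation that a digraph is acyclic and $\vec{\bP}_N$-subgraph-free if and only if it is $\vec{\bP}_N$-homomorphism-free (equivalently, lies in $\CSP(\bT_{N-1})$), then invoke Theorem~\ref{thm:RCSP(H,TTk)} after noting that $\bH$ lacks a Sigger's polymorphism. Your treatment of the tractability half is in fact more explicit than the paper's: you spell out why minimality of $k_0$ together with the finite-domain dichotomy forces $\CSP(\bH\times\bT_{k_0-1})$ into $\cP$, whereas the paper simply says the claim follows from Theorem~\ref{thm:RCSP(H,TTk)} (relying implicitly on the choice of $N$ made in that theorem's proof).
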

\begin{proof}
    If $\cP = \NP$ the claim is trivial, so we assume that $\cP\neq \NP$. Notice that
    a digraph $\bD$ is acyclic and has no directed path on $N$ verties if and only if
    there is no homomorphism $\vec{\bP}_N\to \bD$. Also, since $\CSP(\bH)$ is $\NP$-complete
    and we are assuming that $\cP\neq \NP$, it follows from Theorem~\ref{thm:finite-domain-dichotomy}
    that $\bH$ has no Sigger's polymorphism. The claim of this theorem now follows from these
    simple arguments and Theorem~\ref{thm:RCSP(H,TTk)}.
\end{proof}

\begin{corollary}\label{thm:digraph-}
    Let $\bH$ be a digraph such that $\CSP(\bH)$ is $\NP$-hard. Then, there are positive
    integers $N$ and $M$ such that
    \begin{itemize}
        \item $\CSP(\bH)$ is $\NP$-hard for $\vec{\bP}_k$-free digraphs whenever $k\ge N$, and
        \item $\CSP(\bH)$ is $\NP$-hard for $\vec{\bP}_k$-subgraph-free digraphs whenever $k\ge M$.
    \end{itemize}
\end{corollary}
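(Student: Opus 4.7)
The plan is to deduce this corollary directly from Theorem~\ref{thm:acyclic+bounded-paths} by exploiting two elementary monotonicity facts about $\vec{\bP}_k$-freeness. Take $N_0$ to be the positive integer supplied by Theorem~\ref{thm:acyclic+bounded-paths}, so that $\CSP(\bH)$ remains $\NP$-complete already on $\vec{\bP}_{N_0}$-subgraph-free acyclic inputs.

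First I would deal with the subgraph-free bullet by setting $M := N_0$. The key observation is that the classes of $\vec{\bP}_k$-subgraph-free digraphs are nested in the ``right'' direction: any subgraph copy of $\vec{\bP}_k$ (for $k \geq M$) contains a subgraph copy of $\vec{\bP}_M$, so $\vec{\bP}_M$-subgraph-free $\subseteq \vec{\bP}_k$-subgraph-free. Hence the identity reduction sends $\vec{\bP}_M$-subgraph-free (acyclic) inputs to valid $\vec{\bP}_k$-subgraph-free inputs, and $\NP$-hardness propagates.

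For the induced-free bullet I would set $N := N_0$ and chain two inclusions. A digraph that contains $\vec{\bP}_{N_0}$ as an induced subdigraph in particular contains it as a subdigraph; contrapositively, every $\vec{\bP}_{N_0}$-subgraph-free digraph is $\vec{\bP}_{N_0}$-free. So Theorem~\ref{thm:acyclic+bounded-paths} already gives $\NP$-hardness on the larger class of $\vec{\bP}_{N_0}$-free inputs. The same nesting argument as in the previous paragraph (applied to induced subdigraphs rather than subdigraphs, using the fact that $\vec{\bP}_{N_0}$ is an induced subdigraph of $\vec{\bP}_k$ for $k \geq N_0$) then yields $\vec{\bP}_{N_0}$-free $\subseteq \vec{\bP}_k$-free, so $\NP$-hardness transfers to every $k \geq N$ via the identity reduction.

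There is no real obstacle here; the whole content is the monotonicity bookkeeping above plus Theorem~\ref{thm:acyclic+bounded-paths}. One can in fact take $M = N = N_0 \leq |H|^{4|H|}+1$ by the quantitative bound coming from Corollary~\ref{cor:CSPH-forbidden-walks}, though the corollary as stated only claims existence.
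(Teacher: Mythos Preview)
Your argument is correct and matches the paper's intended approach: the paper states this corollary without proof, treating it as an immediate consequence of Theorem~\ref{thm:acyclic+bounded-paths} via exactly the nesting/monotonicity observations you spell out. Your remark that one may take $M=N=N_0$ with the explicit bound from Corollary~\ref{cor:CSPH-forbidden-walks} is a correct sharpening.
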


We conclude this section with a simple example showing that Theorem~\ref{thm:acyclic+bounded-paths}
does not hold when $\bH$ is infinite: there are infinite graphs $\bH$ such that 
$\CSP(\bH)$ is $\NP$-complete, but $\CSP(\bH)$ becomes tractable on acyclic instances.
Moreover, $\bH$ can be chosen to be \emph{$\omega$-categorical}, i.e., 
for every positive integer $k$ the automorphism group of $\bH$ defines finitely many orbits
of $k$-tuples.

\begin{example}\label{ex:infinite-acyclic}
    Let $\bH$ be the disjoint union of $\bK_3$ and the rational number with the strict linear order. 
    It is straightforward to observe that \COL{3} reduces to $\CSP(\bH)$, and that
    every acyclic digraph $\bD$ is a yes-instance of $\CSP(\bH)$. Hence, 
    $\CSP(\bH)$ is $\NP$-complete, but it is polynomial-time solvable on acyclic instance.
    The fact that $\bH$ is $\omega$-categorical follows from its definition --- it is the
    disjoin union of two $\omega$-categorical digraphs.
\end{example}

\section{Digraphs on three vertices}
\label{sect:3-vertices}

In this section we answer Question~\ref{qst:Pk} for digraphs on three vertices.
A loopless digraph on three vertices either contains two directed cycles and its CSP if $\NP$-complete,
or otherwise its CSP is polynomial-time solvable. Hence, we focus on the three loopless digraphs
on three vertices: $\vec{\bC}_3^+$ (obtained from the directed cycle by adding one edge),
$\vec{\bC}_3^{++}$ (obtained from the directed cycle by adding two edges), and $\bK_3$ --- see
also Figure~\ref{fig:three-vertices}.

Also note that these three digraphs are hereditarily hard (see Theorem~\ref{thm:smooth-hereditay}), 
and it thus follows from Theorem~\ref{thm:brewster} that $\RCSP(\bH,\bT_4)$ is $\NP$-hard
for $\bH\in\{\vec{\bC}_3^+, \vec{\bC}_3^{++}, \bK_3\}$.
In turn, this implies that for such digraphs $\bH$ the problem $\CSP(\bH)$ is $\NP$-hard even
restricted to $\vec{\bP}_5$-(subgraph)-free digraphs. We use this remark in both subsections below.

\begin{figure}[ht!]
\centering
\begin{tikzpicture}

  \begin{scope}[xshift = -4.5cm, scale = 0.8]
   \node [vertex, label = above:$1$] (1) at (90:1.2) {};
    \node [vertex, label = left:$3$] (2) at (210:1.2) {};
    \node [vertex, label = right:$2$] (3) at (330:1.2) {};
    \node (L1) at (0,-1.5) {$\vec{\bC}_3^+$};
      
    \foreach \from/\to in {1/2, 2/3, 3/1,  3/2}     
    \draw [arc] (\from) to [bend right = 20] (\to);
  \end{scope}

  \begin{scope}[scale = 0.8]
    \node [vertex, label = above:$1$] (1) at (90:1.2) {};
    \node [vertex, label = left:$3$] (2) at (210:1.2) {};
    \node [vertex, label = right:$2$] (3) at (330:1.2) {};
      
    \foreach \from/\to in {1/2, 2/3, 3/1, 1/3, 3/2}     
    \draw [arc] (\from) to [bend right = 20] (\to);
  \end{scope}

  \begin{scope}[xshift = 4.5cm, scale = 0.8]
    \node [vertex, label = above:$1$] (1) at (90:1.2) {};
    \node [vertex, label = left:$3$] (2) at (210:1.2) {};
    \node [vertex, label = right:$2$] (3) at (330:1.2) {};
    \node (L1) at (0,-1.5) {$\bK_3$};
      
    \foreach \from/\to in {1/2, 2/3, 3/1, 1/3, 3/2, 2/1}       
    \draw [arc] (\from) to [bend right = 20] (\to);
  \end{scope}

\end{tikzpicture}
\caption{The three digraphs on three vertices with at least two directed cycles, and whose CSP is NP-complete.}
\label{fig:three-vertices}
\end{figure}
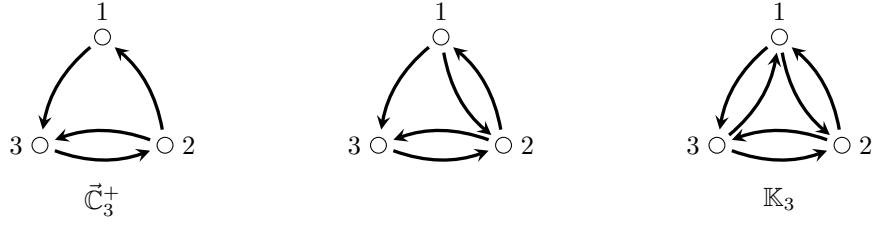

\subsection{$\vec{\bP}_k$-subgraph-free digraphs}

It is straightforward to observe that any orientation of an odd cycle contains
$\vec{\bP}_3$ as a subgraph. Hence, if $\bD$ is a $\vec{\bP}_3$-subgraph-free digraph, 
then it is bipartite, i.e., $\bD\to \bK_2$. In particular, this implies that 
$\CSP(\vec{\bC}_3^+)$, $\CSP(\vec{\bC}_3^{++})$, and $\CSP(K_3)$ are polynomial-time solvable for
$\vec{\bP}_3$-subgraph-free digraphs. As previously mentioned, each of these CSPs is NP-hard
for $\vec{\bP}_5$-subgraph-free digraphs. In this section, we study the complexity
$\CSP(\vec{\bC}_3^+)$, $\CSP(\vec{\bC}_3^{++})$, and $\CSP(\bK_3)$ restricted to
$\vec{\bP}_4$-subgraph-free digraphs. We begin with a remark which we use
a couple times in the remaining of the paper.

\begin{remark}\label{rmk:3-cycle}
    If a $\vec{\bP}_4$-subgraph-free weakly connected digraph $\bD$ contains
    (as a subgraph) a directed $3$-cycle $v_1,v_2,v_3$, then $D = \{v_1,v_2,v_3\}$.
    Indeed, since $\bD$ is $\vec{\bP}_4$-subgraph-free and $(v_1,v_2),(v_2,v_3),(v_3,v_1)\in E$,
    it must be the case that the out- and in-neighbourhood of each $v_i$ is a subset of
    $\{v_1,v_2,v_3\}$ (and so, the claim follows because $\bD$ is weakly connected).
\end{remark}

\begin{observation}\label{obs:K3-P4-subgraph-free}
    Every loopless $\vec{\bP}_4$-subgraph-free digraph is $3$-colourable.
\end{observation}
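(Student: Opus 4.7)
My plan is to reduce to weakly connected digraphs (since a $3$-colouring of each component gives one of the whole) and then split on whether $\bD$ contains a directed $3$-cycle. A key preliminary observation is that every directed cycle of a $\vec{\bP}_4$-subgraph-free $\bD$ has length at most $3$: a directed cycle $v_1\to v_2\to\cdots\to v_k\to v_1$ of length $k\ge 4$ would contain $v_1\to v_2\to v_3\to v_4$ as a $\vec{\bP}_4$-subgraph, contradicting our assumption.

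If $\bD$ contains a directed $3$-cycle on vertices $v_1,v_2,v_3$, then Remark~\ref{rmk:3-cycle} forces $V(\bD)=\{v_1,v_2,v_3\}$, so I can simply assign the three vertices three distinct colours. Otherwise, by the preliminary observation, every directed cycle of $\bD$ is a $2$-cycle. I then obtain $\bD'$ from $\bD$ by deleting exactly one edge from each $2$-cycle. Since distinct $2$-cycles involve disjoint edge sets, this deletion destroys all $2$-cycles, and there were no longer cycles to begin with, so $\bD'$ is acyclic. It is also $\vec{\bP}_4$-subgraph-free as a subgraph of $\bD$, so its longest directed path has at most $3$ vertices. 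I define $\ell\colon V(\bD)\to\{0,1,2\}$ by letting $\ell(v)$ be the length of the longest directed path in $\bD'$ ending at $v$.

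Finally, I would verify that $\ell$ is a proper $3$-colouring of $\bD$. For any edge $(u,v)\in \bD'$, acyclicity of $\bD'$ ensures $v$ does not lie on the longest directed path to $u$, so extending this path by $v$ produces a strictly longer path ending at $v$, giving $\ell(v)\ge \ell(u)+1$. For any deleted $2$-cycle edge $(u,v)\in \bD\setminus \bD'$, its reverse $(v,u)$ is in $\bD'$, so the same argument gives $\ell(u)\ge \ell(v)+1$. In either case the endpoints receive distinct colours, so $\ell$ properly $3$-colours $\bD$. There is no substantial obstacle once the degenerate $3$-cycle case is isolated via Remark~\ref{rmk:3-cycle}, since the remainder is a standard bounded-depth level-colouring of an acyclic digraph.
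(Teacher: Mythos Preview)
Your proof is correct and follows essentially the same approach as the paper's: reduce to weakly connected components, dispose of the $3$-cycle case via Remark~\ref{rmk:3-cycle}, and in the remaining case remove one edge from each symmetric pair to obtain an acyclic $\vec{\bP}_4$-subgraph-free digraph, which is then $3$-coloured by levels. The only cosmetic difference is that the paper performs the reduction to an oriented graph \emph{before} the case split and then invokes the duality pair $(\vec{\bP}_4,\bT_3)$ from Observation~\ref{obs:P-TT}, whereas you construct the level function $\ell$ by hand; these amount to the same homomorphism to $\bT_3$.
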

\begin{proof}
    We claim that it suffices to consider oriented graphs.
    Indeed, if $\bD$ is not an oriented digraph, let $\bD'$ be any spanning subdigraph
    of $\bD$ obtained by removing exactly one edge of each symmetric pair $(x,y),(y,x)\in E(\bD)$. 
    Clearly, $\bD$ is $3$-colourable if and only if $\bD'$ is $3$-colourable,
    and moreover, if $\bD$ is a loopless $\vec{\bP}_4$-subgraph-free digraph, then $\bD'$
    is loopless $\vec{\bP}_4$-subgraph-free oriented graph. So assume that $\bD$ is
    a loopless $\vec{\bP}_4$-subgraph free oriented digraph, and without loss of generality
    consider the case when $\bD$ is weakly connected. If $\bD$ contains a directed $3$-cycle
    $x_1,x_2,x_3$, then $D = \{x_1,x_2,x_3\}$ (Remark~\ref{rmk:3-cycle}), and hence $\bD\to \bK_3$. 
    Otherwise, assume that $\bD$ contains no $3$-cycle, and notice that in this case
    it follows that  $\vec{\bP}_4\not\to \bD$. Hence, by Observation~\ref{obs:P-TT}
    we conclude that $\bD\to \bT_3\to \bK_3$.
\end{proof}

Clearly, Observation~\ref{obs:K3-P4-subgraph-free} does not hold if we replace $\CSP(\bK_3)$ (i.e.,
$3$-colourable) by $\CSP(\vec{\bC}_3^{++})$: $\bK_3$ is a simple counterexample. However, it turns
out that $\bK_3$ is the unique minimal counterexample. In the following proof we
use the notion of a \emph{leaf} of a digraph $\bD$, i.e., a vertex $x\in \bD$ with
$|(N^+(x)\cup N^-(x))\setminus\{x\}| = 1$.

\begin{lemma}\label{lem:C3++-P4-subgraph-free}
    Every loopless $\{\bK_3,\vec{\bP}_4\}$-subgraph-free digraph $\bD$ admits a homomorphism
    to $\vec{\bC}_3^{++}$.
\end{lemma}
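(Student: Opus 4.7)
I would reduce to the case that $\bD$ is weakly connected and mirror the structure of the proof of Observation~\ref{obs:K3-P4-subgraph-free}. If $\bD$ contains a directed $3$-cycle, Remark~\ref{rmk:3-cycle} forces $|V(\bD)|=3$, and $\bK_3$-subgraph-freeness leaves $\bD$ isomorphic to one of $\vec{\bC}_3$, $\vec{\bC}_3^+$, or $\vec{\bC}_3^{++}$; each is a spanning subdigraph of $\vec{\bC}_3^{++}$, so $\bD\to\vec{\bC}_3^{++}$ via the identity. If $\bD$ has neither a directed $3$-cycle nor a $2$-cycle, then $\bD$ is acyclic (any directed cycle of length at least four contains $\vec{\bP}_4$ as a subgraph), and the $\vec{\bP}_4$-subgraph-free assumption then upgrades to $\vec{\bP}_4 \not\to \bD$; Observation~\ref{obs:P-TT} gives $\bD\to \bT_3 \hookrightarrow \vec{\bC}_3^{++}$.

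The remaining case is when $\bD$ has $2$-cycles but no directed $3$-cycle. I would introduce the auxiliary graph $G_X$ on $V(\bD)$ whose edges record the $2$-cycle pairs. Three mutually $G_X$-adjacent vertices would give a $\bK_3$-subgraph of $\bD$, so $G_X$ is triangle-free; two $G_X$-neighbours of a common vertex are non-adjacent in $\bD$, or else a directed $3$-cycle arises through their common neighbour. If $G_X$ contains a path $u-v-w$ of length two, iterated applications of $\vec{\bP}_4$-subgraph-freeness force a tight star-like structure on the weakly connected component centred at $v$, and I would give an explicit homomorphism with $v\mapsto 3$ and the $G_X$-neighbours of $v$ all mapped to $1$, the remaining vertices distributed between $\{1,2\}$ according to whether they lie on incoming or outgoing edges of $v$.

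Otherwise $G_X$ is a matching, so the $2$-cycles of $\bD$ are vertex-disjoint and constitute the non-singleton strongly connected components of $\bD$. The condensation $\bar{\bD}$ is a DAG, and a directed path of length $3$ in $\bar{\bD}$ would---by going around the $2$-cycle inside any intermediate size-$2$ SCC---lift to a genuine $\vec{\bP}_4$-subgraph of $\bD$; hence $\bar{\bD}$ has longest directed path of length at most two, and Observation~\ref{obs:P-TT} yields a homomorphism $g\colon\bar{\bD}\to \bT_3$ via the level map $g(\bar{v})=\ell(\bar{v})+1$. I would lift $g$ to $f\colon \bD\to \vec{\bC}_3^{++}$ by setting $f=g$ on singleton SCCs and, for each $2$-cycle SCC $\{x,y\}$, assigning one vertex (the \emph{anchor}) the colour $3$ and the other a colour from $\{1,2\}$. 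A further $\vec{\bP}_4$-free argument shows that any intermediate-level $2$-cycle SCC has all of its external edges incident to a single vertex, which I would designate as the anchor.

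The main obstacle is the lifting step: one must coordinate the anchor choices across $\bar{\bD}$ so that no external edge $u\to v$ of $\bD$ becomes either the loop $(3,3)$ or the missing edge $(2,1)$ of $\vec{\bC}_3^{++}$. This reduces to a finite case check over the possible combinations of SCC-types (singleton versus $2$-cycle) and of levels at the endpoints of each edge, and is resolved using the structural consequences of $\vec{\bP}_4$-freeness established above---in particular, that middle $2$-cycle SCCs are ``one-active'' and that edges between two $2$-cycle SCCs always connect SCCs of different levels of $\bar{\bD}$.
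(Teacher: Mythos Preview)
Your strategy differs from the paper's and is more intricate than necessary. The paper first reduces to leafless digraphs: if $\ell$ is a leaf with unique neighbour $m$, any homomorphism $\bD-\ell\to\vec{\bC}_3^{++}$ extends to $\ell$ because every vertex of $\vec{\bC}_3^{++}$ lies on a symmetric edge. This single reduction already absorbs your case~(a), since $u$ and $w$ are leaves. The paper then forms $\bD^\ast$ by deleting one arc from each symmetric pair; $\bD^\ast$ is acyclic and $\vec{\bP}_4$-subgraph-free, so some $f\colon\bD^\ast\to\bT_3$ exists, and the \emph{same} $f$ is shown to be a homomorphism $\bD\to\vec{\bC}_3^{++}$: the only possible failure is a symmetric pair $\{x,y\}$ landing on $\{1,3\}$, and leaflessness together with $\vec{\bP}_4$-freeness force one of $x,y$ to be the middle vertex of a $\vec{\bP}_3$ in $\bD^\ast$, hence coloured $2$.

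Your route has two genuine gaps. In case~(a) the ``tight star-like structure centred at $v$'' is not what you claim: take $u,v,w,a,b$ with symmetric pairs $u\leftrightarrow v$, $w\leftrightarrow v$ together with $(a,v)$ and $(a,b)$. This digraph is loopless, $\{\bK_3,\vec{\bP}_4\}$-subgraph-free and weakly connected, yet $b$ is not adjacent to $v$, so your colouring rule (which distributes the remaining vertices according to their edges to $v$) leaves $b$ uncoloured. In case~(b), the rule ``anchor $\mapsto 3$'' does not mesh with the level map on singletons. Consider a $2$-cycle $\{x,y\}$ at level $0$ whose sole external edge is $(x,b)$, together with a path $d\to c\to b$ of singletons at levels $0,1,2$; the level map sends $b\mapsto 3$, so if the active vertex $x$ is the anchor then $(x,b)$ becomes the loop $(3,3)$. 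Taking $y$ as anchor repairs this instance but runs counter to your designation for intermediate-level $2$-cycles, and you give no global rule reconciling the two. The ``finite case check'' you defer to is precisely where all the work lies, and the paper's $\bD^\ast$ device sidesteps it entirely. (Incidentally, once leaves are removed---which you do not do---intermediate-level $2$-cycle SCCs cannot exist, since the inactive vertex of such an SCC would be a leaf; so your one concrete structural claim in case~(b) becomes vacuous under the very reduction that would rescue case~(a).)
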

\begin{proof}
    Without loss of generality assume that $\bD$ is a weakly connected digraph. 
    If $\bD$ contains $\vec{\bC}_3$ as a subgraph, then the claim follows from Remark~\ref{rmk:3-cycle}.
    Also, notice that if $\bD$ contains a symmetric path on three vertices $x_1,x_2,x_3$, i.e.,
    $x_1\neq x_3$ and $(x_1,x_2),(x_2,x_1),(x_2,x_3),(x_3,x_2)$, then
    $x_2$ is the unique neighbour of $x_1$ and also the unique neighbour of $x_3$ (because
    $\bD$ is $\vec{\bP}_4$-free). In other words, $x_1$ and $x_3$ are leaves of $\bD$
    with the same neighbourhood, and so, the digraph $\bD$ homomorphically maps to
    $\bD-x_3$, and thus $\bD\to \vec{\bC}_3^{++}$ if and only if $(\bD-x_3)\to \vec{\bC}_3^{++}$.
    Hence, it suffices to prove the claim for weakly connected $\{\vec{\bC_3},\bP_3,\vec{\bP}_4\}$-subgraph-free
    digraphs, and since every vertex of $\vec{\bC}_3^{++}$ is incident to a symmetric
    edge, we may further assume that $\bD$ has no leaves.

    Let $\bD^\ast$ be any digraph obtained from $\bD$ after removing exactly one edge from
    every symmetric pair of edges of $\bD$. Since $\bD$ is $\{\vec{\bC}_3,\vec{\bP}_4\}$-subgraph-free,
    $\bD^\ast$ has no directed walk on four vertices, and so there is a homomorphism
    $f\colon \bD^\ast\to \bT\bT_3$ (by Observation~\ref{obs:P-TT}). We claim that  the 
    same function $f$ defines a homomorphism $f\colon \bD\to \vec{\bC}_3^{++}$. To do so, 
    it suffices to show that there is no edge $(x,y)$ such that $f(x) = 3$ and $f(y) = 1$.
    Notice that if this were the case, since $f\colon \bD^\ast\to \bT\bT_3$ is a homomorphism,
    then $x$ and $y$ would induce a symmetric pair of edges in $\bD$, i.e., $(x,y),(y,x)\in E(\bD)$.
    Thus, it suffices to show that if $(x,y),(y,x)$ is a symmetric pair of edges, then 
    $f(x) =2$ or $f(y) = 2$. Recall that we assume that $\bD$ has no leaves, hence
    $x$ and $y$ have some neighbours $x'$ and $y'$ respectively. Without loss of
    generality assume that $(x,y)\in E(\bD^\ast)$, and since $\bD$ is
    $\vec{\bP}_4$-subgraph-free, either $(x,x'),(y,y')\in E(\bD)$ or
    $(x',x),(y',y)\in E(\bD)$. Notice that in the former case, $y$ is the middle vertex
    of a directed path of length $2$ in $\bD^\ast$. Hence, any homomorphism from $\bD^\ast$
    to $\bT\bT_3$ maps $y$ to $2$, so in particular, $f(y) = 2$. With symmetric
    arguments it follows that if $(x',x),(y',y)\in E(\bD)$, then $f(x) = 2$.
    This proves that the function $f\colon D\to \{1,2,3\}$ defines a homomorphism
    $f\colon \bD\to \vec{\bC}_3^\ast$.
\end{proof}

Now we show that $\CSP(\vec{\bC}_3^+)$ is polynomial-time solvable when the input is restricted
to $\vec{\bP}_4$-subgraph-free digraphs. To do so, we consider a unary predicate $U$, and we
reduce the problem mentioned above to the CSP of the $\{E,U\}$-structure depicted  in 
Figure~\ref{fig:digraph+unaries}. We first show that the underlying graph has a \emph{conservative
majority} polymorphism, i.e., a polymorphism $f\colon \bG^3\to \bG$ such that $f(x,x,x) = f(x,x,y) = f(x,y,x) = f(y,x,x) = x$
and $f(x,y,z)\in \{x,y,z\}$.

\begin{lemma}\label{lem:conservative-majority}
    The digraph $\bG$ from Figure~\ref{fig:digraph+unaries} has a conservative majority polymorphism. 
    In particular, $\CSP(\bG,U)$ can be solved in polynomial time.
\end{lemma}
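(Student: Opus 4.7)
The plan is to exhibit an explicit conservative majority operation on the vertex set of $\bG$, and then invoke a classical algorithmic consequence of having such a polymorphism. Since $\bG$ is a small fixed structure, the simplest route is to pick a linear order $\le$ on $V(\bG)$ and take $f(x,y,z) := \mathrm{med}_{\le}(x,y,z)$, the median with respect to this order. Such an $f$ is automatically conservative and automatically satisfies the majority identities $f(x,x,y)=f(x,y,x)=f(y,x,x)=x$, so the only content of the statement lies in verifying that $f$ preserves the binary edge relation of $\bG$. If no single linear order works (this would happen precisely if the edge relation of $\bG$ fails to be median-closed for every ordering), then the fallback is to define $f$ case-by-case on each unordered triple $\{x,y,z\} \subseteq V(\bG)$ of pairwise distinct vertices, picking one of the three admissible values; since $|V(\bG)|$ is small, this is a finite search.

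The preservation of the unary predicate $U$ is automatic from conservativity: if $U(x), U(y), U(z)$ all hold then $f(x,y,z) \in \{x,y,z\}$ also lies in $U$. So the core of the proof reduces to a finite check: for each triple of edges $(a_1,b_1),(a_2,b_2),(a_3,b_3) \in E(\bG)$, verify that $(f(a_1,a_2,a_3), f(b_1,b_2,b_3)) \in E(\bG)$. This is where the bulk of the work lies, and it is essentially bookkeeping once $f$ is chosen. The choice of $f$ (or the choice of order) is the one spot that requires genuine thought, and is informed by the structure of $\bG$: concretely, one looks for a linear order that respects whatever ``between'' structure is already visible in the figure (e.g., placing the vertices of each ``copy'' of $\vec{\bC}_3^+$ consecutively, and interleaving them compatibly with $U$).

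For the ``in particular'' clause, the plan is to appeal to the classical theorem of Jeavons, Cohen and Cooper that any finite relational structure admitting a majority polymorphism has a CSP solvable in polynomial time via enforcing $(2,3)$-consistency (equivalently, it has bounded width $2$). Since our $f$ is in particular a majority polymorphism of $(\bG,U)$, this immediately yields that $\CSP(\bG,U) \in \cP$. Conservativity is not needed for this algorithmic conclusion, but it will likely be used elsewhere in the paper (for example, to preserve additional list constraints arising in the reduction from $\CSP(\vec{\bC}_3^+)$ on $\vec{\bP}_4$-subgraph-free digraphs to $\CSP(\bG,U)$).

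The main obstacle I anticipate is producing an $f$ that simultaneously preserves every edge of $\bG$; depending on the picture, $\bG$ may contain configurations that rule out certain orderings (any induced substructure on four vertices isomorphic to $K_4$ minus an edge, for instance, has no majority polymorphism that preserves its edges once $U$ distinguishes the two non-adjacent vertices inconsistently). If such an obstruction arises, the fix is to exploit the unary marks: vertices in different $U$-classes never need to be ``mixed'' by $f$ in problematic ways, so the search for $f$ effectively splits along $U$-classes, and one builds $f$ independently on each class before stitching together using conservativity. This will make the verification entirely routine.
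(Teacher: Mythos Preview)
Your overall strategy matches the paper's: exhibit an explicit conservative majority on $G$, observe that conservativity automatically handles the unary $U$, and then invoke a classical ``majority implies bounded width implies $\cP$'' result (the paper cites Dalmau--Krokhin; your Jeavons--Cohen--Cooper reference is equally fine). The paper goes straight to your fallback, a case-by-case definition on unordered triples, rather than a linear-order median.

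Two cautions. First, your hope that some median $\mathrm{med}_\le$ will work is speculative, and the paper's $f$ is in fact \emph{not} a median: on several triples it falls through to the first projection $\pi_1$, so it is not even symmetric. I would not spend effort searching for a working linear order unless you can quickly rule in one; the case-by-case route is short enough on seven vertices. Second, your idea of ``splitting along $U$-classes'' will not buy much here, since $U$ contains six of the seven vertices; the structure you should exploit is the shape of $E(\bG)$ (e.g., which vertices share out- or in-neighbourhoods), not the unary mark.

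The genuine gap in your proposal is that you never actually write down $f$. You correctly identify that the content is a finite edge-preservation check ``once $f$ is chosen'', but you do not choose it. The paper does: it gives an explicit piecewise definition (majority on non-distinct triples; then a cascade of rules like ``output $1$ if $1\in\{x,y,z\}$ and $\{3,7\}\cap\{x,y,z\}\neq\varnothing$'', etc.; $\pi_1$ on the leftovers) and then asserts that verifying $f$ preserves $E$ is routine. Your plan is sound, but as a proof it remains incomplete until you exhibit the specific $f$ and at least sketch why the edge check goes through.
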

\begin{proof}  
    We define a conservative majority function $f\colon G^3\to G$, and we then argue that it is
    a polymorphism. We denote by $\pi_1$ the projection onto the first coordinate, and by
    $\maj(x,y,z)$ the majority operation when $|\{x,y,z\}|\le 2$. Also, we simplify our writing
    by implicitly assuming that in the $n$-th itemized case of the following definition, neither
    of the first $n-1$ cases holds.
    \begin{equation*}
        f(x,y,z)=
            \begin{cases}
                \maj(x,y,z) & \text{if } |\{x,y,z\}| \le 2,\\
                1 & \text{if }   1\in \{x,y,z\}, \text{ and } \{3,7\}\cap \{x,y,z\}\neq\varnothing,\\
                2 & \text{if }   2\in \{x,y,z\}, \text{ and } \{6,7\}\cap \{x,y,z\}\neq\varnothing,\\
                3 & \text{if }   3\in \{x,y,z\},  \text{ and } \{5,7\}\cap\{x,y,z\}\neq\varnothing,\\
                4 & \text{if }   4,6\in \{x,y,z\},\\
                6 & \text{if } \{x,y,z\} = \{5,6,7\},\\
                \pi_1(x,y,z) & \text{otherwise.}\\
    \end{cases}
\end{equation*}
    It is clear that $f$ is a symmetric conservative majority function. It is routine  verifying that
    $f$ is indeed a polymorphism.
    The fact that $\CSP(\bG,U)$ can be solved in polynomial time now follows from~\cite{DalmauKrokhin08} because the same function $f$ defines a majority polymorphism of $(\bG,U)$. 
\end{proof}

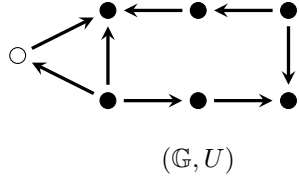
\begin{figure}[ht!]
\centering
\begin{tikzpicture}

  \begin{scope}[xshift = -4.5cm, scale = 0.8]
    \node [vertex, fill = black] (1) at (-1.5,1.5) {};
    \node [vertex, fill = black] (2) at (0,1.5) {};
    \node [vertex, fill = black] (3) at (1.5,1.5) {};
    \node [vertex, fill = black] (4) at (1.5,0) {};
     \node [vertex, fill = black] (5) at (0,0) {};
      \node [vertex, fill = black] (6) at (-1.5,0) {};
       \node [vertex] (7) at (-3,0.75) {};
    \node (L1) at (0,-1) {$(\bG,U)$};
      
    \foreach \from/\to in {3/2, 2/1, 3/4, 5/4, 6/5, 6/1, 6/7, 7/1}     
    \draw [arc] (\from) to (\to);
  \end{scope}

\end{tikzpicture}
\caption{A $\{E,U\}$-structure where $E$ is a binary relation represented by arcs, and $U$ is a unary relation represented by black vertices.}
\label{fig:digraph+unaries}
\end{figure}

\begin{lemma}\label{lem:C3+-P4-subgraph-free}
    $\CSP(\vec{\bC}_3^+)$ can be solved in polynomial-time when the input is restricted
    to $\vec{\bP}_4$-subgraph-free digraphs.
\end{lemma}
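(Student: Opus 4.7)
The plan is to reduce $\CSP(\vec{\bC}_3^+)$ restricted to $\vec{\bP}_4$-subgraph-free digraphs to $\CSP(\bG,U)$, where $(\bG,U)$ is the structure of Figure~\ref{fig:digraph+unaries}; polynomial-time tractability will then follow immediately from Lemma~\ref{lem:conservative-majority}. A useful structural observation motivating the choice of $(\bG,U)$ is that $\bG$ itself admits the homomorphism $1,2,3,4,5,6,7\mapsto 2,1,3,2,1,3,1$ to $\vec{\bC}_3^+$, and this map will underpin one direction of the reduction.

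I would first preprocess the input $\bD$ by handling its weakly connected components separately. By Remark~\ref{rmk:3-cycle}, any weakly connected $\vec{\bP}_4$-subgraph-free digraph containing a directed $3$-cycle is equal to that $3$-cycle, hence trivially maps to $\vec{\bC}_3^+$ via the identity. So we may assume that $\bD$ is weakly connected and $\vec{\bC}_3$-subgraph-free. Under this assumption the structure of $\bD$ is sharply restricted: directed walks using only non-symmetric arcs have length at most $2$ (else we would find a $\vec{\bP}_4$), and any symmetric pair $(x,y),(y,x)$ of arcs of $\bD$ must be mapped to the unique symmetric pair $\{2,3\}$ of $\vec{\bC}_3^+$ in every homomorphism.

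The core step is to construct in polynomial time an $\{E,U\}$-structure $\bD^{\star}$ from $\bD$ in which vertices are labelled according to their local role (such as being incident to a symmetric pair of arcs, or being the middle vertex of a directed path of length $2$), possibly with auxiliary vertices introduced to record length-$2$ configurations. The seven vertices of $\bG$ should be interpreted as the admissible combinations of $\vec{\bC}_3^+$-image and local role permitted under the $\vec{\bP}_4$-subgraph-free and $\vec{\bC}_3$-subgraph-free hypotheses, and the distinguished vertex $7\notin U$ should capture a configuration that is valid for vertices not incident to symmetric arcs but forbidden for those that are. We would then verify the equivalence $\bD\to \vec{\bC}_3^+ \iff \bD^{\star}\to (\bG,U)$: the ``if'' direction composes any $(\bG,U)$-homomorphism of $\bD^{\star}$ with the explicit map $\bG\to \vec{\bC}_3^+$ above (pulled back to the original vertices of $\bD$), while the ``only if'' direction extracts, from any $\vec{\bC}_3^+$-colouring of $\bD$, a matching type assignment into $\bG$ that respects the $U$-labelling.

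The main obstacle is designing the gadgetisation $\bD\mapsto \bD^{\star}$ correctly and then executing the case analysis for the ``only if'' direction. The finiteness of local configurations, forced by the $\vec{\bP}_4$-subgraph-free together with the $\vec{\bC}_3$-subgraph-free hypotheses established in the preprocessing step, should make this case analysis tractable if tedious. Once the reduction is verified, Lemma~\ref{lem:conservative-majority} yields the desired polynomial-time algorithm.
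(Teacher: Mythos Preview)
Your overall strategy matches the paper's exactly: preprocess away components containing a directed $3$-cycle via Remark~\ref{rmk:3-cycle}, then reduce the remaining $\{\vec{\bC}_3,\vec{\bP}_4\}$-subgraph-free case to $\CSP(\bG,U)$ and invoke Lemma~\ref{lem:conservative-majority}. Your reading of $U$ (vertices incident to a symmetric pair must avoid $7$) is also correct.

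Where your proposal is vague, however, the paper is concrete and considerably simpler than what you sketch. There are no ``auxiliary vertices to record length-$2$ configurations'' and no elaborate role-labelling. The reduction is just this: let $B\subseteq D$ be the set of vertices incident to a symmetric pair of arcs, and let $\bD^\ast$ be obtained from $\bD$ by deleting exactly one arc from each symmetric pair. Then (i) $\bD\to\vec{\bC}_3^+$ if and only if $(\bD^\ast,B)\to(\vec{\bC}_3^+,\{2,3\})$, since any homomorphism must send symmetric-arc endpoints into $\{2,3\}$ and conversely any such map of $\bD^\ast$ lifts back to $\bD$; and (ii) because $\bD$ is $\{\vec{\bC}_3,\vec{\bP}_4\}$-subgraph-free, $\bD^\ast$ has no directed walk on four vertices, so $(\bD^\ast,B)\to(\bT\bT_3,\{1,2,3\})$ automatically. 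Hence the problem becomes $\CSP\big((\vec{\bC}_3^+,\{2,3\})\times(\bT\bT_3,\{1,2,3\})\big)$, whose core is precisely $(\bG,U)$. The ``only if'' direction you flagged as the main obstacle is therefore immediate, with no case analysis required: the map $\bD^\ast\to\bT\bT_3$ comes for free from Observation~\ref{obs:P-TT}, and the map to $(\vec{\bC}_3^+,\{2,3\})$ is the given colouring.
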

\begin{proof}
    Consider a loopless weakly connected digraph $\bD$ with no directed $3$-cycles.
    Let $B$ be the subset of vertices of $\bD$ incident to some symmetric edge, and let
    $\bD^\ast$ be the digraph obtained from $\bD$ after removing exactly one edge from
    each symmetric pair of edges. It is straightforward to observe there is no homomorphism
    from the directed path on four vertices to $\bD^\ast$. Hence, by Observation~\ref{obs:P-TT},
    $(\bD^\ast,B)$ homomorphically maps to $(\bT\bT_3,\{1,2,3\})$ considered as a  $\{E,U\}$-structure
    where $E$ is a binary predicate and $U$ a unary predicate (this simply means that there is a
    homomorphism $\bD^\ast\to \bT\bT_3$ such that every vertex in $B$ is mapped to a vertex in $\{1,2,3\}$).
    Also note that $\bD\to \vec{\bC}_3^+$ if and only if there is a homomorphism $f\colon \bD^\ast\to \vec{\bC}_3^+$
    where $f(B)\subseteq \{2,3\}$, i.e., $\bD \to \vec{\bC}_3^+$ if and only if 
    $(\bD^\ast,B)\to (\vec{\bC}_3^+, \{2,3\})$ (as $\{E,U\}$-structures). It follows from these arguments
    that $\CSP(\vec{\bC}_3^+)$ restricted to $\{\vec{\bC}_3,\vec{\bP}_4\}$-subgraph-free digraphs
    reduces in polynomial time to $\CSP((\vec{\bC}_3^+, \{2,3\}) \times (\bT\bT_3,\{1,2,3\}))$.
    It is not hard to observe that the core of $(\vec{\bC}_3^+, \{2,3\}) \times (\bT\bT_3,\{1,2,3\})$
    is the structure $(\bG,U)$  depicted in Figure~\ref{fig:digraph+unaries}. It follows from
    Lemma~\ref{lem:conservative-majority} that  $\CSP(\bG,U)$ can be solved in polynomial time. 
 
    An algorithm that solves $\CSP(\vec{\bC}_3^+)$ in polynomial time when the input is restricted
    to $\vec{\bP}_4$-subgraph-free digraphs processes each weakly connected component of the input
    $\bD$ and accepts if and only if the following subroutine accepts each weakly connected component.
    On a weakly connected component $\bD'$ of $\bD$, the subroutine distinguishes whether the $\bD'$
    is $\vec{\bC}_3$-subgraph-free: if not,  it accepts if $\bD'$ is either $\vec{\bC}_3$ or $\vec{\bC}_3^+$,
    and rejects otherwise --- this step is consistent and correct because $|D'| = 3$ (see Remark~\ref{rmk:3-cycle}); if
    $\bD'$ is $\vec{\bC}_3$-free, then the subroutine accepts or rejects $\bD'$
    according to the reduction to $\CSP(\bG,U)$ explained above. Given the arguments in the
    previous paragraph, it is clear that this algorithm is consistent, correct, and runs in
    polynomial time. 
\end{proof}

\begin{theorem}\label{thm:3-vertices-Pk-subgraph-free}
    The following statements hold for every positive integer $k$.
    \begin{itemize}
        \item If $k\le 4$, then $\CSP(\vec{\bC}_3^+)$, $\CSP(\vec{\bC}_3^{++})$, and $\CSP(\bK_3)$ are solvable in polynomial-time
        when the input is restricted to $\vec{\bP}_k$-subgraph-free digraphs. 
        \item If $k\ge 5$, then $\CSP(\vec{\bC}_3^+)$, $\CSP(\vec{\bC}_3^{++})$, and $\CSP(\bK_3)$ are $\NP$-hard even
        if the input is restricted to $\vec{\bP}_k$-subgraph-free digraphs.
    \end{itemize}
\end{theorem}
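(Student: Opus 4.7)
I would handle the two bullets of the dichotomy separately, drawing almost entirely on the machinery already set up in the excerpt.

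For the hardness bullet ($k \ge 5$), I would appeal to the observation recorded immediately before the statement. Each of $\vec{\bC}_3^+$, $\vec{\bC}_3^{++}$, $\bK_3$ is smooth and its core is not a disjoint union of directed cycles, so Theorem~\ref{thm:smooth-hereditay} yields that each is hereditarily hard. Since $\bT_4$ is acyclic while each of the three targets is loopless and contains a directed triangle, a simple pigeonhole shows $\bT_4 \not\to \bH$ for $\bH \in \{\vec{\bC}_3^+, \vec{\bC}_3^{++}, \bK_3\}$; Theorem~\ref{thm:brewster} then gives that $\RCSP(\bH, \bT_4)$ is $\NP$-hard. By Observation~\ref{obs:P-TT} the yes-instances of $\CSP(\bT_4)$ are exactly the $\vec{\bP}_5$-homomorphism-free digraphs, a subclass of the $\vec{\bP}_5$-subgraph-free digraphs; and for every $k \ge 5$, being $\vec{\bP}_5$-subgraph-free implies being $\vec{\bP}_k$-subgraph-free. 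Hence $\NP$-hardness propagates from $\RCSP(\bH, \bT_4)$ to $\CSP(\bH)$ on $\vec{\bP}_k$-subgraph-free inputs.

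For the tractability bullet ($k \le 4$), the case $k \le 3$ is trivial: as already recorded above the theorem, every $\vec{\bP}_3$-subgraph-free digraph is bipartite and hence maps to $\bK_2$, and each of the three targets contains a symmetric pair of edges (hence a copy of $\bK_2$), so every input is a yes-instance.

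The interesting case is $k = 4$, which I would settle target by target using the three results established in this section. For $\CSP(\bK_3)$, Observation~\ref{obs:K3-P4-subgraph-free} already shows that every loopless $\vec{\bP}_4$-subgraph-free digraph is $3$-colourable, so the algorithm simply accepts iff the input is loopless. For $\CSP(\vec{\bC}_3^+)$, the algorithm is precisely Lemma~\ref{lem:C3+-P4-subgraph-free}. For $\CSP(\vec{\bC}_3^{++})$, the mild subtlety is that $\bK_3 \not\to \vec{\bC}_3^{++}$ (since no three-vertex image of $\vec{\bC}_3^{++}$ has all three opposite pairs of edges), but Remark~\ref{rmk:3-cycle} forces a weakly connected $\vec{\bP}_4$-subgraph-free digraph to contain $\bK_3$ as a subgraph only when the whole component equals $\bK_3$; combined with Lemma~\ref{lem:C3++-P4-subgraph-free} this yields the algorithm that accepts iff no weakly connected component is isomorphic to $\bK_3$. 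The main obstacle has already been discharged into the preceding observations and lemmas, so the proof of the theorem itself is essentially an assembly of these pieces.
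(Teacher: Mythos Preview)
Your proposal is correct and follows essentially the same approach as the paper, which simply cites Observation~\ref{obs:K3-P4-subgraph-free}, Lemma~\ref{lem:C3++-P4-subgraph-free}, and Lemma~\ref{lem:C3+-P4-subgraph-free} for tractability and the hereditary-hardness discussion at the start of Section~\ref{sect:3-vertices} for hardness. Your extra care in handling the $\bK_3$-subgraph-free hypothesis of Lemma~\ref{lem:C3++-P4-subgraph-free} via Remark~\ref{rmk:3-cycle} is a detail the paper leaves implicit.
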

\begin{proof}
    The first itemized statement follows from Observation~\ref{obs:K3-P4-subgraph-free}, from
    Lemma~\ref{lem:C3++-P4-subgraph-free} and from Lemma~\ref{lem:C3+-P4-subgraph-free}. The second
    itemized claim follows from the  discussion in the second paragraph of Section~\ref{sect:3-vertices}.
\end{proof}

\subsection{$\vec{\bP}_k$-free digraphs}

Notice that a digraph $\bD$ is $\vec{\bP}_2$-free if and only if it is a symmetric (undirected) graph.
Hence,  $\CSP(\bK_3)$ is $\NP$-hard for $\vec{\bP}_2$-free digraphs. Also note that a symmetric
graph $\bD$ maps to $\vec{\bC}_3^+$ if and only if $\bD$ is bipartite, and the same statement holds
for $\vec{\bC}_3^{++}$. Hence, in this section we study the complexity of $\CSP(\vec{\bC}_3^+)$
and $\CSP(\vec{\bC}_3^{++})$ with input restrictions to $\vec{\bP}_3$-free and to 
$\vec{\bP}_4$-free digraphs.

\begin{lemma}\label{lem:C3+-P4-free}
    $\CSP(\vec{\bC}_3^+)$ is $\NP$-hard even when the input
    $\bD$ satisfies the following conditions, where $\bF$ is any given orientation of the claw $\bK_{1,3}$:
    \begin{itemize}
        \item $\bD$ is $\{\bF,\vec{\bP}_5,\bP_5^{\leftarrow\leftarrow\to\to},\bP_5^{\to\to\leftarrow\leftarrow}\}$-subgraph-free,
        \item $\bD$ is $\vec{\bP}_4$-free, and
        \item $d^+(v) + d^-(v) \le 3$ for every $v\in D$.
    \end{itemize}
\end{lemma}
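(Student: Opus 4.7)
The plan is to reduce from the $\NP$-hard problem $\RCSP(\vec{\bC}_3^+,\bT_4)$ obtained from Theorem~\ref{thm:brewster}: since $\vec{\bC}_3^+$ is hereditarily hard (smooth with a non-cyclic core, Theorem~\ref{thm:smooth-hereditay}) and $\bT_4\not\to \vec{\bC}_3^+$ (any such homomorphism would collapse two of the four vertices and produce a loop in a loopless target), the corresponding restricted CSP is $\NP$-hard. By the duality of Observation~\ref{obs:P-TT}, this yields $\NP$-hardness of $\CSP(\vec{\bC}_3^+)$ restricted to acyclic $\vec{\bP}_5$-subgraph-free digraphs. Hence the second bullet of the lemma is already granted at the starting point, and the task is to additionally impose the four remaining conditions by transforming the hard instance.

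Given such a hard acyclic input $\bD$, I would construct $\bD'$ in two stages. First, to enforce the degree bound $d^+(v)+d^-(v)\le 3$, I split each high-degree vertex $u$ of $\bD$ into a chain of copies $u_1,u_2,\dots$ connected by a small \emph{consistency sub-digraph} that, exploiting the rigidity of the core $\vec{\bC}_3^+$, forces $f(u_i)=f(u_{i+1})$ in every homomorphism $f\colon \bD'\to \vec{\bC}_3^+$. Second, wherever the resulting digraph still contains an induced $\vec{\bP}_4$, a claw of orientation $\bF$, or one of the two zigzag patterns $\bP_5^{\leftarrow\leftarrow\to\to}$ and $\bP_5^{\to\to\leftarrow\leftarrow}$, I attach a small local \emph{chord gadget} that introduces the missing edges needed to destroy the pattern. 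Each such chord must be $\vec{\bC}_3^+$-neutral: it must be an edge already implied by the homomorphism constraints inherited from the rest of $\bD'$, so that no solutions are lost. Because the starting $\bD$ is acyclic of bounded height (no directed walk on $5$ vertices), these decorations stay local and cannot stretch into long directed walks.

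The main obstacle will be a careful verification of \emph{correctness and cross-pattern compatibility}. The vertex-splitting must be performed so that the consistency sub-digraph is itself $\vec{\bC}_3^+$-compatible and does not create new long directed walks, claws, or zigzag patterns inside the chain of copies; symmetrically, each chord added in the second stage must not reintroduce a forbidden pattern at a different location. I expect the decisive combinatorial work to be a case analysis on the short local neighbourhoods of $\bD'$, showing that a uniform bounded-size gadget (built from triangles and symmetric edges sitting inside $\vec{\bC}_3^+$) suffices to destroy every appearance of each forbidden pattern simultaneously, without inflating any other. The acyclicity and height-bound inherited from $\bD$ are what keep these interactions controllable, and I expect the correctness of the chord-adding step to be the most delicate point, since the ``neutral'' chords available in $\vec{\bC}_3^+$ form a rather rigid set.
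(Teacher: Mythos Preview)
Your plan has a genuine gap in the second stage, and the approach differs substantially from the paper's.

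\medskip
\textbf{The chord-adding step does not work.} You propose to destroy each induced $\vec{\bP}_4$ (and each claw, zigzag) by inserting a ``neutral'' edge, i.e.\ an edge $(u,v)$ such that $(f(u),f(v))\in E(\vec{\bC}_3^+)$ for \emph{every} homomorphism $f\colon \bD'\to \vec{\bC}_3^+$. There is no reason such an edge exists. Already for a bare induced path $a\to b\to c\to d$ one can list all seven homomorphisms to $\vec{\bC}_3^+$ (with edge set $\{(1,3),(3,2),(2,1),(2,3)\}$), for instance $(1,3,2,3)$ and $(3,2,3,2)$; no single extra ordered pair among $a,b,c,d$ is sent to an edge by all seven. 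So on a generic hard instance coming from $\RCSP(\vec{\bC}_3^+,\bT_4)$ you cannot hope to kill induced $\vec{\bP}_4$'s by local chord insertion without discarding solutions, and you give no mechanism that restricts the homomorphism set beforehand so that a neutral chord becomes available. The same objection applies to the claw and zigzag patterns. This is not a matter of delicate bookkeeping; the step fails in principle.

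\medskip
\textbf{The equality gadget is also unspecified.} You invoke ``rigidity of the core $\vec{\bC}_3^+$'' to get a consistency sub-digraph forcing $f(u_i)=f(u_{i+1})$. Rigidity only says the identity is the sole automorphism; it does not hand you a small equality gadget, let alone one that is itself $\{\bF,\vec{\bP}_5,\bP_5^{\leftarrow\leftarrow\to\to},\bP_5^{\to\to\leftarrow\leftarrow}\}$-subgraph-free, $\vec{\bP}_4$-free and of total degree at most $3$ at every vertex once glued into the chain.

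\medskip
\textbf{What the paper does instead.} The paper abandons the idea of post-processing a generic hard instance and gives a direct gadget reduction from positive 1\nobreakdash-IN\nobreakdash-3 SAT. The clause gadget is a directed $3$-cycle (which forces exactly one vertex to land on $2$) hung below three two-edge ``back-and-forward'' paths $(r,p),(r,q)$. The crucial point is that this back-and-forward pattern realises not equality but the pp-definable equivalence relation $\exists w\,(E(w,x)\wedge E(w,y))$, whose classes in $\vec{\bC}_3^+$ are $\{1,3\}$ and $\{2\}$; this weaker relation is exactly what 1-IN-3 SAT needs and, unlike a full equality gadget, it is a two-edge tree that automatically keeps degrees at most $3$, avoids $\vec{\bP}_5$, the two zigzag $\bP_5$'s, every claw orientation, and any induced $\vec{\bP}_4$. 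Because the whole instance is assembled from these tiny pieces, the structural conditions are verified directly from the construction rather than enforced after the fact.
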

\begin{proof}
We propose a polynomial-time reduction from positive 1-IN-3 SAT to $\CSP(\vec{\bC}_3^+)$
mapping an instance $\bI$ to a digraph $\bD$ satisfying the statements above.
Let $\bI$ be an instance $(x_1\lor y_1 \lor z_1)\land \dots \land (x_m \lor y_m \lor z_m)$
   where all variables are positive. Construct $\bD$ as follows:
   \begin{itemize}
       \item For each clause $(x_i\lor y_i \lor z_i)$ introduce  a fresh copy $\bG_i$ of $\bG$
       (depicted in Figure~\ref{fig:C3+-gadget}) with distinguished vertices $x_i,y_i,z_i$, and
       \item for each variable $v$ that occurs $n_v$ times, construct an undirected cycle with exactly
       $n_v$ vertices. Now substitute each edge $pq$ in this cycle for a gadget of two edges on
       three vertices given by the back-and-forward formation $(r_{p,q},p),(r_{p,q},q)$, where $r_{p,q}$ is a new vertex. 
       Notice that the resulting digraph is an oriented cycle $\bC_v$ where the $n_v$ vertices from the original
       undirected cycle correspond to sinks in $\bC_v$.
       Now, when variable $v$ appears in the $i$-th clause, identify the vertex $v$ in the clause gadget $\bG_i$ with a unique sink  
       the oriented cycle $\bC_v$.
   \end{itemize}
   Notice that $\exists w E(w,x) \wedge E(w,y)$ defines an equivalence relation in $\vec{\bC}_3^+$
   with two classes $\{1,3\}$ and $\{2\}$. Hence, for each $a\in\{1,2,3\}$, any homomorphism $f\colon
   \bG\to \vec{\bC}_3^+$ satisfies that $f(a') = 2$ if and only if $f(a) = 2$. Moreover, it follows from 
   the same argument that a sink in the cycle $\bC_v$ is mapped to $2$ if and only if all sinks in $\bC_v$ are mapped to $2$.
   Also notice that any homomorphism from the directed $3$-cycle to $\vec{\bC}_3^+$ must map exactly one vertex to $2$. 
   Hence, any homomorphism $g\colon\bD\to \vec{\bC}_3^+$ satisfies that exactly one of
   the vertices $x_i,y_i,z_i$ is mapped to $2$. This yields a solution to the positive
   1-IN-3 SAT instance $\bI$ by assigning for each $i\in[m]$ and $a\in\{x,y,z\}$
   the value $1$ if $g(a_i) = 2$, and $a_i := 0$ otherwise. The converse implication
   (if $\bI$ is a yes-instance then $\bD\to \vec{\bC}_3^+$) follows similarly by noticing
   that for each $a\in \{x,y,z\}$ there is a homomorphism $f\colon \bG\to \vec{\bC}_3^+$
   mapping $a$ to $2$ and $f(b) = 1$ for $b\in\{x,y,z\}\setminus\{a\}$ (in Figure~\ref{fig:C3+-gadget}
   we describe such a homomorphism for $a = x$).

   Finally, it is straightforward to notice that every digraph $\bD$ in the image of the reduction
   contains no path on five vertices, and no induced path on four vertices. It is also clear
   that $\bD$ is $\{\bP_5^{\leftarrow\leftarrow\to\to},\bP_5^{\to\to\leftarrow\leftarrow}\}$-subgraph-free,
   and  that $d^+(u) + d^-(u) \le 3$ for every $u\in D$. Now,  notice that if $d^+(u) + d^-(u) = 3$,
   then $u$ belongs to some directed $3$-cycle from a gadget $\bG_i$ or is a source in some cycle
   $\bC_v$. In the former case $d^-(v) = 2$ and in the latter $d^-(v) = 0$, and thus $\bD$ is
   $\bF$-subgraph-free whenever $\bF$ is the orientation of $\bK_{1,3}$ where the center
   vertex is a sink, or when it has out-degree two. The cases when $\bF$ is on of the remaining
   two possible orientations of $\bK_{1,3}$ simply follows by considering the reduction
   that maps a digraph $\bD$ to the digraph obtained from $\bD$ by reversing the orientation of each edge of $\bD$. 
\end{proof}

\begin{figure}[ht!]
\centering
\begin{tikzpicture}

    \begin{scope}[scale = 0.8]
    \node [vertex, fill = black, label =135:{$x$},label =45:{$\scriptstyle 1$}] (x) at (-2,3) {};
    \node [vertex, fill = black, label = 135:{$y$}, label =45:{$\scriptstyle 1$}] (y) at (0,3) {};
    \node [vertex, fill = black, label = 135:{$z$}, label =45:{$\scriptstyle 2$}] (z) at (2,3) {};
    \node [vertex, label =45:{$\scriptstyle 2$}] (x2) at (-2,1.5) {};
    \node [vertex, label =45:{$\scriptstyle 2$}] (y2) at (0,1.5) {};
    \node [vertex, label =45:{$\scriptstyle 3$}] (z2) at (2,1.5) {};
    \node [vertex, label =135:{$x'$}, label =45:{$\scriptstyle 1$}] (x3) at (-2,0) {};
    \node [vertex, label =135:{$y'$}, label =45:{$\scriptstyle 3$}] (y3) at (0,0) {};
    \node [vertex, label =135:{$z'$}, label =45:{$\scriptstyle 2$}] (z3) at (2,0) {};
    \node at (0,-1.25){$\bG(x,y,z)$};
   
    \foreach \from/\to in {x2/x, y2/y, z2/z, x2/x3, y2/y3, z2/z3, x3/y3, y3/z3}     
    \draw [arc] (\from) to (\to);
    
    \draw [arc] (z3) to [bend left] (x3);
  \end{scope}

\end{tikzpicture}
\caption{
A depiction of the gadget reduction $\bI\mapsto \bD$ from positive 1-IN-3 SAT to
$\CSP(\vec{\bC}_3^+)$  applied to a clause $(x\lor y \lor z)$ of the instance $\bI$ to
1-IN-3 SAT. The numbers indicate a function that defined a homomorphism 
$f\colon \bG\to \vec{\bC}_3^+$.}
\label{fig:C3+-gadget}
\end{figure}

\blue{Now we show that $\CSP(\vec{\bC}_3^+)$ can be solved in polynomial time when the
input digraph $\bD$ contains no induced directed path on three vertices. On a given
$\vec{\bP}_3$-free digraph $\bD$, the algorithm works as follows. At each
step there are three sets $X_1,X_2,X_3\subseteq D$ such that the mapping $x\mapsto i$ if
$x\in X_i$ defines a partial homomorphism from $\bD$ to $\vec{\bC}_3^+$. We extend
these sets to $X_1',X_2',X_3'$ in such a way that the partial homomorphism
defined by $X_1,X_2,X_3$ extends to a homomorphism from $\bD\to \vec{\bC}_3^+$ if
and only if the partial homomorphism defined by $X_1',X_2',X_3'$ extends to 
a homomorphism from $\bD\to \vec{\bC}_3^+$.}

\begin{lemma}\label{lem:alg-C3+}
    \blue{$\CSP(\vec{\bC}_3^+)$ can be solved in quadratic time when the input is restricted to $\vec{\bP}_3$-free
    digraphs.}
\end{lemma}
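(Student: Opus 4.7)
The plan is to design a constraint-propagation algorithm that maintains a partial homomorphism given by disjoint subsets $X_1,X_2,X_3\subseteq D$ and iteratively extends it in a satisfiability-preserving way. Reading off the edges $\{1\to 2,\ 2\to 3,\ 3\to 1,\ 3\to 2\}$ of $\vec{\bC}_3^+$, I would first set up four basic forcing rules: for each edge $u\to v$ of $\bD$, a label of $1$ or $2$ on $u$ forces the image of $v$ (to $2$ or $3$ respectively), and a label of $1$ or $3$ on $v$ forces the image of $u$ (to $3$ or $2$ respectively). The only ambiguities these rules leave are when the tail of an arc has been labeled $3$ (the head could still be $1$ or $2$) and when the head has been labeled $2$ (the tail could still be $1$ or $3$).

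The crux is that the $\vec{\bP}_3$-free hypothesis always resolves these remaining cases or exposes a conflict. For any directed $2$-path $u\to v\to w$ in $\bD$, since the induced subdigraph on $\{u,v,w\}$ cannot be a $\vec{\bP}_3$, we have $u=w$, $u\to w\in E(\bD)$, or $w\to u\in E(\bD)$. Matching each of the three subcases against the edge set of $\vec{\bC}_3^+$ when $v$ is already labeled $2$ (so the constraint is on $u$) or $3$ (so the constraint is on $w$), I would verify that the endpoint with a two-element candidate set is pinned down to a single value, or else that a conflict with the already-fixed label is detected. This yields a fifth propagation rule triggered whenever the middle vertex and one endpoint of a $2$-path are labeled.

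For the main algorithm, I would process each weakly connected component of $\bD$ separately, try each of the three labels on an arbitrary seed vertex, and run all five propagation rules to a fixpoint. After the fixpoint, each still-unlabeled vertex $v$ has a candidate set $L(v)\subseteq\{1,2,3\}$ of size at most two (each labeled neighbor restricts $L(v)$, and the combination of an in-neighbor labeled $3$ with an out-neighbor labeled $2$ is itself a forcing rule pinning $v$ to $1$). Edges between unlabeled vertices then translate to clauses of size two on boolean encodings of the binary choices, so the residual instance reduces to $2$-SAT and can be solved in linear time. Propagation and $2$-SAT each cost $O(n+m)$ per seed choice, giving the claimed quadratic complexity.

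The main obstacle will be the case analysis that validates the fifth rule and handles vertices whose candidate set might remain of size three (those with no informative labeled neighbor). For such a vertex I would pick it as a new seed and recurse, arguing using $\vec{\bP}_3$-freeness together with connectivity that each component absorbs only a constant number of ``free'' seeds before propagation covers it; careful treatment of sinks, sources, and symmetric arcs is needed so that the fifth rule is never applied with a vacuous hypothesis and the residual problem is genuinely $2$-SAT. Since seeding costs $O(n+m)$ per choice and there are $O(n)$ choices overall, the total complexity remains $O(n^{2})$.
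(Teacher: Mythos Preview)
Your propagation-plus-$2$-SAT plan is a reasonable instinct, but there is a genuine gap precisely where you yourself flag ``the main obstacle.'' Your five forcing rules label a vertex only when it is adjacent to an already-\emph{labeled} vertex and (for the ambiguous edges) when the $2$-path hypothesis of your fifth rule is available; once propagation stalls at an ambiguous vertex $w$ (say an out-neighbour of a vertex you labeled~$3$, so $L(w)=\{1,2\}$ in your convention), vertices further along need not have any labeled neighbour at all, and list-propagation from a size-two list can produce a size-three list (e.g.\ from $L(u)=\{2,3\}$ and $u\to v$ one gets $L(v)\subseteq\{3\}\cup\{1,2\}=\{1,2,3\}$). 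Your fallback of re-seeding such vertices while asserting ``a constant number of free seeds per component'' is exactly the step that carries the whole argument, and nothing in your outline supplies it; as written, neither the reduction to $2$-SAT nor the quadratic running time is justified.

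The paper sidesteps $2$-SAT and recursive seeding with a simpler idea. Rather than trying all three labels on one seed, it tries \emph{every} vertex of $D$ as a seed with the single fixed label~$2$ (in the paper's labeling, the vertex of $\vec{\bC}_3^+$ with out-degree two), and shows that from such a seed propagation is \emph{fully deterministic}: every vertex adjacent to the labeled set receives a unique forced value. The two ambiguous cases are resolved not only by your $2$-path argument but also by a ``safe default'' when no $2$-path is available: for instance, if $v\in N^+(X_2)$ has no symmetric edge and no out-neighbour, one argues that every (in-)neighbour of $v$ is tied to $X_2$ by a $\to\!\leftarrow$ walk and hence must land in $\{1,2\}$, so assigning $v$ the value~$3$ never destroys a solution. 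This yields $|D|$ deterministic linear-time runs and hence the quadratic bound directly; completeness comes from the easy observation that any homomorphism $\bD\to\vec{\bC}_3^+$ can be modified to hit colour~$2$.
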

\begin{proof}
    \blue{We assume without loss of generality that $\bD$ is a $\vec{\bC}_3$-free weakly connected digraph.
    First notice that $\bD\to \vec{\bC}_3^+$ if and only if there is a homomorphism $f\colon \bD\to
    \vec{\bC}_3$ such that $f^{-1}(2)\neq \varnothing$. As previously mentioned, the idea of the algorithm
    is to deterministically extend a partial homomorphism $f$ to a partial homomorphism $f'$ in such a way
    that $f$ can be extended to a homomorphism $\bD\to \vec{\bC}_3^+$ if an only if $f'$ can be extended
    as well. Hence, by the previous observation it sufficed to run  this routine over all partial homomorphisms
    defined on one vertex $v$ where $v\mapsto 2$, and if some homomorphism is found, then we accept
    $\bD$, and otherwise we reject $\bD$.}

    \blue{Let $\bD$ be  a finite digraph, $X_1,X_2,X_3$ three disjoint vertex subsets
    of $D$ such that $X_1\cup X_2 \cup X_3\neq \varnothing$, and $x\mapsto i$
    if $x\in X_i$ defines a partial homomorphism $f\colon\bD[X_1\cup X_2\cup X_3]\to \vec{\bC}_3^+$.
    Now, consider any vertex $v\in D\setminus (X_1\cup X_2\cup X_3)$ that belongs
    to the in- or out-neighbourhood of $X_1\cup X_2\cup X_3$. We define $X_1',X_2',X_3'$
    according to the following case distinction.}
\begin{enumerate}
    \item If $v\in N^+(X_1)$ or $v\in N^-(X_2)$, then $X_1': = X_1$, $X_2':=X_2$, and $X_3':= X_3\cup \{v\}$;
    \item If $v\in N^-(X_1)$ or $v\in N^+(X_3)$, then $X_1': = X_1$, $X_2':=X_2\cup\{v\}$, and $X_3':= X_3$;
    \item If $v\in N^+(X_2)$, we consider the following subcases,
        \begin{enumerate}
            \item if there is a symmetric edge incident in $v$, then $X_1': = X_1$, $X_2':=X_2$, and $X_3':= X_3\cup \{v\}$;
            \item otherwise, if $v$ has an out-neighbour, then $X_1':= X_1\cup \{v\}$, $X_2':=X_2$,  and $X_3':= X_3$;
            \item if neither of the above hold, then $X_1': = X_1$, $X_2':=X_2$, and $X_3':= X_3\cup \{v\}$;
        \end{enumerate}
    \item If $v\in N^-(X_3)$, we consider the following subcases,
        \begin{enumerate}
            \item if there is a symmetric edge incident in $v$, then $X_1': = X_1$, $X_2':=X_2\cup\{v\}$, and $X_3':= X_3$;
            \item otherwise, if $v$ has an in-neighbour, then $X_1':= X_1\cup \{v\}$, $X_2':=X_2$,  and $X_3':= X_3$;
            \item if neither of the above hold, then $X_1': = X_1$, $X_2':=X_2 \cup\{v\}$, and $X_3':= X_3$.
        \end{enumerate}
\end{enumerate}

\blue{Let $f'\colon\bD[X_1'\cup X_2'\cup X_3']\to \vec{\bC}_3^+$ be the partial mapping $x \mapsto i$
if $x \in X_i'$. We prove that if $f$ extends to a homomorphism from $\bD$ to $\vec{\bC}_3^+$, then
so does $f'$. This is clearly the case in cases 1 and 2, and also in 3a and 4a. We now consider the
case 3b. Let $x\in X_2$ and $y\in D$ such that $(x,v), (v,y)\in E(\bD)$. Since $v$ is not incident
in any symmetric edge, then $x$ and $y$ must be adjacent vertices in $\bD$, i.e.,  $(x,y)\in E(\bD)$
or $(y,x)\in E(\bD)$ (or both). In either case, there is a unique way of extending the
partial homomorphism $x\mapsto 2$ to the subbgraph with vertex set $\{x,y,v\}$, and in this unique
extension it is the case that $v$ is mapped to $1$. Hence, $f$ extends to a homomorphism if and only
if $f'$ extends to a homomorphism $\bD\to \vec{\bC}_3^+$. Now, consider the case 3c and let
$x\in X_2$ be an in-neighbour of $v$. Since $v$ no symmetric edge is incident in $v$, and $v$ has no out-neighbours,
it follows that for any vertex $y$ adjacent to $v$, there is a walk of the form $\rightarrow\leftarrow$
connecting $y$ and $x$. Notice that if $i\in\{1,2,3\}$ is connected to $2$ by such a walk in $\vec{\bC}_3^+$, then
$i\in\{1,2\}$. This implies  that if some homomorphism $g\colon \bD\to \vec{\bC}_3^+$ mapping $x$ to $2$,
then the image of the neighbourhood of $v$ is contained in $\{1,2\}$. Since $v$ has in-neighbours but no
out-neighbours, and $1$ and $2$ are in-neighbours of $3$ in $\vec{\bC}_3^+$, it follows that the mapping
$g'\colon \bD\to \vec{\bC}_3^+$ defined by $g'(v) = 3$ and $g'(u) = g(u)$ if $u\neq v$ is a also a homomorphism
from $\bD\to \vec{\bC}_3^+$. Therefore, if $f$ extends to a homomorphism $g\colon \bD\to \vec{\bC}_3^+$, 
then $f'$ extends to a homomorphism $g'\colon \bD\to \vec{\bC}_3^+$.
The cases 4b and 4c follows with symmetric arguments.}

\blue{
Hence, on a given weakly connected digraph $\bD$,  a quadratic time algorithm works as follows.
For a vertex $v\in D$ define $X_1 := \varnothing$, $X_2:= \{v\}$, and $X_3:=\varnothing$,
and perform the subroutine above until either $X_1'\cup X_2'\cup X_3'$ does not define a
partial homomorphism, and in this case choose a new vertex $u\in D$; or 
$X_1\cup X_2 \cup X_3 = D$, and in this case accept $\bD$. If after running the subroutine
on every vertex $v\in D$ we arrive to some sets $X_1',X_2',X_3'$ that do not define a partial homomorphism,
reject $\bD$.
}
\end{proof}

\begin{lemma}\label{lem:C3++-P3-free}
    \blue{$\CSP(\vec{\bC}_3^{++})$ is $\NP$-hard even for a
    $\{\vec{\bP}_3, \vec{\bP}_3^{\leftarrow\rightarrow},\vec{\bP}_3^{\rightarrow\leftarrow}\}$-free digraphs.}
\end{lemma}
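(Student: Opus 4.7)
My plan is to give a polynomial-time reduction from positive 1-IN-3 SAT to $\CSP(\vec{\bC}_3^{++})$ restricted to $\{\vec{\bP}_3, \vec{\bP}_3^{\leftarrow\rightarrow}, \vec{\bP}_3^{\rightarrow\leftarrow}\}$-free inputs, following the template of Lemma~\ref{lem:C3+-P4-free} but with a different gadget. The guiding structural observation is that each of the three forbidden induced patterns consists of exactly two asymmetric arcs with no symmetric arc, so any three-vertex induced subgraph containing either a symmetric arc or at least three arcs is automatically safe. In particular, a directed 3-cycle on three vertices is $\{\vec{\bP}_3, \vec{\bP}_3^{\leftarrow\rightarrow}, \vec{\bP}_3^{\rightarrow\leftarrow}\}$-free, and so is any attachment to it by purely symmetric edges.

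The clause gadget for $(x \lor y \lor z)$ will be a directed 3-cycle on the three literal vertices: because $\vec{\bC}_3^{++}$ contains a \emph{unique} directed 3-cycle $1\to 2\to 3\to 1$, any such 3-cycle in the input is forced to map bijectively onto the labels $\{1,2,3\}$, which directly encodes a 1-in-3 constraint once we declare (as in Lemma~\ref{lem:C3+-P4-free}) that `true' corresponds to the label $2$. Consistency across the multiple occurrences of a variable $v$ will be enforced by chains of symmetric edges connecting the occurrences: a symmetric edge in the input must map to one of the two symmetric pairs $\{1,3\}$ or $\{2,3\}$ of $\vec{\bC}_3^{++}$, and by composing symmetric-edge gadgets with auxiliary vertices (together with additional small 3-cycles when a specific color has to be synchronised) one can propagate `has label $2$' between all occurrences of a variable.

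The main obstacle I expect is the interface between clause 3-cycles and variable gadgets: at a literal vertex of a 3-cycle, any external arc can combine with one of the two asymmetric arcs of the cycle incident at that vertex to create an induced two-arc configuration on three vertices, which will be a forbidden pattern unless the two `outer' vertices are also adjacent. I plan to resolve this by routing every inter-gadget connection through symmetric edges only and, where necessary, inserting shielding vertices that are symmetrically adjacent to the relevant 3-cycle vertex; then every three-vertex induced subgraph either lies inside one of the clause 3-cycles or contains a symmetric arc, ruling out all three forbidden patterns. Once the gadget is fixed, correctness in both directions is routine: a satisfying 1-IN-3 assignment lifts to a homomorphism $\bD \to \vec{\bC}_3^{++}$ by sending each `true' literal to $2$ and completing each clause triangle according to the unique rotation of $1\to 2\to 3\to 1$, while conversely the preimage of $2$ under any homomorphism picks out exactly one `true' literal per clause.
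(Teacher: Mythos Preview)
Your approach is correct and essentially matches the paper's: both reduce from positive 1-IN-3 SAT with a directed 3-cycle as the clause gadget and the equivalence relation $\exists w.\, E(w,x)\wedge E(x,w)\wedge E(w,y)\wedge E(y,w)$, realised by symmetric-edge paths through auxiliary vertices, for variable consistency; the paper merely buffers the literal vertices $x,y,z$ from the 3-cycle $x',y',z'$ by two symmetric edges each (Figure~\ref{fig:gadget-C3++-P3-free}) rather than placing the cycle directly on the literals as you do, and both layouts avoid the forbidden induced patterns for the reason you state. Drop the aside about ``additional small 3-cycles when a specific color has to be synchronised'': the symmetric-edge gadget alone already propagates membership in the singleton equivalence class, and extra cycles would only complicate the freeness verification.
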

\begin{proof}
We follow the proof of Lemma~\ref{lem:C3+-P4-free}, noting that $\exists w .E(w,x) \wedge E(x,w) \wedge E(w,y) \wedge E(y,w)$
defines an equivalence relation with two classes $\{1,2\}$ and $\{3\}$. The gadget we use is
depicted in  Figure~\ref{fig:gadget-C3++-P3-free}, with the same reduction from 1-IN-3 SAT.
\end{proof}

\begin{figure}[ht!]
\centering
\begin{tikzpicture}

    \begin{scope}[scale = 0.8]
    \node [vertex, fill = black, label =135:{$x$},label =45:{$\scriptstyle 1$}] (x) at (-2,3) {};
    \node [vertex, fill = black, label = 135:{$y$}, label =45:{$\scriptstyle 2$}] (y) at (0,3) {};
    \node [vertex, fill = black, label = 135:{$z$}, label =45:{$\scriptstyle 1$}] (z) at (2,3) {};
    \node [vertex, label =right:{$\scriptstyle 2$}] (x2) at (-2,1.5) {};
    \node [vertex, label =right:{$\scriptstyle 1$}] (y2) at (0,1.5) {};
    \node [vertex, label =right:{$\scriptstyle 2$}] (z2) at (2,1.5) {};
    \node [vertex, label =135:{$x'$}, label =25:{$\scriptstyle 3$}] (x3) at (-2,0) {};
    \node [vertex, label =135:{$y'$}, label =25:{$\scriptstyle 2$}] (y3) at (0,0) {};
    \node [vertex, label =135:{$z'$}, label =25:{$\scriptstyle 1$}] (z3) at (2,0) {};
    \node at (0,-1.25){$\bG(x,y,z)$};
   
    \foreach \from/\to in {x3/y3, y3/z3}     
    \draw [arc] (\from) to (\to);

    \foreach \from/\to in {x2/x, x/x2, y/y2, y2/y, z/z2, z2/z, x3/x2, x2/x3, y3/y2, y2/y3, z3/z2, z2/z3} 
    \draw [arc] (\from) to [bend left = 20] (\to);
    \draw [arc] (z3) to [bend left] (x3);
  \end{scope}

\end{tikzpicture}
\caption{
A depiction of the gadget reduction $\bI\mapsto \bD$ from positive 1-IN-3 SAT to
$\CSP(\vec{\bC}_3^{++})$  applied to a clause $(x\lor y \lor z)$ of the instance $\bI$ to
1-IN-3 SAT. The numbers indicate a function that defined a homomorphism 
$f\colon \bG\to \vec{\bC}_3^{++}$.}
\label{fig:gadget-C3++-P3-free}
\end{figure}

The following statement now follows from Lemmas~\ref{lem:C3+-P4-free},~\ref{lem:alg-C3+}, and~\ref{lem:C3++-P3-free}.

\begin{theorem}\label{thm:Pk-3vertex-classification}
    \blue{The following statements hold for every positive integer $k\ge 2$.
    \begin{itemize}
        \item $\CSP(\vec{\bC}_3^+)$ is in solvable in quadratic time for $\vec{\bP}_k$-free digraphs
        if $k\le 3$, and $\NP$-hard even for $\vec{\bP}_k$-free digraphs if $k\ge 4$, 
        \item $\CSP(\vec{\bC}_3^{++})$ is solvable in linear time for $\vec{\bP}_k$-free digraphs
        if $k = 2$, and  $\NP$-hard even for $\vec{\bP}_k$-free digraphs if $k\ge 3$, and
        \item $\CSP(\bK_3)$ is $\NP$-hard even for $\vec{\bP}_k$-free digraphs.
    \end{itemize}}
\end{theorem}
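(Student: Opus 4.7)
The plan is to deduce the theorem by combining the three cited lemmas with a simple monotonicity observation about $\vec{\bP}_k$-freeness. The key observation is that if $\vec{\bP}_{k+1}$ arises as an induced subgraph of a digraph $\bD$, then restricting to the first $k$ vertices of that induced copy yields an induced $\vec{\bP}_k$ in $\bD$; contrapositively, every $\vec{\bP}_k$-free digraph is $\vec{\bP}_{k+1}$-free, so the class of $\vec{\bP}_k$-free digraphs grows weakly with $k$. Hence $\NP$-hardness at some threshold $k_0$ propagates upward to all $k \ge k_0$, while polynomial-time tractability at $k_0$ propagates downward to all $k \le k_0$.

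For the first item, the quadratic-time algorithm of Lemma~\ref{lem:alg-C3+} handles $k = 3$. Since every $\vec{\bP}_2$-free digraph is a symmetric graph and is therefore $\vec{\bP}_3$-free (an induced $\vec{\bP}_3$ on vertices $a,b,c$ would require the edge $(a,b)$ to be non-symmetric), the same algorithm also covers $k = 2$. For $k \ge 4$, Lemma~\ref{lem:C3+-P4-free} produces $\vec{\bP}_4$-free instances, which are automatically $\vec{\bP}_k$-free for every $k \ge 4$ by the monotonicity above, yielding $\NP$-hardness throughout that range.

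For the second item, the case $k = 2$ reduces to bipartiteness: a $\vec{\bP}_2$-free digraph is a symmetric graph, and since the only symmetric edges of $\vec{\bC}_3^{++}$ lie between vertex $3$ and each of $\{1,2\}$, a symmetric graph admits a homomorphism to $\vec{\bC}_3^{++}$ exactly when it is bipartite, which is a linear-time test. For $k \ge 3$, Lemma~\ref{lem:C3++-P3-free} directly gives $\NP$-hardness on $\vec{\bP}_3$-free instances, and monotonicity extends this to all larger $k$. Finally, for the third item, $\CSP(\bK_3)$ restricted to symmetric (i.e., $\vec{\bP}_2$-free) digraphs is precisely classical $3$-colouring of undirected graphs, which is $\NP$-hard, and the monotonicity observation extends this to $\vec{\bP}_k$-free digraphs for every $k \ge 2$. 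The argument is essentially bookkeeping; the substantive work lives in the three lemmas, and the only genuinely non-mechanical step is the $k=2$ analysis for $\vec{\bC}_3^{++}$, which is a short case check on the symmetric-edge structure of the target.
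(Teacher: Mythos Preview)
Your proof is correct and follows essentially the same approach as the paper, which simply cites the three lemmas together with the ambient discussion of the $k=2$ case at the start of the subsection; you have just filled in the monotonicity and bookkeeping details explicitly. One inconsequential slip: in your description of $\vec{\bC}_3^{++}$, the centre of the two symmetric edges is vertex $2$ rather than vertex $3$, but since all that matters is that the symmetric part is a bipartite path, the argument is unaffected.
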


\section{A family of smooth tournaments}
\label{sect:tournaments}

In this section we answer Question~\ref{qst:Pk} for a natural family of tournaments 
smooth tournaments $\bT\bC_n$. 
Given a positive integer $n$, we denote by $\bT\bC_n$
the tournament obtained from $\bT_n$ be reversing the edge from the source to the sink (see
Figure~\ref{fig:TCn}). In particular, 
$\bT\bC_2 \cong \bT_2$, and $\bT\bC_3 \cong \vec{\bC}_3$, so $\CSP(\bT\bC_n)$ is
polynomial-time solvable for $n \le 3$, and $\NP$-complete for $n\ge 4$
(see, e.g.,~\cite{bangjensenSIDMA1}).

\begin{figure}[ht!]
\centering
\begin{tikzpicture}

    \begin{scope}[scale = 0.8]
    \node [vertex, label =below:{$1$}] (v) at (-4,0) {};
    \node [vertex,  label = below:{$2$}] (a) at (-2.5,0) {};
    \node [vertex,  label = below:{$3$}] (b) at (-1,0) {};
    \node [vertex,  label = below:{$n-2$}] (c) at (1,0) {};
    \node [vertex, label = below:{$n-1$}] (d) at (2.5,0) {};
    \node [vertex, label = below:{$n$}] (e) at (4,0) {};
    \node at (0,-1.5) {$\bT\bC_n$};
    \node at (0,0) {$\dots$};

    \foreach \from/\to in {v/a, a/b, c/d, d/e}     
    \draw [arc] (\from) to (\to);

    \foreach \from/\to in {a/d, a/c, b/d}     
    \draw [arc] (\from) to [bend left] (\to);
    \foreach \from/\to in {v/b, c/e}     
    \draw [arc] (\from) to [bend left = 20] (\to);
    \draw [arc] (e) to [bend right = 40] (v);
  \end{scope}

\end{tikzpicture}
\caption{
A depiction of the $\bT\bC_n$ --- for a cleaner picture we omit the edges $(1,n-2), (1,n-1), (2,n),$ and $(3,n)$.}
\label{fig:TCn}
\end{figure}
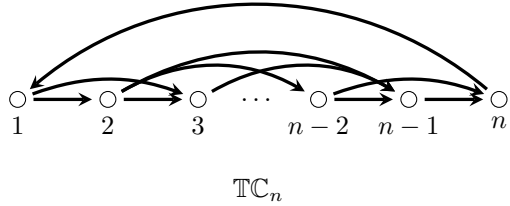

Since $\bT\bC_n$ is a hereditary hard digraph (Theorem~\ref{thm:smooth-hereditay}) and $\bT_n\not\to \bT\bC_n$,
it follows that $\RCSP(\bT\bC_n,\bT_n)$ is $\NP$-hard (Theorem~\ref{thm:brewster}). Equivalently, $\CSP(\bT\bC_n)$ is
$\NP$-hard for digraphs with no directed walk on $n+1$ vertices, and since $\bT_{n-1}\to \bT\bC_n$, $\CSP(\bT\bC_n)$
is polynomial-time solvable for digraphs with no directed walk on $n$ vertices. However, is we only forbid
$\vec{\bP}_k$ as a subgraph (and not homomorphically), it turns out 
that $\CSP(\bT\bC_n)$ is $\NP$-hard even for $\vec{\bP}_5$-subgraph-free digraphs.

\subsection{$\vec{\bP}_k$-subgraph-free digraphs}

For these hardness results we consider the gadget reduction depicted in Figure~\ref{fig:gen-pp-def} and described in
the proof of the following lemma. 

\begin{figure}[ht!]
\centering
\begin{tikzpicture}

    \begin{scope}[scale = 0.8]
    \node [vertex, fill = black, label =135:{$x$},label =45:{$\scriptstyle 1$}] (x) at (-2,3) {};
    \node [vertex, fill = black, label = 135:{$y$}, label =45:{$\scriptstyle n$}] (y) at (0,3) {};
    \node [vertex, fill = black, label = 135:{$z$}, label =45:{$\scriptstyle n$}] (z) at (2,3) {};
    \node [vertex, label =45:{$\scriptstyle n$}] (x2) at (-2,1.5) {};
    \node [vertex, label =45:{$\scriptstyle 2$}] (y2) at (0,1.5) {};
    \node [vertex, label =45:{$\scriptstyle 2$}] (z2) at (2,1.5) {};
    \node [vertex, label =135:{$x'$}, label =45:{$\scriptstyle 1$}] (x3) at (-2,0) {};
    \node [vertex, label =135:{$y'$}, label =45:{$\scriptstyle 3$}] (y3) at (0,0) {};
    \node [vertex, label =135:{$z'$}, label =45:{$\scriptstyle n$}] (z3) at (2,0) {};
    \node at (0,-1.25){$\bG(x,y,z)$};
   
    \foreach \from/\to in {x2/x, y2/y, z2/z, x2/x3, y2/y3, z2/z3, x3/y3, y3/z3}     
    \draw [arc] (\from) to (\to);
    
    \draw [arc] (z3) to [bend left] (x3);
  \end{scope}

\end{tikzpicture}
\caption{
A depiction of the gadget reduction $\bI\mapsto \bD$ from positive 1-IN-3 SAT to
$\CSP(\bT\bC_n)$  applied to a clause $(x\lor y \lor z)$ of the instance $\bI$ to
1-IN-3 SAT. The numbers indicate a function that defined a homomorphism 
$f\colon \bG\to \bT\bC_n$ whenever $n\ge 4$.}
\label{fig:gen-pp-def}
\end{figure}

\begin{lemma}\label{lem:TCkxTTk}
    For every positive integer $n\ge 4$, $\CSP(\bT\bC_n)$ is $\NP$-hard even when the input
    $\bD$ satisfies the following conditions:
    \begin{itemize}
        \item $\bD$ is \blue{$\{\bF,\vec{\bP}_5,\bP_5^{\leftarrow\leftarrow\to\to},\bP_5^{\to\to\leftarrow\leftarrow}\}$-subgraph-free,} 
        \item $\bD$ is $\vec{\bP}_4$-free, and
        \item $d^+(v) + d^-(v) \le 3$ for every $v\in D$.
    \end{itemize}
\end{lemma}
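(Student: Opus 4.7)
The plan is to adapt the proof of Lemma~\ref{lem:C3+-P4-free} almost verbatim, reducing from positive 1-IN-3 SAT using the same gadget construction but now targeting $\bT\bC_n$ instead of $\vec{\bC}_3^+$. Given an instance $\bI=(x_1\vee y_1\vee z_1)\wedge\cdots\wedge(x_m\vee y_m\vee z_m)$, I would assemble $\bD$ exactly as before: a fresh copy $\bG(x_i,y_i,z_i)$ for each clause (the nine-vertex gadget pictured in Figure~\ref{fig:gen-pp-def}), plus an oriented variable cycle $\bC_v$ with back-and-forth sink-formations $(r_{p,q},p),(r_{p,q},q)$ for each variable $v$, whose sinks are identified with the corresponding variable occurrences inside the clause gadgets. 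Because the underlying graph of $\bD$ is identical to the one used in Lemma~\ref{lem:C3+-P4-free}, the three itemized structural conditions ($\vec{\bP}_4$-freeness, the forbidden subgraph patterns, and the bounded total degree) transfer verbatim from that proof.

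The real substance is verifying that this reduction still encodes 1-IN-3 SAT correctly when the target is $\bT\bC_n$. The role of the ``distinguished value'' $2 \in \vec{\bC}_3^+$ is here played by the vertex $1 \in \bT\bC_n$. The crucial observation is that in $\bT\bC_n$ the vertex $1$ has the unique in-neighbour $n$, and $n$ has the unique out-neighbour $1$. Hence, whenever two vertices $a,a'$ are forced to share an in-neighbour (as $x,x'$ are via $x_2$ in $\bG$, and as consecutive sinks of $\bC_v$ are via the formation-middles $r_{p,q}$), we have $f(a)=1$ iff $f(a')=1$ in any homomorphism $f\colon\bD\to\bT\bC_n$. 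Moreover, the only directed $3$-cycles in $\bT\bC_n$ are those of the form $n\to 1\to k\to n$ with $k\in\{2,\dots,n-1\}$, so the bottom triangle $x'\to y'\to z'\to x'$ of each clause gadget forces exactly one of $f(x'),f(y'),f(z')$ to equal $1$, and therefore exactly one of $f(x),f(y),f(z)$ to equal $1$. Reading ``true $\leftrightarrow$ mapped to $1$'' then converts any homomorphism into a 1-IN-3 satisfying assignment.

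For the converse direction I would exhibit, for each of the three cases (``$x$ true'', ``$y$ true'', ``$z$ true''), an explicit homomorphism of $\bG(x,y,z)$ obtained from the one drawn in Figure~\ref{fig:gen-pp-def} by cyclically rotating values along the bottom triangle; the hypothesis $n\ge 4$ guarantees the set $\{2,\dots,n-1\}$ used for the intermediate labels is non-empty. For variable cycles, any ``consistent'' assignment extends: between two true occurrences set the middle vertex to $n$; between two false occurrences (which I would always encode as $n$) set the middle vertex to any element of $\{2,\dots,n-1\}$, say $2$. Both are valid common in-neighbours, as required.

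The main obstacle is isolating and exploiting the rigid substructure $n\to 1$ of $\bT\bC_n$: unlike in $\vec{\bC}_3^+$, sharing an in-neighbour does \emph{not} define an equivalence relation on $V(\bT\bC_n)$ globally, so one must argue directly that the singleton class $\{1\}$ still behaves as the distinguished ``true'' class within the gadget, and use the uniqueness of the arcs incident to $n\to 1$ to propagate this forcing through the clause and variable gadgets. Once this local rigidity is pinned down, the rest of the argument is entirely parallel to Lemma~\ref{lem:C3+-P4-free}.
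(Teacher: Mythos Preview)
Your proposal is correct and follows essentially the same approach as the paper's proof: the same gadget reduction from positive 1-IN-3 SAT, with the distinguished value $1\in\bT\bC_n$ playing the role of $2\in\vec{\bC}_3^+$, justified via the back-and-forth formation and the fact that $n$ is the unique in-neighbour of $1$ (and $1$ the unique out-neighbour of $n$). Your treatment is in fact slightly more explicit than the paper's in spelling out why the forcing propagates through the variable cycles and how the converse extension works on $\bC_v$.
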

\begin{proof}
    We use the same reduction from 1-IN-3-SAT to $\CSP(\vec{\bC}_3^+)$ from Lemma~\ref{lem:C3+-P4-free}:
    \begin{itemize}
       \item For each clause $(x_i\lor y_i \lor z_i)$ introduce  a fresh copy $\bG_i$ of $\bG$
       (depicted in Figure~\ref{fig:gen-pp-def}) with distinguished vertices $x_i,y_i,z_i$, and
       \item for each variable $v$ that occurs $n_v$ times, construct \blue{an undirected cycle with exactly}
       $n_v$ \blue{vertices}. Now substitute each edge $pq$ in this \blue{cycle} for a gadget of two edges on
       three vertices given by the back-and-forward formation $(r_{p,q},p),(r_{p,q},q)$, where $r_{p,q}$ is a new vertex. 
       Notice that the resulting digraph is an oriented cycle $\bC_v$ where the $n_v$ vertices from the original
       undirected cycle correspond to sinks in $\bC_v$. Now, when variable $v$ appears in the $i$-th clause, identify
       the vertex $v$ in the clause gadget $\bG_i$ with a unique sink the oriented cycle $\bC_v$.
   \end{itemize}
    Notice that any homomorphism $f\colon \bG\to \bT\bC_n$ satisfies that $f(a') = 1$, 
   if and only if $f(a) = 1$ for each $a\in\{x,y,z\}$. Indeed, this is independent of the cycle in this gadget,
   it applies already to the back-and-forward formation $(r_{a,a'},a),(r_{a,a'},a')$. It thus follows that
    a sink in the cycle $\bC_v$ is mapped to $1$ if and only if all sinks in $\bC_v$ are mapped to $1$.
    Also notice that any homomorphism from the directed $3$-cycle to $\bT\bC_n$ must map exactly one vertex to $1$. 
   Hence, any homomorphism $g\colon\bD\to \bT\bC_n$ satisfies that exactly one of
   the vertices $x_i,y_i,z_i$ is mapped to $1$. This yields a solution to the positive
   1-IN-3 SAT instance $\bI$ by assigning for each $i\in[m]$ and $a\in\{x,y,z\}$
   the value $1$ if $g(a_i) = 1$, and $a_i := 0$ otherwise. The converse implication
   (if $\bI$ is a yes-instance then $\bD\to \bT\bC_n$) follows similarly by noticing
   that for each $a\in \{x,y,z\}$ there is a homomorphism $f\colon \bG\to \bT\bC_n$
   mapping $a$ to $1$ and $f(b) = n$ for $b\in\{x,y,z\}\setminus\{a\}$ (in Figure~\ref{fig:gen-pp-def}
   we describe such a homomorphism for $a = x$). 

   Finally, the fact that $\bD$ satisfies the structural restrictions follows with the same arguments
   as in the proof of Lemma~\ref{lem:C3+-P4-free}.
\end{proof}

\blue{We now argue that $\CSP(\bT\bC_n)$ is polynomial-time solvable
when the input is restricted to $\vec{\bP_4}$-subgraph-free digraphs, and together
with Lemma~\ref{lem:TCkxTTk} we obtain a complexity classification for
these CSPs restricted to $\vec{\bP}_k$-subgraph-free digraphs.}

\begin{lemma}\label{lem:oriented-P4-subgraph-free}
    Consider a (possibly infinite) digraph $\bH$. If $\bT_3\to \bH$, then $\CSP(\bH)$ is in
    $\cP$ for $\vec{\bP}_4$-subgraph-free oriented graphs. In particular, 
    if $\bH$ is an oriented graph and $\bT_3\to \bH$, then $\CSP(\bH)$ is in
    $\cP$ for $\vec{\bP}_4$-subgraph-free digraphs.
\end{lemma}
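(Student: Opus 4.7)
The plan is to decompose the input into weakly connected components and argue that each component is easy to handle: either it is a copy of $\vec{\bC}_3$, which reduces to a constant-time question about $\bH$, or it is $\vec{\bC}_3$-subgraph-free and thus admits a homomorphism to $\bT_3$ (hence, by hypothesis, to $\bH$).

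First, since a homomorphism $\bD \to \bH$ exists if and only if one exists from each weakly connected component, I may process the components of $\bD$ in polynomial time separately, and I fix attention on a weakly connected $\vec{\bP}_4$-subgraph-free oriented graph $\bD'$. By Remark~\ref{rmk:3-cycle}, if $\bD'$ contains $\vec{\bC}_3$ as a subgraph, then $\bD' \cong \vec{\bC}_3$. So either $\bD' \cong \vec{\bC}_3$, or $\bD'$ is $\vec{\bC}_3$-subgraph-free.

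Next, the crux of the argument: in the $\vec{\bC}_3$-subgraph-free case I claim that $\bD'$ is in fact $\vec{\bP}_4$-\emph{homomorphism}-free. A homomorphism $\vec{\bP}_4 \to \bD'$ corresponds to a directed walk $v_1 \to v_2 \to v_3 \to v_4$ in $\bD'$. Since $\bD'$ is an oriented graph it has no loops and no symmetric pair of edges, which rules out all the identifications $v_i = v_{i+1}$, $v_1 = v_3$, and $v_2 = v_4$; and the only remaining identification $v_1 = v_4$ would produce a directed $3$-cycle, contradicting our assumption. Hence $v_1, v_2, v_3, v_4$ are pairwise distinct, giving a copy of $\vec{\bP}_4$ as a subgraph, contradicting $\vec{\bP}_4$-subgraph-freeness. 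By Observation~\ref{obs:P-TT}, $\bD' \to \bT_3$, and composing with $\bT_3 \to \bH$ we get $\bD' \to \bH$.

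For components isomorphic to $\vec{\bC}_3$, the question ``does $\vec{\bC}_3 \to \bH$?'' is a constant depending only on the fixed digraph $\bH$: if the answer is positive, every $\vec{\bC}_3$-component maps to $\bH$ and is accepted; otherwise any input having such a component is rejected. Putting the two cases together produces a polynomial-time algorithm. For the in particular statement, if $\bH$ is an oriented graph (hence loopless and with no symmetric pair of edges), then any input $\bD$ containing a symmetric pair of edges cannot map to $\bH$, so we reject it; otherwise $\bD$ is itself an oriented graph and the first part applies. The only nontrivial step is the case analysis promoting $\vec{\bP}_4$-subgraph-freeness to $\vec{\bP}_4$-homomorphism-freeness, and even this is essentially immediate once the absence of $\vec{\bC}_3$ and of symmetric edges is recorded.
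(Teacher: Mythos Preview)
Your proof is correct and follows the same approach as the paper's: reduce to weakly connected components, invoke Remark~\ref{rmk:3-cycle} for components containing a directed $3$-cycle, and otherwise argue that $\vec{\bP}_4$-subgraph-freeness upgrades to $\vec{\bP}_4$-homomorphism-freeness so that Observation~\ref{obs:P-TT} applies. One trivial omission in the ``in particular'' part: besides rejecting inputs with a symmetric pair of edges, you must also reject inputs containing a loop, since oriented graphs are loopless by definition in this paper; otherwise your assertion that the remaining $\bD$ is itself an oriented graph does not follow.
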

\begin{proof}
    The claim is trivial when $\bH$ contains a loop, so we assume that 
    $\bH$ is a loopless digraph. Let $\bD$ be an oriented graph and without
    loss of generality assume it
    is a weakly connected digraph. First check if $\bD$ contains a directed
    $3$-cycle, and if yes, then by Remark~\ref{rmk:3-cycle} we know that $D$ has 
    exactly three vertices
    (at this step the algorithm would say yes if $\bD$ is one the subdigraphs
    of $\bH$ on three vertices). If not, notice that $\vec{\bP}_4\not\to\bD$
    if and only if $\bD$ contains no loops.
    Hence, the algorithm accepts whenever $\bD$ contains no loops, and rejects
    otherwise. The ``in particular''
    statement follows because if the input $\bD$ is not an oriented graph
    but $\bH$ is, then $\bD\not\to \bH$, hence it suffices to solve $\CSP(\bH)$
    for $\vec{\bP}_4$-subgraph-free oriented graphs.
\end{proof}

\begin{theorem}\label{thm:TCn-Pk-subgraph-classification}
    For every pair of positive integers $n$ and $k$ the following statements hold.
    \begin{itemize}
        \item If $n \le 3$ or $k\le 4$, then $\CSP(\bT\bC_n)$ is in $\cP$ for $\vec{\bP}_k$-subgraph-free
        digraphs.
        \item If $n \ge 4$ and $k \ge 5$, then $\CSP(\bT\bC_n)$ is $\NP$-hard even for
        $\vec{\bP}_k$-subgraph-free digraphs.
    \end{itemize}
\end{theorem}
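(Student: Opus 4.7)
The plan is to treat this theorem as a clean corollary of the two preparatory lemmas just established (Lemma~\ref{lem:TCkxTTk} for hardness and Lemma~\ref{lem:oriented-P4-subgraph-free} for tractability), plus a handful of boundary observations. I would split the argument according to the two bullets in the statement.

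For the tractable bullet, I would first dispose of $n\le 3$: here $\bT\bC_2$ is isomorphic to $\bT_2$ and $\bT\bC_3$ is isomorphic to $\vec{\bC}_3$, and $\CSP$ is in $\cP$ for both without any input restriction. Then I would handle $k\le 4$ for arbitrary $n\ge 2$. The cases $k\le 2$ are immediate since a $\vec{\bP}_k$-subgraph-free digraph is edgeless or empty. For $k=3$, I would invoke the observation already used at the beginning of Section~\ref{sect:3-vertices}: every orientation of an odd cycle contains $\vec{\bP}_3$ as a subgraph, so every $\vec{\bP}_3$-subgraph-free loopless digraph is bipartite, hence admits a homomorphism to $\bT_2$ and therefore to $\bT\bC_n$. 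For $k=4$ combined with $n\ge 4$, I would apply Lemma~\ref{lem:oriented-P4-subgraph-free}: $\bT\bC_n$ is an oriented graph (by construction from a transitive tournament with one edge reversed), and $\bT_3\to \bT\bC_n$ whenever $n\ge 4$ since $\bT\bC_n$ contains $\bT_3$ as a subdigraph; the case $k=4$ with $n\le 3$ already falls under the first subcase.

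For the hard bullet, I would invoke Lemma~\ref{lem:TCkxTTk} directly to get that $\CSP(\bT\bC_n)$ is $\NP$-hard on $\vec{\bP}_5$-subgraph-free digraphs whenever $n\ge 4$. To push this to $k\ge 5$, I would use the simple monotonicity observation that $\vec{\bP}_k$ contains $\vec{\bP}_5$ as a subgraph for every $k\ge 5$, so every $\vec{\bP}_5$-subgraph-free digraph is also $\vec{\bP}_k$-subgraph-free. Thus the class of inputs on which Lemma~\ref{lem:TCkxTTk} guarantees hardness is a subclass of the $\vec{\bP}_k$-subgraph-free digraphs, and the hardness transfers verbatim.

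There is no real obstacle left: the genuine work is already contained in Lemma~\ref{lem:TCkxTTk} (the 1-IN-3 SAT gadget reduction tuned to make $\vec{\bP}_5$ unavoidable only as an induced, not subgraph, substructure) and in Lemma~\ref{lem:oriented-P4-subgraph-free} (which leverages Remark~\ref{rmk:3-cycle} together with the $\vec{\bP}_4$/$\bT_3$ duality from Observation~\ref{obs:P-TT}). The only mild care point is making sure the case analysis exhausts all $(n,k)$ with $n\ge 2$, $k\ge 1$, which is covered by the three subcases $n\le 3$, $k\le 3$, and $k=4\land n\ge 4$ on the tractable side, and by Lemma~\ref{lem:TCkxTTk} with the monotonicity remark on the hard side.
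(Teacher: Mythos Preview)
Your proposal is correct and follows essentially the same approach as the paper: the hard case via Lemma~\ref{lem:TCkxTTk}, the case $n\le 3$ by noting $\CSP(\bT\bC_2)$ and $\CSP(\bT\bC_3)$ are already in $\cP$, and the case $k\le 4$, $n\ge 4$ via Lemma~\ref{lem:oriented-P4-subgraph-free}. Your separate treatment of $k\le 3$ is harmless but unnecessary, since a $\vec{\bP}_k$-subgraph-free digraph for $k\le 4$ is automatically $\vec{\bP}_4$-subgraph-free, so Lemma~\ref{lem:oriented-P4-subgraph-free} already covers those cases.
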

\begin{proof}
    The second statement follows via Lemma~\ref{lem:TCkxTTk}, and the case
    $n \ge 3$ holds because $\bT\bC_2$ and $\bT\bC_3$ are transitive tournaments.
    Finally, the case $k\le 4$ and $n\ge 4$ follows from Lemma~\ref{lem:oriented-P4-subgraph-free}
    because each $\bT_n$ is an oriented graph that contains a transitive tournament
    three vertices whenever $n\ge 3$.
\end{proof}

\subsection{$\vec{\bP}_k$-free digraphs}

In this subsection we prove a structural result (Theorem~\ref{thm:TCn-min-obs})
asserting that there is a finite set of digraphs $\calF$ such that a $\vec{\bP}_3$-free
digraphs $\bD$ belongs to $\CSP(\bT\bC_n)$ if and only if $\bD$ is $\calF$-free. 
We then use this result to  propose a complexity classification for
these CSPs restricted to $\vec{\bP}_k$-free digraphs.

We begin by showing that there are finitely many minimal $\vec{\bP}_3$-free digraphs
that do not homomorphically map to $\CSP(\bT\bC_4)$. We depict the four non-isomorphic
tournaments on four vertices in Figure~\ref{fig:four-vertices}.

\begin{lemma}\label{lem:TC4-min-obs}
    The following statements are equivalent for a $\vec{\bP}_3$-free digraph $\bD$.
    \begin{itemize}
        \item $\bD\to \bT\bC_4$, and
        \item $\bD$ is a $\{\bT_4,\bT_4^a, \bT_4^b\}$-free loopless oriented graph with no tournament
        on five vertices.
    \end{itemize}
\end{lemma}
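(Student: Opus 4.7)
The forward implication is direct. Since $\bT\bC_4$ is loopless and oriented, any $\bD$ that maps to it inherits both properties. Because $\bT\bC_4$ has only four vertices, a tournament on five vertices would, under such a homomorphism, identify two adjacent vertices and so force a loop in $\bT\bC_4$; hence $\bD$ contains no $5$-vertex tournament. The same non-identifiability argument forces any homomorphism from a four-vertex tournament $\bT'$ into $\bT\bC_4$ to be a vertex bijection and therefore an isomorphism onto $\bT\bC_4$; so among the four pairwise non-isomorphic tournaments on four vertices, only $\bT\bC_4$ maps to $\bT\bC_4$, and $\bD$ must be $\{\bT_4, \bT_4^a, \bT_4^b\}$-free.

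For the backward implication, I would proceed by strong induction on $|D|$, handling each weakly connected component of $\bD$ separately. The main structural lever is that $\vec{\bP}_3$-freeness, applied at any vertex $v$, forces every in-neighbor of $v$ to be adjacent to every out-neighbor of $v$; combined with the no-$5$-tournament hypothesis this sharply restricts the shape of the closed neighborhood of each vertex, and forbids arbitrarily long ``mixed'' local structures.

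I would then split into two cases according to whether $\bD$ contains a $\bT\bC_4$ as an induced subgraph. If it does not, every induced tournament in $\bD$ has at most three vertices, and a short structural argument (leveraging $\vec{\bP}_3$-freeness to decompose $\bD$ into transitive triangles, directed $3$-cycles and controlled glueings of them, in the spirit of Remark~\ref{rmk:3-cycle}) should produce a homomorphism $\bD \to \bT_3$; since $\bT_3$ is a subtournament of $\bT\bC_4$, we are done. If $\bD$ does contain a $\bT\bC_4$ on vertices $\{a_1, a_2, a_3, a_4\}$, I would fix the identity map on this core, then show that every other vertex $u$ sits together with three core vertices in a four-vertex induced tournament, which by the forbidden-subgraph hypothesis is again $\bT\bC_4$ and hence uniquely determines the image of $u$. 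Global consistency of these forced images is then verified using $\vec{\bP}_3$-freeness and the absence of $5$-tournaments.

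The main obstacle will be this consistency step. When several copies of $\bT\bC_4$ in $\bD$ overlap, the images forced on a shared external vertex must agree, and edges between external vertices must be oriented compatibly with the resulting map. A clean resolution should come from showing that the local forcings around any single external vertex $v$ already pin down its image (up to automorphisms of $\bT\bC_4$ that preserve the induced sub-tournaments $v$ sees), and that these constraints cohere pairwise by another application of $\vec{\bP}_3$-freeness together with the ban on $5$-tournaments.
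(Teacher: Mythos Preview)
Your forward implication is fine and matches the paper. The backward direction has two genuine gaps.

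\textbf{First gap (no-$\bT\bC_4$ case).} You claim that when $\bD$ contains no induced $\bT\bC_4$, a decomposition into transitive triangles and directed $3$-cycles yields $\bD\to\bT_3$. But $\vec{\bC}_3\not\to\bT_3$, so this is false as soon as $\bD$ contains a directed $3$-cycle. The paper splits this case further: if $\bD$ has no $\vec{\bC}_3$ then $\vec{\bP}_4\not\to\bD$ and indeed $\bD\to\bT_3$; but if $\bD$ has a $\vec{\bC}_3$ on $d_1,d_2,d_3$, a separate argument (using $\vec{\bP}_3$-freeness and the absence of $4$-tournaments to show every out-neighbour of $d_2$ is an in-neighbour of $d_1$, etc.) proves that the whole component maps to $\vec{\bC}_3$, not to $\bT_3$. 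Your plan conflates these two sub-cases and picks the wrong target.

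\textbf{Second gap ($\bT\bC_4$ case).} Your key step is that every external vertex $u$ lies in a $4$-tournament with three of the core vertices $d_1,\dots,d_4$. This need not hold: a vertex in $N^+(d_1)\cap N^-(d_4)$ is forced (by $\vec{\bP}_3$-freeness, via the edge $(d_4,d_1)$) to be adjacent to $d_1$ and $d_4$, but nothing forces it to be adjacent to $d_2$ or $d_3$, so it may see only two core vertices. The paper does not try to extend the identity map vertex-by-vertex. Instead it proves a global partition: $X_1:=N^+(d_4)$, $X_4:=N^-(d_1)$, and $Y:=N^+(d_1)$, shows $Y=N^-(d_4)$ and that $(X_1,Y,X_4)$ exhausts $D$, that vertices in $X_1$ (resp.\ $X_4$) have the same neighbourhood as $d_1$ (resp.\ $d_4$), and that $\bD[Y]\to\bT\bT_2$; the homomorphism $X_i\mapsto i$ then falls out. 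Your ``uniquely determine the image of $u$ via the forced $\bT\bC_4$'' mechanism does not cover the vertices in $Y$, and your consistency step has no handle on them either.
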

\begin{proof}
    The first itemized statement clearly implies the second one.
    To prove the converse implication we proceed by case distinction over the
    largest tournament in $\bD$, and without loss of generality we assume
    that $\bD$ is a weakly connected digraph. 
    \begin{itemize}
        \item If $\bD$ is a $\vec{\bP}_3$-free loopless oriented graph with \textbf{no tournament
        on three vertices}, then $\vec{\bP}_3\not\to \bD$ so $\bD\to \bT_2$ (Observation~\ref{obs:P-TT}),
        and thus $\bD\to \bT\bC_4$.
        \item If $\bD$ contains \textbf{no tournament on four vertices and no ${\vec \bC_3}$}, 
        then there is no homomorphism $\vec{\bP}_4\to \bD$, so $\bD\to \bT_3$ (Observation~\ref{obs:P-TT}),
        which implies that $\bD\to \bT\bC_4$.
        \item Now, suppose that $\bD$ contains \textbf{no tournament on four vertices and a ${\vec \bC_3}$} with
        vertices $d_1,d_2,d_3$ and edges $(d_1,d_2),(d_2,d_3),(d_3,d_1)$.
        Notice that if $d$ is an out-neighbour of $d_2$, then $d$ is an in-neighbour of $d_1$.
        Indeed, since $\bD$ is $\vec{\bP}_3$-free, it must be the case that $d$ is a neighbour
        of $d_1$, and if $(d_1,d)\in E(\bD)$ it would also be the case that $d$ is a neighbour
        of $d_3$ contradicting the choice of $\bD$. Hence, every out-neighbour of $d_2$ is
        an in-neighbour of $d_1$, and symmetrically, every in-neighbour of $d_2$ is an out-neighbour
        of $d_3$. Therefore, the mapping $f$ defined by $d\mapsto 1$ if $d\in N^-(d_2)$, $d\mapsto 2$
        if $d\in N^-(d_3)$, and $d\mapsto 3$ if $d\in N^-(d_1)$ is a partial homomorphism from 
        $\bD$ to $\vec{\bC}_3$. The fact that $f$ is a homomorphism also follows from the previous
        observation and the assumption that $\bD$ is weakly connected.
        \item Finally, suppose that $\bD$ \textbf{contains a tournament on four vertices} $d_1,d_2,d_3,d_4$.
        Since $\bD$ is $\{\bT_4,\bT_4^a, \bT_4^b\}$-free, we assume without loss of generality that the mapping $i\mapsto d_i$
        is an embedding of $\bT\bC_4$ into $\bD$. Let $d$ be an out-neighbour of $d_4$. Since
        $(d_2,d_4),(d_3,d_4)\in E(\bD)$, it follows that $d$ is a neighbour of $d_2$ and of $d_3$.
        Notice that if $(d_2,d)\in E(\bD)$ or $(d_3,d)\in E(\bD)$, then the $\vec{\bP}_3$-freeness of
        $\bD$ implies that $d$ is also a neighbour of $d_1$ contradicting the assumption that
        $\bD$ contains no tournament on five vertices. Hence, every out-neighbour of $d$ is
        an in-neighbour of $d_2$ and of $d_3$, and it is not adjacent to $d_1$. With symmetric arguments
        it follows that every in-neighbour of $d_1$ is an out-neighbour of $d_2$ and of $d_3$, and
        it is not adjacent to $d_4$. We define $X_1:=N^+(d_4)$, $X_4:=N^-(d_1)$, and $Y:= N^+(d_1)$.
        We now claim that $Y = N^+(d_1) = N^-(d_4)$ and that $(X_1,Y,X_4)$ is a partition of the vertex set $D$.
        Again, using the fact that $\bD$ is $\vec{\bP}_3$-free one can notice that every $d\in N^+(d_1)$
        is a neighbour of $d_4$ (because $(d_4,d_1)\in E(\bD)$), and since every out-neighbour of
        $d_4$ is not adjacent to $d_1$ (by the previous arguments), it follows that $d\in  N^-(d_4)$. 
        Hence, $N^+(d_1)\subseteq N^-(d_4)$, and with symmetric arguments we conclude that
        $N^+(d_1) = N^-(d_4)$. Since $\bD$ is an oriented graph, the sets $X_1$,  $N^+(d_1)$, and
        $X_4$ are disjoint, and -- anticipating a contradiction -- suppose there is a vertex $d\in D$
        not adjacent to $d_1$ nor $d_4$. \blue{Moreover,} since $\bD$ is weakly connected, \blue{we
        choose $d$ so that it} does not belong to $X_1\cup N^+(d_1) \cup X_4$ \blue{and it}
        is adjacent to some vertex $c\in X_1\cup N^+(d_1) \cup X_4$. To arrive to a contradiction
        it suffices to notice that $d$ is a neighbour of $d_1$ or of $d_4$ --- this implies
        that $d\in N^+(d_1) \cup N^-(d_1) \cup N^+(d_4)\cup N^-(d_4) = X_1\cup N^+(d_1)\cup X_4$.
        To do so one can proceed with a case distinction over which set $c$ belongs to (from $X_1$, 
        $N^+(d_1)$, or \blue{$X_4)$}),
        and whether $(d,c)\in E(\bD)$ or $(c,d)\in E(\bD)$. We give a sketch and leave the missing
        details to the reader: 
        in case of $c\in N^+(d_1)$, $d$ ends up adjacent to $d_4$ if $(d,c)\in E(\bD)$,
        and adjacent to $d_1$ if $(c,d)\in E(\bD)$; in case of $c\in X_1$, $d$ ends up adjacent to $d_4$ 
        if $(c,d)\in E(\bD)$, and adjacent to some $c'\in N^+(d_1)$, if $(c,d)\in E(\bD)$, hence we fall in the first case;
        finally, the case when $c\in X_4$ follows with symmetric arguments to the case of $c\in X_1$.
        
        Now, notice that $N^+(d_4)$ cannot contain three vertices
        $c_1,c_2,c_3$ that induces either a directed three cycle, or a transitive tournament;
        otherwise, $d_1,c_1,c_2,c_3$ induce a tournament on four vertices not isomorphic to $\bT\bC_4$
        contradicting the choice of $\bD$. From this observation together with the fact that
        $\bD$ is $\vec{\bP}_3$-free we conclude that the subdigraph of $\bD$ induced by
        $N^+(d_1)$ does not contain an oriented walk on three vertices. Hence, by Observation~\ref{obs:P-TT},
        there \blue{is a homomorphism $h\colon \bD[N^+(d_1)]\to \bT\bT_2$ from the subdigraph of $\bD$
        induced by $N^+(d_1)$ to the transitive tournament on two vertices, i.e., there}\todo{added a sentence here}
        is a partition $(X_2,X_3)$ of $N^+(d_1)$ such that for $c,d\in X_2,X_3$ if
        $(c,d)\in E(\bD)$, then $c\in X_2$ and $d\in X_3$. We finally conclude that
        the function $f$ defined by $d\mapsto i$ if $d\in X_i$ for $i\in\{1,2,3,4\}$ defines
        a homomorphism $f\colon \bD\to \bT\bC_4$.
    \end{itemize}    
\end{proof}

\begin{figure}[ht!]
\centering
\begin{tikzpicture}

  \begin{scope}[xshift = -4.5cm, scale = 0.8]
    \node [vertex] (1) at (-1,2) {};
    \node [vertex] (2) at (1,2) {};
    \node [vertex] (4) at (-1,0) {};
    \node [vertex] (3) at (1,0) {};
    \node (L1) at (0,-1) {$\bT_4$};
      
    \foreach \from/\to in {1/2, 1/3, 1/4, 2/3, 2/4, 3/4}     
    \draw [arc] (\from) to (\to);
  \end{scope}

  \begin{scope}[xshift = -1.5cm, scale = 0.8]
    \node [vertex] (1) at (-1,2) {};
    \node [vertex] (2) at (1,2) {};
    \node [vertex] (4) at (-1,0) {};
    \node [vertex] (3) at (1,0) {};
    \node (L1) at (0,-1) {$\bT\bC_4$};
      
    \foreach \from/\to in {1/2, 1/3, 4/1, 2/3, 2/4, 3/4}     
    \draw [arc] (\from) to (\to);
  \end{scope}

  \begin{scope}[xshift = 1.5cm, scale = 0.8]
    \node [vertex] (1) at (-1,2) {};
    \node [vertex] (2) at (1,2) {};
    \node [vertex] (4) at (-1,0) {};
    \node [vertex] (3) at (1,0) {};
    \node (L1) at (0,-1) {$\bT_4^a$};
    
    \foreach \from/\to in {1/2, 1/3, 1/4, 2/3, 4/2, 3/4}     
    \draw [arc] (\from) to (\to);
  \end{scope}

  \begin{scope}[xshift = 4.5cm, scale = 0.8]
    \node [vertex] (1) at (-1,2) {};
    \node [vertex] (2) at (1,2) {};
    \node [vertex] (4) at (-1,0) {};
    \node [vertex] (3) at (1,0) {};
    \node (L1) at (0,-1) {$\bT_4^b$};
      
    \foreach \from/\to in {2/1/, 3/1, 4/1, 2/3, 4/2, 3/4}     
    \draw [arc] (\from) to (\to);
  \end{scope}

\end{tikzpicture}
\caption{The four non-isomorphic oriented tournaments on $4$ vertices}
\label{fig:four-vertices}
\end{figure}
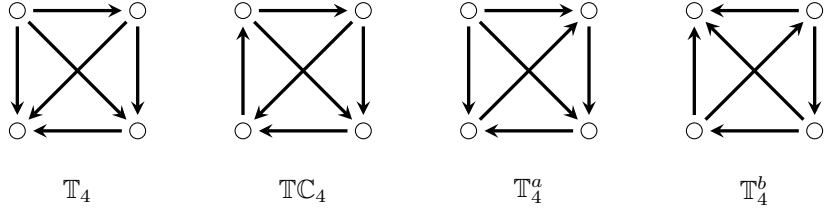

\blue{
For the remaining of this section, for each positive integer $n\ge 4$ we fix the set
$\calF_n$ to be the (finite) set of tournaments on at most $n+1$ vertices that do not embed into
$\bT\bC_n$ (up to isomorphism). }

\begin{remark}\label{rmk:Fn}
    Since every (induced subgraph) tournament on $n-1$ vertices in $\bT\bC_n$ is either $\bT\bT_{n-1}$ or $\bT\bC_{n-1}$,
    the following equality holds
    \[
    \calF_{n-1}\setminus \calF_n = \{\bT\bT_{n-1},\bT\bC_n\}.
    \]
\end{remark}

Building on Lemma~\ref{lem:TC4-min-obs} we prove the following statement.

\begin{lemma}\label{lem:TCn-min-obs}
\blue{For every  $\vec{\bP}_3$-free digraph $\bD$ the following statements are equivalent
\begin{itemize}
    \item $\bD\to \bT\bC_n$, and
    \item $\bD$ is a $\calF_n$-free loopless oriented graph.
\end{itemize}}
\end{lemma}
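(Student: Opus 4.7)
The proof proceeds by induction on $n$, with the base case $n = 4$ supplied by Lemma~\ref{lem:TC4-min-obs}. For the easy direction, given a homomorphism $f \colon \bD \to \bT\bC_n$ and a tournament subgraph $\bT \subseteq \bD$, distinct vertices of $\bT$ are adjacent and so cannot share an image (as $\bT\bC_n$ is loopless); hence $f$ restricts to an embedding $\bT \hookrightarrow \bT\bC_n$, so $\bT \notin \calF_n$, and $\bD$ is loopless and oriented since $\bT\bC_n$ is.

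For the converse, assume $\bD$ is $\vec{\bP}_3$-free, $\calF_n$-free, loopless, and oriented; handling weakly connected components separately, assume $\bD$ is weakly connected. If $\bD$ is additionally $\{\bT\bT_{n-1}, \bT\bC_n\}$-subgraph-free, then by Remark~\ref{rmk:Fn}, $\bD$ is $\calF_{n-1}$-free, so the inductive hypothesis gives $\bD \to \bT\bC_{n-1} \hookrightarrow \bT\bC_n$ (deleting an interior vertex of the Hamilton cycle embeds $\bT\bC_{n-1}$ into $\bT\bC_n$). If $\bD$ contains a copy $W$ of $\bT\bT_{n-1}$, then $\vec{\bP}_3$-freeness, propagated along the Hamilton path of $W$, forces any $v \notin W$ adjacent to some vertex of $W$ to be adjacent to every vertex of $W$; the resulting $n$-vertex tournament on $W \cup \{v\}$ must then be $\bT\bC_n$ by $\calF_n$-freeness (the unique $n$-vertex tournament not in $\calF_n$). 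So either $\bD = \bT\bT_{n-1}$ (and $\bD$ embeds into $\bT\bC_n$) or $\bD$ contains $\bT\bC_n$.

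So I may assume $\bD$ contains $\bT\bC_n$ on vertices $d_1, \ldots, d_n$ labeled canonically. For each $v \in V(\bD)$, set $I(v) = \{j : v \text{ is adjacent to } d_j\}$ and define
\[
T(v) = \{l \in [n] : \forall j \in I(v),\; (v \to d_j) \Leftrightarrow (d_l \to d_j)\}.
\]
I would construct $f \colon \bD \to \bT\bC_n$ by selecting $f(v) \in T(v)$ with $f(d_i) = i$, after establishing two claims: \textit{(i)} $T(v) \neq \varnothing$ for every $v$; \textit{(ii)} the selections can be made so that $f$ preserves edges. For \textit{(i)}, if $I(v) = \varnothing$ then $T(v) = [n]$; otherwise, $\calF_n$-freeness applied to the tournament on $\{v\} \cup \{d_j : j \in I(v)\}$ gives an embedding into $\bT\bC_n$, and the $\vec{\bP}_3$-constraints on $v$'s orientation profile (which link $v$'s in- and out-neighbours in $I(v)$ with the Hamilton cycle of $\bT\bC_n$) force this embedding to send each $d_j$ to position $j$, yielding $l \in T(v)$ for $v$. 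For \textit{(ii)}, the crucial observation is that no edge $u \to v$ with $f(u) = f(v) = i$ can arise: if both $u, v \in \bar d_i := \{w : i \in T(w)\}$ with $u, v \neq d_i$, the induced tournament on $\{u, v\} \cup \{d_l : l \neq i\}$ has $n + 1$ vertices, violating $\calF_n$-freeness. When $f(u) = i \neq j = f(v)$, a short $\vec{\bP}_3$-argument through a vertex of $I(u) \cap I(v)$ (or directly through $d_i$, noting $i \in I(v)$ and $v \not\to d_i$ matching $d_j \not\to d_i$) forces $d_i \to d_j$ in $\bT\bC_n$.

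The \textbf{main obstacle} is claim \textit{(i)}: producing a \emph{label-preserving} embedding of the tournament on $\{v\} \cup \{d_j : j \in I(v)\}$ into $\bT\bC_n$. Small sub-tournaments of $\bT\bC_n$ admit many non-canonical embeddings (for instance, every directed edge hosts a copy of $\vec{\bP}_2$), so one must combine $\vec{\bP}_3$-propagation (which constrains $v$'s orientation pattern in terms of the Hamiltonian-cycle structure of $\bT\bC_n$) with $\calF_n$-freeness (which rules out pathological adjacency sets $I(v)$) to pin down a consistent label for $v$ relative to the fixed $d_j$'s.
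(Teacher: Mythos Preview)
Your induction setup, the base case, and the reduction via Remark~\ref{rmk:Fn} are correct and match the paper. However, your treatment of the case ``$\bD$ contains a copy $W$ of $\bT\bT_{n-1}$'' has a genuine error: the claim that any $v \notin W$ adjacent to some vertex of $W$ must be adjacent to \emph{every} vertex of $W$ is false. Take $W = \{w_1,\ldots,w_{n-1}\}$ with $w_i \to w_j$ for $i<j$, and adjoin a single vertex $v$ with the sole edge $w_1 \to v$. This $\bD$ is $\vec{\bP}_3$-free (the only potential induced $\vec{\bP}_3$ through $v$ would require an in-neighbour of $w_1$, and there is none), it is $\calF_n$-free (every tournament in $\bD$ sits inside $\bT\bT_{n-1}$), it is weakly connected and strictly larger than $W$, yet $v$ is adjacent only to $w_1$ and $\bD$ contains no $\bT\bC_n$. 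The point is that $\vec{\bP}_3$-propagation is one-directional: from $v \to w_i$ you reach $w_j$ for $j>i$, and from $w_i \to v$ you reach $w_j$ for $j<i$, but you cannot cross an endpoint of the Hamilton path.

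The paper handles this case by a different argument. Assuming $\bD$ contains $\bT\bT_{n-1}$ but no $\bT\bC_n$ (hence no tournament on $n$ vertices at all), it shows $\bD$ is \emph{acyclic}: a hypothetical directed $3$-cycle is propagated along an oriented path until it meets $W$, where it forces an $n$-vertex tournament. Once $\bD$ is acyclic, $\vec{\bP}_3$-free, and contains no $n$-tournament, one obtains $\vec{\bP}_n \not\to \bD$ and hence $\bD \to \bT\bT_{n-1}$ by duality (Observation~\ref{obs:P-TT}). Your treatment of the ``$\bD$ contains $\bT\bC_n$'' case is also only a sketch with an acknowledged obstacle (your claim \textit{(i)}); the paper instead gives a concrete partition $D = X_1 \cup Y \cup X_n$ with $X_1 = N^+(d_n)$, $X_n = N^-(d_1)$, $Y = N^+(d_1) = N^-(d_n)$, and shows directly (mirroring the $n=4$ argument) that $\bD[Y] \to \bT\bT_{n-2}$, which yields the desired homomorphism without needing label-preserving embeddings.
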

\begin{proof}
    \blue{The first item clearly implies the second one. We prove the converse implication by induction
    over $n$, and the base case $n = 4$ follows from Lemma~\ref{lem:TC4-min-obs}.
    Let $\bD$ be a $\vec{\bP}_3$-free digraph which is also $\calF_n$-free
    for some positive integer $n \ge 5$, and without loss of generality assume that $\bD$ is weakly
    connected. Clearly, if $\bD$ is also 
    $\calF_{n-1}$-free, then, by induction, we know that $\bD\to \bT\bC_{n-1}$
    and so, $\bD\to  \bT\bC_n$. Now assume that $\bD$ is not $\calF_{n-1}$-free,
    and let $\bF\in \calF_{n-1}$ be a witness of this fact. Since $\bD$ is $\calF_n$-free,
    it follows from Remark~\ref{rmk:Fn} that either $\bF \cong \bT\bT_{n-1}$ or 
    $\bF\cong \bT\bC_n$. We conclude the proof by distinguishing whether $\bD$ contains
    an induced copy of $\bT\bC_n$.
    \begin{itemize}
        \item Suppose that $\bD$ \textbf{does not contain $\bT\bC_n$} as an induced subgraph. 
        In this case $\bD$ contains a transitive tournament on $n-1$ vertices (as witnessed
        by $\bF$), and $\bD$ contains no tournament on $n$ vertices (because $\bD$ is
        $\calF_n$-free and does not contain $\bT\bC_n$).
        We now show that $\bD$ is an acyclic digraph. Proceeding by contradiction
        assume that $\bD$ contains a directed cycle, and using the fact that $\bD$ is $\vec{\bP}_3$-free
        notice that the shortest directed cycle in $\bD$ is a $3$-cycle $c_1c_2c_3$. 
        Let $f_1,\dots, f_{n-1}$ be the vertices of $\bF$ and assume that $(f_i,f_j)\in E(\bD)$ if
        and only if $i < j$. We argue that there is a directed $3$-cycle $\bC$ whose vertex set
        intersects the vertex set of $\bF$. If $\{c_1,c_2,c_3\}\cap \{f_1, \dots, f_{n-1}\}\neq \varnothing$
        there is nothing left to prove. Since $\bD$ is weakly connected, there is an oriented
        $c_1f_i$-path for some $i\in[n-1]$. Consider the shortest oriented $c_1f_i$-path $d_1 = c_1,d_2\dots, d_k = f_i$
        and without loss of generality assume that $d_2 \not\in\{c_2,c_3\}$. We argue by finite
        induction that every vertex of this path belongs to a directed $3$-cycle, and thus, $f_i$ belongs to 
        a directed $3$-cycle. Suppose that there are vertices $u,v\in D$ such that $(d_i,u),(u,v),(v,d_i)\in E(\bD)$.
        If $(d_i,d_{i+1})\in E(\bD)$, then $v = d_{i+1}$ or $v$ and $d_{i+1}$ must be adjacent since $\bD$ is
        $\vec{\bP}_3$-free.  If $v = d_{i+1}$ or $(d_{i+1}, v)\in E(\bD)$, then $d_i,d_{i+1}, v$ witness that
        $d_{i+1}$ belongs to a directed $3$-cycle. Otherwise, $(v,d_{i+1})\in E(\bD)$, and since $\bD$ is
        $\calF_n$-free and $\bT_4^b\in\calF_n$ there is no edge $(u,d_{i+1})$ in $\bD$. This implies
        that either $d_{i+1} = u$ or $(d_{i+1},u)\in E(\bD)$ (because $\bD$ is $\vec{\bP}_3$-free), and
        in both cases $d_{i+1}$ belongs to a directed $3$-cycle. The case when $(d_{i+1},d_i)\in E(\bD)$ follows
        with symmetric arguments. We thus conclude by finite induction that there is some vertex $f_i \in F$
        that belongs to a directed $3$-cycle. With similar arguments as above, one can use such a directed $3$-cycle
        together with $\bF$ and the fact that
        $\bD$ is $\vec{\bP}_3$-free to find a tournament on $n$ vertices in $\bD$. This contradicts
        the assumption that $\bD$ is a $\calF_n$-free oriented graph that does not contain $\bT\bC_n$. 
        Finally, using the observation that $\bD$ is a $\vec{\bP}_3$-free acyclic digraph with no tournament on $n$ vertices
        we conclude that there is no homomorphism $f\colon \vec{\bP}_n\to \bD$. Hence, $\bD\to \bT\bT_{n-1}$
        (Observation~\ref{obs:P-TT}), and thus $\bD\to \bT\bC_n$.
        \item Suppose that $\bD$ \textbf{contains a copy of $\bT\bC_n$}, and let $d_1,\dots, d_n$ be vertices
        such that $i\mapsto d_i$ defines an embedding of $\bT\bC_n$ into $\bD$. Let $X_1:= N^+(d_n)$, $X_n:=N^-(d_1)$,
        and $Y := N^+(d_1)$. With analogous arguments as in the fourth itemized case of the proof of
        Lemma~\ref{lem:TC4-min-obs} it follows that $Y = N^+(d_1) = N^-(d_n)$, that every vertex in $X_1$ (resp.\ in $X_n$)
        has the same in- and out-neighbourhood as $d_1$ (resp.\ as $d_n$), that $(X_1,Y,X_2)$ is a partition
        of the vertex set of $\bD$, and that the subdigraph of $\bD$ induced by $Y$ homomorphically maps to
        $\bT\bT_{n-2}$. Hence, by homomorphically mapping the subdigraph $\bD[Y]$ of $\bD$ to the subdigraph
        $\bT\bC_n[\{2,\dots, n-1\}]$ of $\bD$, all vertices in $X_1$ to $1$, and all vertices in $X_n$ to $n$, 
        we obtain a homomorphism $f\colon \bD\to \bT\bC_n$. 
    \end{itemize}
    }
\end{proof}

Lemma~\ref{lem:TCn-min-obs}  implies that for every positive integer $n$ there is a polynomial-time
algorithm that solves $\CSP(\bT\bC_n)$ when the input is restricted to $\vec{\bP}_3$-free digraphs. 
Hence, the following classification follows from this lemma and from Lemma~\ref{lem:TCkxTTk}.

\begin{theorem}\label{thm:TC-Pk-free-classification}
    For every positive integer $k$ and $n\ge 4$ one of the following holds
    \begin{itemize}
        \item $k \le 3$ and in this case $\CSP(\bT\bC_n)$ is tractable
        where the input is restricted to $\vec{\bP}_k$-free digraphs, or
        \item $k \ge 4$ and in this case $\CSP(\bT\bC_n)$ is  $\NP$-complete
        where the input is restricted to $\vec{\bP}_k$-free digraphs.
    \end{itemize}
\end{theorem}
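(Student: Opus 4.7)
The plan is to combine Lemmas~\ref{lem:TCkxTTk} and~\ref{lem:TCn-min-obs} with a simple containment argument among the classes of $\vec{\bP}_k$-free digraphs. First I would observe that the class of $\vec{\bP}_4$-free digraphs is contained in the class of $\vec{\bP}_k$-free digraphs for every $k \ge 4$: an induced copy of $\vec{\bP}_k$ on vertices $v_1, \ldots, v_k$ in a digraph $\bD$ restricts, on its first four vertices, to an induced copy of $\vec{\bP}_4$ in $\bD$ (since the induced subdigraph on $\{v_1,\ldots,v_k\}$ is exactly $\vec{\bP}_k$, no extra edges appear among $\{v_1,v_2,v_3,v_4\}$). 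Dually, I would note that $\vec{\bP}_2$-free implies $\vec{\bP}_3$-free: if every edge of $\bD$ is part of a symmetric pair (the content of $\vec{\bP}_2$-freeness), then $\bD$ cannot contain an induced $\vec{\bP}_3$, as the latter contains an unaccompanied directed edge.

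With these containments in hand, the NP-completeness part ($k \ge 4$) follows directly from Lemma~\ref{lem:TCkxTTk}: that lemma exhibits NP-hardness of $\CSP(\bT\bC_n)$ on the $\vec{\bP}_4$-free class (for $n \ge 4$), and hence on the larger $\vec{\bP}_k$-free class for every $k \ge 4$; NP-membership is immediate as $\CSP(\bT\bC_n)$ itself lies in NP (guess and verify a homomorphism).

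For the tractability part ($k \le 3$), the case $k = 1$ is trivial (only the empty digraph is $\vec{\bP}_1$-free, and it maps to $\bT\bC_n$), and the case $k = 2$ reduces to $k = 3$ via the containment above. It therefore suffices to give a polynomial-time algorithm for $k = 3$. By Lemma~\ref{lem:TCn-min-obs}, a $\vec{\bP}_3$-free digraph $\bD$ satisfies $\bD \to \bT\bC_n$ if and only if $\bD$ is a loopless oriented graph that is $\calF_n$-free. Since $\calF_n$ is a finite set whose members have at most $n+1$ vertices, checking $\calF_n$-freeness requires only brute-force enumeration of vertex subsets of size at most $n+1$, taking time $O(|D|^{n+1})$ for fixed $n$; looplessness and orientedness are checked in quadratic time.

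The main obstacle has already been absorbed into the preceding lemmas: Lemma~\ref{lem:TCkxTTk} produces a gadget reduction yielding instances simultaneously $\vec{\bP}_4$-free and $\vec{\bP}_5$-subgraph-free, while Lemma~\ref{lem:TCn-min-obs} packages the inductive structural analysis of $\vec{\bP}_3$-free digraphs mapping to $\bT\bC_n$ into a finite forbidden-induced-tournament characterization. Given these, the proof of Theorem~\ref{thm:TC-Pk-free-classification} is essentially a bookkeeping exercise over the values of $k$, requiring no additional combinatorial input beyond the two monotonicity observations at the start of the plan.
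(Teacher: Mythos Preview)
Your proposal is correct and follows essentially the same approach as the paper, which simply states that the classification follows from Lemma~\ref{lem:TCn-min-obs} (for the tractable side) and Lemma~\ref{lem:TCkxTTk} (for the hard side). You have made explicit the containment arguments among $\vec{\bP}_k$-free classes and the fact that the finite forbidden-tournament characterization of Lemma~\ref{lem:TCn-min-obs} yields a polynomial-time algorithm, both of which the paper leaves implicit.
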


In the rest of this section we improve Lemma~\ref{lem:TCn-min-obs} by listing the (finitely many)
\emph{minimal} $\vec{\bP}_3$-free obstructions to $\CSP(\bT\bC_n)$. To do so, we introduce two
new tournaments on five vertices depicted in Figure~\ref{fig:tournaments-5-vertices}, and
we prove the following lemma.

\begin{lemma}\label{lem:2-triangles}
    Let $\bD$ be a $\{\bT_4^a, \bT_4^b\}$-free tournament. If $\bD$ contains a pair of directed triangles
    with no common edge, then $\bD$ contains a subtournament $\bD'$ on five vertices that
    also contains a pair of directed triangles with no common edge. 
\end{lemma}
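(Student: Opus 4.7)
The plan is to split into two cases depending on how many vertices the two directed triangles $C_1, C_2$ share. My first observation is that in a tournament any two vertices are joined by a single directed edge, so if $C_1$ and $C_2$ shared two vertices $x, y$, they would both contain the (unique) edge between $x$ and $y$, contradicting the hypothesis. Hence they share at most one vertex. If they share exactly one vertex, then $V(C_1) \cup V(C_2)$ already has exactly five elements, and I will simply take $\bD' := \bD[V(C_1) \cup V(C_2)]$: this is a five-vertex subtournament containing $C_1$ and $C_2$, which are edge-disjoint by hypothesis.

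For the main case, suppose $C_1 = u_1 u_2 u_3$ and $C_2 = v_1 v_2 v_3$ are vertex-disjoint. The key step will be to show that for every $v_k \in V(C_2)$ the induced subtournament $\bD[\{u_1, u_2, u_3, v_k\}]$ is isomorphic to $\bT\bC_4$. Since there are exactly four tournaments on four vertices up to isomorphism (see Figure~\ref{fig:four-vertices}) and $\bD$ is $\{\bT_4^a, \bT_4^b\}$-free, this four-vertex tournament must be $\bT_4$ or $\bT\bC_4$; but it contains the directed triangle $C_1$, ruling out the transitive tournament $\bT_4$. A direct inspection of $\bT\bC_4$ shows that it contains exactly two directed triangles, and they share a common edge; since one of them must be $C_1$, the other provides a directed triangle $T_k$ on $v_k$ together with two vertices of $C_1$.

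To finish I will fix any one vertex of $C_2$, say $v_1$, take the associated triangle $T_1 = u_a u_b v_1$ produced by the claim above, and set $\bD' := \bD[\{u_a, u_b, v_1, v_2, v_3\}]$. This is a five-vertex subtournament containing both $C_2$ (all three $v$'s are present) and $T_1$. Edge-disjointness will then be automatic: $C_2$ uses only edges among $\{v_1, v_2, v_3\}$, while $T_1$ uses the edge $u_a u_b$ together with the two edges between $v_1$ and $\{u_a, u_b\}$, none of which lies inside $\{v_1, v_2, v_3\}$. The main (and essentially only) obstacle is the classification step, which relies crucially on the $\{\bT_4^a, \bT_4^b\}$-freeness to force the right four-vertex structure; once that is in place, the rest is bookkeeping on the bipartite structure of edges between $V(C_1)$ and $V(C_2)$.
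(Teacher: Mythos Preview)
Your proof is correct and takes a somewhat different route from the paper's in the vertex-disjoint case. The paper argues directly from $\{\bT_4^a,\bT_4^b\}$-freeness that each $d_i$ has both an in-neighbour and an out-neighbour in $\{c_1,c_2,c_3\}$, then fixes a specific orientation pattern for the edges between $d_1$ and $C_1$ and performs a short case split on the direction of the edge between $d_2$ and $c_3$ (invoking $\bT_4^a$-freeness once more in the second case) to locate the desired five-vertex subtournament. Your argument instead observes that the four-vertex subtournament on $V(C_1)\cup\{v_k\}$ must be isomorphic to $\bT\bC_4$, reads off the second directed triangle of $\bT\bC_4$ through $v_k$, and pairs it with $C_2$. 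This is cleaner: it replaces the explicit edge-orientation case analysis by a single appeal to the four-vertex tournament classification and the structure of $\bT\bC_4$, and the edge-disjointness of $T_1$ and $C_2$ is then immediate since they share only the single vertex $v_1$. The paper's approach has the minor advantage of being entirely self-contained without needing to recall the triangle structure of $\bT\bC_4$, but both arguments are short and your version arguably exposes more clearly why the hypothesis is exactly what is needed.
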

\begin{proof}
    Let $c_1c_2c_3$ and $d_1d_2d_3$ be a pair of directed $3$-cycles in $\bD$ without a common
    edge. If $\{c_1,c_2,c_3\}\cap \{d_1,d_2,d_3\}\neq \varnothing$ there is nothing left to prove,
    so suppose that $|\{c_1,c_2,c_3,d_1,d_2,d_3\}| = 6$. Now notice that since $\bD$ is $\{\bT_4^a,\bT_4^b\}$-free,
    $c_i$ has at least one out-neighbour and one in-neighbour in $\{d_1,d_2,d_3\}$, and symmetrically
    each $d_i$ has at least one out-neighbour and one in-neighbour in $\{c_1,c_2,c_3\}$. Without loss
    of generality assume that $(d_1,c_1),(d_1,c_2),(c_3,d_1)\in E(\bD)$. So, if $(d_2,c_3)\in E(\bD)$,
    then $d_1d_2c_3$ and $c_1c_2c_3$ are a pair of directed $3$-cycles with no common edge and spanning 
    five vertices. Otherwise, $(c_3,d_2)\in E(\bD)$, and since $(c_3,d_1)\in E(\bD)$ and $\bD$
    is $\bT_4^a$-free, it must be the case that $(d_3,c_3)$ is an edge of $\bD$. In this case
    $c_3d_2d_3$ and $c_1c_2c_3$ are a pair of directed $3$-cycles in $\bD$ with no common edge
    and induction a tournament on five vertices. The claim of the lemma follows. 
\end{proof}

\begin{figure}[ht!]
\centering
\begin{tikzpicture}

  \begin{scope}[xshift = -2.5cm, scale = 0.8]
    \node [vertex] (1) at (90:2) {};
    \node [vertex] (2) at (18:2) {};
    \node [vertex] (3) at (306:2) {};
    \node [vertex] (4) at (234:2) {};
    \node [vertex] (5) at (162:2) {};
    \node (L1) at (0,-2.25) {$\bR\bT_5$};
      
    \foreach \from/\to in {1/2, 2/3, 3/4, 4/5, 5/1, 3/1, 1/4, 4/2, 2/5, 5/3}     
    \draw [arc] (\from) to (\to);
  \end{scope}

  \begin{scope}[xshift = 2.5cm, scale = 0.8]
    \node [vertex] (1) at (90:2) {};
    \node [vertex] (2) at (18:2) {};
    \node [vertex] (3) at (306:2) {};
    \node [vertex] (4) at (234:2) {};
    \node [vertex] (5) at (162:2) {};
    \node (L1) at (0,-2.25) {$\bT_5^c$};
      
    \foreach \from/\to in {1/2, 2/3, 3/4, 1/3, 1/4, 2/4, 5/1, 5/2, 3/5, 4/5}     
    \draw [arc] (\from) to (\to);
  \end{scope}

\end{tikzpicture}
\caption{The unique $\{\bT_4^a,\bT_4^b\}$-free tournaments on five vertices that contain a pair
of directed triangles with no a common edge (up to isomorphism).}
\label{fig:tournaments-5-vertices}
\end{figure}
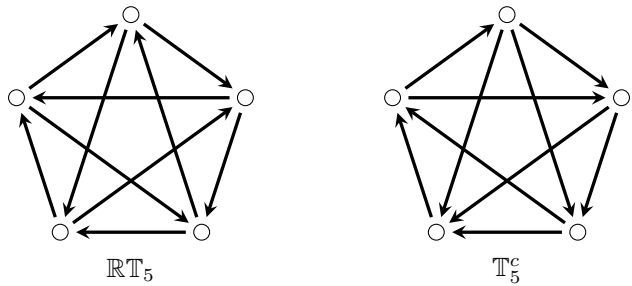

\begin{theorem}\label{thm:TCn-min-obs}
    The following statements are equivalent for 
    every positive integer $n\ge 4$ and every $\vec{\bP}_3$-free digraph
    $\bD$.
    \begin{itemize}
        \item $\bD \to \bT\bC_n$, and
        \item $\bD$ is a $\{\bT_4^a,\bT_4^b, \bR\bT_5, \bT_5^c, \bT\bT_n,\bT\bC_{n+1}\}$-free
        loopless oriented graph.
    \end{itemize}
\end{theorem}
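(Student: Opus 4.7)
The plan is to leverage Lemma~\ref{lem:TCn-min-obs}, which already tells us that, among $\vec{\bP}_3$-free digraphs $\bD$, the condition $\bD\to\bT\bC_n$ is equivalent to $\bD$ being a $\calF_n$-free loopless oriented graph. It thus suffices to show that, within the class of $\vec{\bP}_3$-free loopless oriented graphs, being $\calF_n$-free coincides with being $\{\bT_4^a,\bT_4^b,\bR\bT_5,\bT_5^c,\bT\bT_n,\bT\bC_{n+1}\}$-free. The forward direction is a direct verification using the fact that every sub-tournament of $\bT\bC_n$ is either $\bT\bT_k$ for some $k\le n-1$ (obtained by dropping either endpoint of the reversed edge) or $\bT\bC_k$ for some $k\le n$ (obtained by keeping both); none of the six listed tournaments has either form, so each lies in $\calF_n$.

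For the converse, I will show that any $\bF\in\calF_n$ contains one of the six listed tournaments as a sub-tournament. If $\bF$ contains $\bT_4^a$ or $\bT_4^b$, we are done, so assume $\bF$ is $\{\bT_4^a,\bT_4^b\}$-free. If $\bF$ is transitive, then $\bF\cong\bT\bT_k$ for some $k\le n+1$; since $\bT\bT_{n-1}$ embeds into $\bT\bC_n$, the membership $\bF\in\calF_n$ forces $k\ge n$, and so $\bF$ contains $\bT\bT_n$. If instead $\bF$ has a pair of directed triangles sharing no edge, Lemma~\ref{lem:2-triangles} produces a $\{\bT_4^a,\bT_4^b\}$-free sub-tournament on five vertices containing two edge-disjoint triangles, which, by the statement in the caption of Figure~\ref{fig:tournaments-5-vertices}, is $\bR\bT_5$ or $\bT_5^c$.

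The main remaining case is when $\bF$ is $\{\bT_4^a,\bT_4^b\}$-free, contains at least one directed triangle, and every pair of directed triangles shares an edge. I will argue that $\bF\cong\bT\bC_k$ for some $k\ge 4$; combined with $|V(\bF)|\le n+1$ and the fact that $\bT\bC_k$ embeds into $\bT\bC_n$ for $k\le n$, this forces $\bF\cong\bT\bC_{n+1}$. First, $\bF$ must have at least two triangles, because a single directed triangle together with any fourth vertex creates a $4$-sub-tournament which, being $\{\bT_4^a,\bT_4^b\}$-free, is either $\bT\bT_4$ (ruled out since a triangle is present) or $\bT\bC_4$ (which produces a second triangle). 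Fixing two triangles $T_1=\{a,b,c\}$ and $T_2=\{a,b,d\}$ sharing an edge, which by relabelling we may take to be $(a,b)$, a four-vertex case check shows that $\{a,b,c,d\}$ induces a $\bT\bC_4$. Exhaustive inspection of the $4$-sub-tournaments of the form $\{a,b,c,v\}$ then shows that every vertex $v\in V(\bF)\setminus\{a,b\}$ must satisfy $(b,v),(v,a)\in E(\bF)$, since the remaining orientations either create a forbidden $\bT_4^a$ or $\bT_4^b$, or produce a new directed triangle avoiding $(a,b)$, contradicting the hypotheses. Setting $A:=\{v : (b,v),(v,a)\in E(\bF)\}$, the sub-tournament $\bF[A]$ must be transitive, because any directed triangle inside $A$ would again avoid the edge $(a,b)$. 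Mapping $a$ and $b$ to the endpoints of the reversed edge and $A$ to the transitive spine then yields $\bF\cong\bT\bC_{|A|+2}$. I expect the bookkeeping in the exhaustive $4$-vertex sub-tournament analysis to be the main technical obstacle.
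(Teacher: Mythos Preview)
Your proposal is correct and follows essentially the same route as the paper: reduce to Lemma~\ref{lem:TCn-min-obs}, then show every $\bF\in\calF_n$ contains one of the six listed tournaments by splitting into the transitive case, the two-edge-disjoint-triangles case (handled via Lemma~\ref{lem:2-triangles}), and the remaining case where $\bF$ is forced to be some $\bT\bC_m$. Your write-up of the last case is in fact more detailed than the paper's (which simply asserts ``it is not hard to notice that in this case $\bD$ is isomorphic to $\bT\bC_m$''); just make explicit that $|\bF|\ge 4$ since both $3$-vertex tournaments embed in $\bT\bC_n$, and note that the contradiction when a new triangle avoids $(a,b)$ comes specifically from its failure to share an edge with $T_2=abd$, not merely from avoiding $(a,b)$.
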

\begin{proof}
    The second itemized statement is clearly a necessary condition for the first itemized statement to 
    hold. \blue{We show that it is also sufficient by proving the contrapositive statement:
    if $\bD$ does not homomorphically map to $\bT\bC_n$, then $\bD$ is not a
    $\{\bT_4^a,\bT_4^b, \bR\bT_5, \bT_5^c, \bT\bT_n,\bT\bC_{n+1}\}$-free loopless oriented graph.
    Suppose that $\bD\not\to \bT\bC_n$, and that $\bD$ is a loopless oriented graph. It follows
    from  Lemma~\ref{lem:TCn-min-obs} that $\bD$ is not $\calF_n$-free --- recall that $\calF_n$
    is the set of tournaments on at most $n+1$ vertices that do not embed into $\bT\bC_n$. Hence,
    it suffices to show that if $\bD$ is a tournament on at most $n+1$ vertices that does not embed
    into $\bT\bC_n$, then $\bD$ contains some tournament from $\{\bT_4^a,\bT_4^b, \bR\bT_5, \bT_5^c,
    \bT\bT_n,\bT\bC_{n+1}\}$.} Since both tournaments on three vertices embed into $\bT\bC_n$ for every
    $n\ge 4$, it follows that $\bD$ has at least $4$ vertices.
    If $\bD$ contains no directed triangle, then $\bD$ is a transitive tournament, and since
    every transitive tournament on at most $n-1$ vertices embeds into $\bT\bC_n$, it must be the
    case that $\bD$ contains a $\bT\bT_n$. Now suppose that $\bD$ contains a directed triangle, 
    and an edge $(u,v)$ such that all directed $3$-cycles of $\bD$ contain the edge $(u,v)$. It
    is not hard to notice that in this case $\bD$ is isomorphic to $\bT\bC_m$ for some positive
    integer $m\ge 4$. Since $\bT\bC_k$ is an induced subtournament of $\bT\bC_n$ whenever $k \le n$, 
    it follows that $m\ge n+1$, and thus $\bD\cong \bT\bC_{n+1}$ (because $|D|\leq n+1$). Finally,
    suppose that $\bD$ contains a pair of directed $3$-cycles with no common edge. Also assume
    that $\bD$ is $\{\bT_4^a,\bT_4^b\}$-free (otherwise there is nothing left to prove). Hence,
    by Lemma~\ref{lem:2-triangles}, $\bD$ contains a subtournament $\bD'$ on five vertices that contains
    two directed triangles with no common edge. The claim now follows because the only two
    such tournament on five vertices which are also $\{\bT_4^a,\bT_4^b\}$-free are $\bR\bT_5$ and
    $\bT_5^c$.
\end{proof}

\section{Omitting a single digraph}
\label{sect:single-digraph}

As mentioned in the introduction, the long-term question of the research line introduced in 
this paper is the following.

\begin{question}\label{qst:long-term}
    Is there a $\cP$ versus $\NP$-complete dichotomy of $\CSP(\bH)$ where the input
    is restricted 
    \begin{enumerate}
        \item to $\bF$-free digraphs?
        \item to $\bF$-subgraph-free digraphs?
    \end{enumerate}
\end{question}

Having settled  Question~\ref{qst:Pk} for the digraphs on three vertices
and for the family of tournaments $\bT\bC_n$, a natural next step
is tackling Question~\ref{qst:long-term} for these digraphs.
Regarding digraphs $\bH$ on three vertices, we leave Question~\ref{qst:long-term} (2) 
open for $\bK_3$ and $\vec{\bC}_3^{++}$, and notice that (1) has a simple solution in
these cases (and $\bF$ connected).

\begin{corollary}
    For every connected digraph $\bF$ the following statements hold.
    \begin{itemize}
        \item Either $\bF \cong \bT\bT_2$, then $\CSP(\vec{\bC}_3^{++})$ is polynomial-time solvable
        for $\bF$-free digraphs, and $\CSP(\bK_3)$ is $\NP$-hard for $\bF$-free digraphs, or
        \item otherwise, $\CSP(\vec{\bC}_3^{++})$ and $\CSP(\bK_3)$ are $\NP$-hard for 
        $\bF$-free digraphs. 
    \end{itemize}
\end{corollary}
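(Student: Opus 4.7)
The plan is to split on whether $\bF\cong \bT\bT_2$. In the first case, $\bF$-free digraphs are precisely the symmetric digraphs (possibly with loops), because an induced copy of $\bT\bT_2$ is exactly a non-symmetric arc between two vertices. Loops obstruct any homomorphism to a loopless target, so on loopless symmetric inputs the problem reduces to homomorphism into the symmetric subdigraph of the target. The symmetric subdigraph of $\vec{\bC}_3^{++}$ is a path on three vertices (two digons sharing a vertex), which is homomorphically equivalent to $\bK_2$, so the restricted $\CSP(\vec{\bC}_3^{++})$ becomes bipartiteness testing and lies in $\cP$. The symmetric subdigraph of $\bK_3$ is $\bK_3$ itself, so the restricted $\CSP(\bK_3)$ is classical undirected $3$-colouring and is NP-hard.

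For $\bF\not\cong \bT\bT_2$ (and tacitly assuming $\bF$ has at least one arc, to avoid the single-vertex loopless corner), the plan is to exhibit an NP-hard subfamily inside the $\bF$-free digraphs via three easy subcases and one harder one. First, if $\bF$ has a loop, then loopless digraphs are $\bF$-free and both $\CSP$s are classically NP-hard on them. Second, if $\bF$ is loopless and contains a digon, then oriented digraphs are $\bF$-free, and Corollary~\ref{cor:large-girth} applied with $\ell = 2$ provides NP-hardness of both $\CSP$s on oriented digraphs. Third, if $\bF$ is loopless, oriented, and its underlying graph has a cycle of length $g$, then digraphs of girth exceeding $g$ are $\bF$-subgraph-free (hence $\bF$-free), and Corollary~\ref{cor:large-girth} again yields NP-hardness.

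The main obstacle is the remaining subcase in which $\bF$ is a loopless oriented tree, necessarily on at least three vertices since $\bF\not\cong \bT\bT_2$; here Corollary~\ref{cor:large-girth} is useless because arbitrarily high girth does not rule out induced copies of oriented trees. For $\CSP(\bK_3)$ this is easy: $\bF$ being oriented with at least one arc contains a non-symmetric arc, so every symmetric digraph is $\bF$-free (otherwise the image of that arc would sit inside a digon of the ambient, contradicting inducedness), and $\CSP(\bK_3)$ on symmetric digraphs is again NP-hard $3$-colouring. For $\CSP(\vec{\bC}_3^{++})$ the symmetric-digraph trick fails because that restricted problem is in $\cP$; instead I would invoke Lemma~\ref{lem:C3++-P3-free}. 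Since $\bF$ is a connected tree on at least three vertices, its underlying graph contains an induced path on three vertices, whose orientation in $\bF$ is some $\vec{\bP}_3^{\ast}\in\{\vec{\bP}_3,\vec{\bP}_3^{\leftarrow\rightarrow},\vec{\bP}_3^{\rightarrow\leftarrow}\}$. Any $\vec{\bP}_3^{\ast}$-free digraph is therefore $\bF$-free (an induced $\bF$ would restrict to an induced $\vec{\bP}_3^{\ast}$), and Lemma~\ref{lem:C3++-P3-free} supplies NP-hardness on the class of digraphs avoiding all three $\vec{\bP}_3$-variants as induced subgraphs, which is contained in the $\vec{\bP}_3^{\ast}$-free class.
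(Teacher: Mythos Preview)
Your proof is correct and follows essentially the same approach as the paper's own proof. The paper is terser---it handles $\bK_3$ as ``straightforward'' and for $\vec{\bC}_3^{++}$ collapses your loop/digon/cycle subcases into a single appeal to Sparse Incomparability when $\bF$ is not an oriented forest, then invokes Lemma~\ref{lem:C3++-P3-free} for the remaining oriented-forest case---but the substance (high-girth reduction for non-forests, the three-vertex-path lemma for forests) is identical to what you wrote out explicitly.
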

\begin{proof}
    The case of $\CSP(\bK_3)$ is straightforward to observe. The first itemized
    claim for $\vec{\bC}_3^{++}$ follows from Theorem~\ref{thm:Pk-3vertex-classification},
    and the second one by Sparse Incomparability for $\bF$ not an oriented forest,
    and by Lemma~\ref{lem:C3++-P3-free} when $\bF$ is an oriented forest. 
\end{proof}

In the remaining of this section we see that some of our proof already yield the first steps 
for settling Question~\ref{qst:long-term} for $\vec{\bC}_3^+$ and the family of tournaments.
The main result in this direction being the following one. 

\begin{theorem}\label{thm:non-paths}
    For every positive integer $n$ and every digraph $\bF$ which is not a disjoint union of oriented paths
    the following statements hold.
    \begin{itemize}
        \item $\CSP(\vec{\bC}_3^+)$ is $\NP$-hard even when the input is restricted to $\bF$-subgraph-free digraphs.
        \item $\CSP(\bT\bC_n)$ is $\NP$-hard even when the input is restricted to $\bF$-subgraph-free digraphs.
    \end{itemize}
\end{theorem}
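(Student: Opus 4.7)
The natural approach is a case split on whether $\bF$ is an oriented forest, with each branch reducing to material already in the paper.

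First I would handle the case where $\bF$ is not an oriented forest, so that its incidence graph contains a cycle and $\bF$ has some finite girth $g$ in the sense of Section~\ref{sec:preliminaries}. By Corollary~\ref{cor:large-girth} applied to $\bH\in\{\vec{\bC}_3^+,\bT\bC_n\}$ (with $n\ge 4$), $\CSP(\bH)$ is polynomial-time equivalent to its restriction to inputs of girth strictly greater than $g$. Any such input is automatically $\bF$-subgraph-free, because a subgraph copy of $\bF$ in $\bD$ would force a cycle of length $\le g$ in $\bI(\bD)$. Since $\CSP(\vec{\bC}_3^+)$ and $\CSP(\bT\bC_n)$ are $\NP$-hard, this yields the conclusion in this case.

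Second, if $\bF$ is an oriented forest that is not a disjoint union of oriented paths, then some connected component of $\bF$ is an oriented tree whose underlying graph has a vertex of degree at least three. Consequently $\bF$ contains some orientation $\bK^\ast$ of the claw $\bK_{1,3}$ as a subgraph, and any $\bK^\ast$-subgraph-free digraph is therefore $\bF$-subgraph-free (a subgraph copy of $\bF$ would include a subgraph copy of $\bK^\ast$). The plan is then to invoke Lemma~\ref{lem:C3+-P4-free} for $\vec{\bC}_3^+$ and Lemma~\ref{lem:TCkxTTk} for $\bT\bC_n$ (with $n\ge 4$). Both already produce $\NP$-hard instances that are $\bK^\ast$-subgraph-free for every orientation $\bK^\ast$ of the claw, which suffices.

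The main (mild) obstacle is checking that the existing reductions really do avoid every orientation of $\bK_{1,3}$ simultaneously, not just one fixed orientation. This is exactly what the degree bound $d^+(v)+d^-(v)\le 3$, together with the local description of vertices of total degree three in the gadget (each such vertex is either a vertex of a directed triangle, with in-degree two, or a source of one of the variable cycles, with in-degree zero), gives us in the proofs of Lemmas~\ref{lem:C3+-P4-free} and~\ref{lem:TCkxTTk}, with the two remaining orientations handled by the arc-reversal symmetry $\bD\mapsto\bD^{\mathrm{op}}$. For $n\le 3$ the statement about $\bT\bC_n$ is vacuous, as $\CSP(\bT\bC_n)$ is already in $\cP$.
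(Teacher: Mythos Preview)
Your proposal is correct and follows essentially the same two-case split as the paper's proof: Sparse Incomparability (via Corollary~\ref{cor:large-girth}) when $\bF$ is not an oriented forest, and Lemmas~\ref{lem:C3+-P4-free} and~\ref{lem:TCkxTTk} via an orientation of $\bK_{1,3}$ when it is. One minor simplification: since Lemmas~\ref{lem:C3+-P4-free} and~\ref{lem:TCkxTTk} already state their conclusion for \emph{any given} orientation of the claw, you only need to invoke them with the specific orientation $\bK^\ast$ contained in $\bF$; there is no need to argue that all orientations are avoided simultaneously.
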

\begin{proof}
    If $\bF$ contains some vertex $v$ such that $d^+(v) + d^-(v) \ge 3$, then $\bF$ contains an orientation
    of $\bK_{1,3}$, and so the itemized claims follow from Lemma~\ref{lem:C3+-P4-free} and
    Lemma~\ref{lem:TCkxTTk}, respectively. In particular, this proves both claims for oriented forests. 
    If $\bF$ is not an oriented forest, i.e., it contains an oriented cycle, then both itemized statements
    follows from Sparse Incomparability Lemma (Corollary~\ref{thm:sparse-incomparability}).
\end{proof}

The following statement asserts that the CSPs of $\bT\bC_n$ and of $\vec{\bC}_3^+$ remain
$\NP$-hard when restricted $\bP$-subgraph-free digraphs whenever $\bP$ 
is a path that contains two pairs of consecutive edges oriented in the same direction.

\begin{proposition}\label{prop:P3+P3}
    The following statements hold for every connected digraph $\bF$ that contains
    $\vec{\bP}_3+\vec{\bP}_3$ as a subgraph.
    \begin{itemize}
        \item $\CSP(\vec{\bC}_3^+)$ is $\NP$-hard even when the input is restricted to
        $\bF$-subgraph-free digraphs.
        \item$\CSP(\bT\bC_n)$ is $\NP$-hard even when the input is restricted to
        $\bF$-subgraph-free digraphs.
    \end{itemize}
\end{proposition}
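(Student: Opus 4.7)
The plan is as follows. If $\bF$ is not a disjoint union of oriented paths — equivalently, $\bF$ has a vertex of underlying degree at least $3$ or contains an oriented cycle — then both assertions follow immediately from Theorem~\ref{thm:non-paths}. I therefore assume $\bF$ is connected and is a single oriented path on $k := |V(\bF)|$ vertices; since $\vec{\bP}_3 + \vec{\bP}_3 \subseteq \bF$, we have $k \geq 6$.

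For this remaining case my plan is to revisit the gadget reductions from positive 1-IN-3 SAT used in Lemma~\ref{lem:C3+-P4-free} (for $\vec{\bC}_3^+$) and Lemma~\ref{lem:TCkxTTk} (for $\bT\bC_n$) and to \emph{pad} every variable cycle. I would set $L := k$ and, for every variable $v$ occurring $n_v$ times, build its variable cycle on $n_v L$ sinks (instead of $n_v$), with the $n_v$ ``actual'' copies placed so that any two consecutive actual copies are separated by $L-1$ ``dummy'' sinks; the actual copies are identified with the variable vertices of their clause gadgets exactly as before, while the dummy sinks receive no gadget edge. Sources are inserted between consecutive sinks as in the original construction. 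The resulting digraph $\bD$ is polynomial-time computable (for fixed $\bF$), and $\bD$ homomaps to $\vec{\bC}_3^+$ (resp.\ to $\bT\bC_n$) if and only if the underlying 1-IN-3 SAT instance is satisfiable, because each dummy sink simply inherits the colour of its variable and each source in the enlarged cycle can always be coloured consistently (every vertex of $\vec{\bC}_3^+$ and of $\bT\bC_n$ has at least one in-neighbour).

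The crucial step is showing that $\bD$ is $\bF$-subgraph-free. Two structural properties carry over unchanged from the original construction: (i) every directed $\vec{\bP}_3$ of $\bD$ has its middle vertex inside some clause gadget, because every vertex of a padded variable cycle is either a pure source or a pure sink and hence cannot serve as the middle of a $\vec{\bP}_3$; and (ii) inside a single clause gadget, any two $\vec{\bP}_3$'s share at least one vertex — a finite check on the nine-vertex gadget. Consequently any vertex-disjoint pair $\vec{\bP}_3 + \vec{\bP}_3$ in $\bD$ must have its two copies in distinct clause gadgets. If $\bF$ were to embed into $\bD$, its two vertex-disjoint $\vec{\bP}_3$'s would therefore lie in distinct gadgets $\bG_i \ne \bG_j$, and the connecting sub-path of $\bF$, which has at most $k - 5$ edges, would embed as an injective walk of the same length between a vertex of $\bG_i$ and a vertex of $\bG_j$. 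But by construction every walk in $\bD$ from a vertex of $\bG_i$ to a vertex of $\bG_j$ must traverse at least one padded variable cycle, and hence has length at least $2L = 2k > k-5$, a contradiction. Hence $\bD$ is $\bF$-subgraph-free and NP-hardness of both restricted CSPs follows.

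The hard part will be the uniform distance lower bound: I must verify that the minimum distance between any two distinct clause gadgets in $\bD$ is indeed at least $2L$ (up to a small additive constant) even when gadgets share several variables or are connected only indirectly through a chain of other gadgets, and I must rule out ``folded'' embeddings of $\bF$ that thread partially through a single variable cycle. Once this uniform inter-gadget distance bound is in place, the $\bF$-subgraph-freeness of $\bD$ and hence both items of the proposition are immediate.
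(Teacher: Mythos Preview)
Your approach is correct and essentially the same as the paper's: both pad the 1-IN-3 SAT reduction so that the clause gadgets---the only places where $\vec{\bP}_3$'s occur, and within which no two disjoint $\vec{\bP}_3$'s fit---are further apart than the diameter of $\bF$. The paper pads at the junctions between clause gadgets and variable cycles (inserting $d=\operatorname{diam}(\bF)$ back-and-forward gadgets $\exists w.\,E(w,x)\wedge E(w,y)$ in series) rather than inside the variable cycles, and handles all connected $\bF$ directly without first reducing to the oriented-path case via Theorem~\ref{thm:non-paths}; but the core idea and the distance verification you flag as the ``hard part'' are the same.
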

\begin{proof}
We make a reduction from 1-IN-3-SAT similar to how we did in the proofs of Lemmas~\ref{lem:C3+-P4-free} and \ref{lem:TCkxTTk}. We use the gadget $\exists w E(w,x) \wedge E(w,y)$, which defines an equivalence relation on both these digraphs, to ensure that the ${\vec \bP_3}$s (which only exist in the clause gadgts) are further apart than they are in $\bF$. We do this by placing $\exists w E(w,x) \wedge E(w,y)$ gadgets end-to-end as they run into the clause gadgets. If the graph underlying $\bF$ is of diameter $d$, then it will suffice to add $d$ copies of this gadget in series at the points at which the variable gadgets meet the clause gadgets.

\end{proof}

Consider a word $w\in\{\leftarrow,\rightarrow\}^\ast$, i.e., a sequence
$w:=w_1\dots w_n$ where $w_i\in\{\leftarrow,\rightarrow\}$ for each $i\in[n]$. We denote
by $\bP_{n+1}^w$ the oriented path with vertex set $[n+1]$ where there is an edge $(i,i+i)$
if $w_i = \rightarrow$, and there is an edge $(i+1,i)$ if $w_i = \leftarrow$. 
In particular, $\vec{\bP}_n = \bP_n^w$ where $w$ is the constant word on $n-1$ letters
$\rightarrow$.

\begin{corollary}\label{cor:P8-alternation}
    The following statements hold for every digraph $\bF$ that contains (as a subgraph)
    the oriented path $\bP_7^{(\leftarrow\rightarrow)^3}$, the path
    $\bP_7^{(\rightarrow\leftarrow)^3}$.
    \begin{itemize}
        \item $\CSP(\vec{\bC}_3^+)$ is $\NP$-hard even when the input is restricted to $\bF$-subgraph-free digraphs.
        \item $\CSP(\bT\bC_n)$ is $\NP$-hard even when the input is restricted to $\bF$-subgraph-free digraphs
        for every $n\ge 4$.
    \end{itemize}
\end{corollary}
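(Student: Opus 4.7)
The plan is to design a reduction from positive 1-IN-3-SAT to $\CSP(\vec{\bC}_3^+)$ and to $\CSP(\bT\bC_n)$ whose output digraph is simultaneously $\bP_7^{(\leftarrow\to)^3}$-subgraph-free and $\bP_7^{(\to\leftarrow)^3}$-subgraph-free; since any digraph that avoids a subgraph $P$ also avoids every $\bF$ containing $P$ as a subgraph, both items of the corollary follow from any single such reduction.

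For each clause $(x_i\lor y_i\lor z_i)$ the construction introduces three fresh vertices forming the directed $3$-cycle $x_i\to y_i\to z_i\to x_i$. In both $\vec{\bC}_3^+$ and $\bT\bC_n$ every directed $3$-cycle of the target contains the ``TRUE'' vertex---the singleton equivalence class of $\exists w.\,E(w,x)\land E(w,y)$, namely $3$ in $\vec{\bC}_3^+$ and $1$ in $\bT\bC_n$---exactly once, so the $3$-cycle directly encodes the 1-IN-3 constraint on $x_i,y_i,z_i$. For each variable $v$ occurring $n_v\ge 2$ times I would add a fresh hub $h_v$ together with the edges $h_v\to p_{v,1},\ldots,h_v\to p_{v,n_v}$, where $p_{v,j}$ is the clause-vertex corresponding to the $j$-th occurrence of $v$; this realises the back-forward gadget between any two occurrences through the common in-neighbour $h_v$ and thereby propagates TRUE uniformly over the orbit of $v$. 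Correctness for $\bT\bC_n$ is obtained by setting $f(h_v)=2$ for every FALSE variable, since $N^+(2)=\{3,\ldots,n\}$ contains both the predecessor-of-TRUE value $n$ and every usable successor-of-TRUE value $k\in\{3,\ldots,n-1\}$; the case of $\vec{\bC}_3^+$ is analogous with $f(h_v)=3$, since $N^+(3)=\{1,2\}$ accommodates both possible FALSE values.

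The absence of the forbidden alternating paths then follows from a short degree count. In the output each hub has in-degree $0$ and out-degree $n_v\ge 2$, and each occurrence has in-degree $2$ (its clause predecessor and its hub) and out-degree $1$ (its clause successor). In a putative $\bP_7^{(\leftarrow\to)^3}$ the three interior positions $v_2,v_4,v_6$ are sources of out-degree $\ge 2$, so they must all be hubs; the sink $v_3$ between the distinct hubs $v_2$ and $v_4$ is then an occurrence in the out-neighbourhood of two different hubs, but each occurrence has a unique hub, forcing $v_2=v_4$ and contradicting simplicity of the path. A symmetric argument rules out $\bP_7^{(\to\leftarrow)^3}$: the interior positions $v_3,v_5$ have out-degree $\ge 2$ and must be hubs, and then $v_4$ is an occurrence with the two distinct hubs $v_3,v_5$ among its in-neighbours, so $v_3=v_5$. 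The main obstacle I anticipate is verifying $\bT\bC_n$-correctness in detail across all clause orientations and occurrence roles, as sketched above; the path-exclusion itself is purely combinatorial.
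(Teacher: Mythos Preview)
Your construction is correct and gives a valid proof, but it differs from the paper's. The paper does not build a new reduction; it reuses the nine-vertex clause gadget $\bG(x,y,z)$ from Lemmas~\ref{lem:C3+-P4-free} and~\ref{lem:TCkxTTk} and simply replaces the oriented-cycle variable gadget $\bC_v$ by identifying all occurrences of a variable into a single vertex. In the resulting digraph the only vertices of out-degree $\ge 2$ are the ``middle'' vertices $a_2$ of the clause gadgets (each a source with exactly two out-neighbours, one variable vertex and one $3$-cycle vertex), and the same degree argument you give then excludes $\bP_7^{(\leftarrow\rightarrow)^3}$ and $\bP_7^{(\rightarrow\leftarrow)^3}$.

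Your route replaces the nine-vertex gadget by a bare directed $3$-cycle and introduces an explicit star-shaped hub per variable. This is more self-contained --- you do not need to appeal to the correctness of the earlier lemmas --- and the path-exclusion argument is essentially the same (only hubs have out-degree $\ge 2$, and no occurrence has two hub in-neighbours). Two small points worth tightening: first, your labelling of $\vec{\bC}_3^+$ differs from the paper's, so the singleton class is $\{2\}$ in the paper's convention rather than $\{3\}$; this is cosmetic. Second, the relation $\exists w.\,E(w,x)\land E(w,y)$ is \emph{not} an equivalence relation in $\bT\bC_n$ for $n\ge 4$ (for instance $2$ and $n$ have no common in-neighbour), but your propagation argument only uses that $N^-(1)=\{n\}$ and $N^+(n)=\{1\}$, which is exactly what you need, so the substance is unaffected.
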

\begin{proof}
    We consider the case when $\bF$ contains $\bP_7^{(\leftarrow\rightarrow)^3}$ as a subgraph, 
    en the remaining one follows dually. For (1) (resp.\ for (2)) We use the same proof
    as for Lemma~\ref{lem:C3+-P4-free} (resp.\ for Lemma~\ref{lem:TCkxTTk}) except that we
    change the variable gadget such that it is no longer built from an undirected cycle of
    length $n_v$ but rather identify all vertices that correspond to the same variable. 
\end{proof}

\begin{corollary}\label{cor:bounded-size}
    The following statements hold for every connected digraph $\bF$ on at least $12$ vertices.
    \begin{itemize}
        \item $\CSP(\vec{\bC}_3^+)$ is $\NP$-hard even when the input is restricted to $\bF$-subgraph-free digraphs.
        \item $\CSP(\bT\bC_n)$ is $\NP$-hard even when the input is restricted to $\bF$-subgraph-free digraphs for every $n\ge 4$.
    \end{itemize}
\end{corollary}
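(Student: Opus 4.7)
The plan is to reduce the statement to the previously established results by exhibiting, for every connected $\bF$ on at least $12$ vertices, a subgraph $\bF' \subseteq \bF$ for which hardness has already been proved. Since an instance containing $\bF$ as a subgraph necessarily contains every subgraph of $\bF$, an $\bF'$-subgraph-free instance is automatically $\bF$-subgraph-free, and the $\NP$-hardness constructions from earlier in this section then transfer verbatim to $\bF$-subgraph-free instances.

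I would split into two cases based on whether $\bF$ is an oriented path. If $\bF$ is not an oriented path, then the connectedness of $\bF$ forces some vertex to have total degree at least $3$, so $\bF$ contains an orientation of the claw $\bK_{1,3}$ as a subgraph. In particular $\bF$ is not a disjoint union of oriented paths, and Theorem~\ref{thm:non-paths} applies directly.

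The main work is the case where $\bF$ is an oriented path on $n \geq 12$ vertices; let $w \in \{\leftarrow,\rightarrow\}^{n-1}$ be the word encoding its edge orientations. I would analyse the decomposition of $w$ into maximal runs of constant direction. A run of length at least $4$ gives $\vec{\bP}_5 \subseteq \bF$, which is avoided by the instances constructed in Lemmas~\ref{lem:C3+-P4-free} and~\ref{lem:TCkxTTk}. Two runs of length at least $2$ adjacent in $w$ yield $\bP_5^{\leftarrow\leftarrow\rightarrow\rightarrow}\subseteq\bF$ or $\bP_5^{\rightarrow\rightarrow\leftarrow\leftarrow}\subseteq\bF$, again avoided by those constructions. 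Two runs of length at least $2$ separated by at least one singleton run provide two vertex-disjoint copies of $\vec{\bP}_3$, so $\vec{\bP}_3 + \vec{\bP}_3 \subseteq \bF$ and Proposition~\ref{prop:P3+P3} applies. The only remaining configuration is that $w$ has at most one run of length $2$ or $3$, with alternating prefix and suffix on either side. A length count, attending to the fact that each boundary letter of the central run differs from the adjacent alternating block and so extends it by one, then shows that for $n\geq 12$ one of these extended alternating segments must contain $(\leftarrow\rightarrow)^3$ or $(\rightarrow\leftarrow)^3$ as a factor; Corollary~\ref{cor:P8-alternation} then finishes the argument.

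The main obstacle is this final length count, which is where the threshold $12$ in the statement is extracted. Once the appropriate subgraph $\bF' \subseteq \bF$ has been identified, the earlier lemmas apply without modification, and the $\NP$-hardness of $\CSP(\vec{\bC}_3^+)$ and $\CSP(\bT\bC_n)$ on $\bF$-subgraph-free inputs follows.
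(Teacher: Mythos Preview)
Your proposal is correct and takes essentially the same approach as the paper: reduce to oriented paths via Theorem~\ref{thm:non-paths}, then to $2\vec{\bP}_3$- and $\vec{\bP}_5$-subgraph-free paths via Proposition~\ref{prop:P3+P3} and Lemmas~\ref{lem:C3+-P4-free}/\ref{lem:TCkxTTk}, characterise these as alternating paths with at most one longer central run, and finish via the length count and Corollary~\ref{cor:P8-alternation}. Your explicit invocation of the $\bP_5^{\leftarrow\leftarrow\to\to}$ and $\bP_5^{\to\to\leftarrow\leftarrow}$ exclusions from those lemmas (which the paper uses only implicitly when asserting its normal form) makes the case analysis slightly cleaner, but the argument is structurally identical.
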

\begin{proof}
    By Theorem~\ref{thm:non-paths} and Proposition~\ref{prop:P3+P3}, it suffices to prove the
    claim for  $2\vec{\bP}_3$-subgraph-free paths, and by Lemmas~\ref{lem:C3+-P4-free} 
    and~\ref{lem:TCkxTTk} we also assume that $\bF$ is $\vec{\bP}_5$-subgraph-free. Notice that,
    up to isomorphism, such a path is a subpath of $\bP_{2n+2m+4}^{(\rightarrow\leftarrow)^n\rightarrow\rightarrow\rightarrow(\leftarrow\rightarrow)^m}$, or of
    $\bP_{2n+2m+3}^{(\rightarrow\leftarrow)^n\rightarrow\rightarrow(\leftarrow\rightarrow)^m}$. Finally, 
    any such path on at least 12 vertices contains either $\bP_7^{(\leftarrow\rightarrow)^3}$
    or $\bP_7^{(\rightarrow\leftarrow)^3}$, so we conclude via
    Corollary~\ref{cor:P8-alternation}.
\end{proof}

\subsection*{Forbidden paths on three vertices}

Notice that if $\bD$ is an oriented graph with no directed path on three vertices, 
then $\bD\to \bT\bT_2$ (Observation~\ref{obs:P-TT}). Also, if $\bD$ contains a
symmetric pair of edges $(u,v),(v,u)$, then the subgraph with vertices $u,v$
is a connected component of $\bD$. With these simple arguments one can notice
that for any digraph $\bH$, the problem $\CSP(\bH)$ is in $\cP$ when the
input is restricted to $\vec{\bP}_3$-subgraph-free digraphs.

\begin{observation}\label{obs:general-P3}
    For every digraph $\bH$, $\CSP(\bH)$ is in $\cP$ for $\vec{\bP}_3$-subgraph-free
    digraphs. 
\end{observation}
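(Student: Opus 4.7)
The plan is to analyze the structure of $\vec{\bP}_3$-subgraph-free digraphs component-by-component, show that each weakly connected component has a very restricted form, and then conclude that checking whether such a component maps to the fixed target $\bH$ can be carried out in polynomial time.

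First, I would establish the following structural claim: if $\bC$ is a weakly connected $\vec{\bP}_3$-subgraph-free digraph with $|C| \geq 3$, then $\bC$ contains no symmetric pair of edges (by the remark quoted just before the observation), and, ignoring loops, $\bC$ is a star with a center $v$ all of whose non-loop edges are oriented in the same direction: either every such edge goes from $v$ to a leaf, or every such edge goes from a leaf to $v$. This follows from a short case analysis. Whenever $(v,w)$ is a non-loop edge, the vertex $w$ cannot have any further non-loop out-neighbor and $v$ cannot have any further non-loop in-neighbor, for otherwise a $\vec{\bP}_3$ on three distinct vertices would appear; combining this with the absence of symmetric pairs forces every non-central vertex to be adjacent only to $v$ and in a consistent direction, yielding the claimed star structure.

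Given this description, each weakly connected component $\bC$ of an input digraph $\bD$ can be tested for homomorphism into $\bH$ in polynomial time. If $|C| \leq 2$ the test is by brute force over at most $|H|^2$ images. Otherwise, one iterates over the at most $|H|$ possible images $h$ of the center $v$, insisting that $h$ carry a loop if $v$ does, and for each such $h$ one checks independently for every leaf $w$ whether some in- or out-neighbor of $h$ in $\bH$ (respecting loops at $w$) is available; the orientation of the check is dictated by the orientation of the star. This runs in time $O(|H|^2 \cdot |C|)$.

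Finally, since $\bD \to \bH$ if and only if every weakly connected component of $\bD$ admits a homomorphism to $\bH$, the overall algorithm computes the weakly connected components of the input, runs the test above on each one, and accepts if and only if every test succeeds. The only nontrivial step is the structural classification of $\vec{\bP}_3$-subgraph-free components; once that is in hand, the polynomial-time algorithm is essentially immediate, so I do not expect any significant obstacle.
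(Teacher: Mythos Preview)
Your structural claim is incorrect. Consider the digraph on four vertices $u,v,w,x$ with edges $(v,u)$, $(v,w)$, $(x,w)$: it is weakly connected, $\vec{\bP}_3$-subgraph-free, contains no symmetric pair, and has more than two vertices, yet it is not a star (both $v$ and $w$ have degree $2$ in the underlying graph). Your argument correctly derives that for a non-loop edge $(v,w)$ the vertex $w$ has no further out-neighbour and $v$ no further in-neighbour, but this does not prevent $w$ from acquiring further \emph{in}-neighbours, which is exactly what happens here. Since your algorithm hinges on iterating over images of a unique centre, it is not even well-defined on such components, and these components can be arbitrarily large (for instance, any complete bipartite digraph with all edges directed from one side to the other is $\vec{\bP}_3$-subgraph-free).

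The correct, weaker structural fact---and the one the paper uses---is that in an oriented $\vec{\bP}_3$-subgraph-free component every vertex has non-loop edges in only one direction; hence the vertex set splits into sources and sinks and, in the loopless case, the component maps to $\bT\bT_2$ via Observation~\ref{obs:P-TT}. The algorithm is then immediate: if $\bH$ has a loop accept; if $\bH$ is loopless but the input has a loop reject; otherwise each component of size at least three maps to $\bH$ iff $\bH$ has an edge, and the remaining components of size at most two (including those carrying a symmetric pair) are handled by brute force.
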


Hell and Mishra~\cite{HM14} proved that, for any digraph $\bH$, the problem $\CSP(\bH)$ is polynomial-time
solvable where the input is a $\vec{\bP}_3^{\leftarrow\rightarrow}$-subgraph-free or a
$\vec{\bP}_3^{\rightarrow\leftarrow}$-subgraph-free digraphs. 
Here we briefly argue that if $\bH$ is an oriented graph, then $\CSP(\bH)$ is polynomial-time
solvable when the input is a $\vec{\bP}_3^{\leftarrow\rightarrow}$-free digraph. 

A \textit{tree decomposition} for a \blue{digraph} \blue{$\bG=(G,E)$} is a pair \blue{$(\bT,X)$} where $\bT$ is a tree
and $X$ consists of subsets of vertices from \blue{$G$} which we call bags. Each node of \blue{$\bT$} corresponds to a single
bag of $X$. For each vertex $v \in G$ the nodes of $\bT$ containing $v$ must induce a non-empty connected subgraph of
$\bT$ and for each edge $(u,v) \in E$, there must be at least one bag containing both $u$ and $v$. We can then define the \textit{width} of $(\bT,X)$ to be one less than the size of the largest bag. From this, the \textit{treewidth} of a digraph, $\mathit{tw}(G)$, is the minimum width of any \textit{tree decomposition}.

\begin{lemma}
    If $\bH$ is a finite \blue{oriented graph}, then  $\CSP(\bH)$ is in $\cP$ for both the class of $\vec{\bP}_3^{\leftarrow\rightarrow}$-free \blue{digraphs} and the class of
    $\vec{\bP}_3^{\rightarrow\leftarrow}$-free \blue{digraphs}.
    \label{lem:martin-milanic}
\end{lemma}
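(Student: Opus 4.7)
The plan is to proceed in three steps: a reduction, a preprocessing step, and a bounded-width decomposition argument.

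First, since $\bH$ is a loopless oriented graph, any homomorphism $\bD \to \bH$ forces $\bD$ to be loopless and orientation-consistent, so I will assume without loss of generality that $\bD$ is a loopless oriented graph. The key structural observation under the $\vec{\bP}_3^{\leftarrow\rightarrow}$-free hypothesis is that for every vertex $v$ of $\bD$, the out-neighborhood $N^+(v)$ induces a tournament (equivalently, a clique in the underlying undirected graph): two distinct out-neighbors of $v$ must be adjacent, for otherwise they together with $v$ form an induced $\vec{\bP}_3^{\leftarrow\rightarrow}$.

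Next, since $\bH$ is loopless, any homomorphism $f \colon \bD \to \bH$ must map each tournament in $\bD$ injectively to a tournament in $\bH$; so if $|N^+(v)|$ exceeds the size $t_\bH$ of a largest tournament in $\bH$ for some vertex $v$, then $\bD \not\to \bH$ and the algorithm rejects. After this polynomial-time preprocessing step, $\bD$ has out-degree uniformly bounded by $t_\bH \le |V(\bH)|$; moreover, every edge of $\bD$ lies in some bag $B_v := \{v\} \cup N^+(v)$, each of size at most $|V(\bH)| + 1$.

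The final step is to use these bags to construct a tree decomposition of $\bD$ of bounded width, after which the standard dynamic programming algorithm over a tree decomposition decides $\CSP(\bH)$ in polynomial time. The main obstacle will be the running intersection property: a vertex $v$ belongs to $B_v$ and to $B_u$ for each in-neighbor $u \in N^-(v)$, so these possibly many bags must form a connected subtree in the decomposition. The preprocessing bounds the size of any tournament in $\bD$, which prevents the ``wide'' configurations such as large transitive tournaments that would otherwise force unbounded treewidth; nevertheless, connecting the bags into a valid tree structure of bounded width will require a careful case analysis that leverages both the oriented structure of $\bD$ and the clique-structure of its out-neighborhoods, for instance by processing the strongly connected components of $\bD$ in reverse topological order and handling each component using the tournament ordering available at each out-neighborhood. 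I expect this last treewidth construction to be the most delicate part of the proof.
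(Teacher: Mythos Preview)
Your initial observations align with the paper's approach: you correctly note that in a $\vec{\bP}_3^{\leftarrow\rightarrow}$-free oriented graph every out-neighbourhood induces a tournament, and you correctly preprocess by rejecting instances containing a tournament larger than $\bH$ can accommodate. The paper makes the same moves (phrased as: the underlying undirected graph is \emph{1-perfectly orientable}, and one may assume no tournament of size $|H|+1$).

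The genuine gap is your final step. Bounding the out-degree and defining bags $B_v = \{v\}\cup N^+(v)$ does \emph{not} give you a tree decomposition: as you yourself observe, a vertex $v$ lies in $B_u$ for every $u\in N^-(v)$, and the in-degree is unbounded, so there is no evident tree structure connecting these bags. Your sketched plan (``process strongly connected components in reverse topological order, use the tournament ordering in each out-neighbourhood'') is not a proof, and there is no reason to expect a direct elementary construction here; the underlying graphs are not chordal in general, so clique-bags do not automatically assemble into a tree.

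The paper closes this gap by invoking two nontrivial structural theorems from the literature rather than a direct construction: (i) 1-perfectly orientable graphs are $K_{2,3}$-induced-minor-free (Hartinger--Milani\v{c}, 2017), and (ii) any $K_{2,3}$-induced-minor-free graph class is $(\text{treewidth},\omega)$-bounded, i.e.\ bounded clique number implies bounded treewidth (Dallard--Milani\v{c}--\v{S}torgel, 2021). Combining these with the preprocessing gives bounded treewidth, after which the standard dynamic programming finishes. You have the right skeleton, but the missing ingredient is precisely this structural-graph-theory input; absent it, your argument is incomplete.
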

\begin{proof}
    We make the argument for $\vec{\bP}_3^{\leftarrow\rightarrow}$-free graphs, the other case is dual. 
    \blue{We also assume that the input digraph $\bD$ is an oriented graph (otherwise, we immediately
    reject because $\bH$ is an oriented graph).} Let \blue{$|H|=m$},
    let $\calC_{m+1}$ be the class of digraphs that are $\vec{\bP}_3^{\leftarrow\rightarrow}$-free and further do not contain
    a \blue{tournament} of size $m+1$.
    
    Let $\bG$ be the \blue{symmetric closure of $\bD$, equivalently the} undirected graph underlying $\bD$.
    \blue{Clearly, $\bG$ admits an $\vec{\bP}_3^{\leftarrow\rightarrow}$-free orientation (namely, $\bD$),
    which is exactly the definition of $1$-perfectly orientable graphs from~\cite{HM17}.} According to
    Theorem 6.3 in~\cite{HM17}, this class of graphs is $\bK_{2,3}$-induced-minor-free. If $\mathcal B$ is a graph class
    that is $\bK_{2,3}$-induced-minor-free, then, for each $n$, there is $f(n)$ so that every element of $\mathcal B$
    either contains an $n$-clique or has treewidth bounded by $f(n)$. This is part of Corollary 4.12 in~\cite{DMS21}.
    It follows that $\calC_{m+1}$ has bounded treewidth, and we may plainly assume our input $\bD$ is in $\calC_{m+1}$
    by preprocessing out some no-instances that contain a large oriented clique.
    That $\CSP(\bH)$ can be solved in polynomial time on instances of bounded treewidth is known from~\cite{Dechter}.
\end{proof}

\begin{table}[ht!]
    \begin{center}
    \begin{tabular}{|c|c|c|}
\hline
Constraint (\mbox{e.g.} $\bF:=$)  & $\CSP(\bT\bC_n)$ on $\bF$-subgraph-free & $\CSP(\bT\bC_n)$ on $\bF$-free \\
\hline
$\bF$ is not an oriented tree & \multicolumn{2}{c|}{$\NP$-complete Sparse Incomparability (Corollary~\ref{cor:large-girth})} \\
\hline 
$\bF$ is not an oriented path & \multicolumn{2}{c|}{$\NP$-complete (Theorem~\ref{thm:non-paths})} \\
\hline 
$\bF$ contains $\vec{\bP}_3+\vec{\bP}_3$ as a subgraph & \multicolumn{2}{c|}{$\NP$-complete (Proposition~\ref{prop:P3+P3})} \\
\hline
$|F|\ge 12$ & \multicolumn{2}{c|}{$\NP$-complete (Corollary~\ref{cor:bounded-size})}\\
\hline
$\leftrightarrow$ & $\NP$-complete~\cite{bangjensenSIDMA1} &  $\NP$-complete~\cite{bangjensenSIDMA1}\\
\hline
$\leftarrow \rightarrow$ & P  \cite[Lemma 1]{HM14} & P (Lemma~\ref{lem:martin-milanic}) \\
\hline
$\rightarrow \leftarrow$ & P  \cite[Lemma 1]{HM14} & P (Lemma~\ref{lem:martin-milanic})  \\
\hline
$\rightarrow \rightarrow$ & P (Theorem~\ref{thm:TCn-Pk-subgraph-classification}) & P (Theorem~\ref{thm:TCn-min-obs})\\
\hline
$\rightarrow \rightarrow \rightarrow$ & P (Theorem~\ref{thm:TCn-Pk-subgraph-classification}) & $\NP$-complete (Theorem~\ref{thm:TCn-min-obs})\\
\hline
$\rightarrow \rightarrow \rightarrow \rightarrow $ & $\NP$-complete (Theorem~\ref{thm:TCn-Pk-subgraph-classification}) & $\NP$-complete  (Theorem~\ref{thm:TCn-min-obs})\\
\hline
$\rightarrow \rightarrow \leftarrow\leftarrow$ & \multicolumn{2}{c|}{$\NP$-complete (Lemma~\ref{lem:TCkxTTk})}\\
\hline
$\leftarrow\leftarrow \rightarrow\rightarrow$ & \multicolumn{2}{c|}{$\NP$-complete (Lemma~\ref{lem:TCkxTTk})} \\
\hline
$(\leftarrow \rightarrow)^3$ & \multicolumn{2}{c|}{$\NP$-complete (Corollary~\ref{cor:P8-alternation})} \\
\hline
$(\rightarrow \leftarrow)^3$ & \multicolumn{2}{c|}{$\NP$-complete (Corollary~\ref{cor:P8-alternation})}\\
\hline
    \end{tabular}
    \end{center}
    \caption{Complexity landscape for $\CSP(\bT\bC_n)$ under the omission of  single connected subgraph or induced connected subgraph.}
    \label{fig:landscape-TCn-omitting-subgraph-bis}
\end{table}

\begin{table}[ht!]
    \begin{center}
    \begin{tabular}{|c|c|c|}
\hline
Constraint (\mbox{e.g.} $\bF:=$)  & $\CSP(\vec{\bC}_3^+)$ on $\bF$-subgraph-free & $\CSP(\vec{\bC}_3^+)$ on $\bF$-free \\
\hline
$\bF$ is not an oriented tree & \multicolumn{2}{c|}{$\NP$-complete Sparse Incomparability (Corollary~\ref{cor:large-girth})} \\
 \hline 
$\bF$ is not an oriented path & \multicolumn{2}{c|}{$\NP$-complete (Theorem~\ref{thm:non-paths})} \\
\hline 
$\bF$ contains $\vec{\bP}_3+\vec{\bP}_3$ as a subgraph & \multicolumn{2}{c|}{$\NP$-complete (Proposition~\ref{prop:P3+P3})} \\
\hline
$|F|\ge 12$ & \multicolumn{2}{c|}{$\NP$-complete (Corollary~\ref{cor:bounded-size})} \\
\hline
$\leftarrow \rightarrow$ & P  \cite[Lemma 1]{HM14} & \blue{Open} \\
\hline
$\rightarrow \leftarrow$ & P  \cite[Lemma 1]{HM14} & \blue{Open} \\
\hline
$\rightarrow \rightarrow$ & P (Theorem~\ref{thm:3-vertices-Pk-subgraph-free}) & P (Theorem~\ref{thm:Pk-3vertex-classification})\\
\hline
$\rightarrow \rightarrow \rightarrow$ & P (Theorem~\ref{thm:3-vertices-Pk-subgraph-free}) & $\NP$-complete (Theorem~\ref{thm:Pk-3vertex-classification})\\
\hline
$\rightarrow \rightarrow \rightarrow \rightarrow $ & $\NP$-complete (Theorem~\ref{thm:3-vertices-Pk-subgraph-free}) & $\NP$-complete  (Theorem~\ref{thm:Pk-3vertex-classification})\\
\hline
$\rightarrow \rightarrow \leftarrow\leftarrow$ & \multicolumn{2}{c|}{$\NP$-complete (Lemma~\ref{lem:C3+-P4-free})} \\
\hline
$\leftarrow\leftarrow \rightarrow\rightarrow$ & \multicolumn{2}{c|}{$\NP$-complete (Lemma~\ref{lem:C3+-P4-free})}\\
\hline
$(\leftarrow \rightarrow)^3$ & \multicolumn{2}{c|}{$\NP$-complete (Corollary~\ref{cor:P8-alternation})}\\ 
\hline
$(\leftarrow \rightarrow)^3$ & \multicolumn{2}{c|}{$\NP$-complete (Corollary~\ref{cor:P8-alternation})} \\
\hline
    \end{tabular}
    \end{center}
    \caption{Complexity landscape for $\CSP(\vec{\bC}_3^+)$ under the omission of  single connected subgraph or induced connected subgraph.}
    \label{fig:landscape-C3+-omitting-subgraph}
\end{table}

\section{Conclusion and outlook}

In this paper we have brought together homomorphisms, digraphs and $\bH$-(subgraph-)free algorithmics. 
In doing so, we have uncovered a series of results concerning not only restricted CSPs, but also 
hardness of digraph CSPs under natural restrictions such as acyclicity. Our work raises numerous open 
problems, which we believe deserve attention in the future.
Besides Questions~\ref{qst:Pk} and~\ref{qst:long-term}, we ask the following.

\begin{itemize}
    \item Is it true that for every finite structure $\bA$ and every (possibly infinite) $\bB$,
    if $(\bA,\bB)$ does not rpp-construct $(\bK_3,\bL)$, then $\RCSP(\bA,\bB)$ is
    polynomial-time solvable? (Compare to Theorem~\ref{thm:finite-RCSP-dichotomy}).
    \item Is it true that for every finite structure $\bA$ and every (possibly infinite) $\bB$
    the problem $\RCSP(\bA,\bB)$ is either in $\cP$ or $\NP$-hard?
    (Compare to Theorem~\ref{thm:finite-RCSP-dichotomy}).
    \item Let $\bA$ be a finite structure and $\bB$ a structure whose $\CSP$ is in $\GMSNP$. 
    Is it true that if $(\bA,\bB)$ does not rpp-construct $(\bK_3,\bL)$, then $\RCSP(\bA,\bB)$
    is polynomial-time solvable? (Compare to Theorem~\ref{lem:GMSNP-restrictions}).
    \item  Since $\CSP(\bH)$ reduces to $\CSP(\bH)$ restricted to acyclic digraphs
    (Theorem~\ref{thm:acyclic+bounded-paths}),  and the latter is polynomial-time equivalent
    to $\CSP(\bH\times \bQ)$ we ask: is it true that for every finite digraph $\bH$ the (infinite)
    digraph $\bQ\times \bH$ pp-constructs $\bH$? 
    \item Is is true that for every oriented graph $\bH$ there a finitely many 
    $\vec{\bP}_3$-free minimal obstructions to $\CSP(\bH)$? (Compare to Theorem~\ref{thm:TCn-min-obs}).
    \item Is is true that for every tournament $\bT$ there a finitely many 
    $\vec{\bP}_3$-free minimal obstructions to $\CSP(\bT)$? (Compare to Theorem~\ref{thm:TCn-min-obs}).
    \item Settle Question~\ref{qst:long-term} for digraphs on three vertices (see also Table~\ref{fig:landscape-C3+-omitting-subgraph}).
    \item Settle Question~\ref{qst:long-term} for tournaments, in particular,
    for the family of tournaments $\bT\bC_n$ (see also Table~\ref{fig:landscape-TCn-omitting-subgraph-bis}).
\end{itemize}

\subsection*{Persistent structures}

A natural question arising from restricted CSPs is if there are structures $\bA$ such that
$\RCSP(\bA,\bB)$ is $\NP$-hard whenever $\bB\not\to \bA$. We say that a structure $\bA$
\emph{persistently constructs} $\bK_3$ if for every (possibly infinite) $\bB$ such
that $\bB\not\to\bA$ the restricted CSP template $(\bA,\bB)$ rpp-constructs $(\bK_3,\bL)$.
Notice that in this case $\RCSP(\bA,\bB)$ is $\NP$-hard whenever $\bB\not\to \bA$.

\begin{observation}\label{obs:persistent}
    For a finite structure $\bA$ the following statements are equivalent.
    \begin{itemize}
        \item $(\bA,\bB)$ rpp-constructs $(\bK_3,\bL)$ for every finite structure $\bB\not\to \bA$.
        \item $(\bA,\bB)$ rpp-constructs $(\bK_3,\bL)$ for every (possibly infinite) structure $\bB\not\to \bA$.
        \item $\RCSP(\bA,\bB)$ is $\NP$-hard for every finite structure $\bB\to \bA$ (assuming $\cP\neq \NP$).
    \end{itemize}
\end{observation}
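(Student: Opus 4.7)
The plan is to close the cycle of three equivalences, noting first that the implication (2) $\Rightarrow$ (1) is immediate since every finite structure is in particular a possibly infinite one. It therefore suffices to verify the implications (1) $\Rightarrow$ (3) $\Rightarrow$ (1) $\Rightarrow$ (2), and I would organize the argument in exactly that order.

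For (1) $\Rightarrow$ (3), suppose $\bB$ is a finite structure with $\bB\not\to \bA$. By (1), $(\bA,\bB)$ rpp-constructs $(\bK_3,\bL)$, so Lemma~\ref{lem:rpp-constructions} yields a log-space reduction from $\RCSP(\bK_3,\bL) = \CSP(\bK_3)$ to $\RCSP(\bA,\bB)$; since $\CSP(\bK_3)$ is $\NP$-hard, so is $\RCSP(\bA,\bB)$. For (3) $\Rightarrow$ (1), fix a finite $\bB$ with $\bB\not\to\bA$. Assuming $\cP\neq \NP$, item (3) says $\RCSP(\bA,\bB)$ is not in $\cP$, so Theorem~\ref{thm:finite-RCSP-dichotomy} (the P vs.\ NP-hard dichotomy for finite-domain RCSPs with finite restriction) forces $(\bA,\bB)$ to rpp-construct $(\bK_3,\bL)$.

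The interesting implication is (1) $\Rightarrow$ (2). The key point is that $\bA$ is finite, so the standard compactness argument for CSPs with finite target applies: if every finite (induced) substructure of $\bB$ homomorphically mapped to $\bA$, then $\bB$ itself would map to $\bA$ by taking a limit of partial homomorphisms (equivalently, by the finite intersection property in the compact space $A^B$). Hence, whenever $\bB\not\to\bA$, there exists a finite substructure $\bB_0\subseteq \bB$ with $\bB_0\not\to\bA$. By (1) there is a primitive positive definition $\Delta$ witnessing that $(\bA,\bB_0)$ rpp-constructs $(\bK_3,\bL)$, i.e., $\Pi_\Delta(\bA)$ is homomorphically equivalent to $\bK_3$ and $\bL\to \Pi_\Delta(\bB_0)$. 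Since $\bB_0$ embeds into $\bB$, Remark~\ref{rmk:pp-power-monotone} gives $\Pi_\Delta(\bB_0)\to \Pi_\Delta(\bB)$, and composing we obtain $\bL\to \Pi_\Delta(\bB)$. The first condition $\Pi_\Delta(\bA)\leftrightarrow \bK_3$ does not depend on $\bB$, so the very same $\Delta$ witnesses that $(\bA,\bB)$ rpp-constructs $(\bK_3,\bL)$.

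The only nontrivial ingredient is the compactness step in (1) $\Rightarrow$ (2); everything else is bookkeeping with results already established. I expect no real obstacle, since the compactness of $\CSP(\bA)$ for finite $\bA$ is standard, and once a finite witness $\bB_0\subseteq \bB$ is secured, the monotonicity of pp-powers does the lifting for free.
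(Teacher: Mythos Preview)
Your proof is correct and follows essentially the same route as the paper: the equivalence of (1) and (3) via Theorem~\ref{thm:finite-RCSP-dichotomy}, and the key implication (1) $\Rightarrow$ (2) via compactness to extract a finite substructure $\bB_0\subseteq\bB$ with $\bB_0\not\to\bA$. The only cosmetic difference is that the paper phrases the lift from $\bB_0$ to $\bB$ as ``$(\bA,\bB)$ rpp-constructs $(\bA,\bB_0)$, now compose'' (via Lemma~\ref{lem:rpp-compose}), whereas you unwind the definition and invoke monotonicity of pp-powers directly; both are one-line observations.
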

\begin{proof}
    The equivalence between the first and third statement follows from the assumption that $\cP\neq \NP$,
    and from Theorem~\ref{thm:finite-RCSP-dichotomy}. The second item clearly implies the first one. Finally,
    to see that the first one implies the second one, suppose $\bB\not\to \bA$. By compactness
    there is a finite substructure $\bB'$ of $\bB$ such that $\bB'\not\to\bA$. Hence, $(\bA,\bB')$
    rpp-constructs $(\bK_3,\bL)$, and clearly $(\bA,\bB)$ rpp-constructs $(\bA,\bB')$. Since rpp-constructions
    compose, we conclude that $(\bA,\bB)$ rpp-constructs $(\bK_3,\bL)$. 
\end{proof}

\begin{remark}
    A similar compactness as in the proof of Observation~\ref{obs:persistent} applies to
    \emph{$\omega$-categorical structures} (see, e.g.,~\cite[Lemma 4.1.7]{Book}), i.e.,
    to structures $\bA$ whose automorphism group has finitely many orbits of $k$-tuples for
    each positive integer $k$. Hence, if $\bA$ is an $\omega$-categorical structure,
    then $\bA$ persistently constructs $\bK_3$ if and only if $(\bA,\bB)$ rpp-constructs
    $(\bK_3,\bL)$ for every finite structure $\bB\not\to \bA$.
\end{remark}

\begin{problem}
    Characterize the class of finite digraphs (structures) that persistently construct $\bK_3$.
\end{problem}

Theorem~\ref{thm:brewster} asserts that for every hereditarily hard digraph $\RCSP(\bH,\bH')$
is $\NP$-hard whenever $\bH'$ is a finite digraph and $\bH'\not\to\bH$. Hence, assuming
$\cP\neq \NP$, this implies that every hereditarily hard digraph persistently constructs
$\bK_3$ (Observation~\ref{obs:persistent}).

\begin{problem}
    Characterize the class of finite digraphs (structures) $\bH$ such that $\RCSP(\bH,\bH')$
    is $\NP$-hard whenever $\bH'\not\to \bH$ (assuming $\cP\neq \NP$).
\end{problem}

\section*{Acknowledgments}
We are grateful to Martin Milani\v{c} for the proof of Lemma~\ref{lem:martin-milanic}.

\bibliographystyle{abbrv}
\bibliography{global.bib}

\end{document}